\def\IsProofInAppendix{} 
\DeclareFontShape{T1}{lmr}{bx}{sc} { <-> ssub * cmr/bx/sc }{}
\newcommand{\remove}[1]{}
\newcommand{\LLNCS}[1]{\ifdefined\IsLLNCS #1 \fi}
\newcommand{\TLLNCS}[2]{\ifdefined\IsLLNCS#1\else #2 \fi}
\newcommand{\authnote}[2]{[{\color{magenta}\textbf{Note(#1):}}~{\color{blue} #2}]}
\newcommand{\authnote}[2]{}
\newcommand{\radded}[1]{\added{Ran}{#1}}
\newcommand{\added}[2]{{{\color{blue}\textbf{Added(#1):}}~{\color{blue} #2}}}
\newcommand{\added}[2]{#2}
    \newcommand{\deleted}[2]{{{\color{red}\textbf{Deleted(#1):}}~{\color{red} #2 }}}
    \newcommand{\deleted}[2]{}
\newcommand{\Tableofcontents}{
\thispagestyle{empty}
\pagenumbering{gobble}
\clearpage
{\small{
\tableofcontents
}}
\thispagestyle{empty}
\clearpage
\pagenumbering{arabic}
}
\newenvironment{customthm}[1]
  {\innercustomthm}
  {\endinnercustomthm}
\newcommand{\sdotfill}{\textcolor[rgb]{0.8,0.8,0.8}{\dotfill}} 
\newcommand{\Ensuremath}[1]{\ensuremath{#1}\xspace}
\newcommand{\MathAlg}[1]{\mathsf{#1}}
\newcommand{\MathAlgX}[1]{\Ensuremath{\MathAlg{#1}}}
\newcommand{\QED}{\qed}
\newcommand{\QED}{}
\newcommand{\SUBSUBSEC}{.}
\newcommand{\SUBSUBSEC}{}
\newcommand{\ie}  {i.e.,\xspace}
\newcommand{\eg}  {e.g.,\xspace}
\newcommand{\assign}{\ensuremath{\mathrel{\vcenter{\baselineskip0.5ex \lineskiplimit0pt \hbox{\scriptsize.}\hbox{\scriptsize.}}}=}}
\newcommand{\abs}[1]{\left\lvert #1 \right\rvert}
\newcommand{\ceil}[1]{\left\lceil #1 \right\rceil}
\newcommand{\set}[1]{\left\{#1\right\}}
\newcommand{\sset}[1]{\{#1\}}
\newcommand{\size}[1]{\left|#1\right|}
\newcommand{\ssize}[1]{|#1|}
\newcommand{\SD}{\mathsf{\textsc{SD}}}
\newcommand{\compindist}{\stackrel{\rm c}{\equiv}}
\newcommand{\statclose}{\stackrel{\rm s}{\equiv}}
\renewcommand{\Pr}{{\mathrm {Pr}}}
\newcommand{\pr}[1]{\Pr\left[#1\right]}
\newcommand{\ppr}[2]{\Pr_{#1}\left[#2\right]}
\newcommand{\Uni}{{\mathord{\mathcal{U}}}}
\newcommand{\N}{{\mathbb{N}}}
\newcommand{\zo}{\{0,1\}}
\newcommand{\zn}{{\zo^n}}
\newcommand{\zs}{{\zo^\ast}}
\newcommand{\is}{{i^\ast}}
\newcommand{\Is}{{I^\ast}}
\newcommand{\js}{{j^\ast}}
\newcommand{\ls}{{l^\ast}}
\newcommand{\poly}{\operatorname{poly}}
\newcommand{\polylog}{\operatorname{polylog}}
\renewcommand{\cref}{\Cref}
\newaliascnt{claiml}{theorem}
\newtheorem{claiml}[claiml]{Claim}
\renewenvironment{claim}{\begin{claiml}}{\end{claiml}}
\newtheorem{theorem}{Theorem}[section]
\newaliascnt{lemma}{theorem}
\newtheorem{lemma}[lemma]{Lemma}
\newaliascnt{claim}{theorem}
\newtheorem{claim}[claim]{Claim}
\newaliascnt{corollary}{theorem}
\newtheorem{corollary}[corollary]{Corollary}
\newaliascnt{proposition}{theorem}
\newtheorem{proposition}[proposition]{Proposition}
\newaliascnt{conjecture}{theorem}
\newaliascnt{definition}{theorem}
\newtheorem{definition}[definition]{Definition}
\newaliascnt{remark}{theorem}
\newaliascnt{example}{theorem}
\crefname{lemma}{Lemma}{Lemmas}
\crefname{figure}{Figure}{Figures}
\crefname{claim}{Claim}{Claims}
\crefname{corollary}{Corollary}{Corollaries}
\crefname{proposition}{Proposition}{Propositions}
\crefname{conjecture}{Conjecture}{Conjectures}
\crefname{definition}{Definition}{Definitions}
\crefname{remark}{Remark}{Remarks}
\crefname{exmaple}{Example}{Examples}
\crefname{equation}{Equation}{Equations}
\newaliascnt{proto}{theorem}
\newtheorem{proto}[proto]{Protocol}
\crefname{proto}{protocol}{protocols}
\newaliascnt{algo}{theorem}
\newtheorem{algo}[algo]{Algorithm}
\crefname{algo}{algorithm}{algorithms}
\newaliascnt{expr}{theorem}
\newtheorem{expr}[expr]{Experiment}
\crefname{experiment}{experiment}{experiments}
\newcommand \mycaption {\small }     
\newcommand \mylabel {}
    \newenvironment{nfbox}[3]{
    \renewcommand \mycaption {#1}
    \renewcommand \mylabel {#2}
    \begin{center}\small
    \begin{tabular}{|ll|}
    \hline
    \hspace{.3ex}
    \begin{minipage}{.97\linewidth}
         \vspace{0.5ex}
         #3}
         {\smallskip
         \captionof{figure}{\mycaption}
         \label{\mylabel}
     \end{minipage}
     &\hspace{.3ex} \\
     \hline
     \end{tabular}
     \end{center}    
    }
    \newenvironment{nfbox}[3]{
    \renewcommand \mycaption {#1}
    \renewcommand \mylabel {#2}
    \begin{center}\small
    \begin{tabular}{|ll|}
    \hline
    \hspace{.3ex}
    \begin{minipage}{.94\linewidth}
         \vspace{0.5ex}
         #3}
         {\smallskip
         \captionsetup{type=figure}
     \end{minipage}
     &\hspace{-0ex} \\
     \hline
     \end{tabular}
     \captionof{figure}{\mycaption}
     \label{\mylabel}
     \end{center}    
    }
\newcommand{\Gen}{\ensuremath{\textsf{Gen}}\xspace}
\newcommand{\Sign}{\ensuremath{\textsf{Sign}}\xspace}
\newcommand{\Verify}{\ensuremath{\textsf{Verify}}\xspace}
\newcommand{\Share}{\ensuremath{\textsf{Share}}\xspace}
\newcommand{\Recon}{\ensuremath{\textsf{Recon}}\xspace}
\newcommand{\maj}{\MathAlgX{majority}}
\newcommand{\vect}[1]{{ \boldsymbol #1}}
\newcommand{\vx}{\vect{x}}
\newcommand{\vy}{\vect{y}}
\newcommand{\vr}{\vect{r}}
\newcommand{\vrho}{\vect{\rho}}
\newcommand{\vs}{\vect{s}}
\newcommand{\vS}{\vect{S}}
\newcommand{\Adv}{{\ensuremath{\cal A}}\xspace}
\newcommand{\Env}{{\ensuremath{\cal Z}}\xspace}
\newcommand{\Sim}{{\ensuremath{\cal S}}\xspace}
\newcommand{\Party}{{\ensuremath{P}}\xspace}
\newcommand{\DParty}{{\tilde{\Party}}\xspace}
\newcommand{\QParty}{{\ensuremath{Q}}\xspace}
\newcommand{\DQParty}{{\tilde{\QParty}}\xspace}
\newcommand{\secParam}{{\ensuremath{\kappa}}\xspace}
\newcommand{\aux}{{\ensuremath{z}}\xspace}
\newcommand{\AdvneHonest}{\Adv^{\textsf{honest-}\is}_n\xspace}
\newcommand{\AdvneCorrupt}{\Adv^{\textsf{corrupt-}\is}_n\xspace}
\newcommand{\graph}{G}
\newcommand{\Gend}{\graph_{\textsf{end}}\xspace}
\newcommand{\GendC}{\Gend^{\corrupt}\xspace}
\newcommand{\Gred}{\graph_{\textsf{red}}\xspace}
\newcommand{\Gblue}{\graph_{\textsf{blue}}\xspace}
\newcommand{\Greal}{\graph_{\textsf{real}}\xspace}
\newcommand{\GphaseII}{\graph_{\textsf{phaseII}}\xspace}
\newcommand{\GphaseIIH}{\GphaseII^{\honest}\xspace}
\newcommand{\GphaseIIC}{\GphaseII^{\corrupt}\xspace}
\newcommand{\GphaseIII}{\graph_{\textsf{phaseIII}}\xspace}
\newcommand{\GphaseIIIH}{\GphaseIII^{\honest}\xspace}
\newcommand{\GphaseIIIC}{\GphaseIII^{\corrupt}\xspace}
\newcommand{\round}{\rho\xspace}
\newcommand{\roundphaseII}{\round_{\textsf{phaseII}}\xspace}
\newcommand{\roundphaseIIH}{\roundphaseII^{\honest}\xspace}
\newcommand{\roundphaseIIC}{\roundphaseII^{\corrupt}\xspace}
\newcommand{\roundphaseIII}{\round_{\textsf{phaseIII}}\xspace}
\newcommand{\roundphaseIIIH}{\roundphaseIII^{\honest}\xspace}
\newcommand{\roundphaseIIIC}{\roundphaseIII^{\corrupt}\xspace}
\newcommand{\honest}{\textsf{honest}\xspace}
\newcommand{\corrupt}{\textsf{corrupt}\xspace}
\newcommand{\GredH}{\Gred^\honest\xspace}
\newcommand{\GblueH}{\Gblue^\honest\xspace}
\newcommand{\CS}{{\ensuremath{\mathcal{C}}}\xspace}
\newcommand{\E}{{\ensuremath{\mathcal{E}}}\xspace}
\newcommand{\IS}{{\ensuremath{\mathcal{I}}}\xspace}
\newcommand{\PS}{{\ensuremath{\mathcal{P}}}\xspace}
\newcommand{\committee}{{\ensuremath{\mathcal{C}}}\xspace}
\newcommand{\emptystr}{{\ensuremath{\epsilon}}\xspace}
\newcommand{\SMbox}[1]{\mbox{\scriptsize {\sc #1}}}
\newcommand{\VIEW}{\SMbox{VIEW}}
\newcommand{\REAL}{\SMbox{REAL}}
\newcommand{\IDEAL}{\SMbox{IDEAL}}
\newcommand{\HYBRID}{\SMbox{HYBRID}}
\newcommand{\HYB}{\SMbox{HYB}}
\newcommand{\inputCoins}{\textsc{InputsAndCoins}}
\newcommand{\redExec}{\textsc{RedExec}}
\newcommand{\blueExec}{\textsc{BlueExec}}
\newcommand{\finalCut}{\textsc{FinalCut}^\corrupt}
\newcommand{\Ind}{\textsc{Ind}}
\newcommand{\lastParty}{\textsc{LastParty}}
\newcommand{\bigbrack}{{\vphantom{2^{2^2}}}}
\mathchardef\mhyphen="2D
\newcommand{\ith}{$i$'th\xspace}
\newcommand{\jth}{$j$'th\xspace}
\newcommand{\prot}[1]{\pi^{\MathAlgX{#1}}}
\newcommand{\protne}{\prot{ne}}
\newcommand{\protane}{\prot{a\mhyphen ne}}
\newcommand{\func}[1]{f_{\MathAlgX{#1}}}
\newcommand{\felectshare}{\func{elect \mhyphen share}}
\newcommand{\felectsharefull}{\felectshare^{(t',n')}}
\newcommand{\freconcompute}{\func{recon\mhyphen compute}}
\newcommand{\freconcomputefull}{\freconcompute^{(\vk[1],\ldots,\vk[m])}}
\newcommand{\foutdist}{\func{out \mhyphen dist}}
\newcommand{\foutdistfull}{\foutdist^{\committee_1}}
\newcommand{\fitreconcompute}{\func{it\mhyphen recon\mhyphen compute}}
\newcommand{\faelectshare}{\func{a\mhyphen elect \mhyphen share}}
\newcommand{\faelectsharefull}{\faelectshare^{(t',n')}}
\newcommand{\fareconcompute}{\func{a\mhyphen recon\mhyphen compute}}
\newcommand{\fareconcomputefull}{\fareconcompute^{(\vk[1],\ldots,\vk[m])}}
\newcommand{\faoutdist}{\func{a\mhyphen out \mhyphen dist}}
\newcommand{\faoutdistfull}{\faoutdist^{(\vk[1],\ldots,\vk[m])}}
\newcommand{\fpki}{\func{pki}}
\newcommand{\fitpki}{\func{it\mhyphen pki}}
\newcommand{\fpsmt}{\func{psmt}}        
\newcommand{\lpsmt}{l_\MathAlgX{psmt}}  
\newcommand{\sigcals}{\ensuremath{\ell_S}\xspace}
\newcommand{\vercals}{\ensuremath{\ell_V}\xspace}
\newcommand{\sk}[1][\relax]{\ensuremath{\texttt{sk}_{#1}}\xspace}
\newcommand{\vk}[1][\relax]{\ensuremath{\texttt{vk}_{#1}}\xspace}
\newcommand{\vvk}[1][\relax]{\ensuremath{\vec{\vk}_{#1}}\xspace}
\newcommand{\outp}{\ensuremath{\rightarrow}}
\newcommand{\oSig}[1][\relax]{\ensuremath{\mathcal{O}^S_{#1}}\xspace}
\newcommand{\oVer}[1][\relax]{\ensuremath{\mathcal{O}^V_{#1}}\xspace}
\newcommand{\Field}{\ensuremath{F}\xspace}
\newcommand{\negl}{\textsf{negl}}
\newcommand{\extend}[3]{\textsf{extend}^{#2 \hookrightarrow #1}(#3)}
\newcommand{\comp}[1]{\bar{#1}}
\newcommand{\stepref}[1]{Step~\ref{#1}}
\newcommand{\edges}{\textsf{edges}}
\newcommand{\type}{\textsf{type}}
\newcommand{\full}{\textsf{full}}
\newcommand{\outgoing}{\textsf{out}}
\newcommand{\incoming}{\textsf{in}}
\newcommand{\instance}{\textsf{instance}}
\newcommand{\adaptive}{\textsf{adaptive}}
\newcommand{\phaseI}{Phase~I\xspace}
\newcommand{\phaseII}{Phase~II\xspace}
\newcommand{\phaseIII}{Phase~III\xspace}
\newcommand{\island}{U\xspace}
\newcommand{\Visland}{V\xspace}
\newcommand{\samp}{\mathsf{Samp}}
\newcommand{\mutualinfo}[1]{I\left(#1\right)}
\newcommand{\Damgard}{Damg{\aa}rd\xspace}
\title{Must the Communication Graph of MPC Protocols\\ be an Expander?\thanks{A preliminary version of this work appeared at \emph{CRYPTO 2018}~\cite{BCDH18}.}
}
    \author{}
    \date{}
        \author{Elette Boyle\inst{1}\thanks{Supported in part by ISF grant 1861/16, AFOSR Award FA9550-17-1-0069, and ERC Grant no. 307952.}
        \and Ran Cohen\inst{2}\thanks{Supported in part by Alfred P. Sloan Foundation Award 996698, ISF grant 1861/16, ERC starting grant 638121, NEU Cybersecurity and Privacy Institute, and NSF TWC-1664445.}$^\ddag$
        \and Deepesh Data\inst{3}\thanks{This work was done in part while visiting at the FACT Center at IDC Herzliya.}
        \and Pavel Hub{\'{a}}{\v{c}}ek\inst{4}\thanks{Supported by the project 17-09142S of GA \v{C}R, Charles University project UNCE/SCI/004, and Charles University project PRIMUS/17/SCI/9. This work was done under financial support of the Neuron Fund for the support of science.}$^\ddag$
        }
        \institute{IDC Herzliya\\ \email{elette.boyle@idc.ac.il}
        \and MIT and Northeastern University\\ \email{rancohen@ccs.neu.edu}
        \and UCLA\\ \email{deepeshdata@ucla.edu}
        \and Computer Science Institute, Charles University, Prague\\ \email{hubacek@iuuk.mff.cuni.cz}
        }
        \author{Elette Boyle\thanks{Reichman University and NTT Research. E-mail: \texttt{elette.boyle@runi.ac.il}. Supported in part by ISF grant 1861/16, AFOSR Award FA9550-17-1-0069, and ERC Grant no. 307952.}
        \and Ran Cohen\thanks{Reichman University. E-mail: \texttt{cohenran@runi.ac.il}. Some of the work was done while the author was at MIT and Northeastern University and supported in part by Alfred P. Sloan Foundation Award 996698, ISF grant 1861/16, ERC starting grant 638121, NEU Cybersecurity and Privacy Institute, and NSF TWC-1664445.}~\footnotemark[6]
        \and Deepesh Data\thanks{Meta Platforms, Inc. E-mail: \texttt{deepesh.data@gmail.com}.}~\footnotemark[6]
        \and Pavel Hub{\'{a}}{\v{c}}ek\thanks{Faculty of Mathematics and Physics, Charles University, Prague, Czech Republic. E-mail: \texttt{hubacek@iuuk.mff.cuni.cz}. Supported by the project 17-09142S of GA \v{C}R, Charles University project UNCE/SCI/004, and Charles University project PRIMUS/17/SCI/9. This work was done under financial support of the Neuron Fund for the support of science.}~\footnote{This work was done in part while visiting at the FACT Center at Reichman University (formerly IDC Herzliya).}
        }
\begin{document}
\sloppy

\maketitle
\thispagestyle{empty}


\begin{abstract}
Secure multiparty computation (MPC) on incomplete communication networks has been studied within two primary models: (1) Where a partial network is fixed a priori, and thus corruptions can occur dependent on its structure, and (2) Where edges in the communication graph are determined dynamically as part of the protocol.
Whereas a rich literature has succeeded in mapping out the feasibility and limitations of graph structures supporting secure computation in the fixed-graph model (including strong classical lower bounds), these bounds do not apply in the latter dynamic-graph setting, which has recently seen exciting new results, but remains relatively unexplored.

In this work, we initiate a similar foundational study of MPC within the dynamic-graph model.
As a first step, we investigate the property of graph \emph{expansion}. All existing protocols (implicitly or explicitly) yield communication graphs which are expanders, but it is not clear whether this is inherent.
Our results consist of two types (for constant fraction of corruptions):

\begin{itemize}
    \item
    Upper bounds: We demonstrate secure protocols whose induced communication graphs are \emph{not} expander graphs, within a wide range of settings (computational, information theoretic, with low locality, even with low locality \emph{and} adaptive security), each assuming some form of input-independent setup.

    \item
    Lower bounds: In the plain model (no setup) with adaptive corruptions, we demonstrate that for certain functionalities, \emph{no} protocol can maintain a non-expanding communication graph against all adversarial strategies. Our lower bound relies only on protocol correctness (not privacy), and requires a surprisingly delicate argument.
\end{itemize}

More generally, we provide a formal framework for analyzing the evolving communication graph of MPC protocols, giving a starting point for studying the relation between secure computation and further, more general graph properties.
\end{abstract}

\ifdefined\IsLLNCS\else
\vfill
\fi


\Tableofcontents


\section{Introduction}

The field of secure multiparty computation (MPC), and more broadly fault-tolerant distributed computation, constitutes a deep and rich literature, yielding a vast assortment of protocols providing strong robustness and even seemingly paradoxical privacy guarantees. A central setting is that of $n$ parties who wish to jointly compute some function of their inputs while maintaining correctness (and possibly input privacy) in the face of adversarial behavior from a constant fraction of corruptions.

Since the original seminal results on secure multiparty computation~\cite{GMW87,BGW88,CCD88,RB89}, the vast majority of MPC solutions to date assume that every party can (and will) communicate with every other party. That is, the underlying point-to-point communication network forms a complete graph. Indeed, many MPC protocols begin directly with every party secret sharing his input across all other parties (or simply sending his input, in the case of tasks without privacy such as Byzantine agreement~\cite{PSL80,LSP82,DS83,FM97,GM93}).

There are two classes of exceptions to this rule, which consider MPC on incomplete communication graphs.

\textbf{Fixed-Graph Model.} The first corresponds to an area of work investigating achievable security guarantees in the setting of a \emph{fixed} partial communication network. In this model, communication is allowed only along edges of a fixed graph, known a priori, and hence where corruptions can take place as a function of its structure. This setting is commonly analyzed within the distributed computing community. In addition to positive results, this is the setting of many fundamental lower bounds: For example, to achieve Byzantine agreement deterministically against $t$ corruptions, the graph must be $(t +1)$-connected~\cite{Dolev82,FLM86}.\footnote{If no setup assumptions are assumed, the connectivity bound increases to $2t+1$.}
For graphs with lower connectivity, the best one can hope for is a form of ``almost-everywhere agreement,'' where some honest parties are not guaranteed to output correctly, as well as restricted notions of privacy~\cite{DkPPU88,GO08,CGO15,HLP11,HIJKR16}.
Note that because of this, one cannot hope to achieve protocols with standard security in this model with $o(n^2)$ communication, even for simple functionalities such as Byzantine agreement.

\textbf{Dynamic-Graph Model.} The second, more recent approach addresses a model where all parties have the \emph{ability} to initiate communication with one another, but make use of only a subset of these edges as determined dynamically during the protocol. We refer to this as the ``dynamic-graph model.'' When allowing for negligible error (in the number of parties), the above lower bounds do not apply, opening the door for dramatically different approaches and improvements in complexity. Indeed, distributed protocols have been shown for Byzantine agreement in this model with as low as $\tilde O(n)$ bits of communication~\cite{KSSV06,BGH13}, and secure MPC protocols have been constructed whose communication graphs have degree $o(n)$---and as low as $\polylog(n)$~\cite{DKMSZ17,BGT13,CCGGOZ15,BCP15}.\footnote{This metric is sometimes referred to as the communication \emph{locality} of the protocol~\cite{BGT13}.}
However, unlike the deep history of the model above, the current status is a sprinkling of positive results.
Little is known about what types of communication graphs must be generated from a secure MPC protocol execution.

Gaining a better understanding of this regime is motivated not only to address fundamental questions, but also to provide guiding principles for future protocol design. In this work, we take a foundational look at the dynamic-graph model, asking:

\begin{quote}
	\centering
	\emph{What properties of induced communication graphs\\ are necessary to support secure computation?}
\end{quote}	

\paragraph{On the necessity of graph expansion.}
Classical results tell us that the fully connected graph suffices for secure computation. Protocols achieving low locality indicate that a variety of significantly sparser graphs, with many low-weight cuts, can also be used~\cite{DKMSZ17,BGT13,CCGGOZ15,BCP15}. We thus consider a natural extension of connectivity to the setting of low degree.
Although the positive results in this setting take different approaches and result in different communication graph structures, we observe that in each case, the resulting sparse graph has high \emph{expansion}.

Roughly, a graph is an expander if every subset of its nodes that is not ``too large'' has a ``large'' boundary. Expander graphs have good mixing properties and in a sense ``mimic'' a fully connected graph. There are various ways of formalizing expansion; in this work we consider a version of \emph{edge} expansion, pertaining to the number of outgoing edges from any subset of nodes. We consider a variant of the expansion definition which is naturally monotonic: that is, expansion cannot decrease when extra edges are added (note that such monotonicity also holds for the capacity of the graph to support secure computation).

Indeed, expander graphs appear explicitly in some works~\cite{KSSV06,CCGGOZ15}, and implicitly in others (e.g., using random graphs~\cite{KS09}, pseudorandom graphs~\cite{BGT13,BCG21}, and averaging samplers~\cite{BGH13}, to convert from almost-everywhere to everywhere agreement). High connectivity and good mixing intuitively go hand-in-hand with robustness against corruptions, where adversarial entities may attempt to impede or misdirect information flow.

This raises the natural question: Is this merely an artifact of a convenient construction, or is high expansion \emph{inherent}? That is, we investigate the question:
Must the communication graph of a generic MPC protocol, tolerating a linear number of corruptions, be an expander graph?

\subsection{Our Results}\label{sec:intro:ourResult}

More explicitly, we consider the setting of secure multiparty computation with $n$ parties in the face of a linear number of active corruptions. As common in the honest-majority setting, we consider protocols that guarantee output delivery.
Communication is modeled via the dynamic-graph setting, where all parties have the ability to initiate communication with one another, and use a subset of edges as dictated by the protocol.
We focus on the synchronous setting, where the protocol proceeds in a round-by-round manner.

\medskip
Our contributions are of the following three kinds:
\paragraph{Formal definitional framework.}
As a first contribution, we provide a formal framework for analyzing and studying the evolving communication graph of MPC protocols. The framework abstracts and refines previous approaches concerning specific properties of protocols implicitly related to the graph structure, such as the degree~\cite{BGT13}. This gives a starting point for studying the relation between secure computation and further, more general, graph properties.

\paragraph{Upper bounds.}
We present secure protocols whose induced communication graphs are decidedly \emph{not} expander graphs, within a range of settings. This includes: with computational security, with information-theoretic security, with low locality, even with low locality \emph{and} adaptive security (in a hidden-channels model~\cite{CCGGOZ15}) --- but all with the common assumption of some form of input-independent \emph{setup} information (such as a \emph{public-key infrastructure}, PKI). The resulting communication graph has a low-weight cut, splitting the $n$ parties into two equal (linear) size sets with only poly-logarithmic edges connecting them.

\begin{theorem}[MPC with non-expanding communication graph, informal]\label{thm:intro:UB}
For any efficient functionality $f$ and any constant $\epsilon>0$, there exists a protocol in the PKI model, assuming digital signatures, securely realizing $f$ against $(1/4-\epsilon) \cdot n$ static corruptions, such that with overwhelming probability the induced communication graph is non-expanding.
\end{theorem}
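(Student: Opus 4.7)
The plan is to design the protocol so that its induced communication graph has a $\polylog(n)$-size cut across a balanced bipartition of the parties; this trivially yields non-expansion while still leaving enough honest bandwidth to realize $f$ securely.

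Partition the $n$ parties arbitrarily into two disjoint groups $G_1, G_2$ of size $n/2$. Since the adversary corrupts at most $(1/4-\epsilon)n$ parties in total, in the worst case at most $(1/2 - 2\epsilon)$ fraction of $G_i$ is corrupt; in particular each $G_i$ enjoys an honest majority with a constant gap. Inside each $G_i$, I would run a standard honest-majority MPC subprotocol (using the PKI and signatures) to (a) elect a random committee $C_i \subseteq G_i$ of size $s = \omega(\log n)$, and (b) have every $P \in G_i$ verifiably secret share its input to $C_i$ via intra-group channels only. A Chernoff bound over the election randomness shows that, against any static corruption pattern, each $C_i$ has a strict honest majority except with negligible probability.

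Once the inputs are shared onto the two small committees, $C_1$ and $C_2$ jointly emulate an honest-majority evaluation of $f$ in which each committee plays the role of a ``virtual party'' that internally runs an honest-majority MPC on its shares; this is the only step that uses inter-group communication, and it touches only pairs in $C_1 \times C_2$, hence at most $s^2 = \polylog(n)$ edges. Finally, each committee returns the output to the parties of its own group using only intra-group channels. Security follows by composition of honest-majority MPC within each group, honest-majority MPC within each elected committee, and the committee-to-committee emulation, in the spirit of the low-locality constructions of~\cite{BGT13,CCGGOZ15}. For the graph property, taking $S := G_1$ gives $|S| = n/2$ but $|E(S, V \setminus S)| \le |C_1|\cdot|C_2| = \polylog(n)$ with overwhelming probability, so the expansion ratio across this cut is $\polylog(n)/n = o(1)$, well below any constant edge-expansion threshold.

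The main obstacle I anticipate is not any single ingredient but the careful bookkeeping needed to confine every sub-protocol to the correct side of the cut: committee election, input sharing, the two-committee emulation, and output dissemination must each avoid inadvertently producing a linear number of inter-group edges. In particular, one must check that the two-committee emulation can be realized using \emph{only} direct $C_1 \times C_2$ edges (without, e.g., invoking an all-to-all broadcast among the virtual parties that would blow up the cut), and that all failure probabilities (Chernoff for committee honesty, cryptographic simulation error, invocations of signatures) aggregate to a negligible total. Getting this composition tight, and instantiating the pieces so that the PKI/signature assumption suffices, is where the bulk of the work lies.
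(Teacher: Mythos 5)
Your proposal is correct and follows essentially the same route as the paper: split the parties into two linear-size halves, elect $\omega(\log n)$-size committees whose honest majority follows from a Chernoff bound over the election randomness, let the committees form the only bridge across the cut (so the cut has $\polylog(n)$ edges and the expansion ratio is $o(1)$), and use the PKI/signatures to certify committee membership to the other side. The only real difference is in the middle step — you have the two committees jointly run the MPC for $f$, whereas the paper has the first half elect \emph{both} committees, ship robustly secret-shared inputs over the bridge, and let the entire second half (a linear-size honest-majority set) reconstruct and compute $f$ — and the certification-across-the-cut and message-filtering issues you flag as "bookkeeping" are precisely what the paper's signed-committee functionalities and filtering rules resolve.
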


\cref{thm:intro:UB} is stated in the computational setting with static corruptions; however, this approach extends to various other settings, albeit at the expense of a lower corruption threshold. (See \cref{sec:ne_mpc} for more details.)

\begin{theorem}[extensions of \cref{thm:intro:UB}, informal]\label{thm:intro:UB_extensions}
For any efficient functionality $f$, there exists a protocol securely realizing $f$, in the settings listed below, against a linear number of corruptions, such that with overwhelming probability the induced communication graph is non-expanding:
\begin{itemize}
    \item
    In the setting of \cref{thm:intro:UB} with poly-logarithmic locality.
    \item
    Unconditionally, in the information-theoretic PKI model (with or without low locality).
    \item
    Unconditionally, in the information-theoretic PKI model, facing adaptive adversaries.
    \item
    Under standard cryptographic assumptions, in the PKI model, facing adaptive adversaries, with poly-logarithmic locality.
\end{itemize}
\end{theorem}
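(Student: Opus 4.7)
Each of the four extensions follows the same template as the proof of \cref{thm:intro:UB}. That construction partitions the $n$ parties into two committees $\committee_0,\committee_1$ of size roughly $n/2$, has each committee run a full-fledged MPC internally on the collected inputs of its members, and bridges the two committees using only a poly-logarithmic number of ``ambassador'' edges along which authenticated cross-committee messages are relayed. The induced communication graph has the cut $(\committee_0,\committee_1)$ of weight only $\polylog(n)$, and is therefore non-expanding. Each extension is obtained by instantiating the intra-committee MPC and the cross-cut authentication primitive with tools appropriate to its setting, while preserving the small-cut structure.

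\textbf{Extensions 1 and 2 (low locality and information-theoretic PKI).} For the first bullet, plug into each committee a poly-logarithmic-locality MPC protocol in the PKI model (\eg along the lines of~\cite{BGT13,CCGGOZ15}). The cross-cut traffic is already $\polylog(n)$ and can be spread over $\polylog(n)$ ambassador parties, so each party's locality stays polylogarithmic. For the second bullet, replace digital signatures with the unconditionally-secure pseudo-signatures of Pfitzmann--Waidner that are distributed by the information-theoretic PKI, and use a standard honest-majority information-theoretic MPC (\eg BGW) inside each committee; to obtain the optional low-locality variant, compose instead with an unconditional low-locality protocol. In all cases the cut weight remains $\polylog(n)$, so the induced graph is non-expanding. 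The tolerable corruption fraction drops to whatever the inner protocol supports, but stays a constant fraction of $n$.

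\textbf{Extensions 3 and 4 (adaptive adversaries).} Adaptivity threatens the bridging step: once the adversary sees which ambassadors carry cross-cut messages, it may corrupt them and inspect their internal state. The plan is (i) to use an adaptively-secure committee MPC --- the information-theoretic protocol of Cramer--\Damgard--Nielsen for bullet~3, and an adaptively-secure poly-logarithmic-locality MPC (\eg along the lines of~\cite{CCGGOZ15,DKMSZ17}) for bullet~4; (ii) to route cross-cut messages as authenticated \emph{shares} rather than plaintexts, so that corrupting a single ambassador leaks no committee secret; and (iii) to exploit the hidden-channels model invoked in the statement so that the identities of the ambassadors used in any given round are not revealed before the round ends, leaving the adversary no useful target to corrupt in time.

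\textbf{Main obstacle.} The hardest of the four is the last bullet, which simultaneously demands \emph{adaptive security}, \emph{poly-logarithmic locality}, and \emph{non-expansion}. One must (a) sample a $\polylog(n)$-size ambassador set unknown to the adversary at the moment of corruption, (b) maintain unforgeable cross-cut authentication under adaptive corruption without pushing any party's degree above $\polylog(n)$, and (c) ensure that adaptive corruption of an ambassador neither leaks the other committee's internal state nor inflates the communication graph. I expect that this can be handled by routing each cross-cut message through a $\polylog$-size sub-committee of ambassadors using adaptively-secure verifiable secret sharing, so that no single corrupted party ever holds a plaintext cross-cut message and the cut weight stays polylogarithmic throughout the execution.
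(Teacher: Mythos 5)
Your template is indeed the paper's template: split the parties into two linear-size halves, let each half run an existing MPC for its part of the computation, and bridge the halves by two elected $\polylog(n)$-size committees whose cross-cut messages (which are secret shares, not plaintexts, already in the static construction) are authenticated via the setup --- digital signatures in the computational case, information-theoretic signatures in the unconditional case; each bullet is then an instantiation of the inner functionalities, and the adaptive bullets additionally hide the elected committees and work in the hidden-channels model. So the high-level route matches the paper throughout.

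There is, however, one genuine gap. For the second bullet you dispose of the ``with low locality'' clause by saying one should ``compose with an unconditional low-locality protocol,'' but no such protocol existed to compose with: the known $\polylog(n)$-locality protocols~\cite{BGT13,CCGGOZ15} use public-key encryption and other computational tools. The paper has to build this ingredient itself (\cref{thm:intro:IT_BGT}, proved in \cref{sec:it_bgt}) by adapting~\cite{BGT13} to the information-theoretic setting --- replacing the PRF by averaging samplers and, more substantially, replacing PKE-based input delivery by recursively secret-sharing and re-sharing inputs up the communication tree via small-committee BGW computations, which is also what pushes the threshold down to roughly $(1/12-\delta)n$. Without this construction (or a citation to an equivalent), that part of the second bullet is unsupported. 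Two smaller imprecisions: in the adaptive bullets the decisive modification is that the election functionality outputs to each member of $\committee_1$ only the identity of its single matched partner in $\committee_2$ (with signatures on individual indices), so that corrupting one bridge party exposes only one counterpart and the tolerable fraction drops to $(1/8-\delta)n$; your phrasing that ambassador identities are ``not revealed before the round ends'' understates what is needed --- the hidden-channels model must hide honest-to-honest bridge traffic for the entire execution, else the adversary corrupts both committees after the first bridge round --- and your VSS-routing through a sub-committee, while plausibly workable, is an unnecessary complication compared with the one-to-one matching the paper uses.
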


As an interesting special case, since our protocols are over point-to-point channels and do not require a broadcast channel, these results yield the first Byzantine agreement protocols whose underlying communication graphs are not expanders.

The results in \cref{thm:intro:UB,thm:intro:UB_extensions} all follow from a central transformation converting existing secure protocols into ones with low expansion.
At a high level, the first $n/2$ parties will run a secure computation to elect two representative committees of poly-logarithmic size: one amongst themselves and the other from the other $n/2$ parties. These committees will form a ``communication bridge'' across the two halves (see \cref{fig:results}). The setup is used to certify the identities of the members of both committees to the receiving parties, either via a public-key infrastructure for digital signatures (in the computational setting) or correlated randomness for information-theoretic signatures~\cite{SHZI02,SASM10} (in the information-theoretic setting).

Interestingly, this committee-based approach can be extended to the adaptive setting (with setup), in the hidden-channels model considered by~\cite{CCGGOZ15}, where the adversary is not aware which communication channels are utilized between honest parties.\footnote{Sublinear locality is impossible in the adaptive setting if the adversary is aware of honest-to-honest communication, since it can simply isolate an honest party from the rest of the protocol.} Here, care must be taken to not reveal more information than necessary about the identities of committee members to protect them from being corrupted.

As a side contribution, we prove the first instantiation of a protocol with poly-logarithmic locality and information-theoretic security (with setup), by adjusting the protocol from~\cite{BGT13} to the information-theoretic setting.

\begin{theorem}[polylog-locality MPC with information-theoretic security, informal]\label{thm:intro:IT_BGT}
For any efficient functionality $f$ and any constant $\epsilon>0$, there exists a protocol with poly-logarithmic locality in the information-theoretic PKI model, securely realizing $f$ against computationally unbounded adversaries statically corrupting $(1/6-\epsilon) \cdot n$ parties.
\end{theorem}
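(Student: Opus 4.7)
The plan is to follow the blueprint of the BGT13 polylog-locality protocol and swap out its computational authentication mechanism for an information-theoretic analogue. Recall that BGT13 proceeds by (i) electing small committees of size $\polylog(n)$ that serve as relays in a sparse communication tree/graph, and (ii) using digital signatures, whose verification keys are distributed by the PKI, to certify to downstream parties that each relayed message genuinely originated from its claimed sender. I would replicate this high-level structure verbatim, but in the setup phase distribute correlated randomness realizing an information-theoretic pseudo-signature scheme of the type constructed in \cite{SHZI02,SASM10}, where each signer $P_i$ and each designated set $V$ of verifiers share enough correlated randomness for $P_i$ to produce short certificates that a computationally unbounded adversary cannot forge except with negligible probability, provided the number of corrupted parties in $V$ lies below the scheme's threshold.

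Concretely, the steps in order are the following. First, I would fix the BGT13 protocol $\pi$ and enumerate every point at which a signature is produced or verified; in particular, every message crossing a committee boundary or traversing the sparse communication tree carries a chain of signatures. Second, in the (now information-theoretic) PKI setup I would, for each signer $P_i$ and each verifier set $V$ that can arise in $\pi$, pre-distribute fresh correlated randomness for a one-time IT-signature; since $\pi$ determines only $\poly(n)$ such pairs $(P_i,V)$, the total setup remains polynomial. Third, I would replace every call to the digital signature scheme by the IT version, observing that locality is preserved because verification uses only the receiver's local setup tape and introduces no additional messages. Finally, I would argue security by invoking the security of $\pi$ in a hybrid in which no IT-signature is ever forged, using a standard reduction that aborts on the first forgery event.

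The main obstacle I expect is in the corruption-threshold analysis, which is why the statement drops from BGT13's $(1/3-\epsilon)$ down to $(1/6-\epsilon)$. In the computational setting, unforgeability holds globally against all corrupt parties at once, so one only needs committees of size $\Theta(\log n/\epsilon^2)$ to be honest-majority, which by a Chernoff bound requires the adversary to control fewer than $(1/2-\epsilon) n$ parties (and BGT13's deeper structure then further tightens the bound to $1/3$). In contrast, an IT-signature delivered along a chain of committees is unforgeable only against coalitions bounded by the verifier-set threshold, so one must simultaneously ensure (a) honest super-majority in every committee used by $\pi$ and (b) that the aggregate corruption set intersecting any chain of verifier groups stays below the IT-scheme's forgery threshold. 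A careful worst-case accounting, rather than the union bound sufficient in the computational case, yields exactly the factor-two loss, giving the stated bound $(1/6-\epsilon) n$.

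The remaining work is essentially bookkeeping: once the hybrid in which all IT-signatures are treated as unforgeable is set up, security of the modified protocol reduces directly to the security of BGT13, and the locality and communication-complexity claims follow because the IT-signature substitution only affects message payload lengths by a polylogarithmic factor, preserving the $\polylog(n)$ locality of the underlying protocol.
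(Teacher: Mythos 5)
Your proposal addresses only half of what makes BGT13 computational, and it is the less problematic half. The BGT protocol has two distinct computational ingredients: (i) digital signatures (plus a PRF) used to certify the seed and upgrade almost-everywhere agreement to full agreement while building the polylog-degree communication tree, and (ii) \emph{threshold public-key encryption}, used in the second phase where every party encrypts its input under a key generated by the supreme committee, sends the ciphertext up the tree, and the committee jointly decrypts and evaluates the function. Swapping digital signatures for information-theoretic signatures (as you propose, and as the paper also does, together with replacing the PRF by a sampler since it is only used combinatorially) handles ingredient (i), but it does nothing for ingredient (ii): with unbounded adversaries there is no encryption to hide the inputs as they travel up the tree, and this is exactly where the paper's actual work lies. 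The paper replaces the PKE layer by having each party secret-share its input to its associated leaf committees (in fact, share-of-share form so that corrupted members of a leaf committee learn nothing), and then recursively, for each tree edge $(u,v)$, the union of the committees at $u$ and $v$ runs an unconditionally secure (BGW-style) protocol for a reconstruct-and-reshare functionality that passes the shared values one level up, until the supreme committee reconstructs and computes $f$. Your proposal, as written, would leak all honest inputs to any single corrupted relay on the path and therefore does not establish the theorem.

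Relatedly, your explanation of the threshold $(1/6-\epsilon)$ is not the right one. The loss does not come from a ``factor-two'' accounting of IT-signature verifier-set thresholds along chains; the IT signatures tolerate the same corruption ratios as their computational counterparts here. In the paper, the $1/6-\epsilon$ bound arises from the new secret-sharing-based second phase: the BGW protocol executed on the \emph{union} of two adjacent tree committees needs a $2/3$ honest majority in that union, and this is guaranteed with overwhelming probability only when the global corruption fraction is below $1/6-\epsilon$. So the missing idea is the information-theoretic replacement of the encryption-based input-delivery phase, and the correct threshold analysis is a consequence of that replacement rather than of the signature substitution.
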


\paragraph{Lower bounds.}
On the other hand, we show that in some settings a weak form of expansion \emph{is} a necessity.
In fact, we prove a stronger statement, that in these settings the graph must have high connectivity.\footnote{More concretely, the graph should be at least $\alpha(n)$-connected for every sublinear function $\alpha(n)\in o(n)$.}
Our lower bound is in the setting of adaptive corruptions, computational (or information-theoretic) security, and with common setup assumptions (but \emph{no} private-coin setup as PKI). Our proof relies only on correctness of the protocol and not on any privacy guarantees; namely, we consider the \emph{parallel broadcast} functionality (aka \emph{interactive consistency}~\cite{PSL80}), where every party distributes its input to all other parties. We construct an adversarial strategy in this setting such that no protocol can guarantee correctness against this adversary if its induced communication graph at the conclusion of the protocol has any cut with sublinear many crossing edges (referred to as a ``sublinear cut'' from now on).

\begin{theorem}[high connectivity is necessary for correct protocols, informal]\label{thm:intro:LB}
Let $t\in\Theta(n)$. Any $t(n)$-resilient protocol for parallel broadcast in the computational setting, even with access to a common reference string, tolerating an adaptive, malicious adversary \emph{cannot} maintain an induced communication graph with a sublinear cut.
\end{theorem}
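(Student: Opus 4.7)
The plan is to proceed by contradiction. I would suppose that a protocol $\pi$ is $t(n)$-resilient for parallel broadcast against an adaptive, malicious adversary, yet with non-negligible probability a run of $\pi$ produces an induced communication graph $\Greal$ containing an edge cut of size $o(n)$ that separates two subsets $\island, \Visland$ of the parties (each of constant fractional size). From this I would construct an adversary $\Adv$ that uses the cut to force two honest parties to output different vectors, thereby contradicting the correctness of $\pi$ for parallel broadcast.

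The core idea is a \emph{splitting adversary}. Pick a target party $P^\ast \in \island$ and two candidate inputs $x_0 \neq x_1$ for $P^\ast$; set $P^\ast$'s real input to $x_0$, and aim to drive the execution so that the honest parties in $\island$ correctly output $x_0$ for $P^\ast$'s contribution while those in $\Visland$ output $x_1$, which already violates validity. Conceptually, $\Adv$ maintains two parallel ``virtual worlds'' that differ only in $P^\ast$'s input, feeds each honest party a transcript consistent with one of the two worlds according to the side of the emerging cut on which it lies, and glues the worlds together at the crossing edges: every crossing edge between two honest parties is neutralized by corrupting one of its endpoints and replacing the message it would have sent with one from the appropriate virtual world. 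Because the only setup is a \emph{public} CRS and there are no per-party secret keys, such simulated messages are indistinguishable from honestly-generated ones; the absence of a PKI is exactly what enables $\Adv$ to impersonate honest parties when needed.

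The main obstacle, and the source of the promised delicacy, is that the cut $(\island, \Visland)$ is determined by the execution itself and in particular by the honest parties' private randomness; $\Adv$ therefore does not know it in advance and cannot hardwire the world assignment at the start. My plan is to let $\Adv$ grow the partition \emph{adaptively}: while running the protocol, $\Adv$ internally runs the two virtual transcripts and tentatively assigns each honest party to the world whose simulated messages it has received so far. Whenever the protocol schedules a message between two honest parties with differing tentative assignments, $\Adv$ adaptively corrupts one endpoint just before the message is sent and thereafter simulates that endpoint consistently with its target world. A counting argument then bounds the number of corruptions used by the edge boundary of the final cut, which by assumption is $o(n) \ll t(n) \in \Theta(n)$, so $\Adv$ stays safely within the corruption budget.

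To close the argument, I would verify that conditioned on the good event that a sublinear cut arises (which happens with non-negligible probability), the joint view delivered by $\Adv$ to the parties in $\island$ (respectively $\Visland$) is computationally indistinguishable from that of a real execution with $P^\ast$'s input set to $x_0$ (respectively $x_1$). By the correctness of $\pi$ on each of these ``pure'' inputs, the honest parties in $\island$ must output $x_0$ while those in $\Visland$ must output $x_1$ for $P^\ast$'s contribution, violating consistency and producing the required contradiction. I expect the bulk of the remaining work to lie in formalizing the adaptive world-assignment so that it is well defined throughout the execution, and in showing that the global transcript produced by $\Adv$ is jointly consistent with a real run on each side despite the two sides disagreeing on $P^\ast$'s input; this is precisely where the argument must be handled with care.
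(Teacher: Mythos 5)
There is a genuine gap, and it sits exactly where the paper's proof does its hardest work: the corruption-budget and timing argument. You bound the number of adaptive corruptions by the edge boundary of the final sublinear cut, but the adversary does not know that cut; it is only guaranteed to exist at the \emph{end} of the execution and its location depends on the parties' inputs and randomness. Your tentative world-assignment is driven by the evolving graph, and early in the protocol the eventual bipartition is simply not determined, so messages exchanged between parties whose tentative assignments differ need not be edges of the final cut at all --- the number of endpoints you are forced to corrupt is not bounded by $o(n)$, and nothing in the proposal rules out that it becomes linear, or that the assignment is even well defined (a party may have received genuine traffic consistent with the $x_0$-world before any crossing edge was neutralized). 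Dually, if the adversary waits until the cut is identifiable, information about $P^*$'s input may already have crossed it, and substituting simulated $x_1$-world messages afterwards yields views inconsistent with what the far side already saw. The paper resolves this tension with two ingredients your plan lacks: a first phase that ``buys time'' by isolating a \emph{randomly chosen} party $P_{i^*}$ (corrupting its neighbors while simulating it on a fresh random input), shown to last --- with noticeable rather than overwhelming probability, which shapes the whole argument --- until every party has linear degree; and a graph-theoretic theorem stating that a graph of minimum linear degree has only a constant number of sublinear cuts, all of which can then be blocked simultaneously with only a sublinear number of further corruptions.

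A second gap is the two-world forcing itself. To make the parties on the far side output a specific wrong value $x_1$, your adversary must present them with an entire consistent alternative execution of the near side --- including honest parties whose randomness it does not know --- consistent with all genuine cross-cut traffic delivered before corruptions began; this is precisely the step you defer to ``handled with care.'' The paper avoids it altogether: it runs the protocol on \emph{random} inputs and proves an entropy bound --- the blocked honest party's view is simulatable from an input-independent ``red'' execution plus the identity of the final cut, which carries only constantly many bits, so the view retains about $n$ bits of uncertainty about $x_{i^*}$ and agreement plus validity (argued via a dual adversary that corrupts $P_{i^*}$ and pretends it is under the first attack, which is also what guarantees honest parties survive across the cut) yields the contradiction. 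Note also the side-channel issue the paper must handle and your indistinguishability claim ignores: even when no bits cross the cut, the corruption pattern and the very identity of the cut can leak information about $P^*$'s input, so ``no communication crossed'' does not by itself give the independence you need.
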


\noindent
\cref{thm:intro:LB} in particular implies that the resulting communication graph must have a form of expansion.
We note that in a weaker communication model, a weaker form of consensus, namely Byzantine agreement, can be computed in a way that the underlying graph (while still an expander) has low-weight cuts~\cite{KS10}.
We elaborate on the differences between the two settings in the related work, \cref{sec:intro:relatedwork}.

It is indeed quite intuitive that if a sublinear cut exists in the communication graph of the protocol, and the adversary can adaptively corrupt a linear number of parties $t(n)$, then he could corrupt the parties on the cut and block information flow. The challenge, however, stems from the fact that the cut is not known a priori but is only revealed over time, and by the point at which the cut is identifiable, all necessary information may have already been transmitted across the cut. In fact, even the identity of the cut and visible properties of the communication graph itself can convey information to honest parties about input values without actual bits being communicated.
This results in a surprisingly intricate final attack, involving multiple indistinguishable adversaries, careful corruption strategies, and precise analysis of information flow. See below for more detail.

\subsection{Our Techniques}\label{sec:intro:technique}

We proceed to discuss the technical aspects of the lower bound result. We refer the reader to \cref{sec:upperbound_protocol} for an overview of the upper bound result.

\paragraph{Overview of the attack.}

Consider an execution of the parallel broadcast protocol over \emph{random} inputs. At a high level, our adversarial strategy, denoted $\AdvneHonest$, will select a party $\Party_\is$ at random and attempt to block its input from being conveyed to honest parties. We are only guaranteed that somewhere in the graph will remain a sublinear cut. Because the identity of the eventual cut is unknown, it cannot be attacked directly.
We take the following approach:
\begin{enumerate}
    \item \textbf{\phaseI.}
    Rather, our attack will first ``buy time'' by corrupting the neighbors of $\Party_\is$, and blocking information flow of its input $x_\is$ to the remaining parties.
    Note that this can only continue up to a certain point, since the degree of $\Party_\is$ will eventually surpass the corruption threshold (as we prove).
    But, the benefit of this delay is that in the meantime, the communication graph starts to fill in, which provides more information about the locations of the potential cuts.

    For this to be the case, it must be that the parties cannot identify that $\Party_\is$ is under attack (otherwise, the protocol may instruct many parties to quickly communicate to/from $\Party_\is$, forcing the adversary to run out of his ``corruption budget'' before the remaining graph fills in).
    The adversary thus needs to fool all honest parties and make each honest party believe that he participates in an honest execution of the protocol.
    This is done by maintaining two simulated executions: one pretending to be $\Party_\is$ running on a random input, and another pretending (to $\Party_\is$) to be all other parties running on random inputs.
    Note that for this attack strategy to work it is essential that the parties do not have pre-computed private-coin setup such as PKI.
	
    \item \textbf{\phaseII.}
    We show that with noticeable probability, by the time we run out of the \phaseI corruption threshold (which is a linear number of parties), \emph{all parties} in the protocol have high (linear) degree. In turn, we prove that the current communication graph can have at most a constant number of sublinear cuts.
	
    In the remainder of the protocol execution, the adversary will simultaneously attack all of these cuts. Namely, he will block information flow from $\Party_\is$ across any of these cuts by corrupting the appropriate ``bridge'' party, giving up on each cut one by one when a certain threshold of edges have already crossed it.
\end{enumerate}

\noindent
If the protocol is guaranteed to maintain a sublinear cut, then necessarily there will remain at least one cut for which all \phaseII communication across the cut has been blocked by the adversary.
Morally, parties on the side of this cut opposite $\Party_\is$ should not have learned $x_\is$, and thus the \emph{correctness} of the protocol should be violated.
Proving this, on the other hand, requires surmounting two notable challenges.
\begin{enumerate}
    \item
    We must prove that there still remains an uncorrupted party $\Party_\js$ on the opposite side of the cut. It is not hard to show that each side of the cut is of linear size, that $\Party_\is$ has a sublinear number of neighbors across the cut (all of which are corrupted), and that a sublinear number of parties get corrupted in \phaseII. Hence, there exists parties across the cut that are not neighbors of $\Party_\is$ and that are not corrupted in \phaseII. However, by the attack strategy, all of the neighbors of the \emph{virtual} $\Party_\is$ are corrupted in \phaseI as well, and this is also a linear size set, which is independent of the real neighbors of $\Party_\is$. Therefore, it is not clear that there will actually remain honest parties across the cut by the end of the protocol execution.
    \item
    More importantly, even though we are guaranteed that no bits of communication have been passed along any path from $\Party_\is$ to $\Party_\js$, this does not imply that no \emph{information} about $x_\is$ has been conveyed. For example, since the graph develops as a function of parties' inputs, it might be the case that this situation of $\Party_\js$ being blocked from $\Party_\is$, only occurs when $x_\is$ equals a certain value.
\end{enumerate}

We now discuss how these two challenges are addressed.

\paragraph{Guaranteeing honest parties across the cut.}
Unexpectedly, we cannot guarantee existence of honest parties across the cut.
Instead, we introduce a different adversarial strategy, which we prove \emph{must} have honest parties blocked across a cut from $\Party_\is$, and for which there exist honest parties who cannot distinguish which of the two attacks is taking place.
More explicitly, we consider the ``dual'' version of the original attack, denoted $\AdvneCorrupt$, where party $\Party_\is$ is \emph{corrupted} and instead pretends to be under attack as per $\AdvneHonest$ above.

Blocking honest parties from $x_\is$ in $\AdvneCorrupt$ does not contradict correctness explicitly on its own, as $\Party_\is$ is corrupted in this case. It is the combination of both of these attacks that will enable us to contradict correctness.
Namely, we prove that:
\begin{itemize}
    \item
    Under the attack $\AdvneCorrupt$, there exists a ``blocked cut'' $(S,\comp{S})$ with uncorrupted parties on both sides. By \emph{agreement}, all uncorrupted parties output the same value $y_\is$ as the $\is$'th coordinate of the output vector.
    \item
    The view of some of the uncorrupted parties under the attack $\AdvneCorrupt$ is identically distributed as that of uncorrupted parties in the original attack $\AdvneHonest$. Thus, their output distribution must be the same across the two attacks.
    \item
    Since under the attack $\AdvneHonest$, the party $\Party_\is$ is honest, by \emph{completeness}, all uncorrupted parties in $\AdvneHonest$ must output the \emph{correct} value $y_\is=x_\is$.
    \item
    Thus, uncorrupted parties in $\AdvneCorrupt$ (who have the same view) must output the correct value $x_\is$ as well.
\end{itemize}

\noindent
Altogether, this implies all honest parties in interaction with $\AdvneCorrupt$, in particular $\Party_\js$ who is blocked across the cut from $\Party_\is$, must output $y_\is=x_\is$.

\paragraph{Bounding information transmission about $x_\is$.}
The final step is to show that this cannot be the case, since an uncorrupted party $\Party_\js$ across the cut in $\AdvneCorrupt$ does not receive enough information about $x_\is$ to fully specify the input.
This demands delicate treatment of the specific attack strategy and analysis, as many ``side channel'' signals within the protocol can leak information on $x_\is$.
Corruption patterns in \phaseII, and their timing, can convey information ``across'' the isolated cut.
In fact, even the event of successfully reaching \phaseII may be correlated with the value of $x_\is$.

For example, say the cut at the conclusion of the protocol is $(S_1,\comp{S}_1)$ with $\is\in S_1$ and $\js\in \comp{S}_1$, but at the beginning of \phaseII there existed another cut $(S_2,\comp{S}_2)$, for which $S_1 \cap S_2 \neq \emptyset$, $S_1 \cap \comp{S}_2 \neq \emptyset$, $\comp{S}_1 \cap S_2 \neq \emptyset$, and $\comp{S}_1 \cap \comp{S}_2 \neq \emptyset$. Since any ``bridge'' party in $\comp{S}_2$ that receives a message from $S_2$, gets corrupted and discards the message, the view of honest parties in $\comp{S}_1$ might change as a result of the corruption related to the cut $(S_2,\comp{S}_2)$, which in turn could depend on $x_\is$.
See \cref{fig:manycuts} for an illustration of this situation.

Ultimately, we ensure that the final view of $\Party_\js$ in the protocol can be simulated given only ``\phaseI'' information, which is independent of $x_\is$, in addition to the identity of the final cut in the graph, which reveals only a constant amount of additional entropy.

\begin{figure}[htb]
	\begin{center}
		\includegraphics[scale=0.6]{./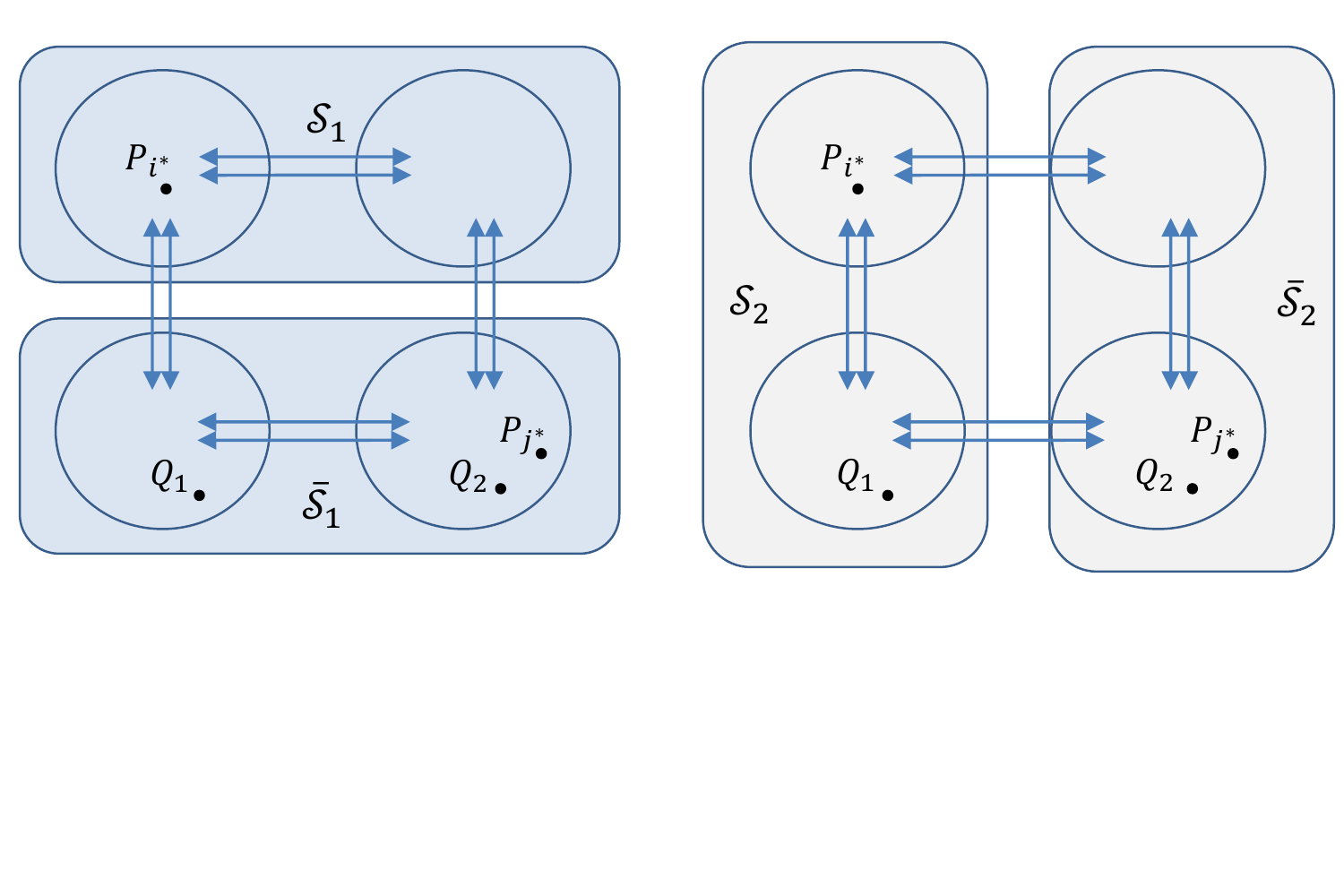}
	\end{center}
	\vspace{-15ex}
	\caption{At the end of \phaseI the communication graph is partitioned into 4 linear-size ``islands'' that are connected by sublinear many edges. On the left is the potential cut $(S_1,\comp{S}_1)$ and on the right the potential cut $(S_2,\comp{S}_2)$. If party $Q_1$ sends a message to party $Q_2$ then $Q_2$ gets corrupted and discards the message. This event can be identified by all parties in $\comp{S}_1$, and in particular by $\Party_\js$ by the end of the protocol.}
	\label{fig:manycuts}
\end{figure}

\paragraph{Additional subtleties.}
The actual attack and its analysis are even more delicate.
For example, it is important that the degree of the ``simulated $\Party_\is$,'' by the adversarial strategy $\AdvneHonest$, will reach the threshold faster than the real $\Party_\is$. In addition, in each of these cases, the threshold, and so the transition to the next phase, could possibly be reached in a middle of a round, requiring detailed treatment.

\subsection{Open Questions}

This work leaves open many interesting lines of future study.
\begin{itemize}
	\item
    Bridging the gap between upper and lower bounds. This equates to identifying the core properties that necessitate graph expansion versus not. Natural candidates suggested by our work are existence of setup information and adaptive corruptions in the hidden or visible (yet private) channels model.
	\item
    What other graph properties are necessary (or not) to support secure computation? Our new definitional framework may aid in this direction.
    \item
    Our work connects graph theory and secure protocols, giving rise to further questions and design principles. For example, can good constructions of expanders give rise to new communication-efficient MPC? On the other hand, can necessity of expansion (in certain settings) be used to argue new communication complexity lower bounds?
\end{itemize}

\subsection{Additional Related Work}\label{sec:intro:relatedwork}

Communication graphs induced by fault-tolerant protocols is a field that has been intensively studied in various aspects.

In the fixed-graph model, where the parties can communicate over a pre-determined partial graph, there have been many work for realizing secure message transmission~\cite{DDWY93,FY04,SA96,BF99,KGSR02,SNR04,BJLM06,BM05,ACH06,FFGV07,KurSuz09,SZ16}, Byzantine agreement~\cite{Dolev82,DkPPU88,BG93,Upfal92}, and secure multiparty computation~\cite{Beimel07,BJLM11,BJLM06,CGO15,CGO10,CGO12}.

In the setting of \emph{topology-hiding} secure computation (THC)~\cite{MOR15,HMTZ16,AM17,ALM17,BBMM18}, parties communicate over a partial graph, and the goal is to hide which pairs of honest parties are neighbors.
This is a stronger property than considered in this work, as we do not aim to hide the topology of the graph (in particular, the entire communication graph can be revealed by the conclusion of the protocol).
Intuitively, topology-hiding protocols can support non-expanding graphs since sublinear cuts should not be revealed during the protocol.
A followup work~\cite{BBCMM19} explored this connection, and showed that the classical definition of THC is in fact too strong to hide sublinear cuts in the graph, and demonstrated how to capture this property by a weaker definition called distributional-topology-hiding computation.
The separation between these security definitions is based on the distribution of graphs induced by the protocol in \cref{sec:communication_graph_expander}.

Another direction to study the connection between MPC and graph theory, explored in~\cite{HIK07,KRS16}, is to consider MPC protocols that are based on oblivious transfer (OT), and to analyze the graph structure that is induced by all pairwise OT channels that are used by the protocol.

King and Saia \cite{KS10} presented a Byzantine agreement protocol that is secure against adaptive corruptions and (while still being an expander) its communication graph has sublinear cuts. Compared to our lower bound (\cref{sec:LB_Expander}), both results do not assume any trusted setup and both consider adaptive corruptions. However, we highlight three aspects in which the setting in~\cite{KS10} is weaker. First, \cite{KS10} realize \emph{Byzantine agreement} which is a weaker functionality than parallel broadcast; indeed, the standard techniques of reducing broadcast to Byzantine agreement requires the sender to first send its input to all other parties who then run BA --- when every party acts as the sender this implies a complete communication graph. Second, \cite{KS10} assume hidden channels where the adversary is unaware of honest-to-honest communication. And third, \cite{KS10} consider \emph{atomic message delivery}, meaning that once a party has sent a message to the network, the adversary cannot change the content of the message even by corrupting the sender and before any honest party received it (for more details see~\cite{GKKZ11}, where atomic message delivery was used to overcome the lower bound of~\cite{HZ10}).

Abraham et al.\ \cite{ACDNPRS19} showed that when the adversary can remove messages of corrupted parties after the fact (\ie without assuming atomic message delivery), adaptively secure BA requires every honest party to communicate with $\Omega(t)$ neighbors; \ie every honest party has linear locality. We note that this does not imply our result since a linear degree does not rule out the existence of a sublinear cut.

\subsection*{Paper Organization}
Basic notations are presented in \cref{sec:Preliminaries}.
In \cref{sec:communication_graph_expander}, we provide our formalization of the communication graph induced by an MPC protocol and related properties.
In \cref{sec:ne_mpc}, we describe our upper bound results, constructing protocols with non-expanding graphs.
In \cref{sec:LB_Expander}, we prove our lower bound.
We defer general preliminaries and further details to the appendix.


\section{Preliminaries}\label{sec:Preliminaries}

\paragraph{Notations.}
In the following we introduce some necessary notation and terminology.
For $n,n_1,n_2\in\mathbb{N}$, let $[n]=\sset{1,\cdots,n}$ and $[n_1,n_2]=\sset{n_1,\cdots,n_2}$. We denote by $\secParam$ the security parameter. Let $\poly$ denote the set of all positive polynomials and let PPT denote a probabilistic algorithm that runs in \emph{strictly} polynomial time. A function $\nu \colon \mathbb{N} \rightarrow \mathbb{R}$ is \textit{negligible} if $\nu(\secParam)<1/p(\secParam)$ for every $p\in\poly$ and sufficiently large $\secParam$.
Given a random variable $X$, we write $x\gets X$ to indicate that $x$ is selected according to $X$.
The statistical distance between two random variables $X$ and $Y$ over a finite set $\Uni$, denoted $\SD(X,Y)$, is defined as $\frac12 \cdot \sum_{u\in \Uni}\size{\pr{X = u}- \pr{Y = u}}$.

Two distribution ensembles $X=\set{X(a,\secParam)}_{a\in\zs,\secParam\in\N}$ and $Y=\set{Y(a,\secParam)}_{a\in\zs,\secParam\in\N}$ are \textsf{computationally indistinguishable} (denoted $X\compindist Y$) if for every non-uniform polynomial-time distinguisher $\Adv$ there exists a function $\nu(\secParam) = \negl(\secParam)$, such that for every $a\in\zs$ and every $\secParam$,
\[
\abs{\pr{\Adv(X(a,\secParam),1^\secParam)=1} - \pr{\Adv(Y(a,\secParam),1^\secParam)=1}}\leq \nu(\secParam).
\]
The distribution ensembles $X$ and $Y$ are \textsf{statistically close} (denoted $X\statclose Y$) if for every $a\in\zs$ and every $\secParam$ it holds that $\SD(X(a,\secParam),Y(a,\secParam))\leq \nu(\secParam)$.

\paragraph{Graph-theoretic notations.}
Let $G=(V,E)$ be an undirected graph of size $n$, \ie $\ssize{V}=n$.
Given a set $S\subseteq V$, we denote its complement set by $\comp{S}$, \ie $\comp{S}=V\setminus S$.
Given two disjoint subsets $U_1,U_2\subseteq V$ define the set of all the edges in $G$ for which one end point is in $U_1$ and the other end point is in $U_2$ as
\[
\edges_G(U_1,U_2)\assign \set{(u_1,u_2):u_1\in U_1, u_2\in U_2, \text{ and } (u_1,u_2)\in E}.
\]
We denote by $\ssize{\edges_G(U_1,U_2)}$ the total number of edges going across $U_1$ and $U_2$.
For simplicity, we denote $\edges_G(S)=\edges_G(S,\comp{S})$.
A \textsf{cut} in the graph $G$ is a partition of the vertices $V$ into two non-empty, disjoint sets $\sset{S,\comp{S}}$.
The \textsf{weight} of a cut $\sset{S,\comp{S}}$ is defined to be $\ssize{\edges_G(S)}$.
An $\alpha$-cut is a cut $\sset{S,\comp{S}}$ whose weight is smaller than $\alpha$, \ie such that $\ssize{\edges_G(S)}\leq\alpha$.

Given a graph $G=(V,E)$ and a node $i\in V$, denote by $G\setminus\sset{i}=(V',E')$ the graph obtained by removing node $i$ and all its edges, \ie $V'=V\setminus\sset{i}$ and $E'=E\setminus\sset{(i,j) \mid j\in V'}$.

\paragraph{MPC Model.}
We consider multiparty protocols in the stand-alone, synchronous model, and require security with guaranteed output delivery.
We elaborate on the model in \cref{sec::Def:model}, and refer the reader to \cite{Canetti00,Goldreich04} for a precise definition of the model.
Throughout the paper we assume malicious adversaries that can deviate from the protocol in an arbitrary manner. We will consider both \emph{static} corruptions, where the set of corrupted parties is fixed at the onset of the protocol, and \emph{adaptive} corruptions, where the adversary can dynamically corrupt parties during the protocol execution, In addition, we will consider both PPT adversaries and computationally unbounded adversaries

Recall that in the synchronous model protocols proceed in rounds, where every round consists of a \emph{send phase} followed by a \emph{receive phase}.
The adversary is assumed to be \emph{rushing}, meaning that he can determine the messages for corrupted parties \emph{after} seeing the messages sent by the honest parties. We assume a complete network of point-to-point channels (broadcast is not assumed), where every party has the ability to send a message to every other party.
We will normally consider \emph{secure} (private) channels where the adversary learns that a message has been sent between two honest parties, but not its content. If a public-key encryption is assumed, this assumption can be relaxed to \emph{authenticated} channels, where the adversary can learn the content of all messages (but not change them). For our upper bound in the adaptive setting (\cref{sec:ne_mpc_adaptive}) we consider \emph{hidden} channels (as introduced in~\cite{CCGGOZ15}), where the adversary does not even know whether two honest parties have communicated or not.

\section{Communication Graphs Induced by MPC Protocols}\label{sec:communication_graph_expander}

In this section, we present formal definitions of properties induced by the communication graph of interactive protocols. These definitions are inspired by previous works in distributed computing~\cite{KSSV06,KKKSS08,KS10,KLST11} and multiparty computation~\cite{BGT13,CCGGOZ15,BCP15} that constructed interactive protocols with \emph{low locality}.

\subsection{Ensembles of Protocols and Functionalities}

In order to capture certain asymptotic properties of the communication graphs of generic $n$-party protocols, such as edge expansion and locality, it is useful to consider a family of protocols that are parametrized by the number of parties $n$. This is implicit in many distributed protocols and in generic multiparty protocols, for example~\cite{PSL80,LSP82,DS83,GMW87,BGW88}. We note that for many large-scale protocols, \eg protocols with low locality~\cite{KSSV06,KKKSS08,KS10,KLST11,BGT13,BCP15}, the security guarantees increase with the number of parties, and in fact, the number of parties is assumed to be polynomially related to the security parameter.

\begin{definition}[protocol ensemble]
Let $f=\sset{f_n}_{n\in\N}$ be an ensemble of functionalities, where $f_n$ is an $n$-party functionality, let $\pi=\sset{\pi_n}_{n\in\N}$ be an ensemble of protocols, and let $\CS=\sset{\CS_n}_{n\in\N}$ be an ensemble of classes of adversaries (\eg $\CS_n$ is the class of PPT $t(n)$-adversaries). We say that $\pi$ securely computes $f$ tolerating adversaries in $\CS$ if for every $n$ that is polynomially related to the security parameter $\secParam$, it holds that $\pi_n$ securely computes $f_n$ tolerating adversaries in $\CS_n$.
\end{definition}

In \cref{sec:ne_mpc}, we will consider several classes of adversaries. We use the following notation for clarity and brevity.
\begin{definition}
Let $f=\sset{f_n}_{n\in\N}$ be an ensemble of functionalities and let $\pi=\sset{\pi_n}_{n\in\N}$ be an ensemble of protocols.
We say that $\pi$ securely computes $f$ tolerating adversaries of the form $\type$ (\eg static PPT $t(n)$-adversaries, adaptive $t(n)$-adversaries, etc.), if $\pi$ securely computes $f$ tolerating adversaries in $\CS=\sset{\CS_n}_{n\in\N}$, where for every $n$, the set $\CS_n$ is the class of adversaries of the form $\type$.
\end{definition}

\subsection{The Communication Graph of a Protocol's Execution}

Intuitively, the communication graph induced by a protocol should include an edge $(i,j)$ precisely if parties $\Party_i$ and $\Party_j$ exchange messages during the protocol execution.
For instance, consider the property of \emph{locality}, corresponding to the maximum degree of the communication graph.
When considering malicious adversaries that can deviate from the protocol using an arbitrary strategy, it is important to consider only messages that are sent by honest parties and messages that are received by honest parties. Otherwise, every corrupted party can send a message to every other corrupted party, yielding a subgraph with degree $\Theta(n)$. We note that restricting the analysis to only consider honest parties is quite common in the analysis of protocols.

Another issue that must be taken under consideration is flooding by the adversary. Indeed, there is no way to prevent the adversary from sending messages from all corrupted parties to all honest parties; however, we wish to only count those message which are actually processed by honest parties. To model this, the \emph{receive} phase of every communication round\footnote{Recall that in the synchronous model, every communication round is composed of a \emph{send} phase and a \emph{receive} phase, see \cref{sec::Def:model}.} is composed of two sub-phases:
\begin{enumerate}
	\item \emph{The filtering sub-phase:}
    Each party inspects the list of messages received in the previous round, according to specific filtering rules defined by the protocol, and discards the messages that do not pass the filter.
	The resulting list of messages is appended to the local transcript of the protocol.
    \item \emph{The processing sub-phase:}
    Based on its local transcript, each party computes the next-message function and obtains the list of messages to be sent in the current round along with the list of recipients, and sends them to the relevant parties.
\end{enumerate}

\noindent
In practice, the filtering procedure should be ``lightweight,'' such as verifying validity of a signature. However, we assume only an abstraction and defer the actual choice of filtering procedure (as well as corresponding discussion) to specific protocol specifications.
We note that the above two-phase processing of rounds is implicit in protocols from the literature that achieve low locality~\cite{KSSV06,KKKSS08,KS10,KLST11,BGT13,CCGGOZ15,BCP15}. It is also implicit when analyzing the communication complexity of general protocols, where malicious parties can send long messages to honest parties, and honest parties filter out invalid messages before processing them.

We now turn to define the communication graph of a protocol's execution, by which we mean the deterministic instance of the protocol defined by fixing the adversary and all input values and random coins of the parties and the adversarial strategy. We consider protocols that are defined in the correlated-randomness model (\eg for establishing PKI). This is without loss of generality since by defining the ``empty distribution,'' where every party is given an empty string, we can model also protocols in the plain model.
Initially, we focus on the \emph{static} setting, where the set of corrupted parties is determined at the onset of the protocol.
In \cref{sec:def_adaptive}, we discuss the \emph{adaptive} setting.

\begin{definition}[protocol execution instance]\label{def:protocol_instance}
For $n \in \N$, let $\pi_n$ be an $n$-party protocol, let $\secParam$ be the security parameter, let $\vx=(x_1,\ldots,x_n)$ be an input vector for the parties, let $\vrho=(\rho_1,\ldots,\rho_n)$ be correlated randomness for the parties, let $\Adv$ be an adversary, let $\aux$ be the auxiliary information of $\Adv$, let $\IS\subseteq[n]$ be the set of indices of corrupted parties controlled by $\Adv$, and let $\vr=(r_1,\ldots,r_n,r_\Adv)$ be the vector of random coins for the parties and for the adversary.

Denote by $\instance(\pi_n)=(\pi_n,\Adv,\IS,\secParam,\vx,\vrho,\aux,\vr)$ the list of parameters that deterministically define an \textsf{execution instance of the protocol $\pi_n$}.
\end{definition}

Note that $\instance(\pi_n)$ fully specifies the entire views and transcript of the protocol execution, including all messages sent to/from honest parties.

\begin{definition}[communication graph of protocol execution]\label{def:comm_graph}
For $n \in \N$, let $\instance(\pi_n)=(\pi_n,\Adv,\IS,\secParam,\vx,\vrho,\aux,\vr)$ be an execution instance of the protocol $\pi_n$. We now define the following communication graphs induced by this execution instance. Each graph is defined over the set of $n$ vertices $[n]$.
\begin{itemize}
    \item\emph{Outgoing communication graph.}
    The directed graph $G_\outgoing(\instance(\pi_n))=([n],E_\outgoing)$ captures all the communication lines that are used by honest parties to send messages. That is,
    \[
    E_\outgoing(\instance(\pi_n)) = \set{(i,j) \mid \Party_i \text{ is honest and sent a message to } \Party_j}.
    \]

    \item\emph{Incoming communication graph.}
    The directed graph $G_\incoming(\instance(\pi_n))=([n],E_\incoming)$ captures all the communication lines in which honest parties received messages that were processed (\ie excluding messages that were filtered out). That is,
    \[
    E_\incoming(\instance(\pi_n)) = \set{(i,j) \mid \Party_j \text{ is honest and processed a message received from } \Party_i}.
    \]

    \item\emph{Full communication graph.}
    The undirected graph $G_\full(\instance(\pi_n))=([n],E_\full)$ captures all the communication lines in which honest parties received messages that were processed, or used by honest parties to send messages.
    That is,
    \[
    E_\full(\instance(\pi_n)) = \set{(i,j) \mid (i,j)\in E_\outgoing \text{ or } (i,j)\in E_\incoming}.
    \]
\end{itemize}
\end{definition}
We will sometimes consider ensembles of protocol instances (for $n\in\N$) and the corresponding ensembles of graphs they induce.

Looking ahead, in subsequent sections we will consider the full communication graph $G_\full$.
Apart from making the presentation clear, the graphs $G_\outgoing$ and $G_\incoming$ are used for defining $G_\full$ above, and the locality of a protocol in \cref{def:locality_of_protocol}.
Note that $G_\outgoing$ and $G_\incoming$ are interesting in their own right, and can be used for a fine-grained analysis of the communication graph of protocols in various settings, \eg when transmitting messages is costly but receiving messages is cheap (or vice versa).
We leave it open as an interesting problem to study various graph properties exhibited by these two graphs.

\subsection{Locality of a Protocol}

We now present a definition of communication locality, aligning with that of~\cite{BGT13}, with respect to the terminology introduced above.
\begin{definition}[locality of a protocol instance]\label{def:locality_of_protocol}
Let $\instance(\pi_n) = (\pi_n, \secParam, \vx, \vrho,\Adv, \aux, \IS\subseteq[n], \vr)$ be an execution instance as in \cref{def:comm_graph}.
For every honest party $\Party_i$ we define the \textsf{locality of party $\Party_i$} to be the number of parties from which $\Party_i$ received and processed messages, or sent messages to; that is,
\[
\ell_i(\instance(\pi_n)) = \size{\set{j \mid (i,j) \in G_\outgoing} \cup \set{j \mid (j,i) \in G_\incoming}}.
\]
The locality of $\instance(\pi_n)$ is defined as the maximum locality of an honest party, \ie
\[
\ell(\instance(\pi_n)) = \max_{i\in[n]\setminus\IS}\set{\ell_i(\instance(\pi_n))}.
\]
\end{definition}

We proceed by defining locality as a property of a protocol ensemble.
The protocol ensemble is parametrized by the number of parties $n$.
To align with standard notions of security where asymptotic measurements are with respect to the security parameter $\secParam$, we consider the situation where the growth of $n$ and $\secParam$ are polynomially related.

\begin{definition}[locality of a protocol]\label{def:locality_protocol}
Let $\pi=\sset{\pi_n}_{n\in\N}$ be a family of protocols in the correlated-randomness model with distribution $D_{\pi}=\sset{D_{\pi_n}}_{n\in\N}$, and let $\CS=\sset{\CS_n}_{n\in\N}$ be a family of adversary classes. We say that $\pi$ has locality $\ell(n)$ facing adversaries in $\CS$ if for every $n$ that is polynomially related to $\secParam$ it holds that for every input vector $\vx=(x_1,\ldots,x_n)$, every auxiliary information $\aux$, every adversary $\Adv\in\CS_n$ running with $\aux$, and every set of corrupted parties $\IS\subseteq[n]$, it holds that
\[
\pr{\ell(\pi_n,\Adv,\IS,\secParam,\vx,\aux) > \ell(n)} \leq \negl(\secParam),
\]
where $\ell(\pi_n,\Adv,\IS,\secParam,\vx,\aux)$ is the random variable corresponding to $\ell(\pi_n,\Adv,\IS,\secParam,\vx,\vrho,\aux,\vr)$ when $\vrho$ is distributed according to $D_{\pi_n}$ and $\vr$ is uniformly distributed.
\end{definition}

The following proposition follows from the sequential composition theorem of Canetti \cite{Canetti00}.
\begin{proposition}[composition of locality]\label{prop:Composition}
Let $f=\sset{f_n}_{n\in\N}$ and $g=\sset{g_n}_{n\in\N}$ be ensembles of $n$-party functionalities.
\begin{itemize}
    \item
    Let $\varphi=\sset{\varphi_n}_{n\in\N}$ be a protocol ensemble that securely computes $f$ with locality $\ell_\varphi$ tolerating adversaries in $\CS=\sset{\CS_n}_{n\in\N}$.
    \item
    Let $\pi=\sset{\pi_n}_{n\in\N}$ be a protocol that securely computes $g$ with locality $\ell_\pi$ in the $f$-hybrid model, tolerating adversaries in $\CS$, using $q=q(n)$ calls to the ideal functionality.
\end{itemize}
Then protocol $\pi^{f\mapsto\varphi}$, that is obtained from $\sset{\pi_n}$ by replacing all ideal calls to $f_n$ with the protocol $\varphi_n$, is a protocol ensemble that securely computes $g$ in the real model, tolerating adversaries in $\CS$, with locality at most $\ell_\pi + q\cdot\ell_\varphi$.
\end{proposition}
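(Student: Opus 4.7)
The plan is to separate the claim into two parts: (i) security of the composed protocol $\pi^{f\mapsto\varphi}$, and (ii) the locality bound $\ell_\pi + q\cdot\ell_\varphi$. The first part is a black-box invocation of Canetti's sequential composition theorem~\cite{Canetti00}: since $\varphi$ securely realizes $f$ tolerating $\CS$, every ideal call to $f_n$ inside $\pi_n$ can be replaced by an invocation of $\varphi_n$ while preserving security against the same class of adversaries. The locality bound, in contrast, is a combinatorial statement about the communication graph of the composed execution, and is the only non-trivial content of the proposition.

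For the locality bound, I would proceed as follows. Fix an arbitrary $n$ polynomially related to $\secParam$, an input vector $\vx$, auxiliary input $\aux$, corruption set $\Is\subseteq[n]$, and an adversary $\Adv\in\CS_n$ attacking $\pi^{f\mapsto\varphi}_n$. Following the composition theorem, decompose the execution of $\Adv$ interacting with $\pi^{f\mapsto\varphi}_n$ into: a ``top-level'' execution of $\pi_n$ in the $f$-hybrid model against some adversary $\Adv_0$ (simulating the outer behavior of $\Adv$), together with $q$ sub-executions of $\varphi_n$ against adversaries $\Adv_1,\ldots,\Adv_q$ (simulating the behavior of $\Adv$ inside each ideal call). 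Each $\Adv_j$ lies in $\CS_n$ (this is guaranteed by the fact that both protocols are secure against the same class). For any honest party $\Party_i$, every edge incident to $i$ in the full communication graph $G_\full$ of the composed execution is contributed either by the top-level $\pi_n$ execution or by one of the $q$ sub-executions of $\varphi_n$. Hence
\[
\ell_i\bigl(\instance(\pi^{f\mapsto\varphi}_n)\bigr) \;\le\; \ell_i\bigl(\instance(\pi_n)\bigr) + \sum_{j=1}^{q} \ell_i\bigl(\instance(\varphi_n^{(j)})\bigr),
\]
where $\varphi_n^{(j)}$ denotes the $j$'th invocation. Taking maxima over honest $i$, the same bound holds for the overall localities.

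Finally, by the locality of $\pi$ and $\varphi$ respectively, each of the $q+1$ summands exceeds its claimed bound only with probability negligible in $\secParam$ over the protocols' correlated randomness and coins. A union bound over the (polynomially many, since $q=q(n)=\poly(\secParam)$) summands shows that the total locality exceeds $\ell_\pi(n)+q(n)\cdot \ell_\varphi(n)$ with only negligible probability, as required by \cref{def:locality_protocol}. The main (mild) subtlety I anticipate is ensuring that the adversaries $\Adv_0,\Adv_1,\ldots,\Adv_q$ induced by the composition argument genuinely fall into $\CS_n$ with the same auxiliary input and corruption set, so that the locality hypotheses for $\pi$ and $\varphi$ can be applied to them; this is the standard bookkeeping underlying Canetti's composition theorem and carries over unchanged to our setting.
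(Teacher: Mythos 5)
Your proposal is correct and matches the paper's treatment: the paper gives no explicit proof beyond asserting that the proposition follows from Canetti's sequential composition theorem, and your sketch simply makes explicit the security invocation plus the per-party edge-counting and union bound that this assertion leaves implicit. The one bookkeeping point you flag (that the derived top-level and sub-execution adversaries remain in $\CS_n$) is indeed the only subtlety, and it is handled exactly as you describe.
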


\subsection{Edge Expansion of a Protocol}

The measure of complexity we study for the communication graph of interactive protocols will be that of \emph{edge expansion} (see discussion below). We refer the reader to~\cite{HLW06,DW10} for more background on expanders.
We consider a definition of edge expansion which satisfies a natural monotonic property, where adding more edges cannot decrease the graph's  measure of expansion (see discussion in \cref{sec:expander_discussion}).

\begin{definition}{(edge expansion of a graph)}\label{def:graph_edge_expansion}
Given an undirected graph $G=(V,E)$, the \textsf{edge expansion ratio of $G$}, denoted $h(G)$, is defined as
\begin{equation}\label{eq:definition_h(G)}
h(G)=\min_{\sset{S\subseteq V: \ssize{S}\le \frac{\ssize{V}}{2}}} \frac{\ssize{\edges(S)}}{\ssize{S}} \ ,
\end{equation}
where $\edges(S)$ denotes the set of edges between $S$ and its complement $\comp{S}=V\setminus S$.
\end{definition}

\begin{definition}{(family of expander graphs)}\label{def:graph_expander}
A sequence $\sset{G_n}_{n\in\N}$ of graphs is a \textsf{family of expander graphs} if there exists a constant $\epsilon>0$ such that $ h(G_n) \geq \epsilon$ for all $n$.
\end{definition}

\noindent
We now consider the natural extension of graph expansion to the setting of protocol-induced communication graph.

\begin{definition}{(bounds on edge expansion of a protocol)}\label{def:protocol_edge_expansion}
Let $\pi=\sset{\pi_n}_{n\in\N}$, $D_{\pi}=\sset{D_{\pi_n}}_{n\in\N}$, and $\CS=\sset{\CS_n}_{n\in\N}$ be as in \cref{def:locality_protocol}.
\begin{itemize}
    \item
    A function $f(n)$ is a \textsf{lower bound of the edge expansion of $\pi$} facing adversaries in $\CS$, denoted $f(n) \leq h_{\pi,D_{\pi},\CS}(n)$, if for every $n$ that is polynomially related to $\secParam$, for every $\vx=(x_1,\ldots,x_n)$, every auxiliary information $\aux$, every $\Adv\in\CS_n$ running with~$\aux$, and every $\IS\subseteq[n]$, it holds that
    \[
    \pr{h(G_\full(\pi_n,\Adv,\IS,\secParam,\vx,\aux)) \leq f(n)} \leq \negl(\secParam),
    \]
    where $G_\full(\pi_n,\Adv,\IS,\secParam,\vx,\aux)$ is the random variable $G_\full(\pi_n,\Adv,\IS,\secParam,\vx,\vrho,\aux,\vr)$, when $\vrho$ is distributed according to $D_{\pi_n}$ and $\vr$ is uniformly distributed.

    \item
    A function $f(n)$ is an \textsf{upper bound of the edge expansion of $\pi$} facing adversaries in $\CS$, denoted $f(n) \geq h_{\pi,D_{\pi},\CS}(n)$,
    if there exists a polynomial relation between $n$ and $\secParam$ such that for infinitely many $n$ it holds that for every $\vx=(x_1,\ldots,x_n)$, every auxiliary information $\aux$, every $\Adv\in\CS_n$ running with~$\aux$, and every $\IS\subseteq[n]$, it holds that
    \[
    \pr{h(G_\full(\pi_n,\Adv,\IS,\secParam,\vx,\aux)) \geq f(n)} \leq \negl(\secParam).
    \]
\end{itemize}
\end{definition}

\begin{definition}[expander protocol]\label{def:protocol_expander}
Let $\pi=\sset{\pi_n}_{n\in\N}$, $D_{\pi}=\sset{D_{\pi_n}}_{n\in\N}$, and $\CS=\sset{\CS_n}_{n\in\N}$ be as in \cref{def:locality_protocol}. We say that \textsf{the communication graph of $\pi$ is an expander}, facing adversaries in~$\CS$, if there exists a constant function $\epsilon(n)>0$ such that $\epsilon(n)\leq h_{\pi,D_{\pi},\CS}(n)$.
\end{definition}

We note that most (if not all) secure protocols in the literature are expanders according to \cref{def:protocol_expander}, both in the realm of distributed computing~\cite{DS83,FM97,GM93,KSSV06,KKKSS08,KLST11,KS10} and in the realm of MPC~\cite{GMW87,BGW88,BGT13,CCGGOZ15,BCP15}.
Proving that a protocol is not an expander according to this definition requires showing an adversary for which the edge expansion is sub-constant. Looking ahead, both in our constructions of protocols that are not expanders (\cref{sec:ne_mpc}) and in our lower bound, showing that non-expander protocols can be attacked (\cref{sec:LB_Expander}), we use a stronger definition, that requires that the edge expansion is sub-constant facing \emph{all} adversaries, see \cref{def:protocol_non_expander} below. While it makes our positive results stronger, we leave it as an interesting open question to attack protocols that do not satisfy \cref{def:protocol_expander}.

\begin{definition}[strongly non-expander protocol]\label{def:protocol_non_expander}
Let $\pi=\sset{\pi_n}_{n\in\N}$, $D_{\pi}=\sset{D_{\pi_n}}_{n\in\N}$, and $\CS=\sset{\CS_n}_{n\in\N}$ be as in \cref{def:locality_protocol}. We say that \textsf{the communication graph of $\pi$ is strongly not an expander}, facing adversaries in $\CS$, if there exists a sub-constant function $\alpha(n)\in o(1)$ such that
$\alpha(n) \geq h_{\pi,D_{\pi},\CS}(n)$.
\end{definition}

We next prove a useful observation that will come into play in \cref{sec:LB_Expander}, stating that if the communication graph of $\pi$ is strongly not an expander, then there must exist a sublinear cut in the graph.
\begin{lemma}\label{lem:sublinear_cut}
Let $\pi=\sset{\pi_n}_{n\in\N}$ be a family of protocols in the correlated-randomness model with distribution $D_{\pi}=\sset{D_{\pi_n}}_{n\in\N}$, and let $\CS=\sset{\CS_n}_{n\in\N}$ be such that $\CS_n$ is the class of adversaries corrupting at most $\beta\cdot n$ parties, for a constant $0<\beta<1$.

Assuming the communication graph of $\pi$ is strongly not an expander facing adversaries in $\CS$, there exists a sublinear function $\alpha(n)\in o(n)$ such that for infinitely many $n$'s the full communication graph of $\pi_n$ has an $\alpha(n)$-cut with overwhelming probability.
\end{lemma}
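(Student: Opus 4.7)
The plan is to unpack the two layered definitions (strongly non-expander $\Rightarrow$ small edge-expansion ratio $\Rightarrow$ small absolute cut weight) and observe that the conversion from an expansion ratio to an absolute cut weight costs a factor of at most $n/2$.

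First, I would invoke \cref{def:protocol_non_expander}: since the communication graph of $\pi$ is strongly not an expander facing adversaries in $\CS$, there is a sub-constant function $\alpha_0(n)\in o(1)$ with $\alpha_0(n) \geq h_{\pi,D_\pi,\CS}(n)$. Unrolling \cref{def:protocol_edge_expansion} for an upper bound, this means there is a polynomial relation between $n$ and $\secParam$ such that, for infinitely many $n$, for every input vector, auxiliary input, adversary $\Adv\in\CS_n$, and corruption set $\IS$,
\[
\pr{h(G_\full(\pi_n,\Adv,\IS,\secParam,\vx,\aux)) \geq \alpha_0(n)} \leq \negl(\secParam).
\]
Thus, with overwhelming probability (over the correlated randomness $\vrho\sim D_{\pi_n}$ and uniform coins $\vr$), the induced full communication graph $G_\full$ satisfies $h(G_\full) < \alpha_0(n)$.

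Second, I would translate this bound on the edge-expansion ratio into a bound on the \emph{absolute} weight of some cut. By \cref{def:graph_edge_expansion}, $h(G_\full) < \alpha_0(n)$ means there exists a subset $S\subseteq [n]$ with $1 \leq |S| \leq n/2$ for which
\[
\frac{|\edges_{G_\full}(S)|}{|S|} < \alpha_0(n),
\]
hence $|\edges_{G_\full}(S)| < \alpha_0(n)\cdot |S| \leq \alpha_0(n)\cdot n/2$. Define $\alpha(n) \assign \lceil \alpha_0(n) \cdot n/2 \rceil$. Since $\alpha_0(n)\in o(1)$, we get $\alpha(n)\in o(n)$, and the cut $\{S,\comp{S}\}$ witnesses an $\alpha(n)$-cut of $G_\full$ in the sense of the preliminaries (weight at most $\alpha(n)$).

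Putting the two steps together yields the claim: for infinitely many $n$ and for any fixed choice of $\vx,\aux,\Adv\in\CS_n,\IS$ (in particular, the adversary corrupting no parties and trivial inputs), the induced full communication graph $G_\full(\pi_n,\Adv,\IS,\secParam,\vx,\aux)$ admits an $\alpha(n)$-cut with overwhelming probability, where $\alpha(n)\in o(n)$. There is no substantial obstacle here beyond careful bookkeeping of the quantifiers: the one subtle point is recalling that the expansion ratio normalizes by $|S|$, and that the worst case $|S|\leq n/2$ is exactly what turns a sub-constant ratio into a sublinear absolute cut weight; the assumption $\beta<1$ on the corruption rate plays no role in the argument itself, since it is already baked into the class $\CS$ used in \cref{def:protocol_non_expander}.
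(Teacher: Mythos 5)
Your proposal is correct and follows essentially the same route as the paper's proof: unfold \cref{def:protocol_non_expander} and \cref{def:protocol_edge_expansion} to get $h(G_\full)$ below a sub-constant $\alpha_0(n)$ with overwhelming probability, then use the witness set $S$ with $\ssize{S}\leq n/2$ to convert the ratio bound into an absolute cut weight at most $\alpha_0(n)\cdot n/2 \in o(n)$. The only (immaterial) differences from the paper are the ceiling in your definition of $\alpha(n)$ and your explicit remark that $\beta<1$ is not used.
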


\begin{proof}
Since the full communication graph of $\pi$ is strongly not an expander, there exists a sub-constant function $\alpha'(n)\in o(1)$ such that there exists a polynomial relation between $n$ and $\secParam$ such that for infinitely many $n$'s it holds that for every input $\vx=(x_1,\ldots,x_n)$ and every adversary $\Adv\in\CS_n$ and every set of corrupted parties $\IS$,
\[
\pr{h(G_\full(\pi_n,\Adv,\IS,\secParam,\vx,\aux))> \alpha'(n)} \leq \negl(\secParam).
\]
This means that for these $n$'s, with overwhelming probability there exists a subset $S_n\subseteq[n]$ of size at most $n/2$ for which
\[
\frac{\ssize{\edges(S_n)}}{\ssize{S_n}} \leq \alpha'(n).
\]
Since $\ssize{S_n}\leq n/2$ it holds that
\[
\size{\edges(S_n)}\leq \alpha'(n)  \cdot \frac{n}{2}.
\]
We define $\alpha(n) = \alpha'(n)  \cdot \frac{n}{2}$, and the claim thus holds for $\alpha(n)\in o(n)$.
\QED
\end{proof}

\subsection{Discussion on the Definition of Expander Protocols}\label{sec:expander_discussion}

In addition to edge expansion, there are two other commonly studied notions of expansion in graphs: \emph{spectral expansion} and \emph{vertex expansion} (see \cite{HLW06} for a survey on expander graphs).
Spectral expansion is a linear-algebraic definition of expansion in regular graphs, equal to the difference of first and second largest eigenvalues of the graph's adjacency matrix.
Vertex (like edge) expansion is combinatorial definitions of expansion, but considers the number of distinct nodes neighboring subsets of the graph as opposed to the number of outgoing edges.

The three metrics are loosely coupled, and analyzing any of them for the communication graphs in our setting seems a natural choice.
We choose to study edge expansion, as it is particularly amenable to our new techniques.
We further consider an \emph{unscaled} version of edge expansion, as opposed to the more stringent scaled version wherein the expansion ratio $h(G)$ of \cref{eq:definition_h(G)} is defined with an additional scale factor of $(\deg G)^{-1}$.
The unscaled variant (while weaker) satisfies the more natural property that it is \emph{monotonic} with respect to the number of edges. That is, by adding more edges to the graph, the value of $h(G)$ cannot decrease; whereas, when dividing $h(G)$ by the degree, adding more edges to the graph might actually decrease the value of $h(G)$, and the graph which was an expander earlier may end up being non-expander. This monotonicity appears more natural in the setting of communication graphs, where adding more edges cannot harm the ability to successfully execute a protocol.

We remark that while our definitional focus is on this form of unscaled edge expansion, our upper bounds (\ie protocols with non-expanding communication graphs) apply to all aforementioned notions. Considering and extending our lower bound to alternative notions of expansion (spectral/vertex/scaled) is left as an interesting open problem.

\subsection{The Adaptive Setting}\label{sec:def_adaptive}

In the adaptive setting, the definitions of locality of protocols and of protocols with communication graph that forms an expander follow the same spirit as in the static case, however, require a few technical modifications.

Recall that we follow the adaptive model from Canetti \cite{Canetti00},\footnote{We follow the modular composition framework~\cite{Canetti00} for the sake of clarity and simplicity. We note that the definitions can be adjusted to the UC framework~\cite{Canetti01}.} where an environment machine interacts with the adversary/simulator. In particular, the adversary does not receive auxiliary information at the onset of the protocol; rather the environment acts as an ``interactive auxiliary-information provider'' and hands the adversary auxiliary information about parties that get corrupted dynamically. In addition, the set of corrupted parties is not defined at the beginning, but generated dynamically during the protocol based on corruption request issued by the adversary, and also after the completion of the protocol, during the post-execution corruption (PEC) phase, based on corruption requests issued by the environment.

Therefore, the required changes to the definitions are two-fold:
\begin{enumerate}
    \item
    The parameters for defining an instance of a protocol execution are: the $n$-party protocol $\pi_n$ the security parameter $\secParam$, the input vector for the parties $\vx=(x_1,\ldots,x_n)$, the correlated randomness for the parties $\vrho=(\rho_1,\ldots,\rho_n)$, the environment $\Env$, the adversary $\Adv$, the auxiliary input for the environment $\aux$, and the random coins for the parties, the adversary, and the environment $\vr=(r_1,\ldots,r_n,r_\Adv,r_\Env)$.

    We denote by $\instance_\adaptive(\pi_n)=(\pi_n,\Env,\Adv,\secParam,\vx,\vrho,\aux,\vr)$.
    \item
    The second difference considers the timing where the communication graph is set.
    \begin{itemize}
        \item
        After the parties generate their output, and before the PEC phase begins.
        \item
        At the end of the PEC phase, when the environment outputs its decision bit.
    \end{itemize}
    Since the communication graph is fixed at the end of the protocol (before the PEC phase begins), the difference lies in the identity of the corrupted parties. More precisely, an edge that appears in the graph before the PEC phase might not appear after the PEC phase, in case both parties became corrupt. For this reason, we consider the communication graph after the parties generate their outputs and before the PEC phase begins.
\end{enumerate}

All definitions as presented for static case translate to the adaptive setting with the two adjustments presented above.

\section{MPC with Non-Expanding Communication Graph}\label{sec:ne_mpc}

In this section, we show that in various standard settings, the communication graph of an MPC protocol is \emph{not} required to be an expander graph, even when the communication locality is poly-logarithmic. In \cref{sec:ne_mpc_static_computational}, we focus on static corruptions and computational security.
In \cref{sec:ne_mpc_static_it}, we extend the construction to the information-theoretic setting, and in \cref{sec:ne_mpc_adaptive} to the adaptive-corruption setting.

\subsection{Computational Security with Static Corruptions}\label{sec:ne_mpc_static_computational}

We start by considering the computational setting with static corruptions.
\begin{theorem}[restating \cref{thm:intro:UB} and Item~1 of \cref{thm:intro:UB_extensions}]\label{thm:mpc_no_expander}
Let $f=\sset{f_n}_{n\in\N}$ be an ensemble of functionalities, let $\delta>0$, and assume that one-way functions exist.
Then, the following holds in the PKI-hybrid model with secure channels:
\begin{enumerate}
    \item
    Let $\beta< 1/4-\delta$ and let $t(n)=\beta \cdot n$.
    Then, $f$ can be securely computed by a protocol ensemble~$\pi$ tolerating static PPT $t(n)$-adversaries such that the communication graph of $\pi$ is strongly not an expander.
    \item
    Let $\beta< 1/6-\delta$ and let $t(n)=\beta \cdot n$.
    Then, $f$ can be securely computed by a protocol ensemble $\pi$ tolerating static PPT $t(n)$-adversaries such that (1) the communication graph of $\pi$ is strongly not an expander, and (2) the locality of $\pi$ is poly-logarithmic in $n$.
    \item
    Let $\beta< 1/4-\delta$, let $t(n)=\beta \cdot n$, and assume in addition the \emph{secret-key infrastructure (SKI)} model\footnote{In the SKI model every pair of parties has a secret random string that is unknown to other parties.} and the existence of public-key encryption schemes.
    Then, $f$ can be securely computed by a protocol ensemble $\pi$ tolerating static PPT $t(n)$-adversaries such that (1) the communication graph of $\pi$ is strongly not an expander, and (2) the locality of $\pi$ is poly-logarithmic in $n$.\footnote{This item hold in the authenticated-channels model, since we assume PKE.}
\end{enumerate}
\end{theorem}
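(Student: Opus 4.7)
The plan is to convert any honest-majority MPC protocol into one whose full communication graph contains a polylogarithmic cut between two linear-sized halves, forcing $h(G_\full)\in o(1)$. Partition $[n]$ into $H_1:=[1,n/2]$ and $H_2:=[n/2+1,n]$. Since the total corruption threshold is $t<(1/4-\delta)n$, every subset of size $n/2$ contains at most a $(1/2-2\delta)$-fraction of corruptions, so each half individually enjoys honest majority. The main idea is to have the entire computation performed inside $H_1$, while communication with parties in $H_2$ is routed through small bridge committees $C_1\subseteq H_1$ and $C_2\subseteq H_2$ of size $\polylog(n)$.

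For Item~1, the plan proceeds in four stages. First, the parties in $H_1$ run a secure committee-election sub-protocol (relying on internal coin-flipping and standard honest-majority MPC) to elect $C_1$, and likewise $H_2$ elects $C_2$; the PKI is used to produce signed certificates of committee membership so parties in the opposite half can verify authenticity of incoming messages. Second, each $\Party_i\in H_2$ secret-shares its input to $C_2$ using only edges inside $H_2$, after which $C_2$ collects, signs, and forwards the shares to $C_1$ across the bridge. Third, the parties of $H_1$---with $C_1$ feeding them the $H_2$-inputs---execute any standard honest-majority MPC for $f$. Fourth, $H_1$ sends the resulting $H_2$-output shares back along the mirror path. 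The only edges crossing $(H_1,H_2)$ lie between $C_1$ and $C_2$, giving at most $|C_1|\cdot|C_2|=\polylog(n)$ crossing edges, so $h(G_\full)\le|C_1|\cdot|C_2|/(n/2)=O(\polylog(n)/n)=o(1)$, satisfying \cref{def:protocol_non_expander}.

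For Item~2, replace the internal MPC inside each half by a polylog-locality MPC such as the BGT13 protocol, and implement the committee election and bridging sub-protocols themselves with polylog locality using the same committee-tree machinery. The tightened $(1/6-\delta)$ threshold inherits directly from BGT13's per-committee honest-supermajority requirement: at most a $1/3$-fraction of corruptions inside each half, i.e., $t<n/6$. Composing via \cref{prop:Composition} preserves overall $\polylog(n)$ locality, and further sub-sampling the bridge (matching each member of $C_1$ to $\polylog(n)$ members of $C_2$) keeps the degree of each committee member in $\polylog(n)$. For Item~3, the additional SKI together with PKE lets us bootstrap secure channels over the authenticated ones and reinstate an honest-majority low-locality MPC per half using pairwise information-theoretic authentication derived from SKI, restoring the $(1/4-\delta)$ threshold while maintaining $\polylog(n)$ locality; the bridge-committee structure is unchanged.

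The main obstacle is to guarantee that (i)~the committee-election sub-protocols do not themselves create edges crossing $(H_1,H_2)$---handled by running the two elections independently, each strictly inside its own half, which still has honest majority---and (ii)~the polylog-size committees contain an honest majority (or $2/3$-honest supermajority in Items~2 and~3) except with negligible probability. The latter follows from a Chernoff bound on a $\polylog(n)$-size random sample of a linearly honest population, yielding failure probability $\exp(-\Omega(\polylog(n)\cdot\delta^2))=\negl(\secParam)$ since $n$ is polynomially related to $\secParam$. Correctness and security of the overall protocol then reduce to the security of the underlying honest-majority MPC, the committee-election protocol, and the bridge forwarding, each composed via \cref{prop:Composition}; the PKI-signed certificates ensure that a corrupted bridge cannot forge messages attributed to honest committee members.
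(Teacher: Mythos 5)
Your proposal is essentially the paper's construction: partition the parties into two linear-size halves, elect $\polylog(n)$-size bridge committees certified via PKI signatures, ship one half's inputs across the bridge in robustly secret-shared form, compute $f$ with an honest-majority MPC run entirely inside the other half (instantiated, as in the paper, with a generic honest-majority protocol for Item~1, the BGT13 low-locality protocol for Item~2, and the Chandran et al.\ PKI+SKI protocol for Item~3 --- exactly the source of the $1/4$, $1/6$, $1/4$ thresholds), and return and distribute the output so that the only cut-crossing edges are the $\polylog^2(n)$ committee-to-committee edges. The deviations --- which half hosts the computation, each half electing its own committee rather than one half electing both, and direct sharing to $C_2$ instead of wrapping sharing/reconstruction into ideal functionalities (the $H_2$-inputs must of course be reconstructed \emph{inside} the $H_1$-MPC, as in the paper's Reconstruct-and-Compute functionality, and the sharing step must itself be routed through the low-locality MPC for Item~2, which you note) --- are immaterial to the argument.
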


\begin{proof}
The theorem follows from \cref{lem:mpc_no_expander} (below) by instantiating the hybrid functionalities using existing MPC protocols from the literature.
\begin{itemize}
    \item
    The first part follows using honest-majority MPC protocols that exist assuming one-way functions in the secure-channels model, \eg the protocol of Beaver et al.\ \cite{BMR90} or of \Damgard and Ishai \cite{DI05}.\footnote{Generic honest-majority MPC protocols require a broadcast channel or some form of trusted setup assumptions~\cite{CL17,CHOR18}. The PKI assumption in \cref{thm:mpc_no_expander} is sufficient in the computational setting. Looking ahead, in the information-theoretic setting (\cref{sec:ne_mpc_static_it}) additional adjustments are required.}
    \item
    The second part follows using the low-locality MPC protocol of Boyle et al.\ \cite{BGT13} that exists assuming one-way functions in the PKI model with secure channels and tolerates $t= (1/3-\delta)n$ static corruptions.\footnote{In~\cite{BGT13} public-key encryption is also assumed, but as we show in \cref{sec:it_bgt}, this assumption can be removed.}
    \item
    The third part follows using the low-locality MPC protocol of Chandran et al.\ \cite{CCGGOZ15} that exists assuming public-key encryption in the PKI and SKI model with authenticated channels and tolerates $t<n/2$ static corruptions.
\qedhere
\end{itemize}
\end{proof}

\subsubsection{Ideal Functionalities used in the Construction\SUBSUBSEC}

The proof of \cref{thm:mpc_no_expander} relies on \cref{lem:mpc_no_expander} (below). We start by defining the notations and the ideal functionalities that will be used in the protocol considered in \cref{lem:mpc_no_expander}.

\paragraph{Signature notations.}
Given a signature scheme $(\Gen,\Sign,\Verify)$ and $m$ pairs of signing and verification keys $(\sk[i],\vk[i])\gets\Gen(1^\secParam)$ for $i\in[m]$, we use the following notations for signing and verifying with multiple keys:
\begin{itemize}
    \item
    Given a message $\mu$ we denote by $\Sign_{\sk[1],\ldots,\sk[m]}(\mu)$ the vector of $m$ signatures $\sigma=(\sigma_1,\ldots,\sigma_m)$, where $\sigma_i\gets\Sign_{\sk[i]}(\mu)$.
    \item
    Given a message $\mu$ and a signature $\sigma=(\sigma_1,\ldots,\sigma_m)$, we denote by $\Verify_{\vk[1],\ldots,\vk[m]}(\mu,\sigma)$ the verification algorithm that for every $i\in[m]$ computes $b_i\gets\Verify_{\vk[i]}(\mu,\sigma_i)$, and accepts the signature $\sigma$ if and only if $\sum_{i=1}^m b_i \geq m-t$, \ie even if up to $t$ signatures are invalid.
\end{itemize}
We note that it is possible to use multi-signatures or aggregated signatures~\cite{MOR01,BGLS03,LMRS04,LOSSW13} in order to obtain better communication complexity, however, we use the notation above both for simplicity and as a step toward the information-theoretic construction in the following section.

\paragraph{The Elect-and-Share functionality.}

In the Elect-and-Share $m$-party functionality, $\felectsharefull$, every party $\Party_i$ has a pair of inputs $(x_i,\sk[i])$, where $x_i\in\zs$ is the ``actual input'' and $\sk[i]$ is a private signing key. The functionality starts by electing two random subsets $\committee_1,\committee_2\subseteq[m]$ of size $n'$, and signing each subset using all signing keys. In addition, every input value $x_i$ is secret shared using a $(t',n')$ error-correcting secret-sharing scheme (see \cref{def:ECSS}).
Every party receives as output the subset $\committee_1$, whereas a party $\Party_i$, for $i\in\committee_1$, receives an additional output consisting of a signature on $\committee_1$, the signed subset $\committee_2$, along with one share for each one of the $m$ input values.
The formal description of the functionality can be found in~\cref{fig:felectshare}.

\begin{nfbox}{The Elect-and-Share functionality}{fig:felectshare}
\begin{center}
    \textbf{The functionality} $\felectsharefull$
\end{center}
\vspace{-0.3cm}
The $m$-party functionality $\felectsharefull$ is parametrized by a signature scheme $(\Gen,\Sign,\Verify)$ and a $(t',n')$ ECSS scheme $(\Share,\Recon)$, and proceeds with parties $\PS_1=\sset{\Party_1,\ldots,\Party_m}$ as follows.
\begin{enumerate}
    \item
    Every party $\Party_i$ sends a pair of values $(x_i,\sk[i])$ as its input, where $x_i$ is the actual input value and $\sk[i]$ is a signing key. (If $\Party_i$ didn't send a valid input, set it to the default value, \eg zero.)
    \item
    Sample uniformly at random two subsets (committees) $\committee_1,\committee_2\subseteq[m]$ of size $n'$.
    \item
    Sign each subset as $\sigma_1=\Sign_{\sk[1],\ldots,\sk[m]}(\committee_1)$ and $\sigma_2=\Sign_{\sk[1],\ldots,\sk[m]}(\committee_2)$.
    \item
    For every $i\in[m]$, secret share $x_i$ as $(s_i^1,\ldots,s_i^{n'})\gets \Share(x_i)$.
    \item
    For every $j\in[n']$, set $\vs_j=(s_1^j,\ldots,s_m^j)$.
    \item
    Denote $\committee_1=\sset{i_{(1,1)},\ldots,i_{(1,n')}}$.
    For every $i\notin\committee_1$ set the output of $\Party_i$ to be $\committee_1$.
    For every $i=i_{(1,j)}\in\committee_1$ (for some $j\in[n']$), set the output of $\Party_i$ to be $(\committee_1,\sigma_1,\committee_2,\sigma_2,\vs_j)$.
\end{enumerate}
\end{nfbox}

\paragraph{The Reconstruct-and-Compute functionality.}

The Reconstruct-and-Compute functionality, $\freconcomputefull$, is an $m$-party functionality. Denote the party-set by $\sset{\Party_{m+1}, \ldots, \Party_{2m}}$.\footnote{We use the notation $\sset{\Party_{m+1}, \ldots, \Party_{2m}}$ instead of the more standard $\sset{\Party_1, \ldots, \Party_m}$ for consistency with the protocol $\protne_n$ for $n=2m$ (described in \cref{fig:protne}), where parties $\sset{\Party_{m+1}, \ldots, \Party_{2m}}$ invoke the functionality.} Every party $\Party_{m+i}$ has an input value $x_{m+i}\in\zs$, and a potential additional input value consisting of a signed subset $\committee_2\subseteq[m]$ and a vector of $m$ shares. The functionality starts by verifying the signatures, where every invalid input is ignored. The signed inputs should define a single subset $\committee_2\subseteq[m]$ (otherwise the functionality aborts), and the functionality uses the additional inputs of parties $\Party_{m+i}$, for every $i\in\committee_2$, in order to reconstruct the $m$-tuple $(x_1,\ldots,x_m)$. Finally, the functionality computes $y=f(x_1,\ldots,x_{2m})$ and hands $y$ as the output for every party.
The formal description of the functionality can be found in~\cref{fig:freconcompute}.

\paragraph{The Output-Distribution functionality.}
The $m$-party Output-Distribution functionality is parametrized by a subset $\committee_1\subseteq[m]$. Every party $\Party_i$, with $i\in\committee_1$, hands in a value, and the functionality distributes the majority of these inputs to all the parties.
The formal description of the functionality can be found in~\cref{fig:foutdist}.

\begin{nfbox}{The Output-Distribution functionality}{fig:foutdist}
\begin{center}
    \textbf{The functionality} $\foutdistfull$
\end{center}
\vspace{-0.3cm}
The $m$-party functionality $\foutdistfull$ is parametrized by a subset $\committee_1\subseteq[m]$, and proceeds with parties $\PS_1=\sset{\Party_1,\ldots,\Party_m}$ as follows.
\begin{enumerate}
    \item
    Every party $\Party_i$, with $i\in\committee_1$, gives input value $y_i$, a party $\Party_i$ with $i\notin\committee_1$ gives the empty input~$\emptystr$.
    \item
    Denote $y=\maj\sset{y_i \mid i\in\committee_1}$ (choose arbitrarily if the majority is not unique).
    \item
    Output $y$ to every $\Party_i\in\PS_1$.
\end{enumerate}
\end{nfbox}

\begin{nfbox}{The Reconstruct-and-Compute functionality}{fig:freconcompute}
\begin{center}
    \textbf{The functionality} $\freconcompute$
\end{center}
\vspace{-0.3cm}
The $m$-party functionality $\freconcomputefull$ is parametrized by a signature scheme $(\Gen,\Sign,\Verify)$, a $(t',n')$ ECSS scheme $(\Share,\Recon)$, and a vector of verification keys $(\vk[1],\ldots,\vk[m])$, and proceeds with parties $\PS_2=\sset{\Party_{m+1},\ldots,\Party_{2m}}$ as follows.
\begin{enumerate}
    \item
    Every party $\Party_{m+i}$ sends a pair of values $(x_{m+i},z_{m+i})$ as its input, where $x_{m+i}$ is the actual input value and either $z_{m+i}=\emptystr$ or $z_{m+i}=(\committee_{m+i},\sigma_{m+i},\vs_{m+i})$.
    \item
    For every $\Party_{m+i}$ that provided $z_{m+i}\neq \emptystr$, verify that $\Verify_{\vk[1],\ldots,\vk[m]}(\committee_{m+i},\sigma_{m+i})=1$ (ignore invalid inputs).
    If there is no subset $\committee_2\subseteq[m]$ of size $n'$ with an accepting signature, or if there exists more than one such subset, then abort. Otherwise, denote $\committee_2=\sset{i_{(2,1)}\ldots,i_{(2,n')}}$.
    \item
    For every $i=i_{(2,j)}\in\committee_2$, let $\vs_{m+i_{(2,j)}}=(s_1^j,\ldots,s_m^j)$ be the input provided by $\Party_{m+i}$. (If $\Party_{m+i}$ provided invalid input, set $\vs_{m+i_{(2,j)}}$ to be the default value, \eg the zero vector.)
    \item
    For every $i\in[m]$, reconstruct $x_i=\Recon(s_i^1,\ldots,s_i^{n'})$.
    \item
    Compute $y=f(x_1\ldots, x_m,x_{m+1},\ldots,x_{2m})$.
    \item
    Output $y$ to every $\Party_{m+i}\in\PS_2$.
\end{enumerate}
\end{nfbox}

\subsubsection{Constructing Non-Expander Protocols\SUBSUBSEC}\label{sec:upperbound_protocol}

\paragraph{High-level overview of the protocol.} Having defined the ideal functionalities, we are ready to present the main lemma. We start by describing the underlying idea behind the non-expanding MPC protocol $\protne_n$ (\cref{fig:protne}).
At the onset of the protocol, the party-set is partitioned into two subsets of size $m=n/2$, a left subset and a right subset (see~\Cref{fig:results}). The left subset will invoke the Elect-and-Share functionality, that elects two subsets $\committee_1,\committee_2\subseteq[m]$ of size $n'=\log^2(n)$.\footnote{We note that any $n'\in\omega(\log{n})$ will do.} The parties in the left subset corresponding to $\committee_1$ and the parties in the right subset corresponding to $\committee_2$ will form a ``bridge.'' The parties in $\committee_1$ will receive shares of all inputs values of parties in the left subset, and transfer them to $\committee_2$. Next, the right subset of parties will invoke the Reconstruct-and-Compute functionality, where each party hands its input value, and parties in $\committee_2$ additionally provide the shares they received from $\committee_1$. The functionality reconstructs the left-subset's inputs, computes the function $f$ and hands the output to the right subset. Finally, $\committee_2$ will transfer the output value to $\committee_1$, and the left subset will invoke the Output-Distribution functionality in order to distribute the output value to all the parties.

\begin{figure}[htb]
	\begin{center}
    \includegraphics[scale=0.6]{./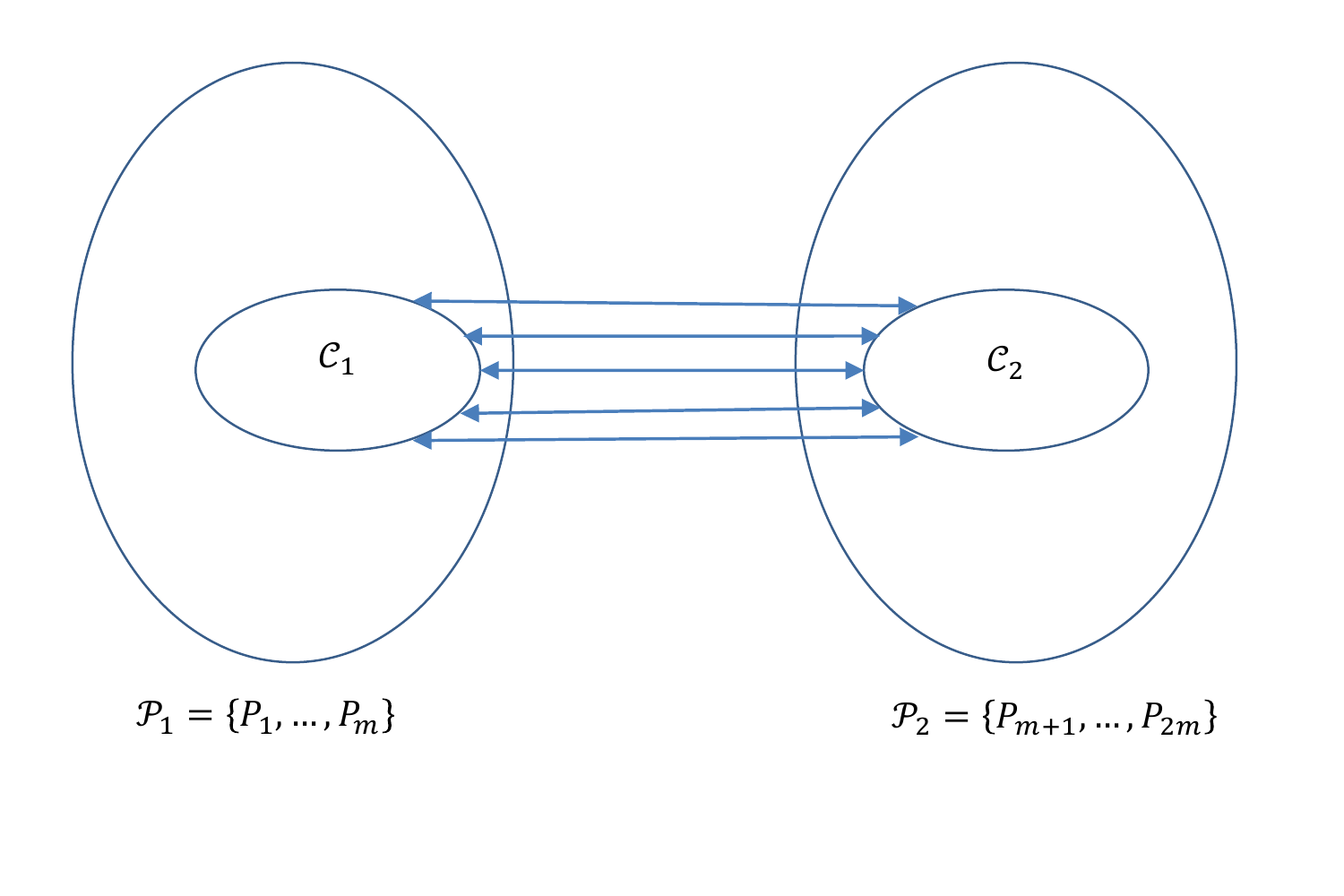}
	\end{center}
	\vspace{-10ex}
	\caption{The non-expanding subsets in the protocol $\protne$. The sets $\committee_1$ and $\committee_2$ are of poly-logarithmic size and the sets $\PS_1$ and $\PS_2$ are of linear size. The number of edges between $\PS_1$ and $\PS_2$ is poly-logarithmic.}
	\label{fig:results}
\end{figure}

\begin{lemma}\label{lem:mpc_no_expander}
Let $f=\sset{f_n}_{n\in2\N}$,\footnote{For simplicity, we consider even $n$'s. Extending the statement to any $n$ is straightforward, however, adds more details.} where $f_n$ is an $n$-party functionality for $n=2m$, let $\delta>0$, and assume that one-way functions exist.
Then, in the PKI-hybrid model with secure channels, where a trusted party additionally computes the $m$-party functionality-ensembles $(\felectshare,\freconcompute,\foutdist)$ tolerating $\gamma \cdot m$ corruptions, there exists a protocol ensemble $\pi$ that securely computes $f$ tolerating static PPT $\beta n$-adversaries, for $\beta<\min(1/4-\delta,\gamma/2)$, with the following guarantees:
\begin{enumerate}
    \item
    The communication graph of $\pi$ is strongly not an expander.
    \item
    Denote by $f_1,f_2,f_3$ the functionality-ensembles $\felectshare,\freconcompute,\foutdist$ (resp.).
    If protocol-ensembles $\rho_1,\rho_2,\rho_3$ securely compute $f_1,f_2,f_3$ (resp.) with locality $\ell_\rho=\ell_\rho(m)$, then $\pi^{f_i\rightarrow \rho_i}$ (where every call to $f_i$ is replaced by an execution of $\rho_i$) has locality $\ell=2\cdot\ell_\rho + \log^2(n)$.
\end{enumerate}
\end{lemma}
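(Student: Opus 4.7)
The plan is to instantiate $\protne_n$ as follows and then verify each clause of the lemma. Partition the parties as $\PS_1=\sset{\Party_1,\ldots,\Party_m}$ and $\PS_2=\sset{\Party_{m+1},\ldots,\Party_{2m}}$, fix committee size $n'=\log^2(n)$, let $(\Share,\Recon)$ be a Reed--Solomon ECSS tolerating $t'=\lfloor(n'-1)/2\rfloor$ errors, and take the signature scheme from OWFs. In round~1 the left half $\PS_1$ invokes $\felectsharefull$ with inputs $(x_i,\sk[i])$, so that the $j$'th member $\Party_{i_{(1,j)}}\in\committee_1$ learns a share-vector $\vs_j$ together with the signed subsets $(\committee_1,\sigma_1)$ and $(\committee_2,\sigma_2)$. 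In round~2 each $\Party_{i_{(1,j)}}\in\committee_1$ forwards $(\committee_2,\sigma_2,\vs_j)$ across the bridge to its positional counterpart $\Party_{m+i_{(2,j)}}\in\committee_2$. In round~3 the right half $\PS_2$ invokes $\freconcomputefull$, each $\committee_2$-member supplying the signed input it just received; the functionality outputs $y=f(\vx)$ to every $\Party_{m+i}\in\PS_2$. In round~4 each $\Party_{m+i_{(2,j)}}\in\committee_2$ sends $y$ back across the bridge to $\Party_{i_{(1,j)}}\in\committee_1$. In round~5 $\PS_1$ invokes $\foutdistfull$ on these received values to deliver the final output to every party in $\PS_1$. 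This defines $\pi$.

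The simulator $\Sim$ is assembled by composing the simulators of the three ideal functionalities with a straight-line emulation of the bridge. Writing $\IS=\IS_1\cup\IS_2$ for the static-corruption split, $\ssize{\IS_1},\ssize{\IS_2}\le\beta n=2\beta m<\gamma m$ (using $\beta<\gamma/2$), so each ideal call sits inside its tolerated regime. Two facts drive correctness: (a)~a hypergeometric tail bound shows that the size-$n'$ random committees $\committee_1,\committee_2$ produced by $\felectshare$ each contain a strict honest majority with overwhelming probability, because the corrupted fraction inside each half is $\le 2\beta<1/2-2\delta$ (this is where $\beta<1/4-\delta$ is invoked); and (b)~EUF-CMA unforgeability of signatures (from OWFs), combined with $\ssize{\IS_1}<m-t$, rules out a second accepting $(\committee_2',\sigma_2')$, so $\freconcompute$ does not abort. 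A position $j$ contributes a wrong share to $\freconcompute$ exactly when $\Party_{i_{(1,j)}}$ or $\Party_{m+i_{(2,j)}}$ is corrupted; independence across halves bounds the per-position bad probability by $2\beta<1/2$, so Chernoff concentration yields at most $t'\approx n'/2$ bad shares w.h.p.\ and the ECSS decoder recovers every $x_i\in\PS_1$. The same ``both-halves-honest'' count shows a strict majority of $\committee_1$ feeds the correct $y$ into $\foutdist$, which then outputs $y$ to all of $\PS_1$, while parties in $\PS_2$ already hold $y$ from $\freconcompute$. A standard hybrid argument stepping through the three ideal simulators closes the security proof.

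For non-expansion, observe that the only edges of $G_\full$ crossing the partition $(\PS_1,\PS_2)$ are the $n'$ forward-bridge edges of round~2 and the $n'$ return-bridge edges of round~4, for a total of $2n'=2\log^2(n)$. Combined with $\ssize{\PS_1}=\ssize{\PS_2}=n/2$, this gives
\[
h(G_\full)\;\leq\;\frac{2\log^2(n)}{n/2}\;\in\;o(1),
\]
so the induced graph is strongly not an expander per~\cref{def:protocol_non_expander}. For the locality bound, each honest party participates in at most two of the three ideal subroutines ($\felectshare$ and $\foutdist$ for $\PS_1$, or only $\freconcompute$ for $\PS_2$), contributing $2\ell_\rho(m)$, and handles at most $\log^2(n)$ bridge messages; hence $\ell\le 2\ell_\rho+\log^2(n)$, as predicted by \cref{prop:Composition}.

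The main obstacle I anticipate is the tight calibration of the corruption budget: honest majority in $\committee_1$ and $\committee_2$ requires $2\beta<1/2$; ECSS recoverability at $t'\approx n'/2$ errors requires that the combined per-position bad probability stays below $1/2$, which again pins $\beta<1/4$; signature unforgeability requires $\ssize{\IS_1}<m-t$; and the ideal-call tolerance forces $2\beta<\gamma$. Showing that the stated bound $\beta<\min(1/4-\delta,\gamma/2)$ simultaneously satisfies all four conditions, and that the ``two-layer majority'' (first at the return bridge inside each honest $\Party_i\in\committee_1$, then inside $\foutdist$) still decodes to $y$ at this tightest setting, is the delicate part of the formal argument.
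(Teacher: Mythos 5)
Your construction and proof skeleton are essentially the paper's: the same left/right partition with two polylogarithmic committees forming a bridge, simulation in the $(\felectshare,\freconcompute,\foutdist)$-hybrid model, a concentration bound showing each committee inherits at most a $(1/2-\delta)$ corrupted fraction (this is where $\beta<1/4-\delta$ and $\beta<\gamma/2$ enter), unforgeability to pin down a unique signed $\committee_2$, ECSS robustness for reconstruction, and the trivial cut/locality counts. There is, however, a concrete gap in your bridge design. You send only $(\committee_2,\sigma_2,\vs_j)$, one-to-one to positional counterparts. An honest receiver $\Party_{m+i_{(2,j)}}$ then has no way to authenticate the genuine position-$j$ sender: it never learns $\committee_1$, and your bridge message carries nothing binding the sender's identity. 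Any corrupted party in $\PS_1$ that holds $(\committee_2,\sigma_2)$ --- e.g., via a single corrupted member of $\committee_1$ --- can send conflicting tuples $(\committee_2,\sigma_2,\vs_j')$ to honest members of $\committee_2$ whose designated partners are also honest. Hence your key accounting step, ``position $j$ contributes a wrong share exactly when $\Party_{i_{(1,j)}}$ or $\Party_{m+i_{(2,j)}}$ is corrupted,'' fails, and the number of corrupted shares entering $\freconcompute$ is no longer bounded by $t'$. The paper closes exactly this hole by also sending the signed $\committee_1$ across the bridge and having receivers filter on the sender's membership/position in the signed committee (its adaptive variant instead signs individual indices); with that filter restored, your one-to-one matching is fine in the static setting and even improves the cut to $2n'$ edges and the bridge locality to $O(1)$.

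A second, smaller flaw is the instantiation ``Reed--Solomon ECSS tolerating $t'=\lfloor(n'-1)/2\rfloor$ errors.'' You simultaneously need privacy against roughly $(1/2-\delta)n'$ leaked positions (corrupted members of $\committee_1$ plus corrupted receivers in $\committee_2$) and correction of roughly $(1/2-\delta)n'$ erroneous shares; plain Shamir/Reed--Solomon with privacy threshold $t_p$ only corrects about $(n'-t_p)/2$ errors, so it cannot meet both requirements near $n'/2$. What is needed is a robust (error-correcting) secret sharing in the Rabin--Ben-Or style, with negligible failure probability, which is precisely what the paper's ECSS definition and its cited constructions for $t<n/2$ provide. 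Finally, when you flesh out the ``standard hybrid argument,'' make the privacy step explicit: the simulator hands corrupted committee members shares of $0^\secParam$, and indistinguishability uses ECSS privacy together with the bound $\ssize{(\committee_1\cup\sset{m+i\mid i\in\committee_2})\cap\IS}\leq t'$, which is the same place the combined (not per-half) corruption budget must be counted.
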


\begin{proof}
For $m\in\N$ and $n=2m$, we construct the $n$-party protocol $\protne_n$ (see \cref{fig:protne}) in the $(\felectshare,\freconcompute,\foutdist)$-hybrid model. The parameters for the protocol are $n'=\log^2(n)$ and $t'=(1/2-\delta)\cdot n'$. We start by proving in \cref{prop:security} that the protocol $\protne_n$ securely computes $f_n$. Next, in \cref{prop:not_expanding} we prove that the communication graph of $\protne$ is strongly not an expander. Finally, in \cref{prop:low_locality} we prove that by instantiating the functionalities $(\felectshare,\freconcompute,\foutdist)$ using low-locality protocols, the resulting protocol has low locality.

\begin{proposition}\label{prop:security}
For sufficiently large $n$, the protocol $\protne_n$ securely computes the function $f_n$, tolerating static PPT $\beta n$-adversaries, in the $(\felectshare,\freconcompute,\foutdist)$-hybrid model.
\end{proposition}

\ifdefined\IsProofInAppendix
The proof of \cref{prop:security} can be found in \cref{sec:prop:security_proof}.
\else
\begin{proof}
Let $\Adv$ be an adversary attacking the execution of $\protne_n$, and let $\IS\subseteq[n]$ be a subset of size at most $t=\beta n$. We construct the following adversary $\Sim$ for the ideal model computing $f$. On inputs $\set{x_i}_{i\in\IS}$ and auxiliary input $\aux$, the simulator $\Sim$ starts by emulating $\Adv$ on these inputs. Initially, $\Sim$ generates a pair of signature keys $(\sk[i],\vk[i])\gets\Gen(1^\secParam)$ for every $i\in[n]$, and hands $\sset{\sk[i]}_{i\in\IS}$ and $(\vk[1],\ldots,\vk[n])$ to $\Adv$. Next, $\Sim$ plays towards $\Adv$ the roles of the honest parties and the ideal functionalities $\felectshare$, $\freconcompute$, and $\foutdist$. For simplicity, assume that all input values are elements in $\zo^\secParam$.

To simulate Step~\ref{step:elect_share}, the simulator emulates $\felectsharefull$ towards $\Adv$ as follows. $\Sim$ receives from $\Adv$ input values $\sset{(x'_i,\sk[i]')}_{i\in\IS\cap[m]}$ (replace invalid inputs with default). Next, $\Sim$ samples uniformly at random two subsets $\committee_1=\sset{i_{(1,1)},\ldots,i_{(1,n')}}\subseteq[m]$ and $\committee_2=\sset{i_{(2,1)},\ldots,i_{(2,n')}}\subseteq[m]$, sets $\sk[i]'=\sk[i]$ for every $i\in[m]\setminus\IS$, and signs the subsets as $\sigma_1=\Sign_{\sk[1]',\ldots,\sk[m]'}(\committee_1)$ and $\sigma_2=\Sign_{\sk[1]',\ldots,\sk[m]'}(\committee_2)$. Finally, $\Sim$ generates secret shares of zero for every $i\in[m]$ as $(s_i^1,\ldots, s_i^{n'})\gets\Share(0^\secParam)$, and hands $\Adv$ the output $\committee_1$ for every $\Party_i$ with $i\in([m]\cap\IS)\setminus \committee_1$, and $(\committee_1,\sigma_1,\committee_2,\sigma_2,\vs_j)$, with $\vs_j=(s_1^j,\ldots,s_m^j)$, for every $i=i_{(1,j)}\in\committee_1\cap\IS$.

To simulate Step~\ref{step:first_bridge}, the simulator sends $(\committee_1,\sigma_1,\committee_2,\sigma_2)$ on behalf of every honest party $\Party_i$ with $i\in\committee_1\setminus \IS$ to every corrupted party $\Party_{m+i_{(2,j)}}$ with $i_{(2,j)}\in\committee_2$ and $m+i_{(2,j)}\in\IS$, and receives such values from the adversary ($\Sim$ ignores any invalid messages from $\Adv$). In addition, for every $j\in[n']$ such that $i_{(1,j)}\notin\IS$ and $m+i_{(2,j)}\in\IS$, the simulator $\Sim$ sends to $\Adv$ on behalf of the honest party $\Party_{i_{(1,j)}}$ the vector $\vs_j$, intended to $\Party_{m+i_{(2,j)}}$.

To simulate Step~\ref{step:recon_compute}, the simulator emulates $\freconcompute^{\vk[1],\ldots,\vk[m]}$ towards $\Adv$ as follows. $\Sim$ receives from $\Adv$ input values $\sset{(x'_{m+i},z_{m+i})}_{m+i\in\IS\cap[m+1,2m]}$, where either $z_{m+i}=\emptystr$ or $z_{m+i}=(\committee_{m+i},\sigma_{m+i},\vs_{m+i})$ (replace invalid inputs with default). Next, $\Sim$ sends the values $\sset{x'_i}_{i\in\IS}$ to the trusted party (for $i\in\IS\cap[m]$ the values $x'_i$ were obtained in the simulation of Step~\ref{step:elect_share}, and for $i\in\IS\cap[m+1,2m]$ in the simulation of Step~\ref{step:recon_compute}) and receives back the output value $y$. Finally, $\Sim$ hands $y$ to $\Adv$ for every corrupted party $\Party_{m+i}$, with $m+i\in\IS\cap[m+1,2m]$.

To simulate Step~\ref{step:second_bridge}, the simulator sends $y$ on behalf of every honest party $\Party_{m+i_{(2,j)}}$ with $i_{(2,j)}\in\committee_2$ and $m+i_{(2,j)}\notin\IS$ to every corrupted party $\Party_{i_{(1,j)}}$ with $i_{(1,j)}\in\committee_1\cap\IS$, and receives such values from the adversary.

To simulate Step~\ref{step:out_dist}, the simulator emulates the functionality $\foutdistfull$ by receiving a value $y_i$ from $\Adv$ for every $i\in\IS\cap\committee_1$, and sending $y$ to every corrupted party $\Party_i$ with $i\in\IS\cap[m]$. Finally, $\Sim$ outputs whatever $\Adv$ outputs and halts.

We prove computational indistinguishability between the execution of the protocol $\protne_n$ running with adversary $\Adv$ and the ideal computation of $f$ running with $\Sim$ via a series of hybrids experiments. The output of each experiment is the output of the honest parties and of the adversary.

\paragraph{The game $\HYB^1_{\pi, \IS, \Adv(\aux)}(\vx,\secParam)$.}
This game is defined to be the execution of the protocol $\protne_n$ in the $(\felectshare,\freconcompute,\foutdist)$-hybrid model on inputs $\vx\in(\zs)^n$ and security parameter $\secParam$ with adversary $\Adv$ running on auxiliary information $\aux$ and controlling parties in $\IS$.

\paragraph{The game $\HYB^2_{\pi, \IS, \Adv(\aux)}(\vx,\secParam)$.}
In this game, we modify $\HYB^1_{\pi, \IS, \Adv(\aux)}(\vx,\secParam)$ as follows.
Instead of verifying the signatures of the input values $\committee_2$ and $\sigma_2$, the functionality $\freconcompute$ considers the subset $\committee_2$ that was sampled by $\felectshare$.

\begin{claim}\label{claim:hyb1}
$\sset{\HYB^1_{\pi, \IS, \Adv(\aux)}(\vx,\secParam)}_{\vx,\aux,\secParam} \compindist \sset{\HYB^2_{\pi, \IS, \Adv(\aux)}(\vx,\secParam)}_{\vx,\aux,\secParam}$.
\end{claim}
\begin{proof}
The claim follows from two observations. First, the signed subset $\committee_2$ that was computed by $\felectshare$ will be given as input to $\freconcompute$ by at least one party, as long as there exist honest parties in $\committee_1$ and $\committee_2$. This is guaranteed to occur except for a negligible probability following \cref{cor:elect}.

Second, by the security of the signature scheme, the functionality $\freconcompute$ will not receive a second signed subset. Indeed, if two subsets $\committee_2$ and $\committee'_2$ with accepting signatures $\sigma_2$ and $\sigma_2'$ (resp.) are given to $\freconcompute$ with a noticeable probability, one can construct a forger to the signature scheme.
\QED
\end{proof}

\paragraph{The game $\HYB^3_{\pi, \IS, \Adv(\aux)}(\vx,\secParam)$.}
In this game, we modify $\HYB^2_{\pi, \IS, \Adv(\aux)}(\vx,\secParam)$ as follows.
instead of computing the function $f$ based on the inputs of $\PS_2$ provided to $\freconcompute$ and the reconstructed input values of $\PS_1$ based on the shares provided by the parties in $\committee_2$, the computation of $f$ is performed as follows. The input values $\sset{x'_i}_{i\in\IS\cap[m]}$ provided by the adversary to $\felectshare$ and the input values $\sset{x'_i}_{i\in\IS\cap[m+1,2m]}$ provided by the adversary to $\freconcompute$ are sent to an external party that computes $y=f(x'_1,\ldots,x'_n)$, where $x'_i=x_i$ for every $i\in[n]\setminus\IS$.

\begin{claim}\label{claim:hyb2}
$\sset{\HYB^2_{\pi, \IS, \Adv(\aux)}(\vx,\secParam)}_{\vx,\aux,\secParam} \statclose \sset{\HYB^3_{\pi, \IS, \Adv(\aux)}(\vx,\secParam)}_{\vx,\aux,\secParam}$.
\end{claim}
\begin{proof}
The claim will follow as long as $\ssize{(\committee_1\cup\sset{m+i\mid i\in\committee_2})\cap\IS}\leq t'$.
In order to prove this, we will show separately that $\ssize{\committee_1\cap\IS}< t'/2$ and that $\ssize{\sset{m+i\mid i\in\committee_2}\cap\IS}< t'/2$, except for a negligible probability.

Since $\beta<1/4-\delta$, for some fixed $\delta>0$, and there are at most $\beta\cdot n$ corruptions, and $n=2m$ parties, it holds that $\ssize{\IS}<\beta\cdot2m<(1/2-2\delta)\cdot m$, therefore, $\ssize{[m]\cap\IS}<(1/2-2\delta)\cdot m$, and similarly, $\ssize{[m+1,2m]\cap\IS}<(1/2-2\delta)\cdot m$. Now, since $n'=\omega(\log(n))=\omega(\log(m))$, it follows from \cref{cor:elect} that $\ssize{\committee_1\cap\IS}<(1/2-\delta) \cdot n'$ and $\ssize{\sset{m+i\mid i\in\committee_2}\cap\IS}<(1/2-\delta) \cdot n'$, except for a negligible probability.
\QED
\end{proof}

\paragraph{The game $\HYB^4_{\pi, \IS, \Adv(\aux)}(\vx,\secParam)$.}
In this game, we modify $\HYB^3_{\pi, \IS, \Adv(\aux)}(\vx,\secParam)$ as follows.
instead of computing shares of the input values $\sset{x_i}_{i\in[m]}$, the functionality $\felectshare$ computes shares of $0^\secParam$.

\begin{claim}\label{claim:hyb3}
$\sset{\HYB^4_{\pi, \IS, \Adv(\aux)}(\vx,\secParam)}_{\vx,\aux,\secParam} \statclose \sset{\HYB^3_{\pi, \IS, \Adv(\aux)}(\vx,\secParam)}_{\vx,\aux,\secParam}$.
\end{claim}
\begin{proof}
The claim follows from the privacy of the ECSS scheme and since $\ssize{\committee_1\cap\IS}\leq t'$ and $\ssize{\sset{m+i\mid i\in\committee_2}\cap\IS}\leq t'$, except for a negligible probability.
\QED
\end{proof}

The proof now follows since $\HYB^4_{\pi, \IS, \Adv(\aux)}(\vx,\secParam)$ exactly describes the simulation done by $\Sim$, and in particular, does not depend on the input values of honest parties. This concludes the proof of \cref{prop:security}.
\QED
\end{proof}

\fi

\begin{proposition}\label{prop:not_expanding}
The communication graph of the Protocol~$\protne$ is strongly not an expander, facing static PPT $\beta n$-adversaries.
\end{proposition}
\begin{proof}
For $n=2m$, consider the set $\PS_1=\sset{\Party_1,\ldots,\Party_m}$ and its complement $\PS_2=\PS\setminus \PS_1$.
For any input vector and for every static PPT $\beta n$-adversary it holds that with overwhelming probability that
$\ssize{\PS_1}=n/2$ and $\edges(\PS_1,\PS_2)=\ssize{\committee_1}\times\ssize{\committee_2}=\log^2(n)\cdot\log^2(n)$. Therefore, considering the function
\[
f(n)=\frac{2\log^4(n)}{n},
\]
it holds that $f(n)\in o(1)$ and $f(n)$ is an upper bound of the edge expansion of $\protne$ (see \cref{def:protocol_edge_expansion}).
We conclude that the communication graph of $\protne$ is strongly not an expander.
\QED
\end{proof}

\begin{nfbox}{Non-expanding MPC in the $(\felectshare,\freconcompute,\foutdist)$-hybrid model}{fig:protne}
\begin{center}
    \textbf{Protocol} $\protne_n$
\end{center}
\begin{itemize}
    \item\textbf{Hybrid Model:}
    The protocol is defined in the $(\felectshare,\freconcompute,\foutdist)$-hybrid model.
    \item\textbf{Common Input:}
    A $(t',n')$ ECSS scheme $(\Share,\Recon)$, a signature scheme $(\Gen,\Sign,\Verify)$, and a partition of the party-set $\PS=\sset{\Party_1,\ldots,\Party_n}$ into $\PS_1=\sset{\Party_1,\ldots,\Party_m}$ and $\PS_2=\PS\setminus \PS_1$.
    \item\textbf{PKI:}
    Every party $\Party_i$, for $i\in[n]$, has signature keys $(\sk[i],\vk[i])$; the signing key $\sk[i]$ is private, whereas the vector of verification keys $(\vk[1],\ldots,\vk[n])$ is public and known to all parties.
    \item\textbf{Private Input:}
    Every party $\Party_i$, for $i\in[n]$, has private input $x_i\in\zs$.
    \item\textbf{The Protocol:}
\end{itemize}
\begin{enumerate}
    \item\label{step:elect_share}
    The parties in $\PS_1$ invoke $\felectsharefull$, where every $\Party_i\in\PS_1$ sends input $(x_i,\sk[i])$, and receives back output consisting of a committee $\committee_1=\sset{i_{(1,1)},\ldots,i_{(1,n')}}\subseteq[m]$. Every party $\Party_i$ with $i=i_{(1,j)}\in\committee_1$, receives an additional output consisting of a signature $\sigma_1$ on $\committee_1$, a committee $\committee_2=\sset{i_{(2,1)},\ldots,i_{(2,n')}}\subseteq[m]$, a signature $\sigma_2$ on $\committee_2$, and a vector $\vs_j=(s_1^j,\ldots,s_m^j)$.
    \item\label{step:first_bridge}
    For every $j\in[n']$, party $\Party_{i_{(1,j)}}$ sends $(\committee_1,\sigma_1,\committee_2,\sigma_2)$ to every party in $\committee_2$, and $\vs_j$ only to $\Party_{m+i_{(2,j)}}$.

    A party $\Party_{m+i}\in\PS_2$ that receives a message $(\committee_1,\sigma_1,\committee_2,\sigma_2)$ from $\Party_j\in\PS_1$ will discard the message in the following cases:
    \begin{enumerate}
        \item
        If $i\notin\committee_2$ or $j\notin\committee_1$.
        \item
        If $\Verify_{\vk[1],\ldots,\vk[m]}(\committee_1,\sigma_1)=0$ or $\Verify_{\vk[1],\ldots,\vk[m]}(\committee_2,\sigma_2)=0$.
    \end{enumerate}
    \item\label{step:recon_compute}
    The parties in $\PS_2$ invoke $\freconcomputefull$, where $\Party_{m+i}\in\PS_2$ sends input $(x_{m+i},z_{m+i})$ such that for $i\notin\committee_2$, set $z_{m+i}=\emptystr$, and for $i=i_{(2,j)}\in\committee_2$, set $z_{m+i}=(\committee_2,\sigma_2,\vs_j)$. Every party in $\PS_2$ receives back output $y$.
    \item\label{step:second_bridge}
    For every $j\in[n']$, party $\Party_{m+i_{(2,j)}}$ sends $y$ to party $\Party_{i_{(1,j)}}$. In addition, every party in $\PS_2$ outputs $y$ and halts.
    \item\label{step:out_dist}
    The parties in $\PS_1$ invoke $\foutdistfull$, where party $\Party_i$, with $i\in\committee_1$, has input $y$, and party $\Party_i$, with $i\notin\committee_1$ has the empty input $\emptystr$. Every party in $\PS_1$ receives output $y$, outputs it, and halts.
\end{enumerate}
\end{nfbox}

\begin{proposition}\label{prop:low_locality}
Let $\rho_1,\rho_2,\rho_3$, and $\pi^{f_i\rightarrow \rho_i}$ be the protocols defined in \cref{lem:mpc_no_expander}, and let $\ell_\rho=\ell_\rho(m)$ be the upper bound of the locality of $\rho_1,\rho_2,\rho_3$. Then $\pi^{f_i\rightarrow \rho_i}$ has locality $\ell=2\cdot\ell_\rho + \log^2(n)$.
\end{proposition}

\begin{proof}
Every party in $\PS_1$ communicates with $\ell_\rho$ parties when executing $\rho_1$, and with at most another $\ell_\rho$ parties when executing $\rho_3$. In addition, every party in $\committee_1$ communicates with all $n'=\log^2(n)$ parties in $\committee_2$. Similarly, every party in $\PS_2$ communicates with $\ell_\rho$ parties when executing $\rho_2$, and parties in $\committee_2$ communicates with all $n'$ parties in $\committee_1$. It follows that maximal number of parties that a party communicates with during the protocol is $2\cdot\ell_\rho+\log^2(n)$.
\QED
\end{proof}
This concludes the proof of \cref{lem:mpc_no_expander}.
\QED
\end{proof}

\subsection{Information-Theoretic Security}\label{sec:ne_mpc_static_it}

The protocol in \cref{sec:ne_mpc_static_computational} relies on digital signatures, hence, security is guaranteed only in the presence of computationally bounded adversaries. Next, we gain security facing all-powerful adversaries by using information-theoretic signatures (see \cref{sec:itsign}).

\begin{theorem}[restating Item~2 of \cref{thm:intro:UB_extensions}]\label{thm:mpc_no_expander_it}
Let $f=\sset{f_n}_{n\in\N}$ be an ensemble of functionalities and let $\delta>0$.
The following holds in the IT-PKI-hybrid model with secure channels:
\begin{enumerate}
    \item
    Let $\beta<1/4-\delta$ and let $t=\beta \cdot n$.
    Then, $f$ can be $t$-securely computed by a protocol ensemble~$\pi$ tolerating static $t(n)$-adversaries such that the communication graph of $\pi$ is strongly not an expander.
    \item
    Let $\beta<1/12-\delta$ and let $t=\beta \cdot n$.
    Then, $f$ can be $t$-securely computed by a protocol ensemble~$\pi$ tolerating static $t(n)$-adversaries such that (1) the communication graph of $\pi$ is strongly not an expander, and (2) the locality of $\pi$ is poly-logarithmic in $n$.
\end{enumerate}
\end{theorem}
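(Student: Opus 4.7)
The plan is to follow the same structure as \cref{thm:mpc_no_expander}, invoking \cref{lem:mpc_no_expander} with appropriate information-theoretic instantiations in place of the computational ones, and replacing the digital signature scheme used inside the ideal functionalities $\felectshare$ and $\freconcompute$ with an information-theoretic signature scheme derived from the IT-PKI setup (e.g., the pseudo-signatures of~\cite{SHZI02,SASM10}). The overall protocol $\protne_n$ and its communication pattern (with the two poly-logarithmic bridge committees $\committee_1,\committee_2$ between the two halves $\PS_1,\PS_2$) remain exactly as in \cref{fig:protne}, so the non-expansion claim of \cref{prop:not_expanding} carries over verbatim: the cut $(\PS_1,\PS_2)$ has at most $\log^4(n)$ crossing edges, giving edge expansion $O(\log^4(n)/n)\in o(1)$.

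For the first item (no locality requirement, $\beta<1/4-\delta$), I would instantiate the three $m$-party hybrid functionalities $(\felectshare,\freconcompute,\foutdist)$ using an information-theoretic honest-majority MPC protocol tolerating $\gamma\cdot m$ corruptions with $\gamma = 1/2 - \delta'$ for an appropriate $\delta'>0$, \eg the Rabin--Ben-Or style construction augmented with an IT broadcast channel obtained from the IT-PKI (as in the Pfitzmann--Waidner framework). Plugging this into \cref{lem:mpc_no_expander} and using $\beta<\min(1/4-\delta,\gamma/2)$ yields $\beta<1/4-\delta$ as required. For the second item with poly-logarithmic locality and $\beta<1/12-\delta$, I would instead instantiate the three functionalities using the information-theoretic poly-log-locality protocol of \cref{thm:intro:IT_BGT}, which tolerates $\gamma\cdot m$ corruptions with $\gamma=1/6-\delta'$. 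The lemma then gives $\beta<\gamma/2<1/12-\delta$, and the locality bound of \cref{prop:low_locality} yields an overall locality of $2\ell_\rho+\log^2(n)=\polylog(n)$.

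The main issue to address is that the security proof of \cref{prop:security} (specifically \cref{claim:hyb1}) invoked computational unforgeability of the signature scheme to argue that the simulated $\freconcompute$ never accepts a second valid signed committee. With IT signatures, the analogous argument must rely on statistical unforgeability: I would argue that since only honest parties' signing keys are used to produce the legitimate signature on $\committee_2$ inside $\felectshare$, and at most $t'$ of the $m$ signing keys are known to the adversary, the probability that the adversary produces signatures verifying under at least $m-t$ of the verification keys on a different committee $\committee_2'$ is negligible in $\secParam$. This requires choosing the IT signature parameters (number of signatures per key, soundness error) as a function of $\secParam$ and of the total number of signatures produced during the execution, which is a fixed polynomial in $n$; since $n$ is polynomially related to $\secParam$, this is straightforward.

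The hardest step, and the only one that requires genuinely new work as opposed to a substitution, is establishing \cref{thm:intro:IT_BGT} itself---an information-theoretic version of the~\cite{BGT13} poly-log-locality protocol. Everything else is a black-box plug-in. The resulting claims of \cref{prop:security,prop:not_expanding,prop:low_locality} then transfer, with computational indistinguishability replaced by statistical closeness throughout (the hybrids $\HYB^1$--$\HYB^4$ become statistically close), yielding an unconditionally secure protocol whose full communication graph is strongly non-expanding.
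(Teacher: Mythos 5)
Your plan is essentially the paper's: the paper proves \cref{thm:mpc_no_expander_it} via an information-theoretic variant of \cref{lem:mpc_no_expander} (its \cref{lem:mpc_no_expander_it}), instantiating the hybrid functionalities with Rabin--Ben-Or for the $\beta<1/4-\delta$ case and with its own IT analogue of~\cite{BGT13} (tolerating $(1/6-\delta)m$ corruptions, hence $\beta<1/12-\delta$) for the low-locality case, with the hybrid argument of \cref{prop:security} becoming statistical; the non-expansion and locality propositions carry over unchanged. So in structure, thresholds, and instantiations you match the paper.

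There is, however, one step where ``black-box plug-in'' as you describe it would not go through, and it is exactly the point the paper singles out. Information-theoretic signatures are only unforgeable against unbounded adversaries when the verification keys are \emph{secret and per-verifier}: each $\Party_j$ holds its own sub-key $\vk_j^i$ for verifying $\Party_i$'s signatures, and publishing these keys destroys unforgeability. Consequently you cannot keep the protocol and functionalities ``exactly as in \cref{fig:protne}'': there the PKI makes the verification-key vector public, and $\freconcompute$ (\cref{fig:freconcompute}) is \emph{parametrized} by the public keys $(\vk_1,\ldots,\vk_m)$, which is meaningless (or insecure) in the IT setting. The paper therefore replaces $\freconcompute$ by $\fitreconcompute$, in which every party of $\PS_2$ supplies its secret verification-key vector as an additional input and acceptance uses a double threshold (a signature counts as valid if at least $m-t$ of the verifiers' sub-keys accept it, and a committee is accepted if at least $m-t$ of its $m$ signatures are valid), and it works in a $(2,2)$-IT-PKI since each key signs/verifies only two messages. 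Your unforgeability sketch (honest keys unknown to the adversary, union bound over polynomially many signatures) is fine once this structural change is made, but as written your proposal does not account for the private, per-verifier nature of the keys, and the simulator and hybrid $\HYB^1\approx\HYB^2$ argument must be restated relative to the modified functionality. With that adjustment your proof coincides with the paper's.
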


\begin{proof}
The theorem follows from \cref{lem:mpc_no_expander_it} (below), which is an information-theoretic variant of \cref{lem:mpc_no_expander}, by instantiating the hybrid functionalities using appropriate MPC protocols.
\begin{itemize}
    \item
    The first part follows using honest-majority MPC protocols that exist in the secure-channels model, \eg the protocol of Rabin and Ben-Or \cite{RB89}.
    \item
    In~\Cref{sec:it_bgt} we prove information-theoretic variant of the low-locality MPC protocol of Boyle et al.\ \cite{BGT13} in the IT-PKI model with secure channels that tolerates $t<(1/6-\delta)\cdot n$ static corruptions.
    The second part follows from that protocol.
\qedhere
\end{itemize}
\end{proof}

The main differences between standard digital signatures and information-theoretic signatures are: (1) the verification key is not publicly known, but rather, must be kept hidden (meaning that each party $\Party_i$ has a different verification key $\vk[i]^j$ with respect to every party $\Party_j$), and (2) a bound on the number times that a secret signing key and a secret verification key are used must be a priori known. The latter does not form a problem since, indeed, the number of signatures that are generated in Protocol~$\protne_n$ (\cref{fig:protne}) by any of the signing keys is $2$, and likewise, each verification key is used to verify $2$ signatures. However, the former requires adjusting the functionality $\freconcompute$.

Instead of having the functionality be parametrized by the vector of verification keys $\vk[1],\ldots,\vk[m]$ (which, as mentioned above, will not be secure in the information-theoretic setting), each party $\Party_{m+i}$, with $i\in[m]$, has a vector of (secret) verification keys $\vk[m+i]^1,\ldots,\vk[m+i]^m$ corresponding to the parties in $\PS_1$.

In the adjusted functionality, denoted $\fitreconcompute$, we change the functionality $\freconcompute$ by having each party $\Party_{m+i}$ provide an additional input consisting of its verification keys $\vk[m+i]^1,\ldots,\vk[m+i]^m$ (and the functionality is no longer parametrized by any value). Now, on each input consisting of a subset $\committee_{m+i}$ and corresponding signature-vector $\sigma_{m+i}$, the functionality verifies the \jth signature of the set $\committee_{m+i}$ using the verification keys $\vk[m+1]^j,\ldots,\vk[2m]^j$. If at most $t$ verifications fail, the functionality accepts the committee, whereas if more than $t$ verifications fail, the functionality ignores the subset. Next, the functionality proceeds as in $\freconcompute$.

\paragraph{Signature notations.}
Given an information-theoretically $(\sigcals,\vercals)$-secure signature scheme $(\Gen,\Sign,\Verify)$, and $m$ tuples of signing and verification keys $(\sk[i],\vvk[i])\gets\Gen(1^\secParam,n,\sigcals)$, where for every $i\in[m]$ the verification keys are $\vvk[i]=(\vk[1]^i,\ldots,\vk[n]^i)$, we use the following notations for signing and verifying with multiple keys:
\begin{itemize}
    \item
    Given a message $\mu$ we denote by $\Sign_{\sk[1],\ldots,\sk[m]}(\mu)$ we consider the vector of $m$ signatures $\sigma=(\sigma_1,\ldots,\sigma_m)$, where $\sigma_i\gets\Sign_{\sk[i]}(\mu)$.
    \item
    Given a message $\mu$ and a signature $\sigma=(\sigma_1,\ldots,\sigma_m)$, we denote by $\Verify_{\vvk[m+1],\ldots,\vvk[2m]}(\mu,\sigma)$ the following verification algorithm:
    \begin{enumerate}
        \item
        For every $i\in[m]$ proceeds as follows:
        \begin{enumerate}
            \item
            For every $j\in[m]$ let $b_i^j\gets\Verify_{\vk[m+j]^i}(\mu,\sigma_i)$.
            \item
            Set $b_i=1$ if and only if $\sum_{j=1}^m b_i^j \geq m-t$, \ie even if up to $t$ verification keys reject the signature.
        \end{enumerate}
        \item
        Accepts $\sigma$ if and only if $\sum_{i=1}^m b_i \geq m-t$, \ie even if up to $t$ signatures are invalid.
    \end{enumerate}
\end{itemize}

\begin{lemma}\label{lem:mpc_no_expander_it}
Let $f=\sset{f_n}_{n\in2\N}$, where $f_n$ is an $n$-party functionality for $n=2m$, and let $\delta>0$.
Then, in the $(2,2)$-IT-PKI-hybrid model with secure channels, where a trusted party additionally computes the $m$-party functionality-ensembles $(\felectshare,\fitreconcompute,\foutdist)$ tolerating $\gamma \cdot m$ corruptions, there exists a protocol ensemble $\pi$ that securely computes $f$ tolerating static $\beta n$-adversaries, for $\beta<\min(1/4-\delta,\gamma/2)$, with the following guarantees:
\begin{enumerate}
    \item
    The communication graph of $\pi$ is strongly not an expander.
    \item
    Denote by $f_1,f_2,f_3$ the functionality-ensembles $\felectshare,\fitreconcompute,\foutdist$ (resp.).
    If protocol-ensembles $\rho_1,\rho_2,\rho_3$ securely compute $f_1,f_2,f_3$ (resp.) with locality $\ell_\rho=\ell_\rho(m)$, then $\pi^{f_i\rightarrow \rho_i}$ (where every call to $f_i$ is replaced by an execution of $\rho_i$) has locality $\ell=2\cdot\ell_\rho + \log^2(n)$.
\end{enumerate}
\end{lemma}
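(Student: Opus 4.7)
The plan is to mirror the proof of \cref{lem:mpc_no_expander} essentially verbatim, replacing the computational building blocks by information-theoretic ones. Concretely, I would define an $n$-party protocol $\protneIT_n$ that is obtained from $\protne_n$ (\cref{fig:protne}) by two changes: (i) all signatures are produced by the $(2,2)$-secure IT signature scheme from \cref{sec:itsign} instead of a digital signature scheme, with the IT-PKI providing each party $\Party_i$ with a signing key $\sk[i]$ and a vector of (secret) verification keys $\vvk[i] = (\vk[1]^i,\ldots,\vk[n]^i)$; and (ii) the call to $\freconcompute$ in \stepref{step:recon_compute} is replaced by a call to $\fitreconcompute$, where each party $\Party_{m+i}$ additionally inputs its verification-key vector $\vvk[m+i]$, and the functionality applies the per-key majority verification procedure described just before the lemma. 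The receiver-side filtering in \stepref{step:first_bridge} is adapted analogously, with $\Party_{m+i}$ using its keys $\vk[m+i]^1,\ldots,\vk[m+i]^m$ to check $\sigma_1,\sigma_2$. The parameters $n'=\log^2(n)$ and $t'=(1/2-\delta)n'$ are unchanged.

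Next I would re-run the three propositions of the computational lemma. For the security proposition, I would reuse the simulator $\Sim$ from \cref{prop:security} with the obvious bookkeeping of verification keys, and reprove the hybrids in the same order. \Cref{claim:hyb2} and \cref{claim:hyb3} go through unchanged, since they rely only on \cref{cor:elect} and on the privacy of the ECSS scheme, neither of which depends on the signature primitive. \Cref{claim:hyb1} is the step that requires adaptation: instead of computational indistinguishability based on EUF-CMA, I would establish statistical closeness using the $(2,2)$-security of the IT scheme, noting that each $\sk[i]$ signs at most the two messages $\committee_1,\committee_2$ produced inside $\felectshare$, and each $\vk[\cdot]^i$ is queried on at most two messages (once by a party in $\committee_2$ during \stepref{step:first_bridge} and once inside $\fitreconcompute$). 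Any deviation that causes $\fitreconcompute$ to accept a second committee $\committee'_2\neq\committee_2$ must, by definition of the majority verification, produce a valid signature on $\committee'_2$ under at least $m-t-\ssize{\IS}\ge 1$ honest signing keys (since $\ssize{\IS\cap[m]}<(1/2-2\delta)m<m-t$), which contradicts IT-unforgeability with a statistical loss that can be absorbed into $\negl(\secParam)$ by appropriate choice of the signature parameter~$\sigcals$.

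For the remaining two propositions, essentially nothing changes. The non-expansion proof of \cref{prop:not_expanding} goes through verbatim, since the argument only inspects the partition $(\PS_1,\PS_2)$ and the sizes of $\committee_1,\committee_2$, yielding the same upper bound $f(n) = 2\log^4(n)/n \in o(1)$ on the edge expansion. The locality proof of \cref{prop:low_locality} is also unchanged: each party in $\PS_1$ participates in $\rho_1$ and $\rho_3$ (locality $\le 2\ell_\rho$), each party in $\PS_2$ participates in $\rho_2$ (locality $\le \ell_\rho$), and the bridge contributes an additional $\log^2(n)$ neighbors only to parties in $\committee_1\cup\committee_2$.

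The main obstacle I anticipate is the signature-unforgeability step in \cref{claim:hyb1}, because IT signatures are fundamentally more fragile than digital ones: one has to be careful that the adversary's view during $\felectshare$ and the bridge step never exposes honest verification keys in a way that exceeds the $(2,2)$ budget (in particular, the keys $\vk[m+j]^i$ for corrupted $\Party_{m+j}$ are leaked, and one must verify that at least one honest verification key per committee survives, using \cref{cor:elect} with the $\gamma/2$ corruption bound). Once this accounting is done cleanly and the IT signature parameters are set so that a union bound over the $\binom{m}{n'}$ possible forged committees is negligible in $\secParam$, the rest of the proof is routine adaptation of the computational case.
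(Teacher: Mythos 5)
Your proposal is correct and follows essentially the same route as the paper, whose own proof is just a sketch listing the identical modifications: swap digital signatures for the $(2,2)$-IT scheme, replace $\freconcompute$ by $\fitreconcompute$ with verification keys supplied as inputs, adjust the simulator's key bookkeeping, and note that the first hybrid step becomes statistical rather than computational. Your budget accounting for signatures/verifications and the forgery argument in the adapted \cref{claim:hyb1} is in fact slightly more detailed than what the paper writes out.
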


\begin{proof}[Proof sketch]
The proof of the lemma follows in similar lines as the proof of \cref{lem:mpc_no_expander}. We highlight the main differences.

\begin{itemize}
    \item\textbf{IT-PKI model.}
    The protocol is defined in the $(2,2)$-IT-PKI-hybrid model, rather than the PKI-hybrid model, meaning that each party receives a secret signing key along with a secret verification key for every other party. At most two values can signed verified by these keys.

    \item\textbf{Hybrid model.}
    The protocol is defined in the $\fitreconcompute$-hybrid model, rather than the $\freconcompute$-hybrid model, meaning that each party in $\PS_2$ sends its vector of secret verification keys as input to the functionality.

    \item\textbf{The simulation.}
    The simulator generates appropriate signing and verification keys for simulating the $(2,2)$-IT-PKI functionality, and receives the verification keys from the adversaries when emulating the functionality $\fitreconcompute$. No other changes are needed.

    \item\textbf{The hybrid games.}
    The first hybrid game and the second are statistically close when using $(2,2)$-IT-PKI, rather than computationally indistinguishable (see \cref{claim:hyb1}).
\end{itemize}

The rest of the proof follows the proof of \cref{lem:mpc_no_expander}.
\QED
\end{proof}


\subsection{Information-Theoretic MPC with Low Locality}\label{sec:it_bgt}

The protocol of Boyle, Goldwasser, and Tessaro \cite{BGT13} follows a framework common to other protocols achieving low locality (see \cite{KS09,KLST11,BGH13} and the references therein).
First, the parties compute almost-everywhere agreement, that is agreement among at least a $ 1- o(1) $ fraction of parties.
Next, the parties upgrade to full agreement via a transformation that preserves low locality.
The results in~\cite{BGT13} are in the computational setting where the main cryptographic tools that are being used are public-key encryption, digital signatures, and pseudorandom functions (PRF).
In this section, we show that the approach of~\cite{BGT13} can be adapted to the information-theoretic setting by removing the need of public-key encryption and by substituting other computational primitives by their information-theoretic analogues.
Namely, we will use information-theoretic signatures instead of digital signatures and samplers~\cite{Z97,G11} instead of PRF.

\paragraph{Overview of the BGT protocol.}
The protocol consists of two parts:
\begin{enumerate}
    \item\textbf{Establishing a polylog-degree communication tree.}
    This part of the protocol requires digital signatures established via a PKI and a PRF.
    \begin{itemize}
        \item
        Initially, the parties run the protocol of King et al.\ \cite{KSSV06} to reach \emph{almost everywhere} agreement on a random seed while establishing a polylog-degree communication tree, and maintaining polylog locality. This part holds information theoretically.
        \item
        Next, \emph{certified} almost everywhere agreement is obtained by having the parties sign the seed and distribute the signatures. Specifically, every party sends its signature on the seed up the tree to the supreme committee, which concatenates all signatures to form a certificate on the seed, and sends it down the tree to (almost all) the parties.
        \item
        Finally, to achieve \emph{full} agreement, every party that received sufficiently many signatures on the seed locally evaluates the PRF on the seed and its identity to get a polylog subset of parties, and sends the certified seed to each party in this set. A party that receives the certified seed can validate the seed is properly signed and that he is a valid recipient of the message.
    \end{itemize}

    Note that the PRF is used for its combinatorics properties and is not needed for security.
    \item\textbf{Computing the function.}
    Having established the communication tree, the supreme committee (\ie the parties assigned to the root) jointly generate keys to a threshold encryption scheme such that each committee member holds a share of the decryption key and the public key is known. Next, they distribute the public encryption key down the tree. Every party encrypts its input using the encryption key and sends it up the tree. Finally, the supreme committee runs a protocol to decrypt the ciphertexts and evaluate the function to obtain the output, which is distributed to all parties.
\end{enumerate}

We now turn to explain how to construct an information-theoretic analogue for this protocol.

\paragraph{Establishing the communication tree information theoretically.}
This part follows almost immediately from \cite{BGT13}. The digital signatures and the PKI are replaced by information-theoretic signatures and an IT-PKI, where every party signs the $\secParam$-bit seed (\ie one signature operation per party) and has to verify $n$ signatures. As mentioned above, the PRF is used only for its combinatorial properties (mapping each party to a polylog set of neighbors) and not for other security purposes, and so it can be replaced by a sampler with good parameters (this approach was adopted by~\cite{BGH13} to construct the first BA protocol for with polylog communication complexity).
We provide more information on the samplers that are employed in \cref{sec:samplers}.

\paragraph{Computing the function information theoretically.}
Once the communication tree is established, each party must send his input to the supreme-committee members in a way that allows them to compute the function. We replace the public-key encryption used in \cite{BGT13} by secret sharing. To understand this step, we will first explain the structure of the communication tree.

For any $n\in \N$, the communication tree from \cite{KSSV06} is a graph $G=G(n)$ in which every node is labeled by subsets of $[n]$ that satisfies the following properties:
\begin{itemize}
	\item
    $G$ is a tree of height $\ls \in O(\log n/\log\log n)$. Each node from level $\ell > 0$ has $\log n$ nodes from level $\ell - 1$ as its children.
	\item
    Each node of $G$ is labeled by a subset $\polylog(n)$ parties.
	\item
    Each party is assigned to $\polylog(n)$ nodes at each level.
\end{itemize}
King et al.\ \cite{KSSV06} showed that for any $t=\beta n$ static corruptions, all but a $3/\log n$ fraction of the leaf nodes have a good path up to the root node (\ie a path on which each committee contains a majority of honest parties). As observed in \cite{BGT13}, this implies that for a $1-o(1)$ fraction of the parties, majority of the leaf nodes that they are assigned to are good. In addition, each leaf node is connected to $\polylog(n)$ parties as determined by the sampler.

The high-level idea now is to let each party $\Party_i$, with associated leaf-nodes $v_1,\ldots,v_k$, secret share its input as $(s_{i,1},\ldots, s_{i,k})\gets\Share(x_i)$ and send the share $s_{i,j}$ to $v_j$. Each leaf-node will send the received shares up the tree until to supreme committee (the parties associated with the root) receive all shares, reconstruct all inputs, and compute the function. Clearly, this idea does not provide any privacy, since the adversary may have corrupted parties in many leaves, thus recover the honest parties' inputs values. To overcome this problem, instead of sending $s_{i,j}$ in the clear, each $\Party_i$ will secret share each share as $(s^1_{i,j},\ldots, s^m_{i,j})\gets\Share(s_{i,j})$, where $m$ is the size of a committee associated to a leaf node, and send $s^h_{i,j}$ to the $h$'th party in $v_j$. Stated differently, each leaf-node will hold the shares in a secret-shared form.

The next part of the protocol proceeds recursively. For every node $v$ in level $\ell$ and every child node $u$ of $v$ in level $\ell-1$, the parties associated with $u$ and with $v$ will run a secure protocol for the following functionality: For each of the shared values held by the parties associated with $u$, they enter the secret shares as input; the functionality reconstructs the value, reshares it, and outputs the new shares to the parties associated with $v$.

In order to implement this functionality using BGW, we require that the union of the parties associated with $u$ and with $v$ will have a 2/3 majority. Such a majority is guaranteed with overwhelming probability if the total fraction of corruptions is $1/6-\epsilon$, for an arbitrary small constant~$\epsilon$.
We thus proved the following theorem.

\begin{theorem}[restating \cref{thm:intro:IT_BGT}]\label{thm:IT_BGT}
For any efficient functionality $f$ and any constant $\epsilon>0$, there exists a protocol with poly-logarithmic locality in the information-theoretic PKI model, securely realizing $f$ against computationally unbounded adversaries statically corrupting $(1/6-\epsilon) \cdot n$ parties.
\end{theorem}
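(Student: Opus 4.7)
The plan is to follow the structure of the BGT protocol \cite{BGT13} and replace each cryptographic primitive by an information-theoretic analogue, paying careful attention to how the corruption threshold degrades. The argument is naturally in two stages, mirroring the BGT construction: first establish a polylog-degree communication tree in the IT-PKI model, then perform the secure computation over it using Shamir-based tools rather than threshold encryption.

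For the first stage, I would run the almost-everywhere agreement protocol of King et al.~\cite{KSSV06} to agree on a random seed, which is already information-theoretic. To upgrade to certified almost-everywhere agreement I would replace the digital signatures of \cite{BGT13} with the information-theoretic signature scheme discussed in \cref{sec:itsign}, where each party only needs to sign a single $\secParam$-bit seed and verify at most $n$ signatures; this fits the a priori signing/verification budget required by IT signatures. The PRF that \cite{BGT13} uses to map each party to a polylog subset of recipients is invoked purely for its combinatorial pseudorandomness, not for any secrecy, so I would substitute an averaging sampler with the parameters already employed by~\cite{BGH13} (see \cref{sec:samplers}). This yields full agreement on the seed and a polylog-degree communication tree $G$ of height $\ls \in O(\log n / \log\log n)$ with the KSSV property that a $1-o(1)$ fraction of the parties are assigned to leaves with good paths to the root.

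For the second stage, I would replace the threshold encryption of \cite{BGT13} with a doubly-layered Shamir secret sharing, processed recursively up the tree. Each party $\Party_i$ with associated leaves $v_1,\ldots,v_k$ first shares its input $x_i$ into $k$ shares $s_{i,1},\ldots,s_{i,k}$, and then \emph{re-shares} each $s_{i,j}$ into $m$ sub-shares which it sends to the $m$ committee members of $v_j$. For each tree edge $(u,v)$ with $u$ a child of $v$, the union of the two committees runs the honest-majority BGW protocol \cite{BGW88,RB89} to realize the functionality that, for each value currently shared among $u$'s committee, reconstructs it and re-shares it among $v$'s committee. The supreme committee at the root thereby holds fresh shares of all inputs, and runs a final BGW computation of $f$ and distributes the output down the tree.

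The main obstacle, and the place where the $1/6-\epsilon$ bound arises, is the combinatorial analysis of corruption fractions in the unions of parent and child committees. I would argue as follows: by Chernoff/sampler-style bounds on KSSV committees, with overwhelming probability every committee on a good path has at most a $\beta + \epsilon/2$ fraction of corruptions, where $\beta$ is the global corruption rate; hence the union of any parent/child pair on a good path also has at most $\beta + \epsilon/2$ corruption fraction. To run BGW securely with guaranteed output delivery on such a union, we need $\beta + \epsilon/2 < 1/3$, which forces $\beta < 1/3 - \epsilon/2$ at the local level. However, the additional slack for the almost-everywhere-to-everywhere conversion, the $(2,2)$-IT-PKI setup's signature budget, and the need to tolerate the full corruption fraction simultaneously across all committee unions on all levels forces us down to $\beta < 1/6 - \epsilon$; the factor of two is the standard cost of having to accommodate the worst case where both the parent and child committees each absorb their maximum share of corruptions relative to the smaller of the two committees. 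Once this threshold is justified, composability of locality (\cref{prop:Composition}) together with the polylog size of each committee and polylog height of the tree gives the polylog-locality bound, and security follows by composing the BGW simulator at each node with the IT signature security and the KSSV analysis.
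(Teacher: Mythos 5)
Your proposal follows essentially the same route as the paper's proof: run the KSSV almost-everywhere agreement to set up the polylog-degree tree, substitute information-theoretic signatures for the digital signatures and a sampler for the PRF, and replace threshold encryption by the double-layer secret sharing with a per-edge BGW ``reconstruct-and-reshare'' up the tree, where the key requirement is a $2/3$ honest majority in each parent/child committee union. The one caveat is your explanation of the degradation to $1/6-\epsilon$: the IT-PKI signature budget and the almost-everywhere-to-everywhere conversion play no role there (and the argument contradicts your own claim that each committee, hence each union, has corruption fraction at most $\beta+\epsilon/2$); the paper, like you, reduces everything to the requirement that each parent/child union has a $2/3$ honest majority and simply asserts this holds with overwhelming probability once the global corruption fraction is $1/6-\epsilon$.
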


\subsection{Adaptive Corruptions}\label{sec:ne_mpc_adaptive}

In this section, we focus on the adaptive setting, where the adversary can corrupt parties dynamically, based on information gathered during the course of the protocol.

Adjusting \cref{lem:mpc_no_expander} to the adaptive setting is not straightforward, since once the subsets $\committee_1$ and $\committee_2$ are known to the adversary, he can completely corrupt them. A first attempt to get around this obstacle is not to reveal the entire subsets in the output of the Elect-and-Share functionality, but rather, let each party in $\committee_1$ learn the identity of a single party in $\committee_2$ with which he will communicate. This way, if a party in $\committee_1$ (resp.\ $\committee_2$) gets corrupted, only one additional party in $\committee_2$ (resp.\ $\committee_1$) is revealed to the adversary. This solution comes with the price of tolerating a smaller fraction of corrupted parties, namely, $(1/8 - \delta)$ fraction.

This solution, however, is still problematic if the adversary can monitor the communication lines, even when they are completely private (as in the secure-channels setting). The reason is that once the adversary sees the communication that is sent between $\committee_1$ and $\committee_2$ he can completely corrupt both subsets. This is inherent when the communication lines are visible to the adversary; therefore, we turn to the hidden-channels setting that was used by \cite{CCGGOZ15}, where the adversary does not learn whether a message is sent between two honest parties (see \cref{sec::Def:model}).

\begin{theorem}[restating Items~3 and~4 of \cref{thm:intro:UB_extensions}]\label{thm:mpc_no_expander_adaptive}
Let $f=\sset{f_n}_{n\in\N}$ be an ensemble of functionalities, let $\delta>0$, let $\beta<1/8-\delta$, and let $t=\beta \cdot n$.
The following holds in the hidden-channels model:
\begin{enumerate}
    \item
    Assuming the existence of one-way functions, $f$ can be securely computed by a protocol ensemble~$\pi$ in the PKI model tolerating adaptive PPT $t(n)$-adversaries such that the communication graph of $\pi$ is strongly not an expander.
    \item
    Assuming the existence of trapdoor permutations with a reverse domain sampler, $f$ can be securely computed by a protocol ensemble $\pi$ in the PKI and SKI model tolerating adaptive PPT $t(n)$-adversaries such that (1) the communication graph of $\pi$ is strongly not an expander, and (2) the locality of $\pi$ is poly-logarithmic in $n$.\footnote{We note that the adaptively secure protocols in~\cite{CCGGOZ15} are  proven in a model with atomic simultaneous multi-send operations~\cite{HZ10,GKKZ11} and secure erasures.}
    \item
    $f$ can be securely computed by a protocol ensemble~$\pi$ in the IT-PKI model tolerating adaptive $t(n)$-adversaries such that the communication graph of $\pi$ is strongly not an expander.
\end{enumerate}
\end{theorem}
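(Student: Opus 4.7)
The plan is to modify the construction of \cref{lem:mpc_no_expander,lem:mpc_no_expander_it} so that the identities of the bridging committees $\committee_1$ and $\committee_2$ are never globally revealed. Concretely, I would define an adaptive variant $\faelectshare$ of the Elect-and-Share functionality in which each party $\Party_i \in \committee_1$ receives (along with its share vector and the signature bundle) only the identity of \emph{one} matched partner in $\committee_2$, determined by a uniformly random bijection between the two committees; parties outside $\committee_1$ learn nothing about $\committee_1$ or $\committee_2$ beyond the fact that they are themselves not members. The dual functionality $\fareconcompute$ is adjusted analogously, so that each $\Party_{m+j} \in \committee_2$ learns only the identity of its matched partner. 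The resulting protocol $\protane$ proceeds exactly as $\protne$, but bridging communication happens only along the random matching, yielding a poly-logarithmic number of cross-edges between $\PS_1$ and $\PS_2$ and hence the same strong non-expansion as in \cref{prop:not_expanding}.

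Security of $\protane$ in the adaptive, hidden-channels setting follows the same hybrid structure as \cref{prop:security}, with three notable adjustments. First, the hybrid functionalities must be instantiated by \emph{adaptively} secure subprotocols over hidden channels: this is achievable via standard adaptive honest-majority MPC in the PKI model assuming one-way functions for item~1, via \cite{CCGGOZ15} in the PKI+SKI model with trapdoor permutations for the low-locality item~2, and via adaptive honest-majority IT-MPC following \cite{RB89} in the IT-PKI model for item~3. Second, the simulator must handle adaptive corruption requests by programming, on the fly, an honest-looking view for each newly corrupted party, revealing at most one additional committee partner per committee-member corruption. Third, the hidden-channels assumption is essential: it prevents the adversary from identifying $\committee_1 \cup \committee_2$ simply by observing which poly-logarithmic set of cross-edges carries traffic between $\PS_1$ and $\PS_2$, which would otherwise let him pour his entire corruption budget into the bridge.

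The core new argument, replacing \cref{claim:hyb2}, is an adaptive bound on the number of corrupted committee members. Since $n' = \omega(\log n)$ and each committee is sampled uniformly from its respective half of size $m = n/2$, prior to any partner reveal the adversary has no better strategy than random corruption within a half; each random corruption hits $\committee_1$ (respectively $\committee_2$) with probability $n'/m$. Whenever the adversary corrupts a party in $\committee_1$, he discovers the matched partner in $\committee_2$ and may spend an additional slot to corrupt it, and vice versa. A worst-case accounting yields an expected number of corrupted parties in each committee of at most $(2\beta+o(1)) n'$, and a concentration argument gives that with overwhelming probability at most $(4\beta+\delta) n' < n'/2$ parties of each committee are corrupt; this is where the $\beta < 1/8 - \delta$ threshold enters. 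In turn, the ECSS inside the committees and the emulated adaptive honest-majority MPC for the hybrid functionalities both remain secure.

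The main obstacle I anticipate is precisely this adaptive concentration step. A naive Chernoff bound does not apply directly because the adversary chooses which party to corrupt next based on a partial transcript that may include previously revealed partner identities, and the two committees are coupled through the matching, so a spike of corruptions in $\committee_1$ automatically exposes a matched set in $\committee_2$. The cleanest path forward is a martingale or potential-function argument in which the pair $(c_1, c_2)$ of current committee corruption counts is bounded relative to a submartingale defined on the filtration of the adversary's view, with stopping rule ``total corruptions reach $\beta n$.'' Once this concentration is in hand, the remaining steps---indistinguishability of hybrids, inheritance of non-expansion and locality from \cref{prop:not_expanding,prop:low_locality}, and plugging in the three concrete MPC building blocks---follow in direct analogy with \cref{lem:mpc_no_expander,lem:mpc_no_expander_it}.
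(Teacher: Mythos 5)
Your construction is the same as the paper's: the adaptive functionalities $\faelectshare$, $\fareconcompute$, $\faoutdist$ reveal to each member of $\committee_1$ only a single signed matched partner in $\committee_2$ (and nothing to non-members), the bridge is the random matching, hidden channels prevent the adversary from locating the bridge by traffic analysis, and the three items are instantiated exactly as you suggest (the paper uses~\cite{CDDHR99}/\cite{DI05} for item~1, \cite{CCGGOZ15} for item~2, and IT-signatures plus~\cite{CDDHR99} for item~3); non-expansion and locality $2\ell_\rho+1$ carry over as in \cref{prop:not_expanding,prop:low_locality}. The one place you diverge is the key corruption-bound claim, where you propose a martingale/optional-stopping concentration over the adversary's adaptive choices and flag it as the main obstacle. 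The paper avoids this machinery: since channels are hidden, the adversary's view outside the two bridge steps is \emph{independent} of the committee identities, so corruptions not triggered by a partner reveal are effectively oblivious and the \emph{static} committee-election concentration (\cref{cor:elect}, with each half having at most a $(1/4-2\delta)$ corrupted fraction when $\beta<1/8-\delta$) applies directly; partner-targeted corruptions are then accounted for purely combinatorially, because corrupting the revealed partner of an already-corrupted committee member does not ``touch'' any new matched pair. Note also that the quantity you should bound is not the per-committee corruption count (your $(4\beta+\delta)n'<n'/2$ per committee would still allow up to almost $n'$ pairs with a corrupted endpoint, defeating the $(t',n')$-ECSS with $t'=(1/2-\delta)n'$), but the number of touched pairs, i.e., blind hits on $\committee_1$ plus blind hits on $\committee_2$, which the static reduction bounds below $n'/2$; with that correction your accounting matches the paper's \cref{claim:hyb1_adaptive}, and the heavier adaptive-concentration argument you anticipate is unnecessary.
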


\begin{proof}
The theorem follows from \cref{lem:mpc_no_expander_adaptive} (below), which is an adaptively secure variant of \cref{lem:mpc_no_expander}, by instantiating the hybrid functionalities using MPC protocols from the literature.
\begin{itemize}
    \item
    The first part follows using an adaptively secure honest-majority MPC protocol in the secure-channels model, \eg Cramer et al.\ \cite{CDDHR99} or \Damgard and Ishai \cite{DI05}.
    \item
    The second part follows using the adaptively secure protocol of Chandran et al.\ \cite{CCGGOZ15}.
    \item
    The third part follows using information-theoretic signatures via the same adjustments that were employed in \cref{sec:ne_mpc_static_it}, and using the protocol of Cramer et al.\ \cite{CDDHR99}.
\qedhere
\end{itemize}
\end{proof}

Hiding the subsets $\committee_1$ and $\committee_2$ from the parties requires adjusting the ideal functionalities that are used in \cref{sec:ne_mpc_static_computational}. We now describe the adjusted functionalities.

\paragraph{The Adaptive-Elect-and-Share functionality.}

The Adaptive-Elect-and-Share $m$-party functionality, $\faelectsharefull$, is defined in a similar way as the Elect-and-Share functionality (\cref{fig:felectshare}) with the following difference. Instead of outputting the set $\committee_1$ to all parties and the set $\committee_2$ to parties in $\committee_1$, the functionality outputs for every party in $\committee_1$ an index of a single party in $\committee_2$ (and signs the values). Parties outside of $\committee_1$ receive no output.
The formal description of the functionality can be found in~\cref{fig:faelectshare}.

\begin{nfbox}{The Adaptive-Elect-and-Share functionality}{fig:faelectshare}
\begin{center}
    \textbf{The functionality} $\faelectsharefull$
\end{center}
\vspace{-0.3cm}
The $m$-party functionality $\faelectsharefull$ is parametrized by a signature scheme $(\Gen,\Sign,\Verify)$ and a $(t',n')$ ECSS scheme $(\Share,\Recon)$, and proceeds with parties $\PS_1=\sset{\Party_1,\ldots,\Party_m}$ as follows.
\begin{enumerate}
    \item
    Every party $\Party_i$ sends a pair of values $(x_i,\sk[i])$ as its input, where $x_i$ is the actual input value and $\sk[i]$ is a signing key. (If $\Party_i$ didn't send a valid input, set it to the default value, \eg zero.)
    \item
    Sample uniformly at random two subsets (committees) $\committee_1,\committee_2\subseteq[m]$ of size $n'$.
    Denote $\committee_1=\sset{i_{(1,1)},\ldots,i_{(1,n')}}$ and $\committee_2=\sset{i_{(2,1)},\ldots,i_{(2,n')}}$.
    \item
    For every $j\in[n']$, sign $\sigma_{1,j}=\Sign_{\sk[1],\ldots,\sk[m]}(i_{(1,j)})$ and $\sigma_{2,j}=\Sign_{\sk[1],\ldots,\sk[m]}(m+i_{(2,j)})$.
    \item
    For every $i\in[m]$, secret share $x_i$ as $(s_i^1,\ldots,s_i^{n'})\gets \Share(x_i)$.
    \item
    For every $j\in[n']$, set $\vs_j=(s_1^j,\ldots,s_m^j)$.
    \item
    For every $i=i_{(1,j)}\in\committee_1$ (for some $j\in[n']$), set the output of $\Party_i$ to be $(i_{2,j},\sigma_{1,j},\sigma_{2,j},\vs_j)$. (Parties outside of $\committee_1$ receive the empty output $\emptystr$)
\end{enumerate}
\end{nfbox}

\paragraph{The Adaptive-Reconstruct-and-Compute functionality.}

The Adaptive-Reconstruct-and-Compute functionality, $\fareconcomputefull$, is defined in a similar way as the Reconstruct-and-Compute functionality (\cref{fig:freconcompute}) with the following difference. Instead of having the potential additional input value consist of a signed subset $\committee_2\subseteq[m]$, it consists of a signed index. The functionality verifies that if a party provided an additional input, then it has a valid signature of its own index, and derives the committee $\committee_2$ from the indices with a valid signature. The formal description of the functionality can be found in~\cref{fig:fareconcompute}.

\begin{nfbox}{The Adaptive-Reconstruct-and-Compute functionality}{fig:fareconcompute}
\begin{center}
    \textbf{The functionality} $\fareconcompute$
\end{center}
\vspace{-0.3cm}
The $m$-party functionality $\fareconcomputefull$ is parametrized by a signature scheme $(\Gen,\Sign,\Verify)$, a $(t',n')$ ECSS scheme $(\Share,\Recon)$, and a vector of verification keys $(\vk[1],\ldots,\vk[m])$, and proceeds with parties $\PS_2=\sset{\Party_{m+1},\ldots,\Party_{2m}}$ as follows.
\begin{enumerate}
    \item
    Every party $\Party_{m+i}$ sends a pair of values $(x_{m+i},z_{m+i})$ as its input, where $x_{m+i}$ is the actual input value and either $z_{m+i}=\emptystr$ or $z_{m+i}=(\sigma_{m+i},\vs_{m+i})$.
    \item
    For every $\Party_{m+i}$ with $z_{m+i}\neq \emptystr$, check if $\Verify_{\vk[1],\ldots,\vk[m]}(m+i,\sigma_{m+i})=1$ (ignore invalid inputs).
    If exactly $n'$ indices are properly signed, denote as $\committee_2=\sset{i_{(2,1)}\ldots,i_{(2,n')}}$; otherwise, abort.
    \item
    For every $i=i_{(2,j)}\in\committee_2$, let $\vs_{m+i_{(2,j)}}=(s_1^j,\ldots,s_m^j)$ be the input provided by $\Party_{m+i}$. (If $\Party_{m+i}$ provided invalid input, set $\vs_{m+i_{(2,j)}}$ to be the default value, \eg the zero vector.)
    \item
    For every $i\in[m]$, reconstruct $x_i=\Recon(s_i^1,\ldots,s_i^{n'})$.
    \item
    Compute $y=f(x_1\ldots, x_m,x_{m+1},\ldots,x_{2m})$.
    \item
    Output $y$ to every $\Party_{m+i}\in\PS_2$.
\end{enumerate}
\end{nfbox}

\paragraph{The Adaptive-Output-Distribution functionality.}
The Adaptive-Output-Distribution functionality is defined in a similar way as the Output-Distribution functionality (\cref{fig:foutdist}) with the following difference. Instead of being parametrized by a subset $\committee_1$ that specifies the input providers, the functionality is parametrized by the verification keys $\vk[1],\ldots,\vk[m]$, every party that provides an input value must also provide a signature of its own index. The formal description of the functionality can be found in~\cref{fig:faoutdist}.

\begin{nfbox}{The Adaptive-Output-Distribution functionality}{fig:faoutdist}
\begin{center}
    \textbf{The functionality} $\faoutdist$
\end{center}
\vspace{-0.3cm}
The $m$-party functionality $\faoutdistfull$ is parametrized by a signature scheme $(\Gen,\Sign,\Verify)$ and a vector of verification keys $(\vk[1],\ldots,\vk[m])$, and proceeds with parties $\PS_1=\sset{\Party_1,\ldots,\Party_m}$ as follows.
\begin{enumerate}
    \item
    Every party $\Party_i$ gives either the empty input~$\emptystr$ of an input value of the form $(\sigma_i, y_i)$.
    \item
    For every party $\Party_i$ that provided a non-empty input, verify that $\Verify_{\vk[1],\ldots,\vk[m]}(i,\sigma_i)=1$ (ignore invalid inputs). Denote by $\committee_1$ the set of indices of parties that gave valid inputs.
    \item
    Denote $y=\maj\sset{y_i \mid i\in\committee_1}$ (choose arbitrarily if the majority is not unique).
    \item
    Output $y$ to every $\Party_i\in\PS_1$.
\end{enumerate}
\end{nfbox}

\begin{lemma}\label{lem:mpc_no_expander_adaptive}
Let $f=\sset{f_n}_{n\in2\N}$, where $f_n$ is an $n$-party functionality for $n=2m$, and let $\delta>0$.
Then, in the PKI-hybrid model with secure channels, where a trusted party additionally computes the $m$-party functionality-ensembles $(\faelectshare,\fareconcompute,\faoutdist)$ tolerating $\gamma \cdot m$ corruptions, there exists a protocol ensemble $\pi$ that securely computes $f$, with computational security, tolerating adaptive PPT $\beta n$=adversaries, for $\beta<\min(1/8-\delta,\gamma/2)$ with the following guarantees:
\begin{enumerate}
    \item
    The communication graph induced by $\pi$ is strongly not an expander.
    \item
    Denote by $f_1,f_2,f_3$ the functionalities $\faelectshare,\fareconcompute,\faoutdist$ (resp.).
    If protocols $\rho_1,\rho_2,\rho_3$ securely compute $f_1,f_2,f_3$ (resp.) with locality $\ell_\rho=\ell_\rho(m)$, then $\pi^{f_i\rightarrow \rho_i}$ (where every call to $f_i$ is replaced by an execution of $\rho_i$) has locality $\ell=2\cdot\ell_\rho + 1$.
\end{enumerate}
\end{lemma}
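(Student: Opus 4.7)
The plan is to construct the adaptively secure analogue $\protane_n$ of $\protne_n$ that invokes the adaptive hybrid functionalities $(\faelectshare, \fareconcompute, \faoutdist)$ instead of their static counterparts, and then to adapt the three propositions from the proof of \cref{lem:mpc_no_expander} (security, non-expansion, low locality) to the adaptive hidden-channels setting. Structurally, the parties in $\PS_1$ invoke $\faelectshare$, from which each party $\Party_{i_{(1,j)}} \in \committee_1$ privately learns only the single partner index $m + i_{(2,j)}$, the two signatures $\sigma_{1,j}, \sigma_{2,j}$, and a share vector $\vs_j$. Over the hidden point-to-point channel it forwards $(\sigma_{1,j}, \sigma_{2,j}, \vs_j)$ to $\Party_{m+i_{(2,j)}}$ only, who then feeds a signed index plus shares into $\fareconcompute$. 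The output $y$ travels back along the same single-partner bridge, each recipient submits $(y, \sigma_{1,j})$ to $\faoutdist$, and the majority output is disseminated inside $\PS_1$. The induced cut $(\PS_1, \comp{\PS_1})$ has at most $2n' = 2\log^2 n$ edges (each bridge is one honest-to-honest pair contributing at most two directed edges), so the non-expansion claim follows exactly as in \cref{prop:not_expanding}, and the locality bound $\ell = 2\ell_\rho + 1$ follows since every honest party participates in at most two of the subprotocols $\rho_1,\rho_2,\rho_3$ plus at most one bridge neighbor, yielding \cref{prop:low_locality}'s analogue.

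For security, I would follow the hybrid-games template of \cref{prop:security} with an adaptive simulator $\Sim$ that generates the PKI, emulates each of the three ideal functionalities, and responds to each adaptive corruption request by exposing a view consistent with what has been simulated so far. The new twist is that $\Sim$ must not reveal committee memberships for parties that have not yet been corrupted. This is done by lazily sampling committee membership as follows: $\Sim$ initially samples $\committee_1, \committee_2$ but never reveals them in full; when an adversarial corruption lands on $\Party_i \in [m]$, $\Sim$ discloses whether $i \in \committee_1$ (and if so the partner index and the signed values) but does not disclose any other member. Symmetrically for $\PS_2$. In the final hybrid, the shares inside $\faelectshare$ are replaced by shares of $0^\kappa$; by ECSS privacy this is statistically indistinguishable provided the set of corrupted committee members is below the ECSS threshold $t'$. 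Signature unforgeability handles the "unique $\committee_2$" transition exactly as in \cref{claim:hyb1}.

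The core probabilistic step, and the reason the threshold drops to $\beta < 1/8 - \delta$, is the claim that with overwhelming probability the final corruption set $\IS_{\text{final}}$ satisfies
\[
|\committee_1 \cap \IS_{\text{final}}| + |\{m+i : i \in \committee_2\} \cap \IS_{\text{final}}| \le t' = (1/2 - \delta) n'.
\]
Because the hidden-channels model hides all honest-to-honest bridge traffic, and because $\faelectshare$ reveals membership only through its private outputs to committee members, the adversary's adaptive corruption strategy is, conditioned on the view so far, statistically independent of the remaining uncorrupted members of $(\committee_1, \committee_2)$. A Chernoff/sampling argument along the lines of \cref{cor:elect}, applied separately to the two sides, then gives $|\committee_1 \cap \IS_{\text{final}}|, |\{m+i : i \in \committee_2\} \cap \IS_{\text{final}}| \le (2\beta + \delta/2) n'$ with overwhelming probability, whose sum is below $t'$ whenever $\beta < 1/8 - \delta$. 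The additional constraint $\beta < \gamma/2$ is what is required for each honest-majority subprotocol call inside $\PS_1$ or $\PS_2$ to tolerate its local fraction of corruptions.

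The main obstacle is exactly the interplay between adaptive corruption and committee secrecy: one must argue that even though the adversary may adaptively learn committee members one at a time (and see their simulated bridge state), the remaining committee still looks uniformly random from the adversary's viewpoint, so that the sampling bound applies to $\IS_{\text{final}}$ rather than to some a priori-fixed set. Hidden channels are what make this argument go through; without them, the adversary could detect bridge activity and concentrate corruptions on $\committee_1 \cup \committee_2$, breaking the ECSS threshold. The rest of the proof is routine adaptation of the hybrid-argument bookkeeping from \cref{claim:hyb1,claim:hyb2,claim:hyb3} to the adaptive setting.
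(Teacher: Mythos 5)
Your proposal matches the paper's proof in all essentials: the same single-partner bridge protocol $\protane_n$ in the $(\faelectshare,\fareconcompute,\faoutdist)$-hybrid model, the same simulation-plus-hybrid-games structure ending with ECSS shares of $0^\secParam$ and signature unforgeability, the same locality count $2\ell_\rho+1$ and sublinear cut across $(\PS_1,\PS_2)$, and the same key probabilistic step — that in the hidden-channels model committee membership of an honest party is revealed only by corrupting its matched partner, so the static sampling bound of \cref{cor:elect} still governs the corrupted fraction of $\committee_1\cup\sset{m+i\mid i\in\committee_2}$, which together with the factor-2 loss from partner revelation yields the $\beta<1/8-\delta$ threshold exactly as in \cref{claim:hyb1_adaptive}. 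Your conditional-uniformity phrasing should be read as applying to the members not yet revealed via a corrupted partner (each touched pair must contain a "blind" hit), but this is the same pair-counting argument the paper makes, so the approaches coincide.
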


\ifdefined\IsProofInAppendix
The proof of \cref{lem:mpc_no_expander_adaptive} can be found in \cref{sec:lem:mpc_no_expander_adaptive_proof}.
\else

\begin{proof}
For $m\in\N$ and $n=2m$, we construct the $n$-party protocol $\protane_n$ (see \cref{fig:protane}) in the $(\faelectshare,\fareconcompute,\faoutdist)$-hybrid model. As in the proof of \cref{{lem:mpc_no_expander}}, the parameters for the protocol are $n'=\log^2(n)$ and $t'=(1/2-\delta)\cdot n'$. We start by proving in \cref{prop:security_adaptive} that the protocol $\protane_n$ securely computes $f_n$. Next, in \cref{prop:not_expanding_adaptive} we prove that the communication graph of $\protane$ is strongly not an expander. Finally, in \cref{prop:low_locality_adaptive} we prove that by instantiating the functionalities $(\faelectshare,\fareconcompute,\faoutdist)$ using low-locality protocols, the resulting protocol has low locality.

\begin{proposition}\label{prop:security_adaptive}
For sufficiently large $n$, the protocol $\protane_n$ securely computes the function $f_n$, tolerating adaptive, PPT $\beta n$ corruptions, in the $(\faelectshare,\fareconcompute,\faoutdist)$-hybrid model.
\end{proposition}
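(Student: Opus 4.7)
The plan is to mirror the structure of the proof of \cref{prop:security}, namely construct a simulator $\Sim$ that emulates the hybrid functionalities and the honest parties towards an adaptive adversary $\Adv$, and then argue indistinguishability via essentially the same sequence of hybrid games (signature unforgeability $\Rightarrow$ replacing reconstructed inputs with the actual $x_i$'s provided by $\Adv$ to $\faelectshare$/$\fareconcompute$ $\Rightarrow$ replacing shares of real inputs with shares of $0^{\secParam}$). The bookkeeping for signatures (verification of a signed index rather than a signed subset) and for sample-then-reveal committees is routine, so the only genuinely new content is the adaptive simulation strategy and the analysis of corruption budgets in light of the hidden-channels assumption.

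First I would specify how $\Sim$ answers corruption requests. At the onset, $\Sim$ knows only inputs of initially corrupted parties. When $\Adv$ asks to corrupt $\Party_i \in \PS_1$, $\Sim$ simulates $\Party_i$'s view of $\faelectshare$: if $i \in \committee_1$, $\Sim$ reveals $(i_{(2,j)},\sigma_{1,j},\sigma_{2,j},\vs_j)$ for the already-sampled committee, plus a freshly generated share vector $\vs_j$ of $0^{\secParam}$ (for $\PS_1$ parties not yet touched), or a consistent real share if the party is in $\committee_1$ and the ``switch to zero shares'' hybrid has not yet been carried out for the transcript-reconstruction. When $\Adv$ corrupts $\Party_{m+i} \in \PS_2$, $\Sim$ similarly produces $\Party_{m+i}$'s input-and-transcript view of $\fareconcompute$, together with any message received on the single private bridge edge (if $i \in \committee_2$). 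Crucially, because channels are hidden, $\Sim$ need not reveal the identity of $\Party_i$'s bridge partner unless and until one of the two endpoints is corrupted; this is what lets the simulation proceed without exposing both committees at once.

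Next I would run through the hybrid games. Game $\HYB^1$ is the real execution. Game $\HYB^2$ replaces signature verification inside $\fareconcompute$ and $\faoutdist$ with equality to the indices that $\faelectshare$ actually signed; indistinguishability follows from signature unforgeability together with the fact that, by \cref{cor:elect} applied to the adaptive corruption set of size $\le \beta n < (1/4-\delta)m$, at least one honest party remains in $\committee_1$ and $\committee_2$ throughout, so some valid signed index is forwarded. Game $\HYB^3$ replaces the function evaluation by an external call on inputs $\{x'_i\}_{i \in \IS}$ (those $\Adv$ supplied to $\faelectshare$/$\fareconcompute$) together with the honest parties' inputs; this is sound as long as the total number of shares held by corrupted members of $\committee_1 \cup \{m+i : i \in \committee_2\}$ does not exceed $t'$, i.e., reconstruction cannot be biased. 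Game $\HYB^4$ replaces the real input-shares in $\faelectshare$ with shares of $0^{\secParam}$, which is statistically indistinguishable by privacy of the ECSS as long as the adaptive adversary corrupts at most $t'$ committee members in each committee.

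The main obstacle, and where the bound $\beta<1/8-\delta$ enters, is bounding the adaptive corruption footprint on $\committee_1$ and $\committee_2$. Because a single corruption of $\Party_i \in \committee_1$ reveals exactly one new index $i_{(2,j)} \in \committee_2$ (and vice versa), an adaptive adversary with budget $\beta n = 2\beta m$ can, in the worst case, saturate both committees together using a ``ping-pong'' strategy, so the effective corruption fraction on each committee is at most $2\beta$; demanding $2\beta < 1/4 - \delta$ gives $\beta < 1/8 - \delta$, matching the statement. I would formalize this via the Chernoff-style election bound of \cref{cor:elect}: conditioned on the initial static choice of $\committee_1,\committee_2$ and on the transcript of revealed bridge indices, the adaptive corruption distribution on the committees is dominated by an i.i.d.\ Bernoulli$(2\beta)$ random subset, hence $|\committee_1 \cap \IS|, |\{m+i : i\in\committee_2\} \cap \IS| < t' = (1/2-\delta)n'$ with overwhelming probability because $n' = \log^2 n = \omega(\log n)$. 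Combining this with the hidden-channels assumption (so that the adversary learns the bridge edges only through corruptions, not by observing traffic) yields the indistinguishability of $\HYB^1$ and $\HYB^4$ and thus the claimed simulation.
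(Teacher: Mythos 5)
Your proposal follows essentially the same route as the paper: the same simulator design (sampling $\committee_1,\committee_2$ up front, handing out zero-shares, answering corruption requests lazily because hidden channels reveal a bridge partner only upon corruption of an endpoint), the same four-hybrid argument via unforgeability, ECSS reconstruction/privacy, and the Chernoff-style election bound of \cref{cor:elect}, and the same budget analysis that drives $\beta < 1/8-\delta$. The only cosmetic deviations are the algebra ($2\beta < 1/4-\delta$ gives $\beta < 1/8-\delta/2$, not $1/8-\delta$; the paper instead uses $2\beta = 1/4-2\delta$ with slack $\delta$ in the Chernoff step) and the framing of the ``ping-pong'' strategy, which the paper instead dismisses by observing that corrupting a revealed partner cannot create a new pair with a compromised member; neither affects the conclusion.
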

\begin{proof}
Let $\Adv$ be an adversary attacking the execution of $\protane_n$ and let $\Env$ be an environment. We construct an ideal-process adversary $\Sim$, interacting with the environment $\Env$, the ideal functionality $f_n$, and with ideal (dummy) parties $\DParty_1, \ldots, \DParty_n$.
The simulator $\Sim$ constructs virtual parties $\Party_1, \ldots, \Party_n$, and runs the adversary $\Adv$.
Note that the protocol $\protane$ is deterministic and the only randomness arrives from the ideal functionalities. Therefore, upon a corruption request, the simulator is only required to provide the party's input, interface with the ideal functionalities, and possibly the output. Denote by $\IS$ the set of corrupted parties (note that this set is dynamic: initially it is set to $\emptyset$ and it grows whenever an honest party gets corrupted).

\begin{nfbox}{Non-expanding MPC in the $(\faelectshare,\fareconcompute,\faoutdist)$-hybrid model}{fig:protane}
\begin{center}
    \textbf{Protocol} $\protane_n$
\end{center}
\begin{itemize}
    \item\textbf{Hybrid Model:}
    The protocol is defined in $(\faelectshare,\fareconcompute,\faoutdist)$-hybrid model.
    \item\textbf{Common Input:}
    A $(t',n')$ ECSS scheme $(\Share,\Recon)$, a signature scheme $(\Gen,\Sign,\Verify)$, and a partition of the party-set $\PS=\sset{\Party_1,\ldots,\Party_n}$ into $\PS_1=\sset{\Party_1,\ldots,\Party_m}$ and $\PS_2=\PS\setminus \PS_1$.
    \item\textbf{PKI:}
    Every party $\Party_i$, for $i\in[n]$, has signature keys $(\sk[i],\vk[i])$; the signing key $\sk[i]$ is private, whereas the vector of verification keys $(\vk[1],\ldots,\vk[n])$ is public and known to all parties.
    \item\textbf{Private Input:}
    Every party $\Party_i$, for $i\in[n]$, has private input $x_i\in\zs$.
    \item\textbf{The Protocol:}
\end{itemize}
\begin{enumerate}
    \item\label{step:a_elect_share}
    The parties in $\PS_1$ invoke $\faelectsharefull$, where every $\Party_i\in\PS_1$ sends input $(x_i,\sk[i])$, and receives back either the empty output, or an output consisting of an index $i_{(2,j)}\in[m]$, a signature $\sigma_{1,j}$ of its own index (denoted $i_{1,j}$), a signature $\sigma_{2,j}$ of the index $m+i_{(2,j)}$, and a vector $\vs_j=(s_1^j,\ldots,s_m^j)$.
    \item\label{step:a_first_bridge}
    Denote $\committee_1=\sset{i_{1,1},\ldots,i_{1,n'}}$.
    Every $\Party_i$ with $i=i_{(1,j)}\in\committee_1$ sends $(\sigma_{1,j},\sigma_{2,j},\vs_j)$ to $\Party_{m+i_{(2,j)}}$.

    A party $\Party_{i_2}\in\PS_2$ that receives a message $(\sigma_{1,j},\sigma_{2,j},\vs_j)$ from $\Party_{i_1}\in\PS_1$ will discard the message in the following cases:
    \begin{enumerate}
        \item
        If $\Verify_{\vk[1],\ldots,\vk[m]}(i_1,\sigma_{1,j})=0$.
        \item
        If $\Verify_{\vk[1],\ldots,\vk[m]}(i_2,\sigma_{2,j})=0$.
    \end{enumerate}
    \item\label{step:a_recon_compute}
    Denote the set of parties that received a valid message in Step~\ref{step:a_first_bridge} by $\committee_2$.
    The parties in $\PS_2$ invoke $\fareconcomputefull$, where $\Party_{m+i}\in\PS_2$ sends input $(x_{m+i},z_{m+i})$ such that for $m+i\notin\committee_2$, set $z_{m+i}=\emptystr$, and for $m+i=m+i_{(2,j)}\in\committee_2$, set $z_{m+i}=(\sigma_{(2,j)},\vs_j)$. Every party in $\PS_2$ receives the output $y$.
    \item\label{step:a_second_bridge}
    For every $j\in[n']$, party $\Party_{m+i_{(2,j)}}$ sends $y$ to party $\Party_{i_{(1,j)}}$. In addition, every party in $\PS_2$ outputs $y$ and halts.
    \item\label{step:a_out_dist}
    The parties in $\PS_1$ invoke $\faoutdist^{(\vk[1],\ldots,\vk[m])}$, where party $\Party_i$, with $i\in\committee_1$, has input $(\sigma_{(1,j)},y)$, and party $\Party_i$, with $i\notin\committee_1$ has the empty input $\emptystr$. Every party in $\PS_1$ receives output $y$, outputs it, and halts.
\end{enumerate}
\end{nfbox}

\paragraph{Simulating communication with the environment:}
In order to simulate the communication with $\Env$, every input value that $\Sim$ receives from $\Env$ is written on $\Adv$'s input tape.
Likewise, every output value written by $\Adv$ on its output tape is copied to $\Sim$'s own output tape.

\paragraph{Simulating the PKI:}
The simulator $\Sim$ generates $(\sk[i],\vk[i])\gets\Gen(1^\secParam)$ for every $i\in[n]$.

\paragraph{Simulating the protocol:}

To simulate Step~\ref{step:a_elect_share}, the simulator emulates $\faelectsharefull$ towards $\Adv$ as follows. $\Sim$ receives from $\Adv$ input values $\sset{(x'_i,\sk[i]')}_{i\in\IS\cap[m]}$ (replace invalid inputs with default). Next, $\Sim$ samples uniformly at random two subsets $\committee_1=\sset{i_{(1,1)},\ldots,i_{(1,n')}}\subseteq[m]$ and $\committee_2=\sset{i_{(2,1)},\ldots,i_{(2,n')}}\subseteq[m]$, and sets $\sk[i]'=\sk[i]$ for every $i\in[m]\setminus\IS$. For every $j\in[n']$, sign $\sigma_{(1,j)}=\Sign_{\sk[1],\ldots,\sk[m]}(i_{(1,j)})$ and $\sigma_{(2,j)}=\Sign_{\sk[1],\ldots,\sk[m]}(m+i_{(2,j)})$. Finally, generate secret shares of zero for every $i\in[m]$ as $(s_i^1,\ldots, s_i^{n'})\gets\Share(0^\secParam)$, and hand $\Adv$ the output $(i_{(2,j)},\sigma_{i_{(1,j)}},\sigma_{i_{(2,j)}},\vs_j)$, with $\vs_j=(s_1^j\ldots,s_m^j)$, for every $i=i_{(1,j)}\in\committee_1\cap\IS$.

To simulate Step~\ref{step:a_first_bridge}, for every $j\in[n']$ such that $i_{(1,j)}\notin\IS$ and $m+i_{(2,j)}\in\IS$, the simulator $\Sim$ sends to $\Adv$ on behalf of the honest party $\Party_{i_{(1,j)}}$ the value $(\sigma_{i_{(1,j)}},\sigma_{i_{(2,j)}},\vs_j)$, intended to $\Party_{m+i_{(2,j)}}$. In addition, $\Sim$ receives such values from the adversary ($\Sim$ ignores any invalid messages from $\Adv$).

To simulate Step~\ref{step:a_recon_compute}, the simulator emulates $\fareconcompute^{\vk[1],\ldots,\vk[m]}$ towards $\Adv$ as follows. $\Sim$ receives from $\Adv$ input values $\sset{(x'_{m+i},z_{m+i})}_{m+i\in\IS\cap[m+1,2m]}$, where either $z_{m+i}=\emptystr$ or $z_{m+i}=(\sigma_{i_{(2,j)}},\vs_j)$ (replace invalid inputs with default). Next, $\Sim$ sends the values $\sset{x'_i}_{i\in\IS}$ to the trusted party (for $i\in\IS\cap[m]$ the values $x'_i$ were obtained in the simulation of Step~\ref{step:a_elect_share}, and for $i\in\IS\cap[m+1,2m]$ in the simulation of Step~\ref{step:a_recon_compute}) and receives back the output value $y$. Finally, $\Sim$ hands $y$ to $\Adv$ for every corrupted party $\Party_{m+i}$, with $m+i\in\IS\cap[m+1,2m]$.

To simulate Step~\ref{step:a_second_bridge}, the simulator sends $y$ on behalf of every honest party $\Party_{m+i_{(2,j)}}$ with $i_{(2,j)}\in\committee_2$ and $m+i_{(2,j)}\notin\IS$ to every corrupted party $\Party_{i_{(1,j)}}$ with $i_{(1,j)}\in\committee_1\cap\IS$, and receives such values from the adversary.

To simulate Step~\ref{step:a_out_dist}, the simulator emulates the functionality $\faoutdistfull$ by receiving either a value $(\sigma_i,y_i)$ or an empty string $\emptystr$ from $\Adv$ for every $i\in\IS\cap[m]$, and sending $y$ to every corrupted party $\Party_i$ with $i\in\IS\cap[m]$.

\paragraph{Simulating corruption requests by $\Adv$:}
Whenever the adversary $\Adv$ requests to corrupt an honest party $\Party_i$, the simulator $\Sim$ corrupts $\DParty_i$, learn its input $x_i$  and continues as follows to compute the internal state of $\Party_i$, based on the timing of the corruption request:
\begin{itemize}
    \item \textbf{In Step~\ref{step:a_elect_share}, before calling $\faelectshare$:}
    The simulator $\Sim$ sets the contents of $\Party_i$'s input tape to be $x_i$. The secret signing key is set to be $\sk[i]$ and the verification keys are set to be $\vk[1],\ldots,\vk[n]$.
    \item \textbf{In Step~\ref{step:a_elect_share}, after calling $\faelectshare$:}
    In addition to the above, if $\Party_i\in\PS_1$, the simulator $\Sim$ sets the input of $\Party_i$ to $\faelectshare$ to be $(x_i,\sk[i])$. If $i=i_{(1,j)}\in\committee_1$, set the output from $\faelectshare$ to be $(i_{(2,j)},\sigma_{i_{(1,j)}},\sigma_{i_{(2,j)}},\vs_j)$ as computed in the simulation.
    \item \textbf{In Step~\ref{step:a_first_bridge}:}
    In addition to the above, if $i=i_{(1,j)}\in\committee_1$, the simulator sets the outgoing message of $\Party_i$ to $\Party_{m+i_{(2,j)}}$ to be $(\sigma_{i_{(1,j)}},\sigma_{i_{(2,j)}},\vs_j)$. If $i=m+i_{(2,j)}$ with $i_{(2,j)}\in\committee_1$, the simulator set the incoming message of $\Party_i$ from $\Party_{i_{(1,j)}}$ as follows: if $\Party_{i_{(1,j)}}$ is honest set it to be $(\sigma_{i_{(1,j)}},\sigma_{i_{(2,j)}},\vs_j)$; otherwise, set it according to the values sent by $\Adv$.
    \item \textbf{In Step~\ref{step:a_recon_compute}, before calling $\fareconcompute$:}
    In addition to the above, if $\Party_i\in\PS_2$, the simulator $\Sim$ sets the input of $\Party_i$ to $\fareconcompute$ to be $(x_i,z_i)$, where if $i=m+i_{(2,j)}$ with $i_{(2,j)}\in\committee_2$, set $z_i=(\sigma_{i_{(2,j)}},\tilde{\vs}_j)$ where $\tilde{\vs}_j=(\tilde{s}^1_j,\ldots,\tilde{s}^{n'}_j)$, such that if party $\Party_{i_{(1,k)}}$ was corrupted during Step~\ref{step:a_first_bridge} then $\tilde{s}^k_j$ is set according to the values sent by $\Adv$, and if $\Party_{i_{(1,k)}}$ was honest then set $\tilde{s}^k_j=s^k_j$. Otherwise, set $z_i=\emptystr$.
    \item \textbf{In Step~\ref{step:a_recon_compute}, after calling $\fareconcompute$:}
    In addition to the above, if $\Party_i\in\PS_2$, the simulator $\Sim$ sets the output of $\Party_i$ from $\fareconcompute$ to be $y$.
    \item \textbf{In Step~\ref{step:a_second_bridge}:}
    In addition to the above, if $i=m+i_{(2,j)}$ with $i_{(2,j)}\in\committee_2$, the simulator sets the outgoing message of $\Party_i$ to $\Party_{i_{(1,j)}}$ to be $y$. If $i=i_{(1,j)}\in\committee_1$, the simulator sets the incoming message of $\Party_i$ from $\Party_{m+i_{(2,j)}}$ as follows: if $\Party_{m+i_{(2,j)}}$ is honest set it to be $y$; otherwise, set it according to the values sent by $\Adv$.
    \item \textbf{In Step~\ref{step:a_out_dist}, before calling $\faoutdist$:}
    In addition to the above, if $i=i_{(1,j)}\in\committee_1$, the simulator $\Sim$ sets the input of $\Party_i$ to $\faoutdist$ to be $(\sigma_{i_{(1,j)}},y)$.
    \item \textbf{In Step~\ref{step:a_out_dist}, after calling $\faoutdist$:}
    In addition to the above, if $i=[m]$, the simulator $\Sim$ sets the output of $\Party_i$ from $\faoutdist$ to be $y$.
\end{itemize}
Next, $\Sim$ sends the internal state of $\Party_i$ to $\Adv$.

\paragraph{Simulating post-execution corruption requests by $\Env$:}
Whenever the environment $\Env$ requests to corrupt an honest party $\Party_i$ in the post-execution corruption phase, the simulator $\Sim$ proceeds to compute the internal state of $\Party_i$ as in a corruption request from $\Adv$ \emph{in Step~\ref{step:a_out_dist}, after calling $\faoutdist$}, and sends it to $\Env$.

\paragraph{Proving security.}
We prove computational indistinguishability between the execution of the protocol $\protane_n$ running with adversary $\Adv$ and the ideal computation of $f_n$ running with $\Sim$ via a series of hybrids experiments. The output of each experiment is the output of the honest parties and of the adversary.

\paragraph{The game $\HYB^1_{\pi, \Adv, \Env}(\vx,\aux,\secParam)$.}
This game is defined to be the execution of the protocol $\protane_n$ in the $(\faelectshare,\fareconcompute,\faoutdist)$-hybrid model on inputs $\vx\in(\zs)^n$ and security parameter $\secParam$ with adversary $\Adv$ and environment $\Env$ running on auxiliary information $\aux$.

\paragraph{The game $\HYB^2_{\pi, \Adv, \Env}(\vx,\aux,\secParam)$.}
In this game, we modify $\HYB^1_{\pi, \Adv, \Env}(\vx,\aux,\secParam)$ as follows.
Instead of verifying the signatures of the index of every party that provides an additional input, the functionality $\fareconcompute$ takes the shares $\vs_j$ from parties $\Party_{m+i_{(2,j)}}$, with the $i_{(2,j)}\in\committee_2$ that was sampled by $\faelectshare$.

\begin{claim}\label{claim:hyb1_adaptive}
$\sset{\HYB^1_{\pi, \Adv, \Env}(\vx,\aux,\secParam)}_{\vx,\aux,\secParam} \compindist \sset{\HYB^2_{\pi, \Adv, \Env}(\vx,\aux,\secParam)}_{\vx,\aux,\secParam}$.
\end{claim}
\begin{proof}
The claim follows from the following observations. First, For a fixed $j\in[n']$, if both parties $\Party_{i_{(1,j)}}$ and $\Party_{m+i_{(2,j)}}$ are honest, then $\Party_{m+i_{(2,j)}}$ will provide $\fareconcompute$ with the correct signature of its index.

Second, with overwhelming probability, for at least $\ceil{n'/2}$ of the $j$'s it holds that both $\Party_{i_{(1,j)}}$ and $\Party_{m+i_{(2,j)}}$ are honest. This follows from the strong honest-majority assumption. In more detail, since the communication between honest parties is hidden from the adversary, he can identify that an honest party $\Party_i$ is in the committee $\committee_1$, \ie $i=i_{(1,j)}$, only if the corresponding party $\Party_{m+i_{(2,j)}}$ in $\committee_2$ is corrupted and receives a message from $\Party_{i_{(1,j)}}$ (and vice versa). Assume towards a contradiction that for $\ceil{n'/2}$ of the $j$'s at least one of $\Party_{i_{(1,j)}}$ or $\Party_{m+i_{(2,j)}}$ is corrupted. This means that the fraction of corrupted parties in one of the committees $\committee_1$ and $\committee_2$ is $\beta'\geq 1/4$. However, since by assumption there are at most $(1/8 - \delta)\cdot n = (1/4 - 2\delta)\cdot m$ corrupted parties at any point during the protocol's execution (and after its completion), it holds that in $\PS_1$ and in $\PS_2$ the fraction of corrupted parties is at most $(1/4 - 2\delta)$. Now, since $n'=\omega(\log{n})$ it follows from \cref{cor:elect} with overwhelming probability that if all the corruption took place at the onset of the protocol, then the fraction of corrupted parties in the committees is at most $(1/4 - \delta)$ (by setting $\epsilon=\delta$).

It is now remains to show that other than identifying ``matching parties'' in the committees (\ie the pair of parties $\Party_{i_{(1,j)}}$ and $\Party_{m+i_{(2,j)}}$) in Steps~\ref{step:a_first_bridge} and ~\ref{step:a_second_bridge}, the adversary does not gain any advantage in increasing the fraction of corrupted parties in the committees by dynamically corrupting parties. This follows since the communication is hidden from the adversary, and its view in the protocol (except for Steps~\ref{step:a_first_bridge} and~\ref{step:a_second_bridge}) is independent of the committees. Therefore, we derive a contradiction.

Finally, by the security of the signature scheme, the functionality $\fareconcompute$ will not receive input with signed index from parties outside of $\committee_2$.
\QED
\end{proof}

\paragraph{The game $\HYB^3_{\pi, \Adv, \Env}(\vx,\aux,\secParam)$.}
In this game, we modify $\HYB^2_{\pi, \Adv, \Env}(\vx,\aux,\secParam)$ as follows.
Instead of computing the function $f$ using the inputs of $\PS_2$ as provided to $\fareconcompute$, and the input values of $\PS_1$ as reconstructed from the shares provided by the parties in $\committee_2$, the computation of $f$ is performed as follows.
Let $\IS$ be the set of corrupted parties in Step~\ref{step:a_recon_compute} when calling $\fareconcompute$.
The input values $\sset{x'_i}_{i\in\IS\cap[m]}$ provided by the adversary to $\faelectshare$ and the input values $\sset{x'_i}_{i\in\IS\cap[m+1,2m]}$ provided by the adversary to $\fareconcompute$ are sent to an external party that computes $y=f(x'_1,\ldots,x'_n)$, where $x'_i=x_i$ for every $i\in[n]\setminus\IS$.

\begin{claim}\label{claim:hyb2_adaptive}
$\sset{\HYB^2_{\pi, \Adv, \Env}(\vx,\aux,\secParam)}_{\vx,\aux,\secParam} \statclose \sset{\HYB^3_{\pi, \Adv, \Env}(\vx,\aux,\secParam)}_{\vx,\aux,\secParam}$.
\end{claim}
\begin{proof}
The claim will follow as long as $\ssize{(\committee_1\cup\sset{m+i\mid i\in\committee_2})\cap\IS}\leq t'$.
This follows from the proof of \cref{claim:hyb1_adaptive}.
\QED
\end{proof}

\paragraph{The game $\HYB^4_{\pi, \Adv, \Env}(\vx,\aux,\secParam)$.}
In this game, we modify $\HYB^3_{\pi, \Adv, \Env}(\vx,\aux,\secParam)$ as follows.
instead of computing shares of the input values $\sset{x_i}_{i\in[m]}$, the functionality $\faelectshare$ computes shares of $0^\secParam$.

\begin{claim}\label{claim:hyb3_adaptive}
$\sset{\HYB^4_{\pi, \Adv, \Env}(\vx,\aux,\secParam)}_{\vx,\aux,\secParam} \statclose \sset{\HYB^3_{\pi, \Adv, \Env}(\vx,\aux,\secParam)}_{\vx,\aux,\secParam}$.
\end{claim}
\begin{proof}
The claim follows from the privacy of the ECSS scheme and since $\ssize{\committee_1\cap\IS}\leq t'$ and $\ssize{\sset{m+i\mid i\in\committee_2}\cap\IS}\leq t'$ at any point during the protocol, except for a negligible probability.
\QED
\end{proof}

The proof now follows since $\HYB^4_{\pi, \IS, \Adv(\aux)}(\vx,\secParam)$ exactly describes the simulation done by $\Sim$, and in particular, does not depend on the input values of honest parties. This concludes the proof of \cref{prop:security_adaptive}.
\QED
\end{proof}

\begin{proposition}\label{prop:not_expanding_adaptive}
The communication graph of $\protne$ is strongly not an expander.
\end{proposition}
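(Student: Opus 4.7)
The plan is to exhibit an explicit sublinear cut in the full communication graph $G_\full$ of $\protne_n$ and then invoke \cref{def:protocol_non_expander}. The natural candidate is the cut $(\PS_1,\PS_2)$ already used in the protocol itself, where $\PS_1=\{\Party_1,\ldots,\Party_m\}$ and $\PS_2=\{\Party_{m+1},\ldots,\Party_{2m}\}$, so $|\PS_1|=n/2$.

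First I would observe that in the $(\felectshare,\freconcompute,\foutdist)$-hybrid model, the only steps of $\protne_n$ which generate edges across the cut $(\PS_1,\PS_2)$ are the two ``bridge'' steps: \stepref{step:first_bridge}, in which parties of $\committee_1 \subseteq \PS_1$ send signed committee information and a share vector to parties of $\committee_2 \subseteq \PS_2$, and \stepref{step:second_bridge}, in which parties of $\committee_2$ return $y$ to their partners in $\committee_1$. \stepref{step:elect_share}, \stepref{step:recon_compute}, and \stepref{step:out_dist} are atomic hybrid calls whose participant sets lie entirely inside $\PS_1$ or entirely inside $\PS_2$, so they contribute no cross-cut edges.

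Next I would bound the crossing edges. With overwhelming probability $|\committee_1|=|\committee_2|=n'=\log^2(n)$, and in \stepref{step:first_bridge} every party of $\committee_1$ sends the committee message to every party of $\committee_2$, giving at most $n'^2 = \log^4(n)$ edges, and ships one share vector to a single party in $\committee_2$, adding at most $n'$ edges (already counted in the undirected graph). \stepref{step:second_bridge} contributes at most $n'$ further edges from $\committee_2$ to $\committee_1$. Crucially, $G_\full$ as defined in \cref{def:comm_graph} only counts edges on which an honest party sent a message or \emph{processed} a received one, so corrupted flooding across the cut is filtered out by the signature and membership checks embedded in \stepref{step:first_bridge}; thus no adversary can inflate the count. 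Hence $|\edges_{G_\full}(\PS_1,\PS_2)| \leq \log^4(n) + 2\log^2(n)$ with overwhelming probability.

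Finally, setting $f(n)=\bigl(2\log^4(n)+4\log^2(n)\bigr)/n$, one has $f(n)\in o(1)$ and for any input vector, auxiliary input, and any static PPT $\beta n$-adversary, with overwhelming probability
\[
h(G_\full(\protne_n,\Adv,\IS,\secParam,\vx,\aux)) \ \leq\ \frac{|\edges_{G_\full}(\PS_1,\PS_2)|}{|\PS_1|}\ \leq\ f(n),
\]
which is exactly the condition in \cref{def:protocol_edge_expansion} witnessing that $f(n)$ is an upper bound on $h_{\protne,D_{\protne},\CS}(n)$, and hence by \cref{def:protocol_non_expander} that the communication graph is strongly not an expander. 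I do not expect any real obstacle: the cut is explicit and the counts are combinatorial. The only point deserving care is ensuring that malicious parties cannot manufacture extra honestly-processed cross-cut edges, which is exactly what the filtering rules in \stepref{step:first_bridge} (discarding messages from parties outside $\committee_1$ or with invalid signatures) guarantee.
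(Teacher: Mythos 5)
Your proposal is correct and is essentially the paper's own argument: the paper proves this proposition by deferring to \cref{prop:not_expanding}, whose proof exhibits exactly the cut $(\PS_1,\PS_2)$ with $\ssize{\PS_1}=n/2$ and at most $\log^2(n)\cdot\log^2(n)$ crossing edges between $\committee_1$ and $\committee_2$, and sets $f(n)=2\log^4(n)/n\in o(1)$ as the upper bound in \cref{def:protocol_edge_expansion}. The only caveat is that in context this proposition concerns the adaptive protocol $\protane$ facing \emph{adaptive} $\beta n$-adversaries (the ``$\protne$'' in the statement is the paper's shorthand/typo), where the bridge of \stepref{step:a_first_bridge} is a one-to-one matching rather than a complete bipartite graph, so your count only improves (to $O(\log^2 n)$ crossing edges), and your signature-filtering observation carries over verbatim to rule out adversarially manufactured processed cross-cut edges.
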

\begin{proof}
The proof follows in a similar way as the proof of \cref{prop:not_expanding}.
\QED
\end{proof}

\begin{proposition}\label{prop:low_locality_adaptive}
Let $\rho_1,\rho_2,\rho_3$, and $\pi^{f_i\rightarrow \rho_i}$ be the protocols defined in \cref{lem:mpc_no_expander_adaptive}, and let $\ell_\rho=\ell_\rho(m)$ be the upper bound of the locality of $\rho_1,\rho_2,\rho_3$. Then $\pi^{f_i\rightarrow \rho_i}$ has locality $\ell=2\cdot\ell_\rho + 1$.
\end{proposition}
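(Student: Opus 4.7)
The plan is to carry out essentially the same counting argument as in the proof of \cref{prop:low_locality}, but to exploit the crucial structural difference between the adaptive protocol $\protane_n$ and the static protocol $\protne_n$: in $\protane_n$, the bridge between the two committees is a \emph{matching} rather than a complete bipartite graph. Specifically, in Step~\ref{step:a_first_bridge} each $\Party_{i_{(1,j)}} \in \committee_1$ sends its message only to the single party $\Party_{m+i_{(2,j)}} \in \committee_2$ determined by its output from $\faelectshare$, and symmetrically in Step~\ref{step:a_second_bridge} each $\Party_{m+i_{(2,j)}}$ sends only to $\Party_{i_{(1,j)}}$. So the cross-side contribution to any party's locality is at most $1$, rather than $\log^2(n)$ as in the static protocol.

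With this in mind, I would fix an arbitrary honest party $\Party_i$ and bound its locality by casework on whether $\Party_i \in \PS_1$ or $\Party_i \in \PS_2$. If $\Party_i \in \PS_1$, it participates in the execution of $\rho_1$ (Step~\ref{step:a_elect_share}) and $\rho_3$ (Step~\ref{step:a_out_dist}), each contributing at most $\ell_\rho$ to its locality by assumption on the locality of these subprotocols. If additionally $i = i_{(1,j)} \in \committee_1$, it sends to (and receives from) exactly one party in $\PS_2$, namely $\Party_{m+i_{(2,j)}}$, contributing $1$ to the locality. Summing yields at most $2\ell_\rho + 1$. If $\Party_i \in \PS_2$, it participates only in $\rho_2$ (Step~\ref{step:a_recon_compute}), contributing $\ell_\rho$, and if $i-m = i_{(2,j)} \in \committee_2$, communicates with exactly one party in $\PS_1$, for a total of $\ell_\rho + 1 \leq 2\ell_\rho + 1$.

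Taking the maximum over all honest parties gives the claimed upper bound $\ell = 2\ell_\rho + 1$ on the locality of $\pi^{f_i \rightarrow \rho_i}$. I do not expect any real obstacles here; the only subtlety is to note that messages filtered out in Step~\ref{step:a_first_bridge} (\eg those failing signature verification) do not contribute to the incoming edges of the communication graph $G_\full$ by \cref{def:comm_graph}, so adversarial flooding from corrupted parties in $\PS_1$ to honest parties in $\PS_2$ cannot blow up the locality bound. The formal statement then follows exactly as in the static case. \QED
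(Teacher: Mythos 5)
Your proposal is correct and follows essentially the same counting argument as the paper: parties in $\PS_1$ incur at most $\ell_\rho$ each from $\rho_1$ and $\rho_3$ plus one bridge neighbor, while parties in $\PS_2$ incur $\ell_\rho$ from $\rho_2$ plus one, giving the bound $2\ell_\rho+1$. The extra remark about filtered adversarial messages not counting in $G_\full$ is a fine (if implicit in the paper) clarification, but the substance of the argument is the same.
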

\begin{proof}
Every party in $\PS_1$ communicates with $\ell_\rho$ parties when executing $\rho_1$, and with at most another $\ell_\rho$ parties when executing $\rho_3$. In addition, every party in $\committee_1$ communicates with exactly one party in $\committee_2$. Similarly, every party in $\PS_2$ communicates with $\ell_\rho$ parties when executing $\rho_2$, and parties in $\committee_2$ communicates with exactly one party in $\committee_1$. It follows that maximal number of parties that a party communicates with during the protocol is $2\cdot\ell_\rho+1$.
\QED
\end{proof}

This concludes the proof of \cref{lem:mpc_no_expander_adaptive}.
\QED
\end{proof}

\fi

\section{Expansion is Necessary for Correct Computation}\label{sec:LB_Expander}

In this section, we show that in certain natural settings there exist functionalities such that the final communication graph of any MPC protocol that securely computes them \emph{must} be an expander. In fact, we prove a stronger statement, that removing a sublinear number of edges from such graphs will not disconnect them.
We consider the plain model, in which parties do not have any trusted setup assumptions,\footnote{The lower bound immediately extends to a setting where the parties have access to a common reference string; we consider the plain setting for simplicity.} a PPT adaptive adversary, and focus on \textsf{parallel multi-valued broadcast} (also known as interactive consistency~\cite{PSL80}), where every party has an input value, and all honest parties agree on a common output vector, such that if $\Party_i$ is honest then the \ith coordinate equals $\Party_i$'s input.
In particular, our proof does not rely on any privacy guarantees of the protocol, merely its correctness.

For simplicity, and without loss of generality, we assume the security parameter is the number of parties $n$.

\begin{definition}[parallel broadcast]
A protocol ensemble $\pi=\sset{\pi_n}_{n\in\N}$ is a \textsf{$t(n)$-resilient, parallel broadcast protocol with respect to input space $\sset{\zn}_{n\in\N}$}, if there exists a negligible function $\mu(n)$, such that for every $n\in\N$, every party $\Party_i$ in $\pi_n$ has input $x_i\in \zn$ and outputs a vector of $n$ values $\vy_i=(y^i_1,\ldots, y^i_n)$ such that the following is satisfied, except for probability $\mu(n)$. Facing any adaptive, malicious PPT adversary that dynamically corrupts and controls a subset of parties $\sset{\Party_j}_{j\in\IS}$, with $\IS\subseteq[n]$ of size $\ssize{\IS}\leq t(n)$, it holds that:
\begin{itemize}
    \item
    \textbf{Agreement.} There exists a vector $\vy=(y_1,\ldots, y_n)$ such that for every party $\Party_i$ that is honest at the conclusion of the protocol it holds that $\vy_i=\vy$.
    \item
    \textbf{Validity.} For every party $\Party_i$ that is honest at the conclusion of the protocol it holds that the \ith coordinate of the common output equals his input value, \ie $y_i=x_i$.
\end{itemize}
\end{definition}

Recall that a connected graph is \textsf{$k$-edge-connected} if it remains connected whenever fewer than $k$ edges are removed.
We are now ready to state the main result of this section.
We note that as opposed to \cref{sec:ne_mpc_adaptive}, where we considered adaptive corruptions in the hidden-channels model, this section considers the \emph{parallel secure message transmission (SMT)} model, formally defined in \cref{sec:lb_comm}, where the adversary is aware of communication between honest parties, but not of the message content.

\begin{theorem}[restating \cref{thm:intro:LB}]\label{thm:lower_bound}
Let $\beta>0$ be a fixed constant, let $t(n)=\beta\cdot n$, and let $\pi=\sset{\pi_n}_{n\in\N}$ be a $t(n)$-resilient, parallel broadcast protocol with respect to input space $\sset{\zn}_{n\in\N}$, in the parallel SMT hybrid model (in the computational setting, tolerating an adaptive, malicious PPT adversary).
Then, the communication graph of $\pi$ must be $\alpha(n)$-edge-connected, for every $\alpha(n)\in o(n)$.
\end{theorem}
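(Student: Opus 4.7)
The plan is to assume for contradiction that $\pi$ has a cut of size $\alpha(n) \in o(n)$ with overwhelming probability on random inputs (by \cref{lem:sublinear_cut} applied in contrapositive form) and construct an explicit adaptive PPT adversary that violates either agreement or validity. Following the outline in \cref{sec:intro:technique}, I would define two indistinguishable adversarial strategies, $\AdvneHonest$ and $\AdvneCorrupt$, each parametrized by a uniformly random index $\is \in [n]$. Both strategies first run a ``\phaseI'' in which the adversary internally simulates, in parallel with the real execution, a ``ghost'' copy of $\Party_\is$ on a fresh random input and ghost copies of all other parties on fresh random inputs; whenever any party in the real execution would receive a message from $\Party_\is$ (respectively, whenever $\Party_\is$ would receive a message from any other party), the adversary adaptively corrupts the intended recipient (resp.\ the sender) and substitutes the ghost message, so that no real information about $x_\is$ actually crosses to the rest of the graph. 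In $\AdvneHonest$, the real $\Party_\is$ remains honest; in $\AdvneCorrupt$, $\Party_\is$ itself is corrupted at the outset and plays the role of the ghost. The two attacks are designed so that the joint view of the ``far'' honest parties is identically distributed in the two experiments.

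\phaseI continues until the degree of $\Party_\is$ (or its ghost) in the induced graph crosses a carefully chosen threshold $\tau n$ for small $\tau < \beta$; a counting argument over random choice of $\is$ shows that with noticeable probability this threshold is reached only \emph{after} every party in the graph has reached linear degree, at which point an elementary graph-theoretic argument shows that any graph on $n$ vertices in which all parties have linear degree can have only a constant number of distinct sublinear cuts. In \phaseII, the adversary then attacks all these (constantly many) candidate cuts simultaneously by corrupting each ``bridge'' party on the $\Party_\is$-free side of a cut as soon as it would receive a cross-cut message, aborting the attack on any one cut once the remaining corruption budget for that cut is exhausted. Since $\pi$ is assumed to leave a sublinear cut in the final communication graph, a pigeonhole/union-bound argument shows that with noticeable probability at least one candidate cut $(S,\comp{S})$ is successfully blocked throughout \phaseII, with $\is \in S$.

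Having produced this blocked cut under $\AdvneCorrupt$, I would then argue the contradiction via the four-step chain sketched in the introduction: (i) in $\AdvneCorrupt$ there exist honest parties on \emph{both} sides of $(S,\comp{S})$ (using that both sides are linear, that cross-cut neighbors and \phaseII corruptions together comprise a sublinear set, plus a careful accounting of the ``ghost-side'' corruptions which are by construction disjoint from the real neighbor set of $\Party_\is$); (ii) by agreement, all honest parties in $\AdvneCorrupt$ output the same vector with $\is$-coordinate $y_\is$; (iii) the joint view of honest parties on the $\comp{S}$-side is identically distributed under $\AdvneHonest$ and $\AdvneCorrupt$ (the two experiments agree on every message these parties see, including the pattern and timing of other \phaseII corruptions); (iv) under $\AdvneHonest$ the party $\Party_\is$ is honest and so validity forces $y_\is = x_\is$, hence the same holds in $\AdvneCorrupt$ for a party $\Party_\js \in \comp{S}$.

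The main obstacle, and where I would spend most of the effort, is step (iv)--(iii): establishing that $\Party_\js$'s view in $\AdvneCorrupt$ is essentially independent of $x_\is$, so that it cannot correctly output $x_\is$ from the input space $\zn$. The subtlety is that secondary signals---the identity of the final cut, the pattern of \phaseII corruptions on other candidate cuts crossed by $(S,\comp{S})$, and even the event of reaching \phaseII at all---are all potentially correlated with $x_\is$ (see \cref{fig:manycuts}). The plan is to construct a simulator that, given only \phaseI randomness (which is input-independent by design of the ghost construction) together with $O(\log n)$ bits of advice encoding the identity of the final cut among the constantly many candidates, reproduces $\Party_\js$'s view in $\AdvneCorrupt$ exactly. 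Since $x_\is$ is uniformly distributed over $\zn$ and $\Party_\js$'s view contains at most $O(\log n)$ bits of information about it, the probability that $\Party_\js$ outputs $x_\is$ is at most $2^{-n+O(\log n)}$, contradicting validity. Combining with the noticeable probability that \phaseI and \phaseII succeed yields the desired contradiction, completing the proof.
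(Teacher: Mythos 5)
Your proposal follows essentially the same approach as the paper's proof: two dual adversaries $\AdvneHonest$ and $\AdvneCorrupt$ with a ghost-simulation Phase~I until a linear degree threshold, the graph-theoretic partition into constantly many islands (the paper's \cref{lem:number_of_cuts}), simultaneous blocking of all candidate cuts, and a simulatability/entropy argument that $\Party_\js$'s view reveals only $O(\log n)$ bits about $X_\Is$, combined with the validity-plus-view-equality chain that forces $\Party_\js$ to output $X_\Is$ nonetheless.

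One small point of confusion worth flagging in step~(i): under $\AdvneCorrupt$, \emph{only} $\Party_\is$ is corrupted during Phase~I (the blue/ghost execution is purely internal to the adversary and costs no corruption budget), so the existence of honest parties on both sides of the final cut follows immediately from the linear island sizes and the sublinear Phase~II/III corruptions --- no ``ghost-side'' accounting is needed there. That accounting matters only when arguing view-equality between the two adversaries (your step~(iii)), where one shows both corruption sets are below $n/3$ so an honest party outside $C_n^\honest\cup C_n^\corrupt$ survives. You also omit the paper's event $\E_2$ (the red simulated $\Party_\is$ must reach the degree threshold no later than the blue one) and the resulting three-phase structure tracking both the virtual and the real degree of $\Party_\is$; without these the budget allocation in $\AdvneHonest$ can overrun and the round-by-round analysis does not close, but these are fixable technicalities rather than a gap in the overall strategy.
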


From \cref{thm:lower_bound} and \cref{lem:sublinear_cut} (stating that if the communication graph of $\pi$ is strongly not an expander then there must exist a sublinear cut in the graph) we get the following corollary.
\begin{corollary}\label{cor:lower_bound}
Consider the setting of \cref{thm:lower_bound}.
If the communication graph of $\pi$ is strongly not an expander (as per \cref{def:protocol_non_expander}), then $\pi$ is not a $t(n)$-resilient parallel broadcast protocol.
\end{corollary}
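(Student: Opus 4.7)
I will prove the contrapositive: if for some $\alpha(n) \in o(n)$ the communication graph of $\pi_n$ admits an $\alpha(n)$-cut with non-negligible probability, then $\pi$ violates either validity or agreement on uniformly random inputs in $\zn$. The argument hinges on a pair of coupled adaptive attacks $\AdvneHonest$ and $\AdvneCorrupt$, each parameterized by a uniformly random index $\is \in [n]$. In $\AdvneHonest$, the party $\Party_\is$ is left honest; the adversary maintains two internal simulations (one playing the role of $\Party_\is$ on a freshly sampled random input toward the rest of the parties, and one playing all other parties on fresh random inputs toward $\Party_\is$) and, during \phaseI, adaptively corrupts each fresh neighbor of the real $\Party_\is$ as it appears, discarding every message across the resulting cordon. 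The dual attack $\AdvneCorrupt$ instead corrupts $\Party_\is$ at the outset and runs the same virtual-world simulation and discarding policy against the simulated $\Party_\is$; by construction the two experiments induce identical distributions of views for every party that does not directly interact with the neighborhood of the real $\Party_\is$ until the transition to \phaseII.

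\textbf{Transition and \phaseII.} I fix a \phaseI budget $\gamma n$ with $\gamma<\beta$ and declare the transition at the first round at which corrupting one more neighbor of $\Party_\is$ would exceed it. Assuming a sublinear cut survives at the end of the protocol, I show that \phaseI reaches this transition with non-negligible probability, and that at this moment every party must have linear degree in the current $G_\full$. A structural counting argument---two sublinear cuts in an $n$-vertex graph of all-linear degree must be almost nested---then bounds the family of remaining candidate sublinear cuts to constant size $k = O(1)$. In \phaseII the adversary watches these cuts $(S_1,\comp{S}_1), \ldots, (S_k,\comp{S}_k)$ simultaneously and, whenever a message would cross $S_r$, corrupts the receiving endpoint and discards the message, abandoning the cut $(S_r,\comp{S}_r)$ as soon as its crossing traffic reaches the sublinear threshold $\alpha(n)$. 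The total \phaseII corruption is thus $k\cdot\alpha(n)=o(n)$, so the overall corruption count remains well below $t(n)=\beta n$.

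\textbf{Deriving the contradiction.} With non-negligible probability some candidate cut $(S_\star,\comp{S}_\star)$ with $\is\in S_\star$ survives to the end, all its \phaseII crossings have been blocked, and some uncorrupted $\Party_\js\in\comp{S}_\star$ exists that was neither a real neighbor of $\Party_\is$ nor corrupted on any of the candidate cuts (this existence follows from a union bound once one shows, via the dual-attack trick in $\AdvneCorrupt$, that honest parties do remain on the far side; note that in $\AdvneCorrupt$ only the virtual neighbors of $\Party_\is$ are corrupted in \phaseI, and these are an independent linear-size set). Agreement forces $\Party_\js$ to output the common $\is$-coordinate $y_\is$. The heart of the proof is then a coupling argument showing that the view of $\Party_\js$ under $\AdvneCorrupt$ is statistically close to its view under $\AdvneHonest$: every message $\Party_\js$ processes that could depend on $x_\is$ either originates from the adversary's simulated $\Party_\is$ (identically distributed in both worlds) or is blocked at one of the cordons. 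In $\AdvneHonest$, the party $\Party_\is$ is genuinely honest with real input $x_\is$, so validity forces $y_\is=x_\is$; the coupling then forces the same equality in $\AdvneCorrupt$, where however $x_\is$ is essentially independent of $\Party_\js$'s view, giving the contradiction.

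\textbf{Main obstacle.} The subtle step, flagged in the introduction via \cref{fig:manycuts}, is controlling the side-channel leakage of $x_\is$ to $\Party_\js$ in $\AdvneCorrupt$: the pattern and timing of \phaseII corruptions on \emph{other} candidate cuts, and even the identity of the surviving cut itself, are observable on the $\comp{S}_\star$ side and could in principle depend on $x_\is$. I plan to show that, conditioned on the \phaseI transcript (which is information-theoretically independent of $x_\is$ because \phaseI uses only simulated random inputs), all subsequent observations visible to $\Party_\js$ are determined up to the choice of the surviving cut from the constant-size family, contributing only $O(1)$ bits of entropy about $x_\is$. Since $x_\is$ is uniform in $\zn$ and independent of the \phaseI data, no predictor based on $\Party_\js$'s view can match $x_\is$ with probability above $2^{-n+O(1)}$, contradicting the conclusion that $\Party_\js$ outputs exactly $x_\is$. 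The bulk of the technical work is in defining the precise filtration on which $\AdvneHonest$ and $\AdvneCorrupt$ are coupled and in carefully accounting for these indirect channels.
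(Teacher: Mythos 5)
Your proposal essentially re-derives the paper's \cref{thm:lower_bound} rather than doing what the paper's own proof of this corollary does, which is a one-step combination: \cref{lem:sublinear_cut} converts ``strongly not an expander'' into the existence of an $\alpha(n)$-cut (for some $\alpha(n)\in o(n)$, with overwhelming probability, for infinitely many $n$), and \cref{thm:lower_bound} says no $t(n)$-resilient parallel broadcast protocol can have such a cut. Re-proving the theorem is a legitimate route, and your sketch does track the paper's attack: dual adversaries $\AdvneHonest$/$\AdvneCorrupt$ with simulated executions, a phase structure that buys time, a constant-size family of sublinear cuts, per-cut blocking thresholds in \phaseII, existence of an honest party across the surviving cut via the dual attack, the validity-plus-agreement transfer, and the ``\phaseI data plus $O(1)$ bits for the surviving cut'' entropy argument.

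The genuine gap is at the \phaseI/\phaseII transition. You declare the transition when a corruption budget $\gamma n$ for the neighborhood of $\Party_\is$ is about to be exceeded, and then assert that ``at this moment every party must have linear degree,'' which is what licenses the constant-size-cut-family lemma (\cref{lem:number_of_cuts}). With your transition rule this assertion is simply not true in general: the protocol could drive $\Party_\is$'s degree to $\gamma n$ in a few rounds while other parties still have tiny degree, and then no bound on the number of sublinear cuts is available. The paper's mechanism, which your sketch omits, is to pick $\is$ uniformly and condition on $\Party_\is$ being the \emph{last} party to reach the degree threshold $\beta n/4$ in \emph{both} the red and blue executions (events $\E_1,\E_2$, probability about $1/2n^2$, which suffices since only correctness is being contradicted); this in turn rests on first proving, via validity and the isolation attack, that in an honest execution every party reaches linear degree except with negligible probability. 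Relatedly, your single budget ignores the race between the two simulations: $\AdvneHonest$ pays separately for neighbors of the virtual $\DParty_\is$ in the red execution and of the real $\Party_\is$ in the blue one, and the analysis needs the red threshold to be hit no later than the blue (the paper's $\E_2$ and its explicit failure branch), plus the mid-round bookkeeping. Two smaller slips: in $\AdvneCorrupt$ only $\Party_\is$ is corrupted during \phaseI (not a linear-size set of ``virtual neighbors''), which is exactly why honest parties survive across the cut there; and the view coupling should be argued for some party outside \emph{both} attacks' corruption sets, with agreement transferring the output to $\Party_\js$, since $\Party_\js$ itself may be corrupted under $\AdvneHonest$.
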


The remainder of this section goes toward proving \cref{thm:lower_bound}.
We start by presenting the communication model in \cref{sec:lb_comm}. In \cref{sec:GT_lemma}, we prove a graph-theoretic theorem that will be used in the core of our proof and may be of independent interest. Then, in \cref{sec:lower_bound_proof} we present the proof of \cref{thm:lower_bound}.

\subsection{The Communication Model}\label{sec:lb_comm}
We consider secure communication channels, where the adversary can see that a message has been sent but not its content (in contrast to the hidden-communication model, used in \cref{sec:ne_mpc_adaptive}, where the communication between honest parties was hidden from the eyes of the adversary).
A standard assumption when considering adaptive corruptions is that in addition to being notified that an honest party sent a message, the adversary can corrupt the sender \emph{before} the receiver obtained the message, learn the content of the message, and replace it with another message of its choice that will be delivered to the receiver. Although the original modular composition framework~\cite{Canetti00} does not give the adversary such power, this ability became standard after the introduction of the \emph{secure message transmission (SMT)} functionality in the UC framework~\cite{Canetti01}. As we consider \emph{synchronous} protocols, we use the \textsf{parallel SMT} functionality that was formalized in~\cite{CCGZ16,CCGZ17}.\footnote{We note that by considering secure channels, that hide the content of the messages from the adversary, we obtain a stronger lower bound than, for example, authenticated channels.}

\begin{definition}[parallel SMT]
The \textsf{parallel secure message transmission functionality} $\fpsmt$ is a two-phase functionality.
For every $i,j\in[n]$, the functionality initializes a value $x^i_j$ to be the empty string $\emptystr$ (the value $x^i_j$ represents the message to be sent from $\Party_i$ to $\Party_j$).
\begin{itemize}
    \item
    \textbf{The input phase.}
    Every party $\Party_i$ sends a vector of $n$ messages $(v^i_1,\ldots,v^i_n)$. The functionality sets $x^i_j=v^i_j$, and provides the adversary with leakage information on the input values.
    As we consider rushing adversaries, who can determine the messages to be sent by the corrupted parties \emph{after} receiving the messages sent by the honest parties, the leakage function should leak the messages that are to be delivered from honest parties to corrupted parties.
    Therefore, the leakage function is
    \[
    \lpsmt\left((x^1_1,\ldots,x^1_n),\ldots,(x^n_1,\ldots,x^n_n)\right)
    =
    \left((y^1_1,\ldots,y^1_n),\ldots,(y^n_1,\ldots,y^n_n)\right),
    \]
    where $y^i_j=\ssize{x^i_j}$ in case $\Party_j$ is honest and $y^i_j=x^i_j$ in case $\Party_j$ is corrupted.

    We consider adaptive corruptions, and so, the adversary can corrupt an honest party during the input phase based on this leakage information, and send a new input on behalf of the corrupted party (note that the messages are not delivered yet to the honest parties).
    \item
    \textbf{The output phase.}
    In the second phase, the messages are delivered to the parties, \ie party $\Party_i$ receives the vector of messages $(x^1_i,\ldots,x^n_i)$.
\end{itemize}
\end{definition}
\noindent
In addition, we assume that the parties do not have any trusted-setup assumption.

\subsection{A Graph-Theoretic Theorem}\label{sec:GT_lemma}

Our lower-bound proof is based on the following graph-theoretic theorem, which we believe may be of independent interest.
We show that every graph in which every node has a linear degree, can be partitioned into a constant number of linear-size sets that are pairwise connected by sublinear many edges. These subsets are ``minimal cuts'' in the sense that every sublinear cut in the graph is a union of some of these subsets.
The proof of the theorem is deferred to \cref{sec:GT_lemma_cont}.

\begin{definition}[$(\alpha,d)$-partition]\label{def:partition}
Let $G=(V,E)$ be a graph of size $n$. An \textsf{$(\alpha,d)$-partition} of $G$ is a partition $\Gamma=(\island_1,\ldots,\island_\ell)$ of $V$ that satisfies the following properties:
\begin{enumerate}
    \item
    For every $i\in[\ell]$ it holds that $\ssize{\island_i}\geq d$.
    \item
    For every $i\neq j$, there are at most $\alpha$ edges between $\island_i$ and $\island_j$, \ie $\ssize{\edges_G(\island_i,\island_j)}\leq \alpha$.
    \item
    For every $S\subseteq V$ such that $\sset{S,\comp{S}}$ is an $\alpha$-cut, \ie $\ssize{\edges_{G}(S)}\leq\alpha$, it holds that there exists a subset $J\subsetneq[\ell]$ for which $S=\bigcup_{j\in J} U_j$ and $\comp{S}=\bigcup_{j\in [\ell]\setminus J} U_j$.
\end{enumerate}
\end{definition}

In \cref{lem:number_of_cuts} we first show that if every node in the graph has a linear degree $d(n)$, and $\alpha(n)$ is sublinear, then for sufficiently large $n$ there exists an $(\alpha(n),d(n))$-partition of the graph, and moreover, the partition can be found in polynomial time.

\begin{theorem}\label{lem:number_of_cuts}
Let $c>1$ be a constant integer, let $\alpha(n)\in o(n)$ be a fixed sublinear function in $n$, and let $\sset{G_n}_{n\in\N}$ be a family of graphs, where $G_n=([n],E_n)$ is defined on $n$ vertices, and every vertex of $G_n$ has degree at least $\frac{n}{c}-1$.
Then, for sufficiently large $n$ it holds that:
\begin{enumerate}
    \item
    There exists an $(\alpha(n),n/c)$-partition of $G_n$, denoted $\Gamma$; it holds that $\size{\Gamma} \leq c$.
    \item
    An $(\alpha(n),n/c)$-partition $\Gamma$ of $G_n$ can be found in (deterministic) polynomial time.
\end{enumerate}
\end{theorem}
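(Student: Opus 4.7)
The plan is to build $\Gamma$ by iteratively refining the trivial partition $\sset{[n]}$ using $\alpha(n)$-cuts of $G_n$, halting once no $\alpha(n)$-cut of $G_n$ splits any current part. The core ingredient is a degree-counting bound: for every $U\subseteq[n]$ with $1\le|U|\le n/c-1$,
\[
\size{\edges_{G_n}(U,[n]\setminus U)} \ge |U|(n/c-1)-2\binom{|U|}{2} = |U|(n/c-|U|) \ge n/c-1,
\]
since each vertex of $U$ has at least $n/c-1$ incident edges and at most $|U|-1$ of them stay inside $U$. Because $c$ is a constant and $\alpha(n)\in o(n)$, for all sufficiently large $n$ we have $c\cdot\alpha(n)<n/c-1$; this threshold drives the entire argument.

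From this I derive a structural lemma: whenever $\Gamma$ is the partition of $[n]$ into non-empty intersections $\bigcap_{j=1}^m S_j^{b_j}$ (with $b\in\sset{0,1}^m$, $S_j^0=S_j$, $S_j^1=\comp{S_j}$) arising from $m\le c$ $\alpha(n)$-cuts $\sset{S_1,\comp{S_1}},\ldots,\sset{S_m,\comp{S_m}}$ of $G_n$, every part $U\in\Gamma$ satisfies $|U|\ge n/c$, and consequently $|\Gamma|\le c$. Indeed, every edge from $U$ to $[n]\setminus U$ must cross at least one of the $\sset{S_j,\comp{S_j}}$, so $\size{\edges_{G_n}(U,[n]\setminus U)}\le m\cdot\alpha(n)\le c\cdot\alpha(n)<n/c-1$, which combined with the degree lower bound rules out $|U|\in[1,n/c-1]$ and forces $|U|\ge n/c$ (the regime $|U|>n/c$ is vacuous for the bound and needs no further constraint).

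The algorithm is: initialize $\Gamma\leftarrow\sset{[n]}$; while there exist $u,v$ lying in the same current part whose minimum $(u,v)$-edge-cut in $G_n$ has weight at most $\alpha(n)$, let $\sset{S,\comp S}$ be such a cut and refine $\Gamma$ by replacing each $U_i$ with the non-empty sets among $U_i\cap S$ and $U_i\cap\comp S$; otherwise halt. Each refinement strictly increases $|\Gamma|$, which the structural lemma caps at $c$, so there are at most $c-1$ refinements; since each round tests $O(n^2)$ pairs using polynomial-time max-flow, the whole procedure runs in deterministic polynomial time.

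Finally I verify the three properties of \cref{def:partition}. Property~1 and $|\Gamma|\le c$ are precisely the structural lemma applied to the $\le c-1$ cuts used by the algorithm. Property~2 holds because for any two distinct parts $U_i,U_j$ there is some index $\ell$ with, say, $U_i\subseteq S_\ell$ and $U_j\subseteq\comp{S_\ell}$, so $\edges_{G_n}(U_i,U_j)\subseteq\edges_{G_n}(S_\ell,\comp{S_\ell})$ has size at most $\alpha(n)$. Property~3 follows from the halting condition: if some $\alpha(n)$-cut $\sset{S,\comp S}$ split a part $U_i$, then choosing $u\in U_i\cap S$ and $v\in U_i\cap\comp S$ would witness a $(u,v)$-edge-cut of weight $\le\alpha(n)$, contradicting termination. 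The main obstacle is the structural lemma: we must carefully separate the regime $|U|\le n/c$ (where the degree bound bites) from $|U|>n/c$ (where it is vacuous), and ensure $n$ is chosen large enough that the inequality $c\cdot\alpha(n)<n/c-1$ persists across all $\le c-1$ refinement steps simultaneously.
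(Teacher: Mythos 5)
Your proof is correct, and it takes a genuinely different route from the paper's. The paper proves existence by enumerating \emph{all} $\alpha(n)$-cuts at once and intersecting them; since the number of cuts $\ell$ is a priori unbounded, the paper's core lemma (that every nonempty intersection has size $\geq n/c$) requires a fairly intricate nested-chain analysis with two cases. You instead build $\Gamma$ by iterative refinement, which guarantees at every stage that only $m\le c-1$ cuts are in play. With that boundedness in hand, your structural lemma admits a much cleaner proof: every edge leaving a part $U=\bigcap_{j=1}^m S_j^{b_j}$ must cross one of the $m$ cuts, so $\ssize{\edges_{G_n}(U)}\le m\cdot\alpha(n)\le c\cdot\alpha(n)<n/c-1$, which together with the degree bound rules out $1\le\ssize{U}\le n/c-1$. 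The algorithms also differ: the paper invokes the Vazirani--Yannakakis/Yeh et al.\ cut-enumeration machinery (which enumerates all cuts by non-decreasing weight with polynomial delay), whereas you only need $O(c)$ rounds of pairwise min-cut / max-flow computations, a more elementary and arguably easier-to-implement subroutine. Your termination argument — each refinement strictly increases $\ssize{\Gamma}$, which the structural lemma caps at $c$, giving at most $c-1$ iterations — is sound and avoids any circularity, since the lemma is applied at each step with the current (small) number of cuts. Your verification of Property~3 via the halting condition is also correct: if an $\alpha(n)$-cut split some part, it would witness a small $(u,v)$-min-cut for some $u,v$ in that part, contradicting termination. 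One benefit of the paper's approach is that it directly yields the explicit bound of at most $2^{c-1}$ total $\alpha(n)$-cuts (used later in Claim~\ref{claim:LB_number_of_cuts}); your argument recovers the same conclusion after the fact via Property~3 together with $\ssize{\Gamma}\le c$, so nothing is lost.
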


\noindent
Note that if for every $n$ there exists an $\alpha(n)$-cut in $G_n$, then it immediately follows that $\size{\Gamma}>1$, \ie the partition is not the trivial partition of the set of all nodes.

\subsection{Proof of the Main Theorem (\texorpdfstring{\cref{thm:lower_bound}}{Lg})}\label{sec:lower_bound_proof}

\paragraph{High-level overview of the attack.}
For $n\in\N$, consider an execution of the alleged parallel broadcast protocol $\pi_n$ over uniformly distributed $n$-bit input values for the parties $(x_1,\ldots,x_n)\in_R (\zn)^n$.
We define two ensembles of adversarial strategies $\sset{\AdvneHonest}_{n\in\N}$ and $\sset{\AdvneCorrupt}_{n\in\N}$ (described in full in \cref{sec:LB_adv}).

The adversary $\AdvneCorrupt$ corrupts a random party $\Party_\is$, and simulates an honest execution on a random input $\tilde{x}_\is$ until $\Party_\is$ has degree $\beta n/4$. Next, $\AdvneCorrupt$ switches the internal state of $\Party_\is$ with a view that is consistent with an honest execution over the initial input $x_\is$, where all other parties have random inputs.
The adversary $\AdvneCorrupt$ continues by computing an $(\alpha(n),n/c)$-partition $\sset{U_1,\ldots,U_\ell}$ of the communication graph, (where $c$ is a constant depending only on $\beta$ -- this is possible due to \cref{lem:number_of_cuts}), and blocking every message that is sent between every pair of $U_i$'s. In \cref{lem:lb:entropy}, we show that there exist honest parties that at the conclusion of the protocol have received a bounded amount of information on the initial input value $x_\is$.

The second adversary, $\AdvneHonest$, is used for showing that under the previous attack, every honest party will eventually output the initial input value $x_\is$ (\cref{lem:lb:correct_output}). This is done by having $\AdvneHonest$ corrupt all the neighbors of $\Party_\is$, while keeping $\Party_\is$ honest, and simulate the previous attack to the remaining honest parties.

We show that there exist honest parties whose view is identically distributed under both attacks, and since they output $x_\is$ in the latter, they must also output $x_\is$ in the former.
By combining both of these lemmata, we then derive a contradiction.

\begin{proof}[Proof of \cref{thm:lower_bound}]
First, since we consider the plain model, without any trusted setup assumptions, known lower bounds~\cite{PSL80,LSP82,FLM86} state that parallel broadcast cannot be computed for $t(n)\geq n/3$; therefore, we can focus on $0<\beta<1/3$, \ie the case where $t(n)=\beta \cdot n <n/3$.

Assume toward a contradiction that $\pi$ is $t(n)$-resilient parallel broadcast protocol in the above setting, and that there exists a sublinear function $\alpha(n)\in o(n)$ such that the communication graph of $\pi$ is not $\alpha(n)$-edge-connected, \ie for sufficiently large $n$ there exists a cut $\sset{S_n,\comp{S}_n}$ of weight at most $\alpha(n)$.

\paragraph{Notations.}
We start by defining a few notations.
For a fixed $n$,\footnote{For clarity, we denote the random variables without the notation $n$.} consider the following independently distributed random variables
\[
\inputCoins=\left(X_1,\ldots,X_n,R_1,\ldots,R_n,\tilde{X}_1,\ldots,\tilde{X}_n,\tilde{R}_1,\ldots,\tilde{R}_n,\Is\right),
\]
where for every $i\in[n]$, each $X_i$ and $\tilde{X}_i$ take values uniformly at random in the input space~$\zn$, each $R_i$ and $\tilde{R}_i$ take values uniformly at random in $\zs$, and $\Is$ takes values uniformly at random in $[n]$.
During the proof, $(X_i,R_i)$ represent the pair of input and private randomness of party $\Party_i$, whereas $(\tilde{X}_1,\ldots,\tilde{X}_n,\tilde{R}_1,\ldots,\tilde{R}_n,\Is)$ correspond to the random coins of the adversary (used in simulating the two executions toward the honest parties).
Unless stated otherwise, all probabilities are taken over these random variables.

Let $\redExec$ be a random variable defined as
\[
\redExec=\left(X_{-\Is},\tilde{X}_\Is,R_{-\Is},\tilde{R}_\Is\right).
\]
That is, $\redExec$ contains $X_i$ and $R_i$ for $i\in[n]\setminus\sset{\Is}$, along with $\tilde{X}_\Is$ and $\tilde{R}_\Is$.
We denote by the ``red execution'' an \emph{honest} protocol execution when the inputs and private randomness of the parties are $(X_{-\Is},\tilde{X}_\Is,R_{-\Is},\tilde{R}_\Is)$.
We denote by the ``blue execution'' an \emph{honest} protocol execution when the inputs and private randomness of the parties are $(\tilde{X}_{-\Is},X_\Is,\tilde{R}_{-\Is},R_\Is)$.
Note that such a sample fully determines the view and transcript of all parties in an \emph{honest} simulated execution of $\pi_n$.

Let $\finalCut$ be a random variable defined over $2^{[n]}\cup\sset{\bot}$.
The distribution of $\finalCut$ is defined by running protocol $\pi_n$ until its conclusion with adversary $\AdvneCorrupt$ (defined in \cref{sec:LB_adv}) on inputs and coins sampled according to \inputCoins. If at the conclusion of the protocol there is no $\alpha(n)$-cut in the graph, then set the value of $\finalCut$ to be $\bot$; otherwise, set the value to be the identity of the smallest $\alpha(n)$-cut $\sset{S,\comp{S}}$ in the communication graph according to some canonical ordering on the $\alpha(n)$-cuts.
We will prove that conditioned on the value of $\redExec$, the $\finalCut$ can only take one of a constant number of values depending only on $\beta$ (and not on $n$).

Let $\E_1$ denote the event that $\Party_\Is$ is the last among all the parties to reach degree $\beta n/4$ in both the red and the blue honest executions of the protocol. More precisely, the event that $\Party_\Is$ reaches degree $\beta n/4$ in both executions, and if it has reached this degree in round $\round$ in the red (blue) execution, then all parties in the red (blue) execution have degree at least $\beta n/4$ in round $\round$.

Let $\E_2$ denote the event that the degree of $\Party_\Is$ reaches $\beta n/4$ in the red execution before, or at the same round as, in the blue execution.
Note that $\E_1$ and $\E_2$ are events with respect to two honest executions of the protocol (the red execution and the blue execution) that are defined according to $\inputCoins$. In both adversarial strategies that will be used in the proof, the corrupted parties will operate in a way that indeed induces the red and blue executions, respectively, and so, the events $\E_1$ and $\E_2$ are well defined in an execution of the protocol with those adversarial strategies.

In \cref{sec:LB_adv}, we formally describe two adversarial strategies, $\AdvneHonest$ and $\AdvneCorrupt$ (see \cref{fig:LBattack_honest_one,fig:LBattack_honest_two} and \cref{fig:LBattack_corrupt_one,fig:LBattack_corrupt_two}, respectively).
We denote by $Y^\corrupt_\Is$, respectively $Y^\honest_\Is$, the random variable that corresponds to the $\Is$'th coordinate of the common output of honest parties, when running the protocol over random inputs with adversarial strategy $\AdvneCorrupt$, respectively $\AdvneHonest$.

\begin{nfbox}{(\phaseI of Adversary $\AdvneHonest$)}{fig:LBattack_honest_one}

\begin{center}
    \textbf{Adversary $\AdvneHonest$}
\end{center}
The adversary $\AdvneHonest$ attacks an $n$-party protocol $\pi_n=(\Party_1,\ldots,\Party_n)$. The adversary is parametrized by $\beta_1=\beta/2$, by $c=\ceil{4/\beta}$, and by a sublinear function $\alpha(n)\in o(n)$, and proceeds as follows:
\begin{description}
    \item[\phaseI:] (ensuring high degree in the ``red execution'')
    \begin{enumerate}
        \item\label{step:index}
        Choose uniformly at random $\is \gets [n]$.
        \item\label{step:virtual_protocol}
        Emulate in its head:
        \begin{itemize}
            \item
            A virtual (blue) execution of $\pi_n$ with virtual parties $(\DQParty_1,\ldots,\DQParty_n)$, where every $\DQParty_i$, for $i\in[n]\setminus\sset{\is}$, is initialized with a uniformly distributed input $\tilde{x}_i\in_R \zn$ and random coins $\tilde{r}_i$. (Party $\DQParty_\is$ is simulated based on the actions of party $\Party_\is$.)
            \item
            A virtual party $\DParty_\is$ with a uniformly distributed input $\tilde{x}_\is\in_R \zn$ and random coins~$\tilde{r}_\is$. (Party $\DParty_\is$ is used to simulate a (red) execution with real $\Party_i$'s.)
        \end{itemize}
        \item\label{step:corrupt_one}
        In every round $\round$, receive the leakage from $\fpsmt$ containing the messages for corrupted parties, and which honest party sends a message to another honest party. Next:
        \begin{enumerate}
            \item
            Proceed with round $\round$ in the \emph{red} execution as follows. For every $j\in[n]\setminus\sset{\is}$ (in lexicographic order) check the following:
            \begin{enumerate}
                \item
                If party $\Party_j$ sends a message to $\Party_\is$ then corrupt $\Party_j$, learn the message $\mu$, and discard the message. Next, simulate party $\DParty_\is$ as receiving message $\mu$ from~$\Party_j$.
                \item
                If virtual party $\DParty_\is$ should send a message $\mu$ to party $\Party_j$, then corrupt $\Party_j$ and instruct him to play as an honest party that received the message $\mu$ from $\Party_\is$.
                \item
                Let $\Gred(\round,j)$ be the communication graph of the red execution at round $\round$, except for messages from $\DParty_\is$ to $\Party_{j'}$, and from $\DParty_{j'}$ to $\Party_\is$, for $j'>j$.
                If
                \[
                \deg_{\Gred(\round,j)}(\is) \geq(\beta_1/2)\cdot n,
                \]
                then let $\GphaseIIH=\Gred(\round,j)$, let $\roundphaseIIH=\round$, and proceed to \phaseII.
            \end{enumerate}
            \item
            Proceed with round $\round$ in the \emph{blue} execution as follows. For every $j\in[n]\setminus\sset{\is}$ (in lexicographic order) check the following:
            \begin{enumerate}
                \item
                If $\Party_\is$ sends a message to party $\Party_j$, corrupt $\Party_j$, learn the message $\mu$, and instruct $\Party_j$ to play as an honest party that does not receive messages from $\Party_\is$. In addition, simulate virtual party $\DQParty_\is$ sending the message $\mu$ to party $\DQParty_j$.
                \item
                If virtual party $\DQParty_j$ sends a message $\mu$ to virtual party $\DQParty_\is$ then corrupt $\Party_j$ and send the message $\mu$ to party $\Party_\is$.
                \item
                Let $\Gblue(\round,j)$ be the communication graph of the blue execution at round $\round$, except for messages from $\DQParty_\is$ to $\DQParty_{j'}$, and from $\DQParty_{j'}$ to $\DQParty_\is$, for $j'>j$.
                If
                \[
                \deg_{\Gblue(\round,j)}(\is) \geq(\beta_1/2)\cdot n,
                \]
                then output $\bot$ and halt (the attack fails).
            \end{enumerate}

            \item\label{step:lb:finish_phase_one}
            In case the protocol completes before \phaseII, then halt.
        \end{enumerate}
    \end{enumerate}

\end{description}
\end{nfbox}

\newpage
\paragraph{Proof structure.}
Our proof follows from two main steps.
In \cref{lem:lb:entropy}, stated and proven in \cref{sec:LB_lemma_entropy}, we show that in an execution of $\pi_n$ on random inputs with adversary $\AdvneCorrupt$, it holds that (1) $\pr{\E_1 \cap \E_2} \geq 1/2n^2 - \negl(n)$, and that (2) conditioned on the event $\E_1 \cap \E_2$, there exists an honest party $\Party_\js$ such that $X_\Is$, conditioned on $\E_1 \cap \E_2$ and on the view of $\Party_\js$ at the conclusion of the protocol, still retains at least $n/2$ bits of entropy. This means, in particular, that $\Party_\js$ will output the value $X_\Is$ only with negligible probability. Hence, by \emph{agreement}, the probability for any of the honest parties to output $X_\Is$ in an execution with $\AdvneCorrupt$ is negligible. In particular,
\[
\pr{Y^\corrupt_\Is=X_\Is\mid \E_1 \cap \E_2}=\negl(n).
\]

In \cref{lem:lb:correct_output}, stated and proven in \cref{sec:LB_lemma_correct_output}, we show that in an execution of $\pi_n$ on random inputs with adversary $\AdvneHonest$, it holds that (1) with overwhelming probability all honest parties output $X_\Is$ (this holds by correctness, since $\Party_\Is$ remains honest), \ie
\[
\pr{Y^\honest_\Is=X_\Is}\geq 1-\negl(n),
\]
and that (2) conditioned on the event $\E_1 \cap \E_2$, there exists an honest party whose view is identically distributed as in an execution with $\AdvneCorrupt$; therefore,
\[
\pr{Y^\corrupt_\Is=Y^\honest_\Is \mid \E_1 \cap \E_2}\geq 1-\negl(n).
\]

From the combination of the two lemmata, we derive a contradiction.
\QED
\end{proof}

\subsubsection{Defining Adversarial Strategies\SUBSUBSEC}\label{sec:LB_adv}

As discussed above, the main idea behind the proof is to construct two dual adversarial strategies that will show that on the one hand, the output of all honest parties must contain the initial value of a randomly chosen corrupted party, and on the other hand, there exist parties that only receive a bounded amount of information on this value during the course of the protocol.

We use the following notation for defining the adversarial strategies. Virtual parties that only exist in the head of the adversary are denoted with ``tilde''. In particular, for a random $\is\in[n]$, we denote by $\DParty_\is$ a virtual party that emulates the role of $\Party_\is$ playing with the real parties using a random input in the so-called ``red execution,'' and by $\sset{\DQParty_i}_{i\neq\is}$ virtual parties that emulate an execution over random inputs toward $\Party_\is$ in the so-called ``blue execution.''\footnote{Following the \emph{red pill blue pill} paradigm, in the adversarial strategy $\AdvneHonest$, the chosen party $\Party_\is$ is participating (without knowing it) in the \emph{blue} execution, which is a fake execution that does not happen in the real world. The real honest parties participate in the \emph{red} execution, where the adversary simulates $\Party_\is$ by running a virtual party.}

\paragraph{The adversary $\AdvneHonest$.}
At a high level, the adversary $\AdvneHonest$ (see \cref{fig:LBattack_honest_one,fig:LBattack_honest_two}) chooses a random $\is\in[n]$ and isolates the honest party $\Party_\is$. The adversary $\AdvneHonest$ consists of three phases. In \phaseI, $\AdvneHonest$ induces two honestly distributed executions.
\begin{itemize}
    \item
    The first (red) execution is set by simulating an honest execution of a virtual party $\DParty_\is$ over a random input $\tilde{x}_\is$ toward all other parties. The adversary corrupts any party that sends a message to $\Party_\is$, blocks its message, and simulates the virtual party $\DParty_\is$ receiving this message. Whenever the virtual party $\DParty_\is$ should send a message to some $\Party_j$, the adversary corrupts party $\Party_j$, and instructs him to proceed as if he received the intended message from $\DParty_\is$.
\end{itemize}

\begin{nfbox}{(Phases II and III of Adversary $\AdvneHonest$)}{fig:LBattack_honest_two}

\begin{center}
    \textbf{Adversary $\AdvneHonest$}
\end{center}
\begin{description}

    \item[\phaseII:](ensure that $\Party_\is$ remains isolated until his degree in the \emph{real} execution is high (\ie not in the blue execution))
    \begin{enumerate}
        \item\label{step:lb:honest_phaseI_check_degree}
        If for some $\Party_i$ with $i\neq \is$, it holds $\deg_{\GphaseIIH}(i)<(\beta_1/2)\cdot n$, output $\bot$ (the attack fails).
        \item
        Identify an $(\alpha, n/c)$-partition of $\GphaseIIH\setminus\sset{\is}$, denoted $\Gamma_1=\sset{\island_1, \ldots, \island_\ell}$ (see \cref{def:partition}).
        \item
        For every round $\round\geq \roundphaseIIH$, proceed with round $\round$ in the real execution as follows.
        \begin{enumerate}
            \item
            For every $j\in[n]\setminus\sset{\is}$ (in lexicographic order) check the following:
            \begin{enumerate}
                \item
                Let $\Greal(\round,j)$ be the communication graph of the real execution at round $\round$, except for messages from $\Party_\is$ to $\Party_{j'}$, and from $\Party_{j'}$ to $\Party_\is$, for $j'>j$.
                \item
                If $\Party_\is$ talked to $\Party_j$ in round $\round$, and $\deg_{\Greal(\round,j)}(\is)<(\beta_1/2)\cdot n$, then corrupt $\Party_j$ and instruct to play honestly as a party that does not receive messages from $\Party_\is$.
                \item
                If $\deg_{\Greal(\round,j)}(\is)\geq (\beta_1/2)\cdot n$, then let $\GphaseIIIH=\Greal(\round,j)$, let $\roundphaseIIIH=\round$, and proceed to \phaseIII.
            \end{enumerate}
            \item
            Let $\Greal(\round)$ be the communication graph of the real execution at round $\round$.
            \item
            For every $i,j\neq \is$, such that $\Party_i$ has sent a message to $\Party_j$ in round $\round$:
            \begin{enumerate}
                \item
                If $i\in \island_k$ and $j\in \island_{k'}$, for $k\neq {k'}$, and $\ssize{\edges_{\Greal(\round)}(\island_k,\island_{k'})} \leq \alpha(n)$, then corrupt $\Party_j$, and instruct $\Party_j$ to play as an honest party that does not send/receive messages to/from parties outside of $\island_k$ (from round $\round$ and onwards).
            \end{enumerate}

            \item\label{step:lb:honest_finish_phase_two}
            In case the protocol completes before \phaseIII, then halt.
        \end{enumerate}
    \end{enumerate}

    \item[\phaseIII:](isolate low-weight cuts until the protocol completes)
    \begin{enumerate}
        \item
        To add $\is$ to one of the sets in the partition, consider the minimum index of a set $\island_k$ for which $\Party_\is$ has more than $\alpha(n)$ neighbors as
        \[
        k_{\min}=\min\set{k : \size{\edges_{\GphaseIIIH}\left(\set{\is},\island_k\right)}>\alpha(n)}.
        \]
        Set $\Visland_{k_{\min}}=\island_{k_{\min}}\bigcup\sset{\is}$ and $\Visland_k=\island_k$ for all $k\neq k_{\min}$.
        Denote $\Gamma_2=\sset{\Visland_1, \ldots, \Visland_\ell}$.\footnote{Note that $\Gamma_2$ \emph{may not} be an $(\alpha,d)$-partition of the communication graph.}
        \item
        For every round $\round\geq \roundphaseIIIH$, proceed with round $\round$ in the real execution as follows.
        \begin{enumerate}
            \item
            Let $\Greal(\round)$ be the communication graph of the real execution at round $\round$.
            \item
            For every $i$ and every $j\neq \is$, such that $\Party_i$ has sent a message to $\Party_j$ in round $\round$:
            \begin{enumerate}
                \item
                If $i\in \Visland_k$ and $j\in \Visland_{k'}$, for $k\neq {k'}$, and $\ssize{\edges_{\Greal(\round)}(\Visland_k,\Visland_{k'})} \leq \alpha(n)$, then corrupt $\Party_j$, and instruct $\Party_j$ to play as an honest party that does not send/receive messages to/from parties outside of $\Visland_k$ (from round $\round$ and onwards).
            \end{enumerate}
        \end{enumerate}
    \end{enumerate}

\end{description}
\end{nfbox}

\begin{itemize}
    \item
    For the second (blue) execution, $\AdvneHonest$ emulates a virtual execution with virtual parties $(\DQParty_1,\ldots,\DQParty_n)\setminus\sset{\DQParty_\is}$ on random inputs toward the honest party $\Party_\is$. To do so, whenever $\Party_\is$ sends a message to $\Party_j$ in the real execution, the adversary corrupts $\Party_j$, instructing him to ignore this message, and simulates this message from $\Party_\is$ to $\DQParty_j$ in the virtual execution (that is running in the head of the adversary). Whenever a party $\DQParty_j$ sends a message to $\Party_\is$ in the virtual execution, the adversary corrupts the real party $\Party_j$ and instructs him to send this message to $\Party_\is$ in the real execution.
\end{itemize}

\noindent
Recall that according to \cref{def:comm_graph}, edges in which an honest party receives messages will be considered in the final communication graph if the receiver actually processed the messages. In particular, this means that the messages that are blocked during the red execution are not processed by $\Party_\is$ and will not be considered in the final graph, whereas the messages sent by corrupted parties to $\Party_\is$ in the blue execution will be considered as long as $\Party_\is$ will process them.

\begin{nfbox}{(\phaseI of Adversary $\AdvneCorrupt$)}{fig:LBattack_corrupt_one}
\begin{center}
    \textbf{Adversary $\AdvneCorrupt$}
\end{center}
The adversary $\AdvneCorrupt$ attacks an $n$-party protocol $\pi_n=(\Party_1,\ldots,\Party_n)$. The adversary is parametrized by $\beta_1=\beta/2$, by $c=\ceil{4/\beta}$, and by a sublinear function $\alpha(n) \in o(n)$, and proceeds as follows:
\begin{description}
    \item[\phaseI:] (ensuring high degree in the ``red execution'')
    \begin{enumerate}
        \item\label{step:lb:index}
        Choose uniformly at random $\is \gets [n]$ and corrupt party $\Party_\is$.
        \item\label{step:lb:virtual_protocol}
        Emulate in its head:
        \begin{enumerate}
            \item
            A virtual (blue) execution of $\pi_n$ with virtual parties $(\DQParty_1,\ldots,\DQParty_n)$, where every virtual party $\DQParty_i$, for $i\in[n]\setminus\sset{\is}$, is initialized with a uniformly distributed input $\tilde{x}_i\in_R \zn$ and random coins $\tilde{r}_i$. Virtual party $\DQParty_\is$ is initialized with input $x_\is$ and random coins~$r_\is$.
            \item
            A virtual party $\DParty_\is$ with a uniformly distributed input $\tilde{x}_\is\in_R \zn$ and random coins~$\tilde{r}_\is$.             (Party $\DParty_\is$ is used to simulate a (red) execution with real $\Party_i$'s.)
        \end{enumerate}
        \item\label{step:lb:corrupt_one}
        In every round $\round$, receive the leakage from $\fpsmt$ containing the messages for corrupted parties, and which honest party sends a message to another honest party. Next:
        \begin{enumerate}
            \item
            Proceed with round $\round$ in the \emph{red} execution $(\Party_1,\ldots,\DParty_\is,\ldots,\Party_n)$ as follows. For every $j\in[n]\setminus\sset{\is}$ (in lexicographic order) check the following:
            \begin{enumerate}
                \item
                If party $\Party_j$ sends a message $\mu$ to $\Party_\is$, then simulate $\DParty_\is$ as receiving message $\mu$ from~$\Party_j$.
                \item
                If virtual party $\DParty_\is$ generates a message $\mu$ for $\Party_j$, send $\mu$ to $\Party_j$ on behalf of $\Party_\is$.
                \item
                Let $\Gred(\round,j)$ be the communication graph of the red execution at round $\round$, except for message from $\DParty_\is$ to $\Party_{j'}$, and from $\DParty_{j'}$ to $\Party_\is$, for $j'>j$.
                If
                \[
                \deg_{\Gred(\round,j)}(\is) \geq(\beta_1/2)\cdot n,
                \]
                then let $\GphaseIIC=\Gred(\round,j)$, let $\roundphaseIIC=\round$, and proceed to \phaseII.
            \end{enumerate}

            \item\label{step:lb:corrupt_blue_execution}
            Proceed with round $\round$ in the \emph{blue} execution as follows. Generate all messages for round $\round$, and for every $j\in[n]\setminus\sset{\is}$ check the following:
            \begin{enumerate}
                \item
                Let $\Gblue(\round,j)$ be the communication graph of the blue execution at round $\round$, except for messages from $\DQParty_\is$ to $\DQParty_{j'}$, and from $\DQParty_{j'}$ to $\DQParty_\is$, for $j'>j$.
                If
                \[
                \deg_{\Gblue(\round,j)}(\is) \geq(\beta_1/2)\cdot n,
                \]
                then output $\bot$ and halt (the attack fails).
            \end{enumerate}

            \item\label{step:lb:corrupt_finish_phase_one}
            In case the protocol completes before \phaseII, then halt.
        \end{enumerate}
    \end{enumerate}
\end{description}
\end{nfbox}

\begin{nfbox}{(Phases II and III of Adversary $\AdvneCorrupt$)}{fig:LBattack_corrupt_two}
\begin{center}
    \textbf{Adversary $\AdvneCorrupt$}
\end{center}
\begin{description}

    \item[\phaseII:](hold back any information about $x_\is$ until the degree of $\Party_\is$ in the \emph{real} execution is high)
    \begin{enumerate}
        \item\label{step:lb:corrupt_phaseI_check_degree}
        If for some party $\Party_i$, for $i\neq \is$, it holds that $\deg_{\GphaseIIC}(i)<(\beta_1/2)\cdot n$, then output $\bot$ and halt (the attack fails).
        \item
        Identify an $(\alpha,(\beta_1/2)\cdot n)$-partition of $\GphaseIIC\setminus\sset{\is}$, denoted $\Gamma_1=\sset{\island_1, \ldots, \island_\ell}$.
        \item
        For every round $\round\geq \roundphaseIIC$, proceed with round $\round$ in the real execution as follows.
        \begin{enumerate}
            \item
            For every $j\in[n]\setminus\sset{\is}$ (in lexicographic order) check the following:
            \begin{enumerate}
                \item
                Let $\Greal(\round,j)$ be the communication graph of the real execution at round $\round$, except for message from $\Party_\is$ to $\Party_{j'}$, and from $\Party_{j'}$ to $\Party_\is$, for $j'>j$.
                \item
                If $\Party_j$ sends a message $\mu$ to $\Party_\is$ in round $\round$, and $\deg_{\Greal(\round,j)}(\is)<(\beta_1/2)\cdot n$, then simulates virtual party $\DQParty_\is$ receiving the message $\mu$ from $\DQParty_j$.
                \item
                If $\deg_{\Greal(\round,j)}(\is)\geq (\beta_1/2)\cdot n$, then let $\GphaseIIIC=\Greal(\round,j)$, let $\roundphaseIIIC=\round$, and proceed to \phaseIII.
            \end{enumerate}
            \item
            Let $\Greal(\round)$ be the communication graph of the real execution at round $\round$.
            \item
            For every $i,j\neq \is$, such that $\Party_i$ has sent a message to $\Party_j$ in round $\round$:
            \begin{enumerate}
                \item
                If $i\in \island_k$ and $j\in \island_{k'}$, for $k\neq {k'}$, and $\ssize{\edges_{\Greal(\round)}(\island_k,\island_{k'})} \leq \alpha(n)$, then corrupt $\Party_j$, and instruct $\Party_j$ to play as an honest party that does not send/receive messages to/from parties outside of $\island_k$ (from round $\round$ and onwards).
            \end{enumerate}

            \item\label{step:lb:corrupt_finish_phase_two}
            In case the protocol completes before \phaseIII, then halt.
        \end{enumerate}
    \end{enumerate}

    \item[\phaseIII:](isolate low-weight cuts until the protocol completes)
    \begin{enumerate}
        \item
        To add $\is$ to one of the sets in the partition, consider the minimum index of a set $\island_k$ for which $\Party_\is$ has more than $\alpha(n)$ neighbors as
        \[
        k_{\min}=\min\set{k : \size{\edges_{\GphaseIIIH}\left(\set{\is},\island_k\right)}>\alpha(n)}.
        \]
        Set $\Visland_{k_{\min}}=\island_{k_{\min}}\bigcup\sset{\is}$ and $\Visland_k=\island_k$ for all $k\neq k_{\min}$.
        Denote $\Gamma_2=\sset{\Visland_1, \ldots, \Visland_\ell}$.\footnote{Note that $\Gamma_2$ \emph{may not} be an $(\alpha,d)$-partition of the communication graph.}
        \item
        For every round $\round\geq \roundphaseIIIC$, proceed with round $\round$ in the real execution as follows.
        \begin{enumerate}
            \item
            Let $\Greal(\round)$ be the communication graph of the real execution at round $\round$.
            \item
            For every $i$ and every $j\neq \is$, such that $\Party_i$ has sent a message to $\Party_j$ in round $\round$:
            \begin{enumerate}
                \item
                If $i\in \Visland_k$ and $j\in \Visland_{k'}$, for $k\neq {k'}$, and $\ssize{\edges_{\Greal(\round)}(\Visland_k,\Visland_{k'})} \leq \alpha(n)$, then corrupt $\Party_j$, and instruct $\Party_j$ to play as an honest party that does not send/receive messages to/from parties outside of $\Visland_k$ (from round $\round$ and onwards).
            \end{enumerate}
        \end{enumerate}
    \end{enumerate}
\end{description}
\end{nfbox}

\phaseII begins when the degree of the virtual party $\DParty_\is$ in the \emph{red} execution is at least $\beta n/4$; if $\Party_\is$ reaches this threshold faster in the \emph{blue} execution, the attack fails.
\phaseIII begins when the degree of $\Party_\is$ in the \emph{real} execution is at least $\beta n/4$.

Ideally, \phaseI will continue until all parties in the real execution have a linear degree, and before the adversary will use half of his ``corruption budget'', \ie $\beta n/2$.
This would be the case if we were to consider a single honest execution of the protocol, since we show that there always exists a party that will be the last to reach the linear-degree threshold with a noticeable probability.
However, as the attack induces two \emph{independent} executions, in which the degree of the parties can grow at different rates, care must be taken. We ensure that even though $\Party_\is$ runs in the blue execution, by the time $\Party_\is$ will reach the threshold, all other parties (that participate in the red execution) will already have reached the threshold, and can be partitioned into ``minimal'' $\alpha(n)$-cuts, as follows.

The adversary allocates $\beta n/4$ corruptions for the red execution and $\beta n/4$ corruptions for the blue execution. We show that with a noticeable probability, once $\DParty_\is$ has degree $\beta n/4$ in the red execution, all other parties in the red execution also have high degree. Consider the communication graph of the red execution without the virtual party $\DParty_\is$ (\ie after removing the node $\is$ and its edges); by \cref{lem:number_of_cuts} there exists an $(\alpha(n),\beta n/4)$ partition of this graph into a constant number of linear-size subsets that are connected with sublinear many edges, denoted $\Gamma_1=\sset{\island_1,\ldots,\island_\ell}$ (in particular, this partition is independent of $x_\is$).
In \phaseII, the adversary continues blocking outgoing messages from $\Party_\is$ toward the real honest parties, until the degree of $\Party_\is$ in the real execution is $\beta n/4$. In addition, $\AdvneHonest$ blocks any message that is sent between two subsets in the partition, by corrupting the recipient and instructing him to ignore messages from outside of his subset.

In \phaseIII, which begins when $\Party_\is$ has high degree in the real execution, the adversary adds $\Party_\is$ to one of the subsets in the partition, in which $\Party_\is$ has many neighbors, and continues to block messages between different subsets in the partition until the conclusion of the protocol.

We note that special care must be taken in the transition between the phases, since such a transition can happen in a middle of a round, after processing some of the messages, but not all. Indeed, if the transition to the next phase will happen at the end of the round, the adversary may need to corrupt too many parties. For this reason, in Phases I and II, we analyze the messages to and from $\Party_\is$ one by one, and check whether the threshold has been met after each such message.

\paragraph{The adversary $\AdvneCorrupt$.}

The adversary $\AdvneCorrupt$ (see \cref{fig:LBattack_corrupt_one,fig:LBattack_corrupt_two}) corrupts the randomly chosen party $\Party_\is$, and emulates the operations of an honest $\Party_\is$ that is being attacked by $\AdvneHonest$.

In \phaseI, the adversary $\AdvneCorrupt$ induces two honestly distributed executions, by simulating an honest execution of a virtual party $\DParty_\is$ over a random input $\tilde{x}_\is$ toward all other honest parties (the red execution), and furthermore, runs in its mind a virtual execution over the initial input $x_\is$ and random inputs $\tilde{x}_i$ for $i\neq \is$ (the blue execution). This phase continues until $\DParty_\is$ has degree $\beta n/4$ in the red execution (no parties other than $\Party_\is$ are being corrupted). If all other parties in the red execution have high degree, then the adversary finds the partition of the red graph as in the previous attack (the partition is guaranteed by \cref{lem:number_of_cuts}). \radded{Note that only $\Party_\is$ is corrupted, hence all messages that are sent by other parties will be considered in the final communication graph, as well as messages sent by $\Party_\is$ that are processed by the receivers.}

In \phaseII, the adversary continues simulating the corrupted $\Party_\is$ toward the real honest parties until the degree of $\Party_\is$ in the real execution is $\beta n/4$; however, his communication is based on the view in the blue execution at the end of \phaseI (this is no longer an honest-looking execution). During this phase, $\AdvneCorrupt$ blocks any message that is sent between two subsets in the partition.

In \phaseIII, that begins when $\Party_\is$ has high degree (in the real execution), $\AdvneCorrupt$ adds $\Party_\is$ to one of the subsets in the partition, in which $\Party_\is$ has many neighbors, and continues to block messages between different subsets in the partition until the conclusion of the protocol.

\subsubsection{Proving High Entropy of \texorpdfstring{$X_\Is$}{Lg}\SUBSUBSEC}\label{sec:LB_lemma_entropy}

We now proceed to prove the first of the two core lemmata.
\begin{lemma}\label{lem:lb:entropy}
Consider an execution of $\pi_n$ on random inputs $(X_1,\ldots,X_n)$ for the parties with adversary $\AdvneCorrupt$, and the events $\E_1$ and $\E_2$ as defined in \cref{sec:lower_bound_proof}. Then, it holds that:
\begin{enumerate}
    \item
    $\pr{\E_1 \cap \E_2}\geq 1/{2n^2} -\negl(n)$.
    \item
    Conditioned on the event $\E_1 \cap \E_2$, there exists an honest party $\Party_\js$ such that
    \[
    H(X_\Is \mid \E_1 \cap \E_2, \VIEW^\corrupt_\js)\geq n/2,
    \]
    where $\VIEW^\corrupt_\js$ is the random variable representing the view of $\Party_\js$ at the end of the protocol.
\end{enumerate}
\end{lemma}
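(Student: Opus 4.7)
The plan is to handle the two claims separately: part~1 via a symmetry argument over the uniform choice of $\Is$, and part~2 by exhibiting an honest party $\Party_\js$ whose view depends on $X_\Is$ only through the constant-size index $k_{\min}$ chosen by the adversary.

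For part~1, I would first argue that in any honest execution of $\pi_n$ on random inputs, every party reaches communication degree at least $\beta n/4$ with overwhelming probability: otherwise an adversary corrupting fewer than $t(n)=\beta n$ neighbors would prevent the low-degree party's input from reaching any honest party, contradicting validity of the parallel broadcast. By the construction of $\inputCoins$, $\redExec$ and $\blueExec$ are independent copies of an honest execution once $\Is$ is fixed, so the ``last-reaching'' party is well-defined in each with overwhelming probability. Since $\Is$ is uniform on $[n]$ and independent of each execution, symmetry yields $\Pr[\Party_\Is \text{ is last in red}] \geq 1/n - \negl(n)$, and likewise for blue; their conditional independence given $\Is$ then gives $\Pr[\E_1] \geq 1/n^2 - \negl(n)$. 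Conditioned on $\E_1$, the events ``red weakly precedes blue'' and ``blue weakly precedes red'' are symmetric and cover the space, so $\Pr[\E_2 \mid \E_1] \geq 1/2$, and hence $\Pr[\E_1 \cap \E_2] \geq 1/(2n^2) - \negl(n)$.

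For part~2, I would first identify $\Party_\js$ and then bound the information $\VIEW^\corrupt_\js$ carries about $X_\Is$. Conditioned on $\E_1 \cap \E_2$, every real party's Phase-I view under $\AdvneCorrupt$ is a function of $\redExec$ and the adversary's coins alone (because $\Party_\is$ then simulates $\DParty_\is$ with the dummy input $\tilde{x}_\is$), and in particular is independent of $X_\Is$. Applying \cref{lem:number_of_cuts} to $\GphaseIIC \setminus \{\is\}$ yields an $(\alpha(n), n/c)$-partition $\Gamma_1 = \{\island_1,\ldots,\island_\ell\}$ with $\ell \leq c = \ceil{4/\beta}$, entirely determined by $\redExec$. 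The Phase-III index $k_{\min} \in [\ell]$ selects the part absorbing $\is$; picking any $k^* \neq k_{\min}$, I take $\Party_\js$ to be any party in $\island_{k^*}$ that never receives a cross-cut message, which exists because $|\island_{k^*}| \geq n/c = \Omega(n)$ while the total number of cross-cut recipients across Phases~II and~III is at most $\binom{\ell+1}{2} \cdot 2\alpha(n) = o(n)$. I then argue that the distribution of $\VIEW^\corrupt_\js$ can be sampled from $(\redExec, k_{\min}, r_\Adv)$ alone: every cross-cut message (including those originating from $\Party_\is$ and therefore potentially $X_\Is$-dependent) is intercepted by corrupting the receiver before it is processed, and since a corrupted party in $\island_k$ is instructed to play honestly on edges internal to $\island_k$, the internal transcript of $\island_{k^*}$ after Phase~I evolves purely through intra-$\island_{k^*}$ messages seeded by $\redExec$-determined Phase-I states. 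Combining this with the min-entropy bound $H_\infty(X_\Is \mid \E_1\cap\E_2) \geq n - \log(2n^2)$ (immediate from part~1 and the unconditional uniformity of $X_\Is$) and with $k_{\min}$ contributing at most $\log c = O(1)$ bits of information, I obtain $H(X_\Is \mid \E_1\cap\E_2, \VIEW^\corrupt_\js) \geq n - O(\log n) \geq n/2$ for sufficiently large $n$.

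\textbf{Main obstacle.} The delicate part is verifying that the simulator of $\VIEW^\corrupt_\js$ is indeed faithful, ruling out subtle side channels that might convey extra information about $X_\Is$ to $\Party_\js$. Concretely, one must show that (i)~the pattern of mid-execution corruptions triggered in Phases~II and~III is not detectable from within $\island_{k^*}$ beyond what $k_{\min}$ already encodes, (ii)~the choice of $k_{\min}$ is the only channel through which $\Party_\is$'s $X_\Is$-dependent behavior after Phase~I can influence the internal dynamics of $\island_{k^*}$, and (iii)~the conditioning on $\E_1\cap\E_2$ does not warp the conditional distribution of $\VIEW^\corrupt_\js$ beyond what the min-entropy loss $\log(1/\Pr[\E_1\cap\E_2]) = O(\log n)$ tolerates. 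The intended approach is to compare the real interaction with a ``silent-$\is$'' substitute in which $\Party_\is$ emits no outgoing messages after Phase~I, and to show by induction on rounds that every event observable inside $\island_{k^*}$ is either unaffected by the substitution or determined by $k_{\min}$; the entropy bound then follows by data processing.
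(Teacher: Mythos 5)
Your part~1 is essentially the paper's argument (degree lower bound via an isolation attack contradicting validity, then the $1/n^2$ and $1/2$ symmetry steps), though note that the isolation attack is not just ``corrupt the neighbors'': to keep the corruption budget at $\beta n/2$ one must also feed $\Party_\is$ a consistent fake execution so that its own sending behavior stays honestly distributed, which is exactly what the dual adversary $\AdvneHonest$ does; without this, blocking could cause $\Party_\is$ or the protocol to blow up its degree and the budget argument fails.

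Part~2, however, has a genuine gap: you never use the hypothesis that the \emph{final} communication graph has an $\alpha(n)$-cut, and your choice of $\Party_\js$ and your simulator do not work without it. The adversary $\AdvneCorrupt$ blocks a message between two parts $\island_k,\island_{k'}$ only while $\ssize{\edges_{\Greal(\round)}(\island_k,\island_{k'})}\leq\alpha(n)$; once that threshold is exceeded the pair is ``given up'' and messages flow freely (this is forced by the corruption budget). So picking an arbitrary $k^*\neq k_{\min}$ and a party in $\island_{k^*}$ that never \emph{directly} receives a cross-part message is not enough: other parties in $\island_{k^*}$ may receive unblocked cross-part messages carrying $X_\Is$-dependent information (in \phaseII the corrupted $\Party_\is$ plays according to the blue view, which uses the real $x_\is$) and relay it to $\Party_\js$ internally; worse, \emph{which} pairs become unblocked and \emph{when} can itself depend on $X_\Is$, so your claim that the internal dynamics of $\island_{k^*}$ are determined by $(\redExec,k_{\min},r_\Adv)$ is false in general --- this is precisely the side-channel issue illustrated by the paper's \cref{fig:manycuts}. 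The paper's fix is to condition on the random variable $\finalCut$, the identity of the sublinear cut $\sset{S_n,\comp{S}_n}$ guaranteed to survive to the end of the protocol: since that cut is a union of parts and has weight at most $\alpha(n)$ at the end, every cross-cut message in Phases~II--III was in fact blocked, so the \emph{joint} view of all of $\comp{S}_n$ (where an honest $\Party_\js$ exists by a corruption count) is simulatable from $\redExec$ and $\finalCut$ alone, and \cref{claim:LB_number_of_cuts} bounds $\finalCut$ by $2c$ bits given $\redExec$. Your $k_{\min}$-only conditioning has no analogue of this. A secondary gap: your entropy accounting implicitly assumes that conditioning on $\redExec$ is free, i.e.\ that $X_\Is$ is independent of $\redExec$ given $\E_1\cap\E_2$; this is true unconditionally but needs a proof under the conditioning (the events are defined jointly over both executions), which is the paper's \cref{claim:LB_mutualinfo} together with the indicator argument for $\E_2$ --- you assert it without argument.
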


\begin{proof}
We start by showing that for a randomly chosen $\Is\in[n]$, if party $\Party_\Is$ does not send messages to sufficiently many parties, the adversary $\AdvneCorrupt$ can violate \emph{validity} of $\pi_n$.

\begin{claim}\label{claim:LB_few_neighbors}
Let $\GphaseIIC$ denote the random variable representing the graph of the red execution $(\Party_1,\ldots,\DParty_\Is,\ldots,\Party_n)$ at the conclusion of \phaseI with adversary $\AdvneCorrupt$. Then
\[
\pr{\deg_{\GphaseIIC}(\Is)<(\beta_1/2) \cdot n}\leq \negl(n).
\]
\end{claim}

\begin{proof}
Consider an execution of $\pi_n$ with adversary $\AdvneCorrupt$ on random inputs $(X_1,\ldots,X_n)$, with random coins $(R_1,\ldots,R_n)$ for the parties and $(\tilde{X}_1,\ldots,\tilde{X}_n,\tilde{R}_1,\ldots,\tilde{R}_n,\Is)$ for the adversary.
Denote $\pr{\deg_{\GphaseIIC}(\Is)<(\beta_1/2) \cdot n} = \epsilon(n)$.

In this case, the view of all honest parties (\ie all parties but $\Party_\Is$) is identically distributed as their view in an honest execution of the protocol $\pi_n$ on input $(X_1,\ldots,X_{\Is-1},\tilde{X}_\Is,X_{\Is+1},\ldots,X_n)$, and in particular, all honest parties output $Y^\corrupt_\Is=\tilde{X}_\Is$ as the $\Is$'th coordinate of the common output, except for a negligible probability.

This is not sufficient for contradicting validity, since $\Party_\Is$ is corrupted, hence, all that is required is \emph{agreement} on the coordinate $Y^\corrupt_\Is$. However, consider the adversary $\AdvneHonest$ (\cref{fig:LBattack_honest_one,fig:LBattack_honest_two}) that instead of corrupting $\Party_\Is$, isolates him by corrupting all his neighbors.\footnote{For now, we are only interested in \phaseI of $\AdvneHonest$.}

Because the protocol is defined in the plain model, and the parties do not share correlated randomness (such as PKI), the adversary $\AdvneHonest$ indeed manages to isolate party $\Party_\Is$ such that:
\begin{enumerate}
    \item
    The view of all honest parties, except for party $\Party_\Is$, is distributed identically as in an honest execution of $\pi_n$ on inputs $(X_1,\ldots,X_{\Is-1},\tilde{X}_\Is,X_{\Is+1},\ldots,X_n)$ and random coins $(R_1,\ldots,R_{\Is-1},\tilde{R}_\Is,R_{\Is+1},\ldots,R_n)$.
    Denote the communication graph of this distribution by $\GredH$.
    \item
    The view of party $\Party_\Is$ is distributed identically as in an honest execution $\pi_n$ on inputs $(\tilde{X}_1,\ldots,\tilde{X}_{\Is-1},X_\Is,\tilde{X}_{\Is+1},\ldots,\tilde{X}_n)$ and random coins $(\tilde{R}_1,\ldots,\tilde{R}_{\Is-1},R_\Is,\tilde{R}_{\Is+1},\ldots,\tilde{R}_n)$.
    Denote the communication graph of this distribution by $\GblueH$.
\end{enumerate}

\noindent
Since both executions are distributed like honest executions on random inputs, it holds that
\[
\pr{\deg_{\GredH}(\Is)<(\beta_1/2)\cdot n}=\pr{\deg_{\GblueH}(\Is)<(\beta_1/2)\cdot n} = \epsilon(n).
\]
Therefore, with a non-negligible probability, the number of parties corrupted by $\AdvneHonest$ is bounded by
\[
\deg_{\GredH}(\Is)+\deg_{\GblueH}(\Is)<\beta_1\cdot n.
\]
Hence, with a non-negligible probability, the execution of $\pi_n$ with $\AdvneHonest$ will complete in \phaseI, where the view of the set of honest parties, except for $\Party_\Is$ is identically distributed as in an honest execution where $\Party_\Is$ has input $\tilde{X}_\Is$, and therefore the $\Is$'th coordinate of the common output will be $Y^\honest_\Is=\tilde{X}_\Is$. However, since $\beta_1\cdot n < \beta\cdot n=t$, it follows from the \emph{validity} property, that the $\Is$'th coordinate of the common output is $Y^\honest_\Is=X_\Is$ except for negligible probability. Furthermore, for every party that is not a neighbour of $\Party_\Is$ it holds that the party is honest and that its view is identically distributed in an execution with $\AdvneCorrupt$ as in an execution with $\AdvneHonest$, therefore, by \emph{agreement}, $Y^\corrupt_\Is=Y^\honest_\Is$ except for negligible probability.
Since both $X_\Is$ and $\tilde{X}_\Is$ are random elements in $\zn$, it holds that $X_\Is \neq \tilde{X}_\Is$ with a noticeable probability, and we derive a contradiction.
\QED
\end{proof}

From \cref{claim:LB_few_neighbors} it follows that with overwhelming probability, $\AdvneCorrupt$ will not complete the attack in \stepref{step:lb:finish_phase_one} of \phaseI.
We now turn to bound from below the probability of event $\E_1\cap\E_2$. Recall events $\E_1$ and $\E_2$ as defined in \cref{sec:lower_bound_proof}. Loosely speaking, $\E_1$ is the event that $\Party_\Is$ is the last party to reach high-degree threshold in both red and blue executions. $\E_2$ is the event that $\Party_\Is$ reaches the degree threshold in the red execution before the blue execution.

\begin{claim}\label{claim:LB_many_neighbors}
Consider an execution of protocol $\pi_n$ on random inputs with adversary $\AdvneCorrupt$. Then
\begin{enumerate}
    \item
    $\pr{\E_1} \geq 1/n^2 - \negl(n)$.
    \item
    $\pr{\E_2 \mid \E_1} = 1/2$.
    \item
    $\pr{\E_2 \cap \E_1} \geq 1/{2n^2} - \negl(n)$.
\end{enumerate}
\end{claim}
\begin{proof}
By \cref{claim:LB_few_neighbors}, with overwhelming probability, a random party will have degree $(\beta_1/2)\cdot n$ in an honest execution over random inputs, by the end of the protocol.
Thus, \emph{all} parties must reach degree $(\beta_1/2)\cdot n$ with overwhelming probability.
By choosing $\Is$ uniformly from $[n]$, we conclude that $\Party_\Is$ will be the \emph{last} party to do so with probability $1/n-\negl(n)$. Since both the red and the blue executions are independent honest executions over random inputs, it follows that $\Party_\Is$ will be last in \emph{both} executions with probability $1/n^2-\negl(n)$.
The second part follows by symmetry, and the third part by definition of conditional probability.
\QED
\end{proof}

From \cref{claim:LB_many_neighbors} it follows that with a noticeable probability, the adversary $\AdvneCorrupt$ will proceed to \phaseIII. We now show that the size of the partition $\Gamma_1$ (set at the beginning of \phaseII) in this case is constant, and only depends on $\beta$.
In particular, we wish to argue that the identity of the final remaining sublinear cut in the graph cannot reveal too much information about the input $X_\Is$. (Recall that $\redExec$ is the execution with $X_\Is$ replaced by $\tilde{X}_\Is$.)
The proof follows from the graph-theoretic theorem (\cref{lem:number_of_cuts}), stated in \cref{sec:GT_lemma}. We start by looking at the communication graph at the beginning of \phaseII \emph{without} the chosen party $\Party_\Is$ (which depends only on the red execution). Party $\Party_\Is$ is added to one of the sets in the partition based on his edges at the end of \phaseII. Finally, we show that given the red-execution graph there are at most $2^{2c}$ possible choices for the final cut.

\begin{claim}\label{claim:LB_number_of_cuts}
Let $c > 1$ be a constant integer satisfying $\beta_1/2\geq 1/c$. Then, for sufficiently large $n$, in an execution with $\AdvneCorrupt$, conditioned on the event $\E_1\cap\E_2$, it holds that:
\begin{enumerate}
    \item\label{item:const_size_partition}
    There exists an $(\alpha(n),n/c)$-partition $\Gamma_1$ of $\GphaseIIC\setminus\set{\Is}$  of size at most $c$.
    \item\label{item:istar_set_in_partition}
    At the end of \phaseII there exists a set $\island_k\in\Gamma_1$ such that $\ssize{\edges_{\GphaseIIIC}\left(\set{\Is},\island_k\right)}>\alpha(n)$.
    \item
    Conditioned on the event $\E_1\cap \E_2$ and on $\redExec$, the random variable $\finalCut$ has at most $2c$ bits of entropy, \ie
    \[
    H\left(\finalCut\mid \E_1\cap \E_2, \redExec\right)\leq 2c.
    \]
\end{enumerate}
\end{claim}

\begin{proof}
For $n\in\N$, denote by $\GphaseIIC(n)$ the ``red graph'' that is obtained by $\pi_n$ running on input/randomness from $\redExec$ with $\AdvneCorrupt$.
From the definition of the events $\E_1$ and $\E_2$, it holds that the degree of every node in the red graph $\GphaseIIC(n)$ is greater or equal to $(\beta_1/2)\cdot n \geq n/c$. By removing the red node $\Is$ (corresponding to the virtual party $\DParty_\Is$ running on input $\tilde{X}_\Is$), we obtain the graph $G_{n-1}=\GphaseIIC(n)\setminus\sset{\Is}$ of size $n-1$.
Since by removing $\Is$, the degree of each node reduces by at most $1$, it holds that the degree of each vertex in $G_{n-1}$ is at least $n/c-1$. By applying \cref{lem:number_of_cuts} on the ensemble of graphs $\sset{G_{n-1}}_{n\in\N}$, for sufficiently large $n$, there exists a $(\alpha(n),n/c)$-partition of $G_n$, denoted $\Gamma_1=\sset{\island_1,\ldots,\island_\ell}$, of size $\ell<c$.\footnote{To be more precise, applying \cref{lem:number_of_cuts} on $\sset{G_{n-1}}_{n\in\N}$ gives an $(\alpha(n-1),(n-1)/c)$-partition. For clarity, we abuse the notation and write $(\alpha(n),n/c)$. This will not affect the subsequent calculations.}

From the definition of the events $\E_1$ and $\E_2$, it holds that the degree of $\Party_\Is$ in the blue execution will reach the threshold $(\beta_1/2)\cdot n$ (yet not before $\Party_\Is$ reaches the threshold in the red execution). It follows that the attack will enter \phaseIII. If for every $k\in[\ell]$, it holds that $\ssize{\edges_{\GphaseIIIC}\left(\sset{\Is},\island_k\right)}\leq\alpha(n)$, then the total number of neighbors of $\Party_\Is$ is bounded by $\ell\cdot\alpha(n)\in o(n)$, and we get a contradiction.

Finally, since $\Gamma_1$ is an $(\alpha(n),n/c)$-partition of the $(n-1)$-size graph $\GphaseIIC(n)\setminus\sset{\Is}$, it holds that every $\alpha(n)$-cut $\sset{S_n,\comp{S}_n}$ of $\GphaseIIC(n)\setminus\sset{\Is}$ can be represented as $S_n=\bigcup_{k\in A} U_k$ and $\comp{S}_n=\bigcup_{k\in [\ell]\setminus A} U_k$ for some $A\subsetneq[\ell]$. There are at most $2^c$ such subsets. Next, consider the $n$-size graph $\GphaseIIC(n)$. When adding the node $\Is$ (and its edges) back to $\GphaseIIC(n)\setminus\sset{\Is}$, the number of potential $\alpha(n)$-cuts at most doubles, since for every potential $\alpha(n)$-cut $\sset{S,\comp{S}}$ in $\GphaseIIC(n)\setminus\sset{\Is}$, either $\sset{S\cup{\Is},\comp{S}}$ is an $\alpha(n)$-cut in $\GphaseIIC(n)$, or $\sset{S,\comp{S}\cup{\Is}}$ is an $\alpha(n)$-cut in $\GphaseIIC(n)$
(or neither options induces an $\alpha(n)$-cut in $\GphaseIIC(n)$). Therefore, there are at most $2^{2c}$ potential $\alpha(n)$-cuts in $\GphaseIIC(n)$. It follows that $2c$ bits are sufficient to describe the support of $\finalCut$ given $\redExec$ (which in particular fully specifies the graph $\GphaseIIC(n)$).
\QED
\end{proof}

In \cref{claim:LB_honestparty,claim:LB_view_simulatable,claim:LB_mutualinfo,claim:LB_entropy}, we prove that there exists an honest party $\Party_\js$ that receives a bounded amount of information about $X_\Is$ by the end of the protocol's execution.
For $n\in\N$, denote by $\GendC(n)$ the final communication graph of $\pi_n$ when running on input/randomness from $\inputCoins$ with $\AdvneCorrupt$.

\begin{claim}\label{claim:LB_honestparty}
Condition on the event $\E_1 \cap \E_2$. Then, at the conclusion of the protocol execution with $\AdvneCorrupt$ there exists an $\alpha(n)$-cut, denoted $\sset{S_n,\comp{S}_n}$, in the induced graph $\GendC(n)$, and $\js\in[n]$ such that party $\Party_\js$ is honest, $\Is\in S_n$, and $\js\in\comp{S}_n$.
\end{claim}
\begin{proof}
By assumption, the communication graph of $\pi$ is not $\alpha(n)$-edge-connected; therefore, by definition, for sufficiently large $n$ there exists a cut $\sset{S_n,\comp{S}_n}$ of weight at most $\alpha(n)$ in $\GendC(n)$. Let $\sset{S_n,\comp{S}_n}$ be such a cut, and assume without loss of generality that $\Is\in S_n$. We now prove that there exists an honest party in the complement set $\comp{S}_n$.

Let $\Gamma_1=\sset{\island_1,\ldots,\island_\ell}$ be the $(\alpha(n),n/c)$-partition of $\GphaseIIC(n)\setminus\sset{\Is}$ defined in \phaseII of the attack. By Item~\ref{item:istar_set_in_partition} of \cref{claim:LB_number_of_cuts}, at the end of the \phaseIII there exists $k\in[\ell]$ for which $\ssize{\edges(\sset{\Is},U_k)}\geq \alpha(n)$.

Since $\sset{S_n,\comp{S}_n}$ is an $\alpha(n)$-cut of $\GendC(n)$, it holds that $\sset{S_n\setminus\sset{\Is},\comp{S}_n}$ is an $\alpha(n)$-cut of $\GphaseIIC\setminus\sset{\Is}$ (since any $\alpha$-cut in a graph remains an $\alpha$-cut when removing additional edges). It holds that $S_n\setminus\sset{\Is}=\bigcup_{k\in A} \island_k$ for some $\emptyset\neq A\subsetneq[\ell]$. Similarly, $\comp{S}_n=\bigcup_{k\in [\ell]\setminus A} \island_k$.
Recall that in an execution with $\AdvneCorrupt$ we have the following corruption pattern:
\begin{itemize}
    \item
    During \phaseI only $\Party_\Is$ gets corrupted.
    \item
    During \phaseII, for every pair of $\island_k,\island_{k'}\in\Gamma_1$, at most $\alpha(n)$ communicating parties get corrupted.
    \item
    During \phaseIII, for every pair of $\Visland_k,\Visland_{k'}\in\Gamma_2$, at most $\alpha(n)$ communicating parties get corrupted. (Recall that $\Visland_k\in\Gamma_2$ either equals $\island_k$ or equals $\island_k\cup\sset{\Is}$.)
\end{itemize}
It follows that there are at most $\ell^2\cdot \alpha(n)$ corrupted parties in $\comp{S}_n$.
By definition, $\ssize{\comp{S}_n}\geq n/c$ (since $\ssize{\island_k}\geq n/c$, for each $k\in[\ell]$).
By Item~\ref{item:const_size_partition} of \cref{claim:LB_number_of_cuts}, $\ell\leq c$ for some constant $c$, meaning that there is a linear number of parties in each side of the cut, but  only a sublinear number of corruptions, and the claim follows.
\QED
\end{proof}

We now prove that the view of an honest $\Party_\js\in\comp{S}_n$ can be perfectly simulated given: (1) All inputs and random coins that were used in the \emph{red} execution. This information is captured in the random variable $\redExec$, and deterministically determines the partition $\Gamma_1$ at the beginning of \phaseII. (2) The identities of the parties in $\comp{S}_n$. This information is captured in the random variable $\finalCut$.

\begin{claim}\label{claim:LB_view_simulatable}
Conditioned on the event $\E_1 \cap \E_2$, the view $\VIEW^\corrupt_\js$ of honest party $\Party_\js$ at the end of the protocol is simulatable by $\redExec$ and by $\finalCut$.
\end{claim}
\begin{proof}
Let $\sset{S_n,\comp{S}_n}$ be the $\alpha(n)$-cut that is guaranteed to exist at the end of the protocol (the ``minimal'' such cut according to some canonical ordering). Assume that $\Is\in S_n$, and let $\js\in \comp{S}_n$ be an index of the honest party that exists by \cref{claim:LB_honestparty}.
The view of $\Party_\js$ is defined as its input, random coins, and the messages it received during the protocol.\footnote{Formally, in every round every party receives a vector of $n$ messages from $\fpsmt$, where some may be the empty string. Therefore, a party also knows the identity of the sender of every message.}

During \phaseI the view of $\Party_\js$ is identically distributed as a view of an honest party in an honest execution over inputs $(X_1,\ldots,\tilde{X}_\Is, \ldots, X_n)$ and random coins $(R_1,\ldots,\tilde{R}_\Is, \ldots, R_n)$ (without loss of generality, we can assume that the joint view of all honest parties is exactly $(X_1,\ldots,\tilde{X}_\Is, \ldots, X_n)$ and $(R_1,\ldots,\tilde{R}_\Is, \ldots, R_n)$, \ie all the information that deterministically defines the red execution). Indeed, this can be easily simulated given the random variable $\redExec$ by running an honest execution until all parties have degree $(\beta_1/2)\cdot n$. Furthermore, the partition $\Gamma_1=(\island_1,\ldots,\island_\ell)$ is deterministically determined by $\redExec$, as well as the view of every honest party at the end of \phaseI (except for $\Party_\Is$).

Next, consider the parties in $\comp{S}_n$ (which are determined by the random variable $\finalCut$). By \cref{lem:number_of_cuts}, for large enough $n$ it holds that $\comp{S}_n=\cup_{k\in A} \island_k$ for some nonempty $A\subsetneq[\ell]$. That is, given $\Gamma_1=\sset{\island_1,\ldots, \island_\ell}$, the random variable $\finalCut$ is fully specified by $A\subseteq[\ell]$, which can be described as an element of $\zo^\ell$. As before, and without loss of generality, for every $j\in\comp{S}_n$, the view of party $\Party_j$ at the end of \phaseI can also be set as $(X_1,\ldots,\tilde{X}_\Is, \ldots, X_n)$ and $(R_1,\ldots,\tilde{R}_\Is, \ldots, R_n)$.

Observe that given the final cut $\sset{S_n,\comp{S}_n}$, it holds that during Phases~II and III every message that is being sent to a party in $\comp{S}_n$ by a party in $S_n$ is ignored. Therefore, it may seem that the simulation can be resumed by continuing an honest execution of all parties in $\comp{S}_n$ based on their joint view at the end of \phaseI. However, this approach will not suffice since the adversary in the real execution blocks any message that is sent by parties in different $\island_k$'s, even if both parties are members of $\comp{S}_n$. To simulate this behavior, it is important to know exactly when to block a message sent between different $\island_k$'s and when to keep it. Indeed, this is the reason for keeping track of the number of edges between every pair of $\island_k$'s and blocking message only until the threshold $\alpha(n)$ is reached.

The simulation of Phases~II and III therefore proceeds by running the protocol honestly for every $\Party_j$, with $j\in\comp{S}_n$, however, with the following two exceptions. First, every party is simulated as if he does not receive any message from parties outside of $\comp{S}_n$, and whenever he is instructed to send a message to a party outside of $\comp{S}_n$, a dummy party is simulated as receiving this message. Second, any message that is sent between two parties $\Party_{j_1}$ and $\Party_{j_2}$ is discarded whenever $j_1\in\island_k$ and $j_2\in\island_{k'}$ (respectively, $j_1\in\Visland_k$ and $j_2\in\Visland_{k'}\setminus\sset{\Is}$), for some $k\neq k'$, and it holds that $\ssize{\edges(\island_k,\island_{k'})}<\alpha(n)$ (respectively, $\ssize{\edges(\Visland_k,\Visland_{k'})}<\alpha(n)$).

The above simulation identically emulates the view of very honest party in $\comp{S}_n$, since the adversary indeed discards every message that is sent from some party in $S_n$ to some party in $\comp{S}_n$, as well as from parties in different $\island_k$'s, but otherwise behaves honestly.
\QED
\end{proof}

The core of the proof (\cref{claim:LB_entropy}) is showing that a constant number of bits suffices to describe $\finalCut$. We note that while $\redExec$ is independent of $X_\Is$ conditioned on $\E_1$ and on $\Is$ (as shown in \cref{claim:LB_mutualinfo} below), there is some correlation between $\finalCut$ and $X_\Is$.
\begin{claim} \label{claim:LB_mutualinfo}
Conditioned on $\Is$ and on the event $\E_1$, the random variables $X_\Is$ and $\redExec$ are independent. That is,
\[
\mutualinfo{X_\Is;\redExec \mid \E_1,\Is}=0.
\]
\end{claim}
\begin{proof}
Recall that over the sampling of
\[
\inputCoins=\left(X_1,\ldots,X_n,R_1,\ldots,R_n,\tilde{X}_1,\ldots,\tilde{X}_n,\tilde{R}_1,\ldots,\tilde{R}_n,\Is\right),
\]
the random variable $\redExec$ is defined as
\[
\redExec=\left(X_{-\Is},\tilde{X}_\Is,R_{-\Is},\tilde{R}_\Is\right).
\]
Similarly, define
\[
\blueExec=\left(\tilde{X}_{-\Is},X_\Is,\tilde{R}_{-\Is},R_\Is\right).
\]
Note that (even) given the value of $\Is$, $\redExec$ and $\blueExec$ are simply uniform distributions, independent of each other.
That is, for every $\is\in[n]$
\[
\mutualinfo{\blueExec~;~\redExec \mid \Is=\is}=0.
\]
This in turn implies that
\[
\mutualinfo{\blueExec~;~\redExec \mid \Is}=0.
\]

We observe that given $\Is$, the event $\E_1$ decomposes as a conjunction of independent events.
Namely, consider the (deterministic) predicate
\[
\lastParty\left(\left(x_{-\is},x_\is,r_{-\is},r_\is\right),\is\right),
\]
which simulates an honest execution of $\pi_n$ with corresponding inputs and randomness, and outputs~$1$ if party $\Party_\is$ is the last party to reach degree $\beta n/4$. Then,
\[
\E_1 = \left(\lastParty\left(\blueExec,\Is\right)=1\right) \wedge \left(\lastParty\left(\redExec,\Is\right)=1\right).
\]
It follows that

\begin{align*}
\mutualinfo{X_\Is~;~\redExec \mid \Is, \E_1}
&\leq
\mutualinfo{\blueExec~;~\redExec \mid \Is, \E_1}\\
&=
\mutualinfo{
\begin{array}{c|c}
\Centerstack{
\blueExec~;~\redExec
}
&
\Centerstack{
$\Is$\\
$\lastParty\left(\blueExec,\Is\right)=1$\\
$\lastParty\left(\redExec,\Is\right)=1$
}\\
\end{array}
}.
\end{align*}

\noindent
We will show that the last term is equal to zero. This proves our claim, as mutual information cannot be negative.
\begin{align*}
0
&=
\mutualinfo{\blueExec~;~\redExec \mid \Is}\\
&=
\mutualinfo{\blueExec, \lastParty\left(\blueExec,\Is\right)~;~\redExec, \lastParty\left(\redExec,\Is\right) \mid \Is}\\
&\geq
\mutualinfo{
\begin{array}{c|c}
\Centerstack{
\blueExec~;~\redExec
}
&
\Centerstack{
$\Is$\\
$\lastParty\left(\blueExec,\Is\right)$\\
$\lastParty\left(\redExec,\Is\right)$
}\\
\end{array}
},
\end{align*}

\noindent
where the second equality follows from the fact that $\lastParty$ is a deterministic function of
inputs and randomness of all the parties and $\Is$. This implies that
\begin{align*}
\mutualinfo{
\begin{array}{c|c}
\Centerstack{
\blueExec~;~\redExec
}
&
\Centerstack{
$\Is$\\
$\lastParty\left(\blueExec,\Is\right)$\\
$\lastParty\left(\redExec,\Is\right)$
}\\
\end{array}
}
&=
0.
\end{align*}

\noindent
Finally, since the event $\E_1$ occurs with non-zero probability, it holds that
\begin{align*}
\mutualinfo{
\begin{array}{c|c}
\Centerstack{
\blueExec~;~\redExec
}
&
\Centerstack{
$\Is$\\
$\lastParty\left(\blueExec,\Is\right)=1$\\
$\lastParty\left(\redExec,\Is\right)=1$
}\\
\end{array}
}
&=
0,
\end{align*}

\noindent
This concludes the proof of \cref{claim:LB_mutualinfo}.
\QED
\end{proof}

\begin{claim}\label{claim:LB_entropy}
For sufficiently large $n$, conditioned on the event $\E_1 \cap \E_2$, the input $X_\Is$ retains $n/2$ bits of entropy given the view of honest party $\Party_\js$, \ie
\[
H(X_\Is \mid \E_1 \cap \E_2, \VIEW^\corrupt_\js)\geq n/2.
\]
\end{claim}
\begin{proof}

First, since $\pr{\E_1}\geq 1/n^2$, it holds that
\begin{align}
H(X_\Is \mid \E_1) \geq n-\log(n^2). \label{eq:LB1}
\end{align}
Indeed, for an arbitrary $x\in\zn$,
\[
\pr{X_\Is=x \mid \E_1} = \frac{\pr{X_\Is=x,\E_1}}{\pr{\E_1}} \leq \frac{\pr{X_\Is=x}}{\pr{\E_1}} \leq \frac{n^2}{2^n},
\]
where the last inequality uses the fact that $X_\Is$ is uniformly distributed in $\zn$ and that $\pr{\E_1}\geq 1/n^2$.
\cref{eq:LB1} follows by the following computation.
\begin{align*}
H(X_\Is \mid \E_1)
&= \sum_{x\in\zn}\pr{X_\Is=x \mid \E_1} \cdot \log\left(\frac{1}{\pr{X_\Is=x \mid \E_1}}\right) \\
&\geq \sum_{x\in\zn}\pr{X_\Is=x \mid \E_1}\cdot\log\left(\frac{2^n}{n^2}\right) \\
&= \log\left(\frac{2^n}{n^2}\right)\cdot \underbrace{\sum_{x\in\zn}\pr{X_\Is=x \mid \E_1}}_{=\ 1} \\
&= n-\log(n^2).
\end{align*}

\noindent
Since $\Is$ represents an element of the set $[n]$, the support of $\Is$ is $\zo^{\log(n)}$, and it holds that
\begin{align}
H(X_\Is \mid \E_1, \Is)
&\geq H(X_\Is \mid \E_1) - \log(n). \label{eq:LB2-Is}
\end{align}

\noindent
In addition, since conditioned on $\E_1$ and $\Is$, the random variables $\redExec$ and $X_\Is$ are independent (\cref{claim:LB_mutualinfo}), it holds that
\begin{align}
H(X_\Is \mid \E_1, \Is, \redExec)
&= H(X_\Is \mid \E_1, \Is) - \mutualinfo{X_\Is;\redExec \mid \E_1, \Is} \notag \\
&= H(X_\Is \mid \E_1, \Is). \label{eq:LB2}
\end{align}

\noindent
Next, since $\pr{\E_2 \mid \E_1}=1/2$ (\cref{claim:LB_many_neighbors}), it follows that
\begin{align}
H(X_\Is \mid \E_1 \cap \E_2, \Is, \redExec) \geq 2\cdot H(X_\Is \mid \E_1, \Is, \redExec)-n -2. \label{eq:LB3}
\end{align}

\noindent
To prove this, we define an indicator random variable $\Ind_2$ for the event $\E_2$, \ie $\pr{\Ind_2=1}=\pr{\E_2}$, and use the fact that since $\Ind_2$ is an indicator random variable, the mutual information $\mutualinfo{X_\Is;\Ind_2 \mid \redExec,\E_1, \Is}$ cannot be more than $1$. Therefore,
\begin{align}
H(X_\Is \mid \E_1, \Is, \redExec,\Ind_2)
&= H(X_\Is \mid \E_1, \Is, \redExec) - \mutualinfo{X_\Is;\Ind_2 \mid \E_1, \Is, \redExec} \notag\\
&\geq H(X_\Is \mid \E_1, \Is, \redExec) - 1. \notag
\end{align}

\noindent
\cref{eq:LB3} now follows by the following computation.
\begin{align*}
H(X_\Is \mid \E_1, \Is, \redExec)-1
&\leq H(X_\Is \mid \E_1, \Is, \redExec,\Ind_2) \\
&= \pr{\Ind_2=1 \mid \E_1}\cdot H(X_\Is \mid \E_1, \Is, \redExec,\Ind_2=1) \\
& \quad + \pr{\Ind_2=0 \mid \E_1}\cdot \underbrace{H(X_\Is \mid \E_1, \Is, \redExec,\Ind_2=0)}_{\leq\ n} \\
&\leq \underbrace{\pr{\E_2 \mid \E_1}}_{=\ 1/2} \cdot H(X_\Is \mid \E_1 \cap \E_2, \Is, \redExec) + \underbrace{\pr{\neg \E_2 \mid \E_1}}_{=\ 1/2}\cdot n \\
&\leq 1/2 \cdot H(X_\Is \mid \E_1 \cap \E_2, \Is, \redExec) + n/2.
\end{align*}

\noindent
By \cref{claim:LB_number_of_cuts}, the support of $\finalCut$ is $\zo^{2c}$ and it holds that
\begin{align}
H(X_\Is &\mid \E_1 \cap \E_2, \Is, \redExec,\finalCut) \nonumber\\
&= H(X_\Is \mid \E_1 \cap \E_2, \Is, \redExec) - \mutualinfo{X_\Is \mid \E_1 \cap \E_2, \Is, \redExec, \finalCut} \nonumber\\
&\geq H(X_\Is \mid \E_1 \cap \E_2, \Is, \redExec) - H(\finalCut \mid \E_1 \cap \E_2, \Is, \redExec) \nonumber\\
&\geq H(X_\Is \mid \E_1 \cap \E_2, \Is, \redExec) - 2c. \label{eq:LB4}
\end{align}

\noindent
The first inequality holds since the conditional mutual information is upper-bounded by the conditional entropy term.
Finally, by \cref{claim:LB_view_simulatable} it holds that
\begin{align}
H(X_\Is \mid \E_1 \cap \E_2, \VIEW^\corrupt_\js)\geq H(X_\Is \mid \E_1 \cap \E_2,\Is, \redExec,\finalCut). \label{eq:LB5}
\end{align}

\noindent
By combining \cref{eq:LB1,eq:LB2,eq:LB2-Is,eq:LB3,eq:LB4,eq:LB5} together, it holds that
\begin{align*}
H(X_\Is \mid \E_1 \cap \E_2, \VIEW^\corrupt_\js)
&\geq H(X_\Is \mid \E_1 \cap \E_2,\Is, \redExec,\finalCut)\\
&\geq H(X_\Is \mid \E_1 \cap \E_2,\Is, \redExec)-2c\\
&\geq 2 \cdot H(X_\Is \mid \E_1,\Is, \redExec)-n -2-2c\\
&= 2 \cdot H(X_\Is \mid \E_1,\Is)-n -2-2c\\
&\geq 2 \cdot H(X_\Is \mid \E_1)-2 \cdot \log(n)-n -2-2c\\
&\geq 2 n-2 \cdot \log(n^2)-2 \cdot \log(n) -n -2-2c\\
&\geq n-2 \cdot (\log(n^2)+\log(n)+1 +c).
\end{align*}

\noindent
The claim follows for sufficiently large $n$.
\QED
\end{proof}

This concludes the proof of \cref{lem:lb:entropy}.
\QED
\end{proof} 

\subsubsection{Proving the Common Output Contains \texorpdfstring{$X_\Is$}{Lg}\SUBSUBSEC}\label{sec:LB_lemma_correct_output}

We now turn to the second main lemma of the proof.
We show that although party $\Party_\Is$ is corrupted in the execution with $\AdvneCorrupt$, all honest parties must output its initial input value $X_\Is$ at the conclusion of the protocol. This is done by analyzing an execution with the dual adversary, $\AdvneHonest$ (described in \cref{sec:LB_adv}), that does not corrupt the chosen party $\Party_\Is$ but simulates the attack by $\AdvneCorrupt$ toward the honest parties.

\begin{lemma}\label{lem:lb:correct_output}
Consider an execution of $\pi_n$ on random inputs $(X_1,\ldots,X_n)$ for the parties with adversary $\AdvneHonest$. Then, conditioned on the event $\E_1 \cap \E_2$ it holds that:
\begin{enumerate}
    \item
    The $\Is$'th coordinate of the common output $Y^\honest_\Is$ equals the initial input $X_\Is$ of $\Party_\Is$, except for negligible probability, \ie
    \[
    \pr{Y^\honest_\Is=X_\Is \mid \E_1 \cap \E_2}\geq 1-\negl(n).
    \]
    \item
    The $\Is$'th coordinate of the common output $Y^\honest_\Is$ in an execution with $\AdvneHonest$ equals the $\Is$'th coordinate of the common output $Y^\corrupt_\Is$ in an execution with $\AdvneCorrupt$, except for negligible probability, \ie
    \[
    \pr{Y^\honest_\Is=Y^\corrupt_\Is \mid \E_1 \cap \E_2}\geq 1-\negl(n).
    \]
\end{enumerate}
\end{lemma}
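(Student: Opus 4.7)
The plan is to establish the two parts of the lemma in turn, with most of the effort going into Part~1.

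For Part~1, I would argue directly from validity of $\pi_n$: since $\Party_\Is$ remains honest under $\AdvneHonest$, it suffices to show that $\AdvneHonest$ corrupts strictly fewer than $t(n) = \beta n$ parties. First, conditioning on $\E_1 \cap \E_2$ rules out both failure conditions of $\AdvneHonest$: event $\E_2$ ensures that Phase~I does not abort on the blue branch (the red threshold is reached weakly before the blue one), and event $\E_1$ ensures that \stepref{step:lb:honest_phaseI_check_degree} does not fire (all parties other than $\DParty_\Is$ already have degree at least $(\beta_1/2)\cdot n$ in $\GphaseIIH$). I would then bound corruptions phase by phase. Phase~I contributes at most $(\beta_1/2)n$ corruptions along the red branch (one per edge adjacent to $\is$ in the red graph, stopping at the threshold) and at most $(\beta_1/2)n$ along the blue branch. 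Phase~II continues to intercept $\Party_\Is$'s outgoing real messages, but the Phase~I blue-branch bridge corruptions plus the additional Phase~II bridge corruptions together do not exceed $(\beta_1/2)n$, since Phase~II terminates when the real-graph degree of $\is$ reaches this threshold; Phase~II and Phase~III also trigger at most $\alpha(n)$ corruptions per unordered pair of islands in $\Gamma_1$. By \cref{claim:LB_number_of_cuts} the number of islands is bounded by the constant $c$, so the total inter-island contribution is $O(c^2 \cdot \alpha(n)) \in o(n)$. Summing, $\AdvneHonest$ corrupts at most $\beta_1 n + o(n) = (\beta/2) n + o(n) < \beta n$ parties for sufficiently large $n$, and validity of $\pi_n$ yields $Y^\honest_\Is = X_\Is$ except with negligible probability.

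For Part~2, I would take the honest party $\Party_\js \in \comp{S}_n$ provided by \cref{claim:LB_honestparty} and show that, when both executions are coupled to the same sample of $\inputCoins$, the view of $\Party_\js$ is \emph{identically} distributed under $\AdvneHonest$ and $\AdvneCorrupt$. The key observation is that both adversaries present to the set $\{\Party_i\}_{i \neq \Is}$ the very same ``red execution'': in $\AdvneCorrupt$, the corrupted $\Party_\Is$ is simulated by $\DParty_\Is$ on $(\tilde{X}_\Is,\tilde{R}_\Is)$; in $\AdvneHonest$, the honest $\Party_\Is$'s outgoing and incoming messages are hijacked via corrupted neighbours so that its counterparties effectively interact with exactly the same virtual $\DParty_\Is$. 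Consequently both attacks induce the same graph $\GphaseII \setminus \{\Is\}$, compute the same partition $\Gamma_1$, identify the same canonical $\alpha(n)$-cut $\finalCut$, and block exactly the same inter-island edges in Phases~II and~III; in particular, $\Party_\js$ is honest in both attacks. Because $\Party_\js$ sits in $\comp{S}_n$ and is not adjacent to $\Party_\Is$ in the final graph, its list of processed messages depends only on red-execution data, which is coupled identically across the two attacks. Thus $\Party_\js$'s view---and hence its output---is identical under the two attacks, and agreement within each execution extends this equality to $Y^\honest_\Is = Y^\corrupt_\Is$ except with negligible probability.

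The main obstacle will be rigorously justifying the coupling argument for Part~2: one must verify round-by-round that the \emph{extra} actions of $\AdvneHonest$ relative to $\AdvneCorrupt$---namely intercepting the still-honest $\Party_\Is$'s outgoing real messages throughout Phases~I--II---do not leak into the view of $\Party_\js$. This reduces to the facts that (a) the partition $\Gamma_1$ and the inter-island corruption triggers are deterministic functions of the red-execution graph, which is identical in both attacks under the coupling; and (b) $\Party_\js$ has no edge to $\Party_\Is$ in the final graph (all such edges across the cut are severed), so the extra interception corruptions only affect messages that $\Party_\js$ would never have processed anyway. Once this coupling is established, both parts of the lemma follow cleanly.
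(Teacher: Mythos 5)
Your Part~1 is essentially the paper's argument: $\Party_\Is$ is never corrupted by $\AdvneHonest$, the total corruption stays below $\beta n$ (red threshold $(\beta_1/2)n$, blue/real-degree threshold $(\beta_1/2)n$, plus $O(c^2)\cdot\alpha(n)\in o(n)$ bridge corruptions), so validity applies; the only point you leave implicit is that conditioning on $\E_1\cap\E_2$ is harmless, which requires knowing $\pr{\E_1\cap\E_2}\geq 1/2n^2-\negl(n)$ also in the execution with $\AdvneHonest$ (this is what \cref{claim:LB_few_neighbors_honest,claim:LB_E1E2} supply), so that the negligible validity-failure probability remains negligible after conditioning.

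Part~2, however, has a genuine gap. You take the party $\Party_\js\in\comp{S}_n$ guaranteed by \cref{claim:LB_honestparty} and assert that ``in particular, $\Party_\js$ is honest in both attacks.'' That claim is only available for the execution with $\AdvneCorrupt$, where the corrupted set is just $\Party_\Is$ plus $o(n)$ inter-island bridge parties. Under $\AdvneHonest$ the adversary additionally corrupts \emph{every} neighbor of $\Party_\Is$ in the blue execution during Phase~I, and every party the real $\Party_\Is$ talks to in Phase~II until its real degree reaches $(\beta_1/2)n$ --- a linear-size set whose location is governed by an independent honest execution on fresh random inputs, hence essentially uncorrelated with the red graph and with the cut $\sset{S_n,\comp{S}_n}$. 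With constant probability $\Party_\js$ lands in this set, in which case it is corrupted under $\AdvneHonest$ (and its behaviour there is not the honest continuation you describe, since it is instructed to relay blue-execution messages to $\Party_\Is$); your coupling then yields nothing, because agreement does not constrain a corrupted party's output. This is precisely the subtlety the paper flags (``we cannot guarantee existence of honest parties across the cut''). The repair is the paper's counting argument: since $\beta<1/3$, the set $C_n^\honest$ of parties corrupted by $\AdvneHonest$ and the corresponding set $C_n^\corrupt$ (red/blue neighbors of $\Party_\Is$ plus bridge parties in the $\AdvneCorrupt$ run) each have size $<n/3$, so for every value of $\inputCoins$ some party lies outside their union; under the coupling that party is honest in both executions with an identical view, hence outputs the same vector in both, and agreement within each execution (with $\Party_\js$ honest in the $\AdvneCorrupt$ run) transfers equality of the $\Is$'th coordinate. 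Note that this common-view party need not lie across the cut at all --- the cut and $\Party_\js$ are needed only for the entropy argument of \cref{lem:lb:entropy}, not here.
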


\begin{proof}
The first part of the lemma follows (almost) from the \emph{validity} property of parallel broadcast. It is only left to prove that the event $\E_1 \cap \E_2$ is non-negligible in an execution with $\AdvneHonest$. However, as $\E_1$ and $\E_2$ are events on honest executions (mirrored by both $\AdvneHonest$ and $\AdvneCorrupt$ in \phaseI), this follows by \cref{claim:LB_many_neighbors}. We spell this out in \cref{claim:LB_few_neighbors_honest,claim:LB_E1E2}.

\begin{claim}\label{claim:LB_few_neighbors_honest}
Let $\GphaseIIH$ denote the random variable representing the graph of the red execution $(\Party_1,\ldots,\DParty_\Is,\ldots,\Party_n)$ at the conclusion of \phaseI with adversary $\AdvneHonest$. Then
\[
\pr{\deg_{\GphaseIIH}(\Is)<(\beta_1/2) \cdot n}\leq \negl(n).
\]
\end{claim}
\begin{proof}
As proven in \cref{claim:LB_few_neighbors}, in an execution on random inputs with adversary $\AdvneHonest$, it holds that the degree of every party reaches $(\beta_1/2) \cdot n$ except for negligible probability.
\QED
\end{proof}

\begin{claim}\label{claim:LB_E1E2}
Consider an execution of protocol $\pi_n$ on random inputs with adversary $\AdvneHonest$. Then
\[
\pr{\E_2 \cap \E_1} \geq 1/{2n^2} - \negl(n).
\]
\end{claim}
\begin{proof}
By \cref{claim:LB_few_neighbors_honest} every party has degree $(\beta_1/2) \cdot n$ with overwhelming probability. The rest of the proof follows exactly as the proof of \cref{claim:LB_many_neighbors}.
\QED
\end{proof}

\noindent
This complete the proof of the first part of the \cref{lem:lb:correct_output}.

For proving the second part of the lemma, we show that in an execution with $\AdvneHonest$ there exists an honest party whose view, and in particular his output, is identically distributed as in an execution with $\AdvneCorrupt$. This follows from a simple counting argument.

First, denote by $C_n^\honest$ the random variable representing the set of corrupted parties in an execution with $\AdvneHonest$.
Since $\beta<1/3$ and there are at most $\beta n$ corrupted parties, it holds that $\ssize{C_n^\honest}<n/3$.
Note that by construction, $\AdvneHonest$ corrupts during \phaseI all neighbors of the virtual party $\DParty_\Is$ in the red execution and all neighbors of $\Party_\Is$ in the blue execution, and in Phases II and III, all parties that belong to some $\island_k\in\Gamma_1$ (respectively, $\Visland_k\in\Gamma_2$) and receive messages from parties in $\island_{k'}$ (respectively, $V_{k'}$) for $k\neq k'$ (until there are $\alpha(n)$ such edges, and except for $\Party_\Is$).

Second, denote by $C_n^\corrupt$ the random variable representing the following set of parties in an execution with $\AdvneCorrupt$ (note that these parties are not necessarily corrupted here):
All neighbors of the virtual party $\DParty_\Is$ in the red execution and all neighbors of $\Party_\Is$ in the blue execution during \phaseI, and all parties that get corrupted during Phases II and III, \ie all parties that belong to some $\island_k\in\Gamma_1$ (respectively, $\Visland_k\in\Gamma_2$) and receive messages from parties in $\island_{k'}$ (respectively, $\Visland_{k'}$) for $k\neq k'$ (until there are $\alpha(n)$ such edges, and except for $\Party_\Is$).
By symmetry, it holds that $\ssize{C_n^\corrupt}<n/3$.

For every value of \inputCoins, the size of the union of these sets $C_n^\honest \cup C_n^\corrupt$ is at most $2n/3$. By the constructions of $\AdvneHonest$ and $\AdvneCorrupt$, the view of any party outside of this union is identically distributed in an execution with $\AdvneHonest$ as in an execution with $\AdvneCorrupt$, conditioned on the event $\E_1\cap\E_2$. Indeed, in both executions the view of every party outside of $C_n^\honest \cup C_n^\corrupt$ is distributed according to an honest execution on random inputs during \phaseI. During \phaseII, the view is distributed according the continuation of the protocol under omission failures between every pair $U_k,U_{k'}\in\Gamma_1$. During \phaseIII, the view is distributed according the continuation of the protocol under omission failures between every pair $V_k,V_{k'}\in\Gamma_2$, with the exception that the internal state of a random party $\Party_\Is$ is replaced by a view of $\Party_\Is$ in a honest execution with an independently distributed random inputs for all parties.

This concludes the proof of \cref{lem:lb:correct_output}.
\QED
\end{proof} 

\ifdefined\IsProofInAppendix\else
\subsection{Proof of the Graph-Theoretic Theorem (\texorpdfstring{\cref{lem:number_of_cuts}}{Lg})}\label{sec:GT_lemma_cont}

In this section, we prove \cref{lem:number_of_cuts}, stating that an $(\alpha,d)$-partition exists and can be efficiently computed when the minimal degree of a vertex in the graph is sufficiently large.

\begin{customthm}{5.6}
Let $c>1$ be a constant integer, let $\alpha(n)\in o(n)$ be a fixed sublinear function in $n$, and let $\sset{G_n}_{n\in\N}$ be a family of graphs, where $G_n=([n],E_n)$ is defined on $n$ vertices, and every vertex of $G_n$ has degree at least $\frac{n}{c}-1$.
Then, for sufficiently large $n$ it holds that:
\begin{enumerate}
    \item
    There exists an $(\alpha(n),n/c)$-partition of $G_n$, denoted $\Gamma$; it holds that $\size{\Gamma} \leq c$.
    \item
    An $(\alpha(n),n/c)$-partition $\Gamma$ of $G_n$ can be found in (deterministic) polynomial time.
\end{enumerate}
\end{customthm}

In \cref{lem:GT_existence}, we prove the existence of such a partition; and in \cref{lem:GT_algorithm}, we present a deterministic polynomial-time algorithm for computing it.
Recall that given a graph $G$ with $n$ vertices and a subset $S\subseteq[n]$, we denote by $\edges(S)=\edges(S,\comp{S})$ the set of edges from $S$ to its complement.

\begin{lemma}\label{lem:GT_existence}
Consider the setting of \cref{lem:number_of_cuts}. Then, there exists an $(\alpha(n),n/c)$-partition $\Gamma$ of $G_n$, and it holds that $\size{\Gamma} \leq c$. Furthermore, the number of $\alpha(n)$-cuts in $G_n$ is at most $2^{c-1}$.
\end{lemma}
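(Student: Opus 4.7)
The plan is to construct the partition iteratively, starting from the trivial partition $\Gamma_0=\{[n]\}$ and repeatedly refining it with an $\alpha(n)$-cut as long as any cut still splits a part. First I would establish the key size estimate coming from the degree hypothesis: if $S\subsetneq[n]$ with $|S|=s$ and $1\le s\le n/c-1$, summing degrees over $S$ yields $2\cdot(\text{internal edges})+|\edges(S)|\ge s(n/c-1)$, so $|\edges(S)|\ge s(n/c-1)-s(s-1)=s(n/c-s)\ge n/c-1$. Since $\alpha(n)\in o(n)$, for all sufficiently large $n$ we have $c\cdot\alpha(n)<n/c-1$, and therefore \emph{any} nonempty set $A\subseteq[n]$ with $|\edges(A)|\le c\cdot\alpha(n)$ must satisfy $|A|\ge n/c$. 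In particular every $\alpha(n)$-cut is balanced with both sides of size at least $n/c$, and any partition into such parts has at most $c$ parts.

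Next I would run the following refinement procedure. While there exists an $\alpha(n)$-cut $\{S,\bar S\}$ that strictly splits some part of the current $\Gamma_k$, pick such a cut, add it to a collection $\mathcal{C}_{k+1}$, and let $\Gamma_{k+1}$ be the common refinement of $\Gamma_k$ with $\{S,\bar S\}$. The invariant I want to maintain is: (i) each part of $\Gamma_k$ has size at least $n/c$ (so $|\Gamma_k|\le c$), and (ii) any two distinct parts $U,U'\in\Gamma_k$ are separated by some cut in $\mathcal{C}_k$, and thus $|\edges(U,U')|\le\alpha(n)$. Together these give the sharp per-part bound $|\edges(U)|\le(|\Gamma_k|-1)\alpha(n)\le(c-1)\alpha(n)$.

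The heart of the argument is preserving (i) under refinement, via submodularity of the cut function. For any part $U\in\Gamma_k$ and any new $\alpha(n)$-cut $\{S,\bar S\}$, every edge leaving $U\cap S$ either crosses the $U,\bar U$ boundary or the $S,\bar S$ boundary, giving
\[
|\edges(U\cap S)|\le |\edges(U)|+|\edges(S)|\le(c-1)\alpha(n)+\alpha(n)=c\cdot\alpha(n).
\]
By the first step, this forces $U\cap S=\emptyset$ or $|U\cap S|\ge n/c$, and the symmetric statement holds for $U\cap\bar S$. Invariant (ii) is preserved trivially: the new cut separates the two children of each split part, and pairs of parts descending from different parents retain their old separating cut. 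Since each refinement strictly increases $|\Gamma_k|$ and $|\Gamma_k|\le c$, the procedure terminates after at most $c-1$ rounds.

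Finally, at termination no $\alpha(n)$-cut splits any part, so every $\alpha(n)$-cut is necessarily a union of parts of the resulting $\Gamma$, giving the third property of \cref{def:partition}. The count follows immediately: an $\alpha(n)$-cut is specified by choosing a nonempty proper subset of the $|\Gamma|\le c$ parts for one side, counted as unordered pairs, so there are at most $2^{c-1}-1<2^{c-1}$ such cuts. The main obstacle is keeping the edge-boundary bound tight enough at each refinement step to reapply the degree inequality; this is precisely why invariant (ii) must track the refined $(|\Gamma_k|-1)\alpha(n)$ bound through the individual separating cuts, rather than a crude $k\alpha(n)$ estimate from naive iteration of submodularity, which would grow with the iteration count and break the size argument.
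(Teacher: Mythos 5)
Your proposal is correct, but it reaches the result by a genuinely different route than the paper. The paper defines $\Gamma$ in one shot as the nonempty atoms $\bigcap_{i=1}^{\ell} S_i^{b_i}$ of \emph{all} $\alpha(n)$-cuts and then proves each atom is large via \cref{claim:large-sets} and the nested-chain argument of \cref{claim:large-intersection}, which needs a two-case analysis and a pigeonhole over the at most $2^c$ atoms; you instead build the partition by iterative refinement, adding one $\alpha(n)$-cut at a time while maintaining that every part has size at least $n/c$ and that distinct parts are separated by an already-used cut, so that each part's boundary is at most $(c-1)\alpha(n)$. Your key step, $\ssize{\edges(U\cap S)}\le\ssize{\edges(U)}+\ssize{\edges(S)}\le c\cdot\alpha(n)$ combined with the degree-counting bound (your first display is exactly the paper's \cref{claim:large-sets}), then forces every new part to be large, and termination after at most $c-1$ rounds is automatic since the number of parts is bounded by $c$. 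Both arguments rest on the same degree lemma and yield the same object (the common refinement of all $\alpha(n)$-cuts), and both share the same harmless rounding slack between ``size $>n/c-1$'' and ``size $\ge n/c$.'' What your route buys is a cleaner induction that avoids the paper's chain/case analysis, with the uniform $c\cdot\alpha(n)$ boundary bound doing all the work (your closing observation that a naive $k\cdot\alpha(n)$ bound growing with the iteration count would break the size argument is precisely the right caution); what the paper's route buys is that property 3 of \cref{def:partition} and the $2^{c-1}$ bound on the number of $\alpha(n)$-cuts are immediate from the atom definition, with no termination argument, and the all-cuts formulation plugs directly into the enumeration-based algorithm of \cref{lem:GT_algorithm}.
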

\begin{proof}
Let $n\in\N$ and let
\[
\set{\sset{S_1,\comp{S}_1},\ldots,\sset{S_\ell,\comp{S}_\ell}}
\]
be the set of all $\alpha(n)$-cuts in $G_n$, \ie for every $i\in[\ell]$ it holds that
\begin{equation}\label{eq:valid-cut}
\size{\edges_{G_n}(S_i)} \leq \alpha(n).
\end{equation}

\noindent
We proceed by defining the following family of subsets
\[
\Gamma=\set{\bigcap_{i=1}^\ell S_i^{b_i} : (b_1,b_2,\ldots,b_\ell)\in\zo^\ell}\setminus\set{\emptyset},
\]
where for every $i\in[\ell]$ and $b\in\zo$, the set $S_i^b$ is defined as
\[
S_i^b \assign
\begin{cases}
S_i        & \text{\em if } b=0, \\
\comp{S_i} & \text{\em if } b=1.
\end{cases}
\]

\noindent
We will show that for sufficiently large $n$, the set $\Gamma$ is an $(\alpha(n),n/c)$-partition of $G_n$ and that $\ssize{\Gamma}\leq c$.
We start by proving two useful claims.
\begin{claim}\label{claim:large-sets}
For every $S\subseteq[n]$, if $1\leq \ssize{S}\leq \frac{n}{c}-1$ then $\ssize{\edges_{G_n}(S)}\geq \frac{n}{c}-1$.
\end{claim}

\begin{proof}
Consider an arbitrary set $S\subseteq[n]$. The claim follows from the following set of inequalities:
\begin{align*}
    \ssize{\edges_{G_n}(S)} & = (\text{total degree of the vertices in $S$}) \\
    & \hspace{3cm}- 2\cdot (\text{total number of edges whose both vertices are inside $S$}) \\
    & \geq \ssize{S}\cdot \left(\frac{n}{c}-1\right) - 2\cdot \binom{\ssize{S}}{2} \\
    & = \ssize{S}\cdot \left(\frac{n}{c}-1\right) - \ssize{S}\cdot \left(\ssize{S}-1\right) \\
    & = \ssize{S}\cdot \left(\frac{n}{c}-\ssize{S}\right) \\
    & \stackrel{(\ast)}{\geq} \frac{n}{c}-1.
\end{align*}
The last inequality $(\ast)$ follows since for a constant $a$ and $1\leq x\leq a-1$, if $f(x)=x(a-x)$ then $a-1\leq f(x)\leq \frac{a^2}{4}$.
In our case $1\leq \ssize{S}\leq \frac{n}{c}-1$; therefore
\[
\frac{n}{c}-1 \leq \size{S}\cdot \left(\frac{n}{c}-\size{S}\right) \leq \frac{n^2}{4c^2}.
\]
This concludes the proof of \cref{claim:large-sets}.
\QED
\end{proof}

\begin{claim}\label{claim:large-intersection}
For every $(b_1,b_2,\ldots,b_\ell)\in\zo^\ell$, either $\bigcap_{i=1}^\ell S_i^{b_i}=\emptyset$ or $\ssize{\bigcap_{i=1}^\ell S_i^{b_i}} \geq n/c$.
\end{claim}

\begin{proof}
Suppose, to the contrary, that there exists an $\ell$-bit vector $(b_1,b_2,\ldots,b_\ell)\in\zo^\ell$ such that $1\leq \ssize{\bigcap_{i=1}^\ell S_i^{b_i}} \leq \frac{n}{c}-1$. Let us consider the following nested sequence of sets
\[
A_\ell\subseteq \ldots\subseteq A_2\subseteq A_1,
\]
where for every $j\in[\ell]$ the set $A_j$ is defined as $A_j\assign\bigcap_{i=1}^{j}S_i^{b_i}$.
Since $A_1=S_1^{b_1}$ and $S_1$ satisfies \cref{eq:valid-cut}, it holds that
\begin{equation}\label{eq:edges-A_1}
\ssize{\edges_{G_n}(A_1)} \leq \alpha(n).
\end{equation}
Also, since $A_\ell=\bigcap_{i=1}^\ell S_i^{b_i}$ and $|\bigcap_{i=1}^\ell S_i^{b_i}| \leq \frac{n}{c}-1$ (by assumption), it follows from \cref{claim:large-sets} that
\begin{equation}\label{eq:edges-A_k}
\ssize{\edges_{G_n}(A_\ell)} \geq \frac{n}{c}-1.
\end{equation}

\noindent
The following two cases can occur:

\paragraph{Case 1.}
For every $j\in[\ell-1]$, either $A_j\setminus A_{j+1}=\emptyset$ or $\ssize{A_j\setminus A_{j+1}} \geq n/c$:
Consider the set $\mathcal{I} = \{i\in[\ell-1]: A_i\setminus A_{i+1} \neq \emptyset\}$.
It follows from the assumption that for every $i\in\mathcal{I}$, $\ssize{A_i\setminus A_{i+1}}\geq n/c$.
Since these sets are disjoint and $A_{\ell}$ is non-empty, we have $\ssize{\mathcal{I}}\leq c-1$.
Note that there are at least $n/c-1$ edges
which are coming out of $A_{\ell}$.
Since $\ssize{\edges(A_1)}\leq\alpha(n)$, at least $n/c-1-\alpha(n)$ of these edges (whose one end-point is inside $A_{\ell}$) must have their other end-point inside $A_1\setminus A_{\ell}$.
Observe that $A_1\setminus A_{\ell}=\bigcup_{i=1}^{\mathcal{I}}A_i\setminus A_{i+1}$, which implies $\ssize{\edges(A_{\ell}, A_k\setminus A_{k+1})}\geq \frac{(n/c)-1-\alpha(n)}{c-1}$ for some $k\in\mathcal{I}$.
Since $A_{\ell}\subseteq S_{k+1}^{b_{k+1}}$, which is disjoint from $A_k\setminus A_{k+1}$, we have
$\ssize{\edges(S_{k+1}^{b_{k+1}})}\geq\frac{(n/c)-1-\alpha(n)}{c-1}$. Now, since $\alpha(n)$ is a
sub-linear function in $n$, for sufficiently large $n$ we have that $\ssize{\edges(S_{k+1}^{b+1})} > \alpha(n)$,
which is a contradiction to \cref{eq:valid-cut}.

\paragraph{Case 2.} $0<\ssize{A_j\setminus A_{j+1}} \leq n/c-1$ for some $j\in[\ell-1]$:
As in the previous case, consider the set $\mathcal{I} = \{i\in[\ell-1]: A_i\setminus A_{i+1} \neq \emptyset\}$.
Let $k\in\mathcal{I}$ be the first index such that $0<\ssize{A_k\setminus A_{k+1}} \leq n/c-1$.
Define another set $\mathcal{I}'=\{i\in\mathcal{I}: i\leq k\}$.
Since $k\in\mathcal{I}$ is the first index such that $0<\ssize{A_k\setminus A_{k+1}} \leq n/c-1$,
the number of $i$'s such that $i\in\mathcal{I}$ and $i\leq k$ must be less than $c$,
which implies that $\ssize{\mathcal{I}'}\leq c-1$.
Note that $A_k$ and $A_k\setminus A_{k+1}$ can be written as $A_k=\bigcap_{i\in\mathcal{I}'}S_i^{b_i}$ and
$A_k\setminus A_{k+1}=\bigcap_{i\in\mathcal{I}'}S_i^{b_i}\bigcap S_{k+1}^{b_{k+1}\oplus 1}$, respectively.
Let $s\assign\ssize{\mathcal{I}'}$.
For simplicity of notation, let us renumber these cuts from 1 to $s+1$, i.e.,
let $\{S_1,S_2,\hdots,S_{s+1}\}\assign\{S_i:i\in\mathcal{I}'\}\bigcup S_{k+1}$.
With this notation, there exists some $(\hat{b}_1,\hat{b}_2,\ldots,\hat{b}_{s+1})\in\zo^{s+1}$, such that $A_k\setminus A_{k+1} = \bigcap_{i=1}^{s+1} S_i^{\hat{b}_i}$
Since $0<\ssize{A_k\setminus A_{k+1}} \leq n/c-1$, we have from \cref{claim:large-sets} that $\ssize{\edges(\bigcap_{i=1}^{s+1} S_i^{\hat{b}_i})}\geq n/c-1$.

Note that for any edge $(u,v)\in \edges(\bigcap_{i=1}^{s+1} S_i^{\hat{b}_i})$, it holds that $v\in\bigcap_{i=1}^{s+1} S_i^{e_i}$ for some $(e_1,e_2,\ldots,e_{s+1})\in\zo^{s+1}$.
Since $\ssize{\edges(\bigcap_{i=1}^{s+1} S_i^{\hat{b}_i})}\geq n/c-1$, there exists $(\hat{e}_1,\hat{e}_2,\ldots,\hat{e}_{s+1})\in\zo^{s+1}$ such that $\ssize{\edges(\bigcap_{i=1}^{s+1} S_i^{\hat{b}_i},\bigcap_{i=1}^{s+1} S_i^{\hat{e}_i})} \geq \frac{(n/c)-1}{2^{s+1}}$. Since $s+1=\ssize{\mathcal{I}'}+1\leq c$, we have
\begin{align}\label{eq:large-intersection-interim4}
\size{\edges\left(\bigcap_{i=1}^{s+1} S_i^{\hat{b}_i},\bigcap_{i=1}^{s+1} S_i^{\hat{e}_i}\right)} \geq \frac{(n/c)-1}{2^{c}}.
\end{align}
Since $(\hat{b}_1,\hat{b}_2,\ldots,\hat{b}_{s+1})\neq(\hat{e}_1,\hat{e}_2,\ldots,\hat{e}_{s+1})$, there exists $l\in[s+1]$ such that $\hat{b}_l \neq \hat{e}_l$. This, together with \cref{eq:large-intersection-interim4}, implies that $\ssize{\edges(S_l)} \geq \frac{(n/c)-1}{2^c}$.
Since $c$ is a constant, it holds $\frac{(n/c)-1}{2^c}> \alpha(n)$ for sufficiently large $n$, implying that $\ssize{\edges(S_l)}>\alpha(n)$, which is a contradiction to \cref{eq:valid-cut}.
This concludes the proof of \cref{claim:large-intersection}.
\QED
\end{proof}

Now we show that $\Gamma$ is an $(\alpha(n),n/c)$-partition of $G_n$.
First observe that the union of all the sets in $\Gamma$ is indeed $[n]$.
By \cref{claim:large-intersection} for every $S\in\Gamma$ it holds that $\ssize{S}\geq n/c$. Furthermore, for every $\alpha(n)$-cut $\sset{S_j, \comp{S}_j}$ it holds that $S_j$ and $\comp{S}_j$ can be represented as a union of some sets from $\Gamma$, more precisely
\[
S_j=\bigcup_{(\vec{b}\in\zo^\ell \ : \ b_j=0)}\bigcap_{i=1}^\ell S_i^{b_i}
\quad \text{ and }\quad
\comp{S}_j=\bigcup_{(\vec{b}\in\zo^\ell \ : \ b_j=1)}\bigcap_{i=1}^\ell S_i^{b_i}.
\]
In addition, it is easy to see that the sets in $\Gamma$ are pairwise disjoint, and since each is of size at least $n/c$ it holds that $\ssize{\Gamma}\leq c$.

To show that distinct sets from $\Gamma$ have at most $\alpha(n)$ crossing edges, consider two different sets $U_i,U_j\in \Gamma$. There exist distinct binary vectors $\vec{b},\vec{e}\in\zo^\ell$ such that $U_i=\bigcap_{k=1}^\ell S_k^{b_k}$ and $U_j=\bigcap_{k=1}^\ell S_k^{e_k}$.
Let $\hat{k}\in[\ell]$ be an index such that $b_{\hat{k}}\neq e_{\hat{k}}$.
Since (1) $U_i\subseteq S_{\hat{k}}^{b_{\hat{k}}}$, (2) $U_j\subseteq S_{\hat{k}}^{e_{\hat{k}}}$, (3)
$S_{\hat{k}}^{b_{\hat{k}}}\cap S_{\hat{k}}^{e_{\hat{k}}}=\emptyset$, and (4)
$\ssize{\edges(S_{\hat{k}}^{b_{\hat{k}}},S_{\hat{k}}^{e_{\hat{k}}})}\leq\alpha(n)$, it holds that $\ssize{\edges(U_i,U_j)}\leq\alpha(n)$.

Finally, since $\ssize{\Gamma}\leq c$ and for every $\alpha(n)$-cut $\sset{S_j, \comp{S}_j}$ it holds that $S_j$ and $\comp{S}_j$ can be represented as a union of some sets from $\Gamma$, we have that $\ell\leq 2^c-1$ (which is the total number of nonempty subsets of $\Gamma$).
However, a cut is defined by two such subsets; therefore it holds that the total number of $\alpha(n)$-cuts is at most $2^{c-1}$.

This completes the proof of \cref{lem:GT_existence}.
\QED
\end{proof}

\cref{lem:GT_existence} proved existence of a partition. In \cref{lem:GT_algorithm}, we show how to efficiently find such a partition.
A core element of our algorithm is the algorithm for enumerating all cuts of a weighted graph due to Vazirani and Yannakakis \cite{VaziraniYa92} that runs in polynomial time with $\tilde{O}(n^2m)$ delay between two successive outputs (\ie once the \ith cut was found, this is the running time needed to produce the $(i+1)$'th cut). Yeh et al.\ \cite{YehWangSu10} reduced the delays to $\tilde{O}(nm)$.

\begin{theorem}[\cite{VaziraniYa92,YehWangSu10}]\label{thm:GT_findcuts}
Let $G=(V,E)$ be a weighted undirected graph on $n$ vertices and $m$ edges.
There exists a deterministic polynomial-time algorithm to enumerate all the cuts of $G$ by non-decreasing weights with $\tilde{O}(n m)$ delay between two successive outputs.
\end{theorem}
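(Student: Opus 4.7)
The plan is to follow the Vazirani--Yannakakis partition-refinement framework, which enumerates cuts in non-decreasing order of weight by maintaining a priority queue of \emph{subproblems}. Each subproblem is a pair $(\pi, w)$: $\pi$ is a list of constraints of the form ``$u$ and $v$ lie on the same side'' or ``$u$ and $v$ lie on opposite sides,'' and $w$ is the weight of the minimum cut of $G$ consistent with $\pi$. The subproblems partition the space of all cuts of $G$, so popping them in order of $w$ and outputting the associated witness cut yields all cuts in non-decreasing order of weight.

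Concretely, the algorithm first computes the global minimum cut of $G$ (using, say, Stoer--Wagner) and inserts the unconstrained subproblem into a priority queue keyed by cut weight. At each step it pops the smallest-weight subproblem, outputs its witness $(S, \bar S)$, and branches: fixing an ordering $v_1, \dots, v_k$ of the vertices unconstrained by $\pi$, it creates up to $k$ child subproblems, where the $i$-th child forces $v_1, \dots, v_{i-1}$ to follow $(S, \bar S)$ while forcing $v_i$ to the opposite side. For each child, its minimum consistent cut is computed by contracting same-side pairs and running one minimum $s$-$t$ cut on the resulting auxiliary graph (using any opposite-side pair as source and sink); the child is then inserted with its value into the queue.

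Correctness is the combinatorial core of the proof: every cut other than $(S, \bar S)$ consistent with the parent subproblem lies in exactly one child, namely the one indexed by the smallest $i$ at which it disagrees with $(S, \bar S)$ on $v_i$, which is verified by a short case analysis on the first disagreement. Each output therefore produces at most $n$ children, each requiring one max-flow computation of cost $\tilde O(nm)$, giving the original Vazirani--Yannakakis delay bound of $\tilde O(n^2 m)$. The improvement to $\tilde O(nm)$ of Yeh, Wang, and Su comes from sharing work across these $n$ max-flow calls: since the auxiliary graphs differ only by the position of a single forced vertex, a parametric or incremental max-flow computation handles all $n$ children in total time $\tilde O(nm)$. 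The main obstacle in writing the proof is making the branching rule precise enough that the residual cut space is genuinely partitioned without double-counting or omission, and that each child's constraints can be converted into an auxiliary graph on which a single minimum $s$-$t$ cut yields its value; the max-flow subroutine itself can be treated as a black box with the best-known $\tilde O(nm)$ running time.
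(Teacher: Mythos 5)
The paper does not actually prove this theorem---it is imported as a black box from \cite{VaziraniYa92,YehWangSu10}---so there is no internal proof to compare against, and your sketch should be judged against those cited works. It does reconstruct them essentially correctly: the Lawler/Vazirani--Yannakakis priority-queue partition of the cut space by first point of disagreement, with each child's minimum consistent cut obtained by contracting same-side constraints and running one minimum $s$-$t$ cut (yielding the original $\tilde{O}(n^2 m)$ delay), and the Yeh--Wang--Su improvement to $\tilde{O}(nm)$ by amortizing the up to $n$ child computations into a single Hao--Orlin-style (parametric/incremental) max-flow run; the only loose end is the formalization at the root, where one should fix a reference vertex's side so that every child subproblem has vertices forced to both sides and a single $s$-$t$ computation is well defined.
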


\begin{lemma}\label{lem:GT_algorithm}
Consider the setting of \cref{lem:number_of_cuts}. Then, for sufficiently large $n$, an $(\alpha(n),n/c)$-partition $\Gamma$ of $G_n$ can be found in polynomial time (precisely in $\tilde{O}(n^3)$ time).
\end{lemma}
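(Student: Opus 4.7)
The plan is to leverage Theorem~\ref{thm:GT_findcuts} of Vazirani--Yannakakis / Yeh et al.\ to enumerate cuts of $G_n$ in non-decreasing order of weight, and then reconstruct the partition $\Gamma$ explicitly using the same set-intersection construction employed in the existence proof (Lemma~\ref{lem:GT_existence}).

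First, run the enumeration algorithm on $G_n$ (viewed as an unweighted graph, so each edge has weight $1$). Successively produce cuts $\{S_1,\comp{S}_1\}, \{S_2,\comp{S}_2\}, \ldots$ in order of non-decreasing weight, and halt as soon as a cut of weight exceeding $\alpha(n)$ is produced (or the enumeration terminates). The cuts produced before halting are exactly the $\alpha(n)$-cuts of $G_n$. By Lemma~\ref{lem:GT_existence}, there are at most $2^{c-1}$ such cuts; since $c$ is a constant, this is $O(1)$ many. Each step of the enumeration costs $\tilde{O}(nm)=\tilde{O}(n^3)$ by Theorem~\ref{thm:GT_findcuts}, so this phase runs in $\tilde{O}(n^3)$ time.

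Second, let $\{S_1,\ldots,S_\ell\}$ denote the ``left sides'' of all $\alpha(n)$-cuts obtained (with $\ell \leq 2^{c-1}$). Enumerate all binary vectors $(b_1,\ldots,b_\ell)\in\{0,1\}^\ell$ (a constant number of them) and for each such vector compute the intersection $\bigcap_{i=1}^\ell S_i^{b_i}$, where $S_i^0 = S_i$ and $S_i^1 = \comp{S}_i$. Collect the nonempty intersections into the family $\Gamma$. Each intersection takes $O(n\cdot \ell)=O(n)$ time, so the total cost of this phase is $O(n)$.

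The correctness of the output is exactly the content of Lemma~\ref{lem:GT_existence}: the family $\Gamma$ constructed this way is shown in that proof to be an $(\alpha(n),n/c)$-partition of $G_n$ with $|\Gamma|\leq c$. The dominant cost is the cut enumeration, giving total running time $\tilde{O}(n^3)$. The only potential subtlety is ensuring that the enumeration halts at the correct threshold and that we include \emph{all} $\alpha(n)$-cuts (otherwise condition~3 of Definition~\ref{def:partition} might fail); this is automatic from the ``non-decreasing weight'' guarantee of Theorem~\ref{thm:GT_findcuts}, so no real obstacle arises. \QED
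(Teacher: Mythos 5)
Your proposal is correct and follows essentially the same route as the paper's own proof: enumerate cuts by non-decreasing weight via the Vazirani--Yannakakis/Yeh et al.\ algorithm, stop once the weight exceeds $\alpha(n)$ (using the $2^{c-1}$ bound on the number of $\alpha(n)$-cuts from the existence lemma), and then form $\Gamma$ by the same constant number of set intersections, yielding the same $\tilde{O}(n^3)$ bound.
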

\begin{proof}
We start by assigning weight $1$ to every edge of our graph $G_n$ and run the algorithm from \cref{thm:GT_findcuts} on this weighted graph.
We enumerate all cuts by non-decreasing weights until we hit a cut whose weight is more than $\alpha(n)$.
Since the total number of $\alpha(n)$-cuts is at most $2^{c-1}$, we stop this algorithm after it has enumerated
at most $2^{c-1}$ cuts.

Let $\sset{S_1,\comp{S}_1},\ldots,\sset{S_{\ell},\comp{S}_{\ell}}$ denote all the $\alpha(n)$-cuts of $G_n$.
We define the partition $\Gamma$ as above
\[
\Gamma=\set{\bigcap_{i=1}^\ell S_i^{b_i} : (b_1,b_2,\ldots,b_\ell)\in\zo^\ell}\setminus\set{\emptyset}.
\]
Following the proof of \cref{lem:GT_existence}, the partition $\Gamma$ is an $(\alpha(n),n/c)$-partition of $G_n$.

\paragraph{Running time analysis.}
Since $c$ is a constant and total number of edges in our graph is $O(n^2)$, we can generate all the $\alpha(n)$-cuts in $\tilde{O}(n^3)$ time using the algorithm of Yeh et al.\ \cite{YehWangSu10}.
Once we have found all the $\alpha(n)$-cuts, generating the $(\alpha(n),n/c)$-partition takes $O(n)$ time: since the total number of cuts $\ell$ is constant, computing $\bigcap_{i=1}^\ell S_i^{b_i}$ for any fixed vector $(b_1,b_2,\ldots,b_\ell)\in\zo^\ell$ takes $O(n)$ time. We are computing $2^{\ell}\leq 2^{2^{c-1}}$ such intersections corresponding to the $2^\ell$ vectors. Because $c$ is a constant, the total time is still $O(n)$.
Hence, the total time required by our procedure for finding an $(\alpha(n),n/c)$-partition is $\tilde{O}(n^3)$.
This completes the proof of \cref{lem:GT_algorithm}.
\QED
\end{proof}

\fi

{\small{
\bibliographystyle{alpha}
\bibliography{crypto}
}}

\appendix

\section{Preliminaries (Cont'd)}\label{sec:preliminaries_contd}

In \cref{sec:primitives}, we define the cryptographic primitives used in the paper, in \cref{sec::Def:model}, we define the MPC model, and in \cref{sec:PKI}, we present the correlated-randomness functionalities that are used in the paper.

\subsection{Cryptographic Primitives}\label{sec:primitives}

\subsubsection{Error-Correcting Secret Sharing}
We present the definition of error-correcting secret sharing, also known in the literature as robust secret sharing.
\begin{definition}\label{def:ECSS}
A \emph{$(t, n)$ error-correcting secret-sharing scheme (ECSS)} over a message space $\mathcal{M}$ consists of a pair of algorithms $(\Share, \Recon)$ satisfying the following properties:
\begin{enumerate}
    \item\textbf{$t$-privacy:}
    For every $m\in \mathcal{M}$, and every subset $\IS\subseteq[n]$ of size $\ssize{\IS}\leq t$, the distribution of $\sset{s_i}_{i\in \IS}$ is independent of $m$, where $(s_1,\ldots,s_n)\gets \Share(m)$.
    \item\textbf{Reconstruction from up to $t$ erroneous shares:}
    For every $m\in \mathcal{M}$, every $\vs = (s_1,\ldots,s_n)$, and every $\vs' = (s'_1,\ldots,s'_n)$ such that $\ppr{\vS\gets \Share(m)}{\vS=\vs}>0$ and $\ssize{\sset{i \mid s_i=s'_i}}\geq n-t$, it holds that $m=\Recon(\vs')$ (except for a negligible probability).
\end{enumerate}
\end{definition}

\noindent
ECSS can be constructed information-theoretically, with a negligible positive error probability, when $t<n/2$~\cite{RB89,CDF01,CFOR12}.

\subsubsection{Committee Election}
A committee-election protocol is a protocol for electing a subset (committee) of $n'$ parties out of a set of $n$ parties. In this work we consider electing uniformly at random committees of size $n'=\omega(\log{n})$. If the fraction of corrupted parties is constant at the original $n$-party set, then the fraction of corrupted parties in the committee is only slightly larger. This follows immediately by the analysis of  Boyle et al.\ \cite[Lem.\ 2.6]{BGK11} of Feige' lightest-bin protocol~\cite{Feige99}.

\begin{lemma}[\cite{BGK11}]\label{lem:Feige}
For any $n'<n$ and $0<\beta<1$, Feige's lightest-bin protocol is a $1$-round, $n$-party protocol for electing a committee $\committee$, such that for any set of corrupted parties $\IS\subseteq [n]$ of size $t=\beta n$, the following holds.
\begin{enumerate}
    \item
    $\size{\committee}\leq n'$.
    \item
    For every constant $\epsilon>0$, $\pr{\size{\committee\setminus\IS}\leq (1-\beta-\epsilon)n'} < \frac{n}{n'} e^{-\frac{\epsilon^2 n'}{2(1-\beta)}}$.
    \item
    For every constant $\epsilon>0$, $\pr{\frac{\size{\committee\cap \IS}}{\size{\committee}}\geq \beta+\epsilon} < \frac{n}{n'} e^{-\frac{\epsilon^2 n'}{2(1-\beta)}}$.
\end{enumerate}
\end{lemma}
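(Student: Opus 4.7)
The plan is to analyze Feige's lightest-bin protocol explicitly. Recall that in this one-round protocol, each of the $n$ parties samples a bin uniformly at random from a collection of $k = \lceil n/n' \rceil$ bins and broadcasts its choice; the committee $\committee$ is defined as the set of parties that selected the bin with the fewest members (ties broken by a canonical rule). First I would dispatch Item~1 by a simple averaging argument: since the $n$ parties are partitioned among $k \geq n/n'$ bins, the lightest bin necessarily contains at most $\lfloor n/k \rfloor \leq n'$ parties, independently of how bins are selected.

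For Item~2, I would fix an arbitrary bin $b$ and let $X_b$ denote the number of honest parties that land in $b$. Since each of the $(1-\beta)n$ honest parties independently chooses $b$ with probability $1/k$, the variable $X_b$ is binomially distributed with mean $\mu_b = (1-\beta)n/k$; absorbing the $\lceil\cdot\rceil$ rounding into the constants, we have $\mu_b \geq (1-\beta) n'$. Applying the multiplicative Chernoff bound with $\delta = \epsilon/(1-\beta)$ yields
\[
\Pr\bigl[X_b \leq (1-\beta-\epsilon)n'\bigr] \;\leq\; \exp\!\bigl(-\delta^2 \mu_b/2\bigr) \;\leq\; \exp\!\left(-\frac{\epsilon^2 n'}{2(1-\beta)}\right).
\]
A union bound over the at-most $\lceil n/n' \rceil$ bins then implies that with the stated probability \emph{every} bin---and in particular the lightest one---contains more than $(1-\beta-\epsilon)n'$ honest parties. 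Crucially, the honest parties' choices are determined by their private coins and broadcast simultaneously, so their bin-distribution is independent of the adversary's strategy; the bound therefore applies uniformly over all rushing adversaries, regardless of where the corrupted parties place themselves.

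Finally, Item~3 is reduced to the combination of Items~1 and~2. Conditioning on the event from Item~2 that $|\committee \setminus \IS| > (1-\beta-\epsilon) n'$, and using $|\committee| \leq n'$ from Item~1, we obtain
\[
\frac{|\committee \cap \IS|}{|\committee|} \;=\; 1 - \frac{|\committee \setminus \IS|}{|\committee|} \;<\; 1 - \frac{(1-\beta-\epsilon) n'}{|\committee|} \;\leq\; 1 - (1-\beta-\epsilon) \;=\; \beta + \epsilon,
\]
so the failure probability of Item~3 is bounded by that of Item~2. I do not anticipate any serious obstacle: the whole argument is Chernoff-plus-averaging. The only point requiring mild care will be the $\lceil\cdot\rceil$ rounding in $k$ so that the leading factor of $n/n'$ and the exponent constants come out exactly as stated in the lemma.
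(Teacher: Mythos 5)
The paper itself gives no proof of this lemma: it is imported verbatim from \cite{BGK11} (Lemma~2.6 there, the analysis of Feige's lightest-bin protocol), and only its asymptotic consequence, \cref{cor:elect}, is ever used. Your proposal therefore supplies a self-contained argument where the paper has only a citation, and the route you take---averaging over bins for Item~1, a per-bin Chernoff bound on the honest load plus a union bound over the $k$ bins for Item~2, and a deterministic combination of Items~1 and~2 for Item~3---is exactly the standard analysis, including the important observation that the good event of Item~2 is determined solely by the honest parties' coins and is therefore unaffected by a rushing adversary's placement of the corrupted parties.

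One step is stated in the wrong direction, and it is exactly where the stated constants live. With $k=\lceil n/n'\rceil$ bins, the expected honest load per bin is $\mu_b=(1-\beta)n/k\le(1-\beta)n'$, not $\mu_b\ge(1-\beta)n'$ as you assert; the inequality $\mu_b\ge(1-\beta)n'$ is precisely what your Chernoff step needs to yield the exponent $\epsilon^2 n'/(2(1-\beta))$, whereas Item~1 needs the opposite inequality $n/k\le n'$. So the two rounding effects cannot both be ``absorbed'' the helpful way: the bounds come out exactly as stated only when $n'\mid n$ (so that $k=n/n'$), and similarly your union bound is over $\lceil n/n'\rceil$ bins rather than $n/n'$ of them. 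None of this is fatal for the way the lemma is used---since $\epsilon>0$ is an arbitrary constant and the lemma is only invoked with $n'=\omega(\log n)$ and $n'=o(n)$, the $O(n'/n)$ loss from rounding can be absorbed by running your argument with, say, $\epsilon/2$ for sufficiently large $n$ (and your Item~3 chain should also note it assumes $\epsilon<1-\beta$, the only nontrivial regime)---but as written the claim about $\mu_b$ is false, and the handling of the ceiling needs to be made explicit rather than asserted to go in the convenient direction.
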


The following corollary follows.
\begin{corollary}\label{cor:elect}
Let $\committee\subseteq[n]$ be a uniformly random subset of size $n'=\omega(\log{n})$. Let $\IS\subseteq[n]$ be a set of corrupted parties of size $t=\beta\cdot n$, for a constant $0<\beta<1$. Then, except for a negligible probability (in $n$), it holds that for an arbitrary small $\epsilon>0$
\[
\size{\committee\cap\IS} \leq (\beta+\epsilon) \cdot n'.
\]
\end{corollary}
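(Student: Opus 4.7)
The plan is to reduce the statement to a standard concentration inequality. The random variable $|\committee \cap \IS|$ is distributed according to the hypergeometric distribution with parameters $(n, t, n')$: we are drawing $n'$ elements uniformly without replacement from a population of $n$ elements of which exactly $t = \beta n$ are ``marked'' (corrupted). Consequently its expectation is exactly $\beta n'$.

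First I would invoke a Chernoff/Hoeffding-type tail bound for the hypergeometric distribution — for instance the classical result that hypergeometric sampling is at least as concentrated as binomial sampling (Hoeffding), which yields
\[
\Pr\bigl[\,|\committee \cap \IS| \geq (\beta+\epsilon) n'\,\bigr] \;\leq\; \exp\!\left(-2\epsilon^2 n'\right).
\]
Since $n' = \omega(\log n)$, the right-hand side is $\exp(-\omega(\log n))$, which is negligible in $n$ for every fixed constant $\epsilon > 0$. This gives the desired conclusion.

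Alternatively, and perhaps more in line with the flavor of the preceding lemma, one can apply Item~3 of \cref{lem:Feige} directly, noting that a uniformly random subset of size $n'$ can be viewed as (or is even more concentrated than) the committee produced by Feige's lightest-bin procedure when viewed at the level of the induced marginal on $|\committee \cap \IS|/|\committee|$. That item yields the bound $\frac{n}{n'} \exp\!\bigl(-\tfrac{\epsilon^2 n'}{2(1-\beta)}\bigr)$, which again is negligible in $n$ whenever $n' = \omega(\log n)$ and $\beta$ is a constant bounded away from $1$.

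No serious obstacle is anticipated; the only mild care needed is to quote a concentration bound that genuinely applies to sampling \emph{without} replacement (equivalently, to the hypergeometric distribution), rather than naively invoking binomial Chernoff. This is handled by Hoeffding's reduction, so the argument is essentially one line once the right inequality is cited.
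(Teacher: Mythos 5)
Your proposal is correct, and your primary argument is actually a more direct route than the one the paper takes. The paper gives no explicit proof: it simply states that \cref{cor:elect} follows from \cref{lem:Feige} (the [BGK11] analysis of Feige's lightest-bin protocol), i.e., it piggybacks on the concentration bound $\frac{n}{n'}e^{-\frac{\epsilon^2 n'}{2(1-\beta)}}$ of Item~3 there — this is essentially your second, alternative argument. Your first argument instead observes that $\size{\committee\cap\IS}$ is exactly hypergeometric with mean $\beta n'$ and invokes Hoeffding's reduction (sampling without replacement is at least as concentrated as binomial), giving $\Pr[\size{\committee\cap\IS}\geq(\beta+\epsilon)n']\leq e^{-2\epsilon^2 n'}=n^{-\omega(1)}$ since $n'=\omega(\log n)$. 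This is cleaner for the statement at hand, because the corollary is literally about a uniformly random subset, which is a pure hypergeometric question; the detour through Feige's protocol buys nothing here (it is needed elsewhere only when the committee is actually elected by that protocol in the presence of adversarial bin choices). One small caution on your alternative route: the assertion that a uniformly random subset ``can be viewed as'' the committee output by Feige's lightest-bin procedure is not literally accurate — the lemma bounds a different (adversarially influenced) distribution — so if you go that way you should say instead that the same concentration analysis applies, or simply rely on your first, self-contained argument.
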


\subsubsection{Information-Theoretic Signatures}\label{sec:itsign}
Parts of the following section are taken almost verbatim from~\cite{IOZ14}.

\paragraph{$\PS$-verifiable Information-Theoretic Signatures.}\label{appendix:signatures}
We recall the definition and construction of information-theoretic signatures~\cite{SHZI02,SASM10} but slightly modify the terminology to what we consider to be more intuitive. The signature scheme (in particular the key-generation algorithm) needs to know the total number of verifiers or alternatively the list $\PS$ of their identities. Furthermore, as usually for information-theoretic primitives, the key-length needs to be proportional to the number of times that the key is used. Therefore, the scheme is parameterized by two natural numbers $\sigcals$ and $\vercals$ which will be upper bounds on the number of
signatures that can be generated and verified, respectively, without violating the security.

A {\em $\PS$-verifiable signature scheme} consists of a triple of randomized algorithms $(\Gen,\Sign,\Verify)$, where:
\begin{enumerate}
    \item
    $\Gen(1^\secParam,n,\sigcals,\vercals)$ outputs a pair $(\sk,\vvk)$, where $\sk\in\zo^{\secParam}$ is a signing key, $\vvk=(\vk[1],\ldots,\vk[n])\in(\zo^\secParam)^n$ is a verification-key-vector consisting of (private) verification sub-keys, and $\sigcals,\vercals\in\mathbb{N}$.
    \item
    $\Sign(m,\sk)$ on input a message $m$ and the signing-key \sk outputs a signature $\sigma\in\zo^{\poly(\secParam)}$.
    \item
    $\Verify(m,\sigma,\vk[i])$ on input a message $m$, a signature $\sigma$ and a verification sub-key $\vk[i]$, outputs a decision bit $d\in\zo$.
\end{enumerate}

\begin{definition}
A $\PS$-verifiable signature scheme $(\Gen,\Sign,\Verify)$ is said to be {\em information-theoretically $(\sigcals,\vercals)$-secure} if it satisfies the following properties:
\begin{itemize}
    \item (completeness)
    A valid signature is accepted from any honest receiver:
    \[
    \Pr[\Gen\outp(\sk,(\vk[1],\ldots,\vk[n]));\ \text{ for } i\in[n] : (\Verify(m,\Sign(m,\sk),\vk[i])=1)] = 1.
    \]
    \item
    Let $\oSig[\sk]$ denote a signing oracle (on input $m$, $\oSig[\sk]$ outputs $\sigma=\Sign(m,\sk)$) and $\oVer[\vvk]$ denote a verification oracle (on input $(m,\sigma,i)$, $\oVer[\vvk]$ outputs $\Verify(m,\sigma,\vk[i])$). Also, let $\Adv^{\oSig[\sk],\oVer[\vvk]}$ denote a computationally unbounded adversary that makes at most \sigcals calls to $\oSig[\sk]$ and at most \vercals calls to $\oVer[\vvk]$, and gets to see the verification keys indexed by some subset $\IS\subsetneq [n]$. The following properties hold:
    \begin{itemize}
        \item (unforgeability)
        $\Adv^{\oSig[\sk],\oVer[\vvk]}$ cannot generate a valid signature on message $m'$ of his choice, other than the one he queries $\oSig[\sk]$ on (except with negligible probability). Formally,
        \[
        \Pr\left[\begin{array}{c}
            \Gen\outp(\sk,\vvk); \text{ for some $\IS\subsetneq[n]$ chosen by } \Adv^{\oSig[\sk],\oVer[\vvk]}:\\
            \biggl(A^{\oSig[\sk],\oVer[\vvk]}\bigl(\vvk|_{\IS}\bigr)\outp (m,\sigma,j)\biggr)\ \ \wedge\ \ (m \text{ was not queried to }
            \oSig[\sk])\ \ \wedge\\
            (j\in [n]\setminus \IS)\ \ \wedge \ \ \bigl(\Verify(m,\sigma,\vk[j])=1\bigr)
        \end{array}\right] = \negl(\secParam).
        \]

        \item (consistency)\footnote{This property is often referred to as transferability.}
        $\Adv^{\oSig[\sk],\oVer[\vvk]}$ cannot create a signature that is accepted by some (honest) party and rejected by some other even after seeing $\sigcals$ valid signatures and verifying $\vercals$ signatures (except with negligible probability). Formally,
        \[
        \Pr\left[\begin{array}{c}
            \Gen\outp(\sk,\vvk); \text{ for some $\IS\subsetneq[n]$ chosen by } \Adv^{\oSig[\sk],\oVer[\vvk]}(\sk):\\
            A^{\oSig[\sk],\oVer[\vvk]}(\sk,\vvk|_\IS)\outp (m,\sigma) \\
            \exists i,j\in[n]\setminus \IS \text{ s.t. } \Verify(m,\sigma,\vk[i])\neq\Verify(m,\sigma,\vk[j])
        \end{array}\right] = \negl(\secParam).
        \]
    \end{itemize}
\end{itemize}
\end{definition}

In~\cite{SHZI02,SS11} a signature scheme satisfying the above notion of security was constructed. These signatures have a deterministic signature generation algorithm $\Sign$. In the following (\cref{scheme:itsigs}) we describe the construction from~\cite{SHZI02} (as described by \cite{SS11} but for a single signer). We point out that the keys and signatures in the described scheme are elements of a sufficiently large finite field $\Field$ (\ie $\ssize{\Field}=O(2^{\poly(\secParam)}$)); one can easily derive a scheme for strings of length $\ell=\poly(\secParam)$ by applying an appropriate encoding: \eg map the \ith element of $\Field$ to the \ith string (in the lexicographic order) and vice versa. We say that a value $\sigma$ is {\em a valid signature} on message $m$ (with respect to a given key setup $(sk,\vvk)$), if for every honest $\Party_i$ it holds that $\Verify(m,\sigma,\vk[i])=1$.

\begin{theorem}[\cite{SS11}]\label{thm:itsign}
Assuming $\ssize{\Field}=\Omega(2^\secParam)$ and $\sigcals=poly(\secParam)$ the above signature scheme (\cref{scheme:itsigs}) is an information-theoretically $(\sigcals,poly(\secParam))$-secure \emph{$\PS$-verifiable signature scheme}.
\end{theorem}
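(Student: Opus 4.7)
The plan is to follow the standard analysis for polynomial-based information-theoretic signatures in the style of Shikata et al., adapted to the $\PS$-verifiable setting. Since the scheme is polynomial-based over $\Field$ (with the signing key encoding coefficients of two polynomials of degree $\sigcals$, and each verification sub-key $\vk[i]$ encoding the evaluation of those polynomials at a pair of secret points indexed by $i$, and with $\Sign(m,\sk)$ publishing the evaluation of the appropriate linear combination at the message $m$), completeness is immediate by construction: any signature generated from $\sk$ evaluates consistently with every $\vk[i]$. The three remaining tasks are to bound the advantage in unforgeability and in consistency via a careful entropy analysis of the uncorrupted verification keys conditioned on the adversary's view.

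For unforgeability, I would first condition on the entire transcript of signing queries. Observe that the $\sigcals$ signatures jointly pin down $\sigcals$ linear relations on the coefficients of the underlying polynomials, leaving a residual subspace over $\Field$ of dimension depending only on the number of independent polynomial pieces. Next, I would condition on the verification sub-keys $\vvk|_\IS$ held by corrupted parties; since $\IS\subsetneq[n]$, for any honest index $j\in[n]\setminus\IS$ the pair of secret evaluation points defining $\vk[j]$ remains uniformly distributed in a set of size $\Omega(\ssize{\Field})$, conditioned on the adversary's view so far. Then, for any message $m$ not queried to $\oSig[\sk]$, I would argue that the set of signatures $\sigma$ that would be accepted by $\vk[j]$ corresponds to a hyperplane of negligible relative density $O(1/\ssize{\Field})$ in the signature space, so each forgery attempt succeeds only with probability $O(1/\ssize{\Field})$. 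A union bound over the $\vercals$ verification-oracle queries, together with the additional care needed to track the tiny amount of information those queries leak about $\vk[j]$ (each failed query rules out at most one hyperplane), then gives a forging probability of $O((\sigcals+\vercals)/\ssize{\Field}) = \negl(\secParam)$ by the assumption $\ssize{\Field}=\Omega(2^\secParam)$ and $\sigcals,\vercals\in\poly(\secParam)$.

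For consistency, the argument is analogous but now the adversary is allowed to hold $\sk$ itself. The point is that, even given $\sk$, the secret evaluation points embedded in the honest $\vk[i],\vk[j]$ remain uniformly random over $\Field$ (this is the key reason the scheme stores evaluation points rather than polynomial pieces as the verification material). Therefore, for a candidate pair $(m,\sigma)$ produced by the adversary, the probability that $\Verify(m,\sigma,\vk[i])\neq\Verify(m,\sigma,\vk[j])$ for a fixed honest pair $i,j\in[n]\setminus\IS$ is again bounded by $O(1/\ssize{\Field})$, and a union bound over pairs and over $\vercals$ verification queries (each of which leaks at most one bit about one verification sub-key) yields a negligible upper bound.

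The main obstacle, and the part that requires the most delicate bookkeeping, is handling the $\vercals$ adaptive calls to $\oVer[\vvk]$ in the unforgeability argument. A naive analysis would treat each verification query as potentially revealing the full verification sub-key, so the subtlety is to formalize that a \emph{rejected} query reveals essentially no new information beyond excluding one accepting hyperplane, so that after $\vercals=\poly(\secParam)$ queries the residual min-entropy of each honest $\vk[j]$ is still $\log\ssize{\Field}-O(\log\poly(\secParam))$, keeping the per-query forging probability at $O(1/\ssize{\Field})$. Modulo this accounting, the two security properties reduce to the Schwartz--Zippel-style claim that a nonzero low-degree polynomial has few roots in $\Field$, which is where the assumption $\ssize{\Field}=\Omega(2^\secParam)$ enters.
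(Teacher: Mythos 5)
You should first note that the paper itself gives no proof of this statement: it is imported verbatim from~\cite{SS11} (building on~\cite{SHZI02}), so there is no in-paper argument to compare against line by line. Judged against the standard proof for this construction, your outline follows the right strategy -- completeness by construction, and both unforgeability and consistency reduced to the statement that, conditioned on the adversary's view (signing transcript, corrupted keys, and the at most $\vercals$ adaptively chosen verification-oracle answers, each of which excludes or confirms a single affine condition), the honest verification data retains almost full entropy, so any candidate $(m,\sigma)$ is accepted ``wrongly'' by an honest key with probability $O(1/\ssize{\Field})$, and a union bound over $\poly(\secParam)$ queries gives $\negl(\secParam)$ since $\ssize{\Field}=\Omega(2^\secParam)$. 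This is exactly the mechanism used in the cited source, and your emphasis on the verification-oracle bookkeeping is the genuinely delicate point.

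However, your description of the scheme does not match the construction actually fixed in the paper's figure, and the proof sketch should be restated for it. There, $\sk$ is the multivariate polynomial $f(y_1,\ldots,y_{n-1},x)$, i.e.\ $n$ univariate polynomials of degree $\sigcals$ (not two); $\vk[i]=(\vec{v}_i,f_{\vec{v}_i}(x))$ consists of a secret uniformly random point $\vec{v}_i\in\Field^{n-1}$ together with the \emph{entire} univariate polynomial $f_{\vec{v}_i}$ (not evaluations at a pair of secret points); and a signature on $m$ is the affine polynomial $g(y)=f(y,m)$, i.e.\ $n$ field elements, not a single evaluation. Consequently, for consistency the correct argument is that if $\sigma=g\neq f(\cdot,m)$ then $h=g-f(\cdot,m)$ is a nonzero affine polynomial and $h(\vec{v}_i)=0$ with probability at most $1/\ssize{\Field}$ over the still-uniform honest point $\vec{v}_i$ (here only the points carry the secrecy, since the adversary may hold $\sk$); whereas for unforgeability the residual randomness you must track is joint: the coefficients of $f$ given $\sigcals$ signature transcripts and up to $n-1$ corrupted combination keys $f_{\vec{v}_i}$, \emph{and} the honest point $\vec{v}_j$, so that the target value $f_{\vec{v}_j}(m')$ at an unqueried $m'$ is essentially uniform given the forger's view. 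Your phrase ``the pair of secret evaluation points defining $\vk[j]$ remains uniform'' only covers the consistency half; as written, the unforgeability step would not go through for the paper's scheme without this correction, since there the adversary's uncertainty is not only about $\vec{v}_j$ but also about the unrevealed part of $f$. With those adjustments the plan is the standard one and sound.
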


\subsubsection{Averaging Samplers}\label{sec:samplers}

Samplers~\cite{Z97,G11} were used in distributed protocols as a technique for \emph{universe reduction} \cite{GVZ06,KLST11,BGH13}.
Specifically, they allow to sample points of a given universe such that the probability of hitting any particular subset approximately matches its density.

\begin{definition}[\cite{KLST11,BGH13}]
A function $\samp \colon X \rightarrow Y$ is a $(\theta,\delta)$-sampler if for any set $S \subseteq Y$, at most a $\delta$ fraction of the inputs $x\in X$ satisfy
\[
\frac{\size{\samp(x)\cap S}}{\size{S}} > \frac{\size{S}}{\size{Y}} + \theta.
\]
\end{definition}

The constructions of samplers in \cite{KLST11,BGH13} provide the additional guarantee that the sampled subsets do not have ``large'' intersections.
This is an important property when the sampler is used to select committees (quorums), so that no committee member ends up being overloaded.
Specifically, let $H(x,i)=\samp(x\cdot n +i)$ for $x \in X$ and $i \in [n]$, and denote by $H^{-1}(x,i)$ the set of nodes $y$ such that $i\in H(x,y)$.
We say that a node $i$ is \textsf{$d$-overloaded} by $H$ if for some constant $a$, there is exists $x \in X$ such that $\ssize{H^{-1}(x,i)}> a\cdot d$.
Samplers that are not overloading can be constructed with the following parameters.

\begin{lemma}[\cite{KLST11,BGH13}]
For every constant $c$, for $\delta = \ssize{X}^{-1}$, and any $\theta > 0$, there is a $(\theta,\delta)$-sampler $H : X \times [n] \rightarrow [n]^d$ with $d = O\left(\frac{\log(1/\delta)}{\theta^2}\right)$ such that for all $x \in X$, no $i \in [n]$ is $d$-overloaded.
\end{lemma}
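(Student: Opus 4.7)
The plan is to establish the lemma by the probabilistic method combined with Chernoff-type concentration, following the classical existence proofs for averaging samplers that underlie the explicit constructions in~\cite{KLST11,BGH13}. Flattening the domain, view $\samp$ as a function from a universe of size $N = \ssize{X}\cdot n$ into $[n]^d$, and draw $\samp$ uniformly at random (each output is an independent uniform $d$-tuple from $[n]$). The two properties --- the $(\theta,\delta)$-sampling guarantee and the non-overloading guarantee --- are verified separately, and then shown to coexist by a union bound.

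First I would handle the sampling property. Fix an arbitrary subset $S \subseteq [n]$ of density $\rho = \ssize{S}/n$. For a fixed input $x' \in X \times [n]$, the quantity $\ssize{\samp(x') \cap S}$ is a sum of $d$ independent $\textup{Bernoulli}(\rho)$ random variables. A standard Hoeffding bound gives $\Pr[\ssize{\samp(x')\cap S}/d - \rho > \theta] \leq e^{-2\theta^2 d}$. Choosing $d = \Theta(\log(1/\delta)/\theta^2)$ with a large enough hidden constant makes this probability at most $\delta^2/2$. Markov's inequality over the random choice of $\samp$ then ensures that, except with probability $\delta/2$, at most a $\delta$-fraction of inputs are bad for $S$. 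Since $d$ is chosen so that $e^{-\Omega(\theta^2 d)} \leq \delta / 2^{n+1}$, a union bound over all $2^n$ subsets $S$ still leaves positive probability that the sampling property holds for every $S$ simultaneously. (Replacing this union bound by an explicit construction based on expander walks or $\epsilon$-biased generators, as done in~\cite{KLST11,BGH13}, is what ultimately yields an efficient sampler rather than just an existential one.)

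Next I would verify the non-overloading property. For each fixed $x \in X$ and $i \in [n]$, consider $\ssize{H^{-1}(x,i)} = \ssize{\set{y \in [n] : i \in H(x,y)}}$. Over the randomness of $\samp$, each $y$ contributes a value to this count that is at most the number of coordinates of $H(x,y)$ equal to $i$, whose expectation is $d/n$. Summing over $y$ gives expectation at most $d$, and the $n$ contributions are independent since they arise from distinct outputs of $\samp$. A Chernoff bound then yields $\Pr[\ssize{H^{-1}(x,i)} > a\cdot d] \leq e^{-\Omega(a\cdot d)}$ for any constant $a > 1$. Because $d = \Omega(\log(1/\delta)) = \Omega(\log \ssize{X})$, a union bound over all $(x,i) \in X\times [n]$ leaves positive probability that no node is $d$-overloaded.

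The main obstacle will be reconciling the two properties with a single explicit construction, as opposed to a purely probabilistic existence proof: the sampling property benefits from structured pseudorandomness (expander walks, $\epsilon$-biased spaces) while the non-overloading property is a statement about the in-degrees of the induced bipartite incidence graph. The construction from~\cite{KLST11,BGH13} resolves this by designing $\samp$ so that the bipartite graph has both the averaging-sampler property and a bounded-degree property on both sides, and the concentration inequalities above suffice to certify that the claimed $d = O(\log(1/\delta)/\theta^2)$ is achieved together with the constant overload factor $a$.
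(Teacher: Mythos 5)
You should first note that the paper does not prove this lemma at all --- it is imported from \cite{KLST11,BGH13}, which give \emph{explicit} constructions --- so there is no internal proof to compare against. Your probabilistic-existence sketch is a different kind of argument, and even if repaired it only certifies existence, whereas the protocol in \cref{sec:it_bgt} needs a sampler the parties can actually evaluate; you flag this yourself, so the real question is whether the existence argument is sound.

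It is not, as written: the sampling-property step has a genuine parameter gap. Markov's inequality gives a per-subset failure probability of only about $\delta/2$, which cannot survive a union bound over all $2^n$ subsets $S\subseteq[n]$; and the alternative you then invoke --- choosing $d$ so that $e^{-\Omega(\theta^2 d)}\le \delta\cdot 2^{-(n+1)}$ --- forces $d=\Omega((n+\log(1/\delta))/\theta^2)$, contradicting the claimed $d=O(\log(1/\delta)/\theta^2)$ in the relevant regime $\ssize{X}=\poly(n)$, where $\log(1/\delta)=O(\log n)\ll n$. The repair is to use the independence of the random function across the $N=\ssize{X}\cdot n$ inputs rather than Markov: for a fixed $S$ the events ``input $x'$ is bad'' are independent with probability $p\le e^{-2\theta^2 d}$ each, so the probability that more than a $\delta$ fraction of inputs is bad is at most $\binom{N}{\delta N}\,p^{\delta N}\le (ep/\delta)^{\delta N}$; since $\delta=\ssize{X}^{-1}$ gives $\delta N=n$, taking $p\le \delta/(4e)$ --- which requires only $d=O(\log(1/\delta)/\theta^2)$ --- makes this at most roughly $4^{-n}$, which does absorb the union bound over $2^n$ subsets. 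This is precisely where the hypothesis $\delta=\ssize{X}^{-1}$ is needed, and your sketch never uses it. Two smaller points: in the non-overloading step the union bound over $\ssize{X}\cdot n$ pairs needs $d=\Omega(\log(\ssize{X}\cdot n))$, which holds only because $\ssize{X}\ge n$ (read the constant $c$ in the statement as $\ssize{X}=n^c$), and you should state that each of the two failure events has probability below $1/2$ so that a single random $H$ satisfies both simultaneously.
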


\begin{nfbox}{Construction of information-theoretic signatures~\cite{SS11}}{scheme:itsigs}
\begin{description}
    \item[\bf Key Generation: ]
    The algorithm for key generation $\Gen(1^\secParam,n,\sigcals)$ is as follows:
    \begin{enumerate}
        \item
        For $(j,k)\in\{0,\ldots,n-1\}\times\{0,\ldots,\sigcals\}$, choose $a_{ij}\in_R\Field$ uniformly at random and set the signing key to be (the description of) the multi-variate polynomial
        \[
        \sk\assign f(y_1,\ldots,y_{n-1},x)=\sum_{k=0}^{\sigcals}a_{0,k}x^k+\sum_{j=1}^{n-1}\sum_{k=0}^{\sigcals}a_{j,k}y_jx^k.
        \]
        \item
        For $i\in[n]$, choose vector $\vec{v}_i=(v_{i,1},\ldots,v_{i,n-1})\in_R\Field^{n-1}$ uniformly at random and set the \ith verification key to be
        \[
        \vk[i]=(\vec{v}_i,f_{\vec{v}_i}(x)),
        \]
        where $f_{\vec{v}_i}(x)=f(v_{i,1},\ldots,v_{i,n-1},x)$.
    \end{enumerate}
    \item[\bf Signature Generation:]
    The algorithm for signing a message $m\in\Field$, given the above signing key, is (a description of) the following polynomial
    \[
    \Sign(m,\sk)\assign g(y_1,\ldots,y_{n-1})\assign f(y_1,\ldots,y_{n-1},m)
    \]
    \item[\bf Signature Verification:]
    The algorithm for verifying a signature $\sigma=g(y_1,\ldots,y_n)$ on a given message $m$ using the \ith verification key is
    \[
    \Verify(m,\sigma,\vk[i])=
    \left\{\begin{array}{l}
        1, \text{ if } g(\vec{v}_i)=f_{\vec{v}_i}(m)\\
        0, \text{ otherwise }
    \end{array}\right.
    \]
\end{description}
\end{nfbox}

\subsection{Model Definition}\label{sec::Def:model}

We provide the basic definitions for secure multiparty computation according to the real/ideal paradigm, for further details see~\cite{Goldreich04} (which in turn follows~\cite{GL90,Beaver91,MR91,Canetti00}).
Informally, a protocol is secure according to the real/ideal paradigm, if whatever an adversary can do in the real execution of protocol, can be done also in an ideal computation, in which an uncorrupted trusted party assists the computation. We consider security with \emph{guaranteed output delivery}, meaning that the ideal-model adversary cannot prematurely abort the ideal computation. For the sake of clarity, we focus in this section on the simpler case of \emph{static} adversaries, that decide on the set of corrupted parties before the protocol begins. The case of \emph{adaptive} adversaries, that can decide which party to corrupt based on information gathered during the course of the protocol, follows in similar lines, but is more technically involved. We refer the reader to~\cite{Canetti00} for a precise definition of adaptively secure MPC.

\begin{definition}[functionalities]\label{def:func}
An $n$-party \textsf{functionality} is a random process that maps vectors of $n$ inputs to vectors of $n$ outputs.
Given an $n$-party functionality $f \colon (\zs)^n \rightarrow (\zs)^n$, let $f_i(\vx)$ denote its \ith output coordinate, \ie $f_i(\vx) = f(\vx)_i$.
\end{definition}

\paragraph{Real-model execution.}
An $n$-party protocol $\pi= (\Party_1,\ldots,\Party_n)$ is an $n$-tuple of probabilistic interactive Turing machines. The term \emph{party $\Party_i$} refers to the $i$'th interactive Turing machine. Each party $\Party_i$ starts with input $x_i\in\zs$ and random coins $r_i\in\zs$.
An \emph{adversary} \Adv is another probabilistic interactive Turing machine describing the behavior of the corrupted parties. It starts the execution with input that contains the identities of the corrupted parties, their private inputs, and an additional auxiliary input.
The parties execute the protocol in a \emph{synchronous} network. That is, the execution proceeds in rounds: each round consists of a \emph{send phase} (where parties send their messages for this round) followed by a \emph{receive phase} (where they receive messages from other parties).
The adversary is assumed to be \emph{rushing}, which means that it can see the messages the honest parties send in a round before determining the messages that the corrupted parties send in that round.

We consider the \emph{point-to-point (communication) model}, where all parties are connected via a \emph{fully connected point-to-point network}. We emphasize that although every party has the ability to send a message to every other party, and to receive a message from every other party, we will focus on protocols where each party will only communicate with a subset of the parties.
We consider three models for the communication lines between the parties: In the \emph{authenticated-channels} model, the communication lines are assumed to be ideally authenticated but not private (and thus the adversary cannot modify messages sent between two honest parties, but can read them). In the \emph{secure-channels} model, the communication lines are assumed to be ideally private (and thus the adversary cannot read or modify messages sent between two honest parties, but he learns the \emph{size} of the message that was sent on the channel). In the \emph{hidden-channels} model, the communication lines are assumed to hide the very fact that a message has been sent on the channel (and thus the adversary is not aware that a message has been sent between two honest parties).
We do not assume the existence of a \emph{broadcast channel}, however, we will occasionally assume the availability of a trusted preprocessing phase, that is required for executing a broadcast protocol.

Throughout the execution of the protocol, all the honest parties follow the instructions of the prescribed protocol, whereas the corrupted parties receive their instructions from the adversary. The adversary is considered to be \emph{malicious}, meaning that it can instruct the corrupted parties to deviate from the protocol in any arbitrary way. At the conclusion of the execution, the honest parties output their prescribed output from the protocol, the corrupted parties output nothing, and the adversary outputs an (arbitrary) function of its view of the computation (containing the views of the corrupted parties). The view of a party in a given execution of the protocol consists of its input, its random coins, and the messages it sees throughout this execution.

\begin{definition} [real-model execution]\label{def:RealModel}
Let $\pi= (\Party_1,\ldots, \Party_n)$ be an $n$-party protocol and let $\IS \subseteq [n]$ denote the set of indices of the parties corrupted by $\Adv$. The {\sf joint execution of $\pi$ under $(\Adv,\IS)$ in the real model}, on input vector $\vx= (x_1,\ldots, x_n)$, auxiliary input $\aux$, and security parameter $\secParam$, denoted $\REAL_{\pi,\IS,\Adv(\aux)}(\vect{x},\secParam)$, is defined as the output vector of $\Party_1,\ldots,\Party_n$ and $\Adv(\aux)$ resulting from the protocol interaction, where for every $i \in \IS$, party $\Party_i$ computes its messages according to $\Adv$, and for every $j \notin \IS$, party $\Party_j$ computes its messages according to $\pi$.
\end{definition}

\paragraph{Ideal-model execution.}
An ideal computation of an $n$-party functionality $f$ on input $\vx=(x_1,\ldots,x_n)$ for parties $(\Party_1,\ldots,\Party_n)$ in the presence of an ideal-model adversary $\Adv$ controlling the parties indexed by $\IS\subseteq[n]$, proceeds via the following steps.
\begin{itemize}
  \item[\emph{Sending inputs to trusted party:}]
  An honest party $\Party_i$ sends its input $x_i$ to the trusted party.
  The adversary may send to the trusted party arbitrary inputs for the corrupted parties. Let $x_i'$ be the value actually sent as the input of party $\Party_i$.

  \item[\emph{Trusted party answers the parties:}]
  If $x_i'$ is outside of the domain for $\Party_i$, for some index $i$, or if no input was sent for $\Party_i$, then the trusted party sets $x_i'$ to be some predetermined default value.
  Next, the trusted party computes $(y_1, \ldots, y_n)=f(x_1', \ldots, x_n')$ and sends $y_i$ to party $\Party_i$ for every $i$.

  \item[\emph{Outputs:}]
  Honest parties always output the message received from the trusted party and the corrupted parties output nothing.
  The adversary $\Adv$ outputs an arbitrary function of the initial inputs $\set{x_i}_{i\in\IS}$, the messages received by the corrupted parties from the trusted party $\set{y_i}_{i\in\IS}$, and its auxiliary input.
\end{itemize}

\begin{definition}[ideal-model execution]\label{def:IdealModel}
Let $f\colon(\zs)^n \rightarrow (\zs)^n$ be an $n$-party functionality and let $\IS\subseteq [n]$. The {\sf joint execution of $f$ under $(\Adv, I)$ in the ideal model}, on input vector $\vect{x}=(x_1, \ldots, x_n)$, auxiliary input $\aux$ to $\Adv$, and security parameter $\secParam$, denoted $\IDEAL_{f,\IS,\Adv(\aux)}(\vect{x},\secParam)$, is defined as the output vector of $\Party_1, \ldots, \Party_n$ and $\Adv(\aux)$ resulting from the above described ideal process.
\end{definition}

\paragraph{Security definition.}

Having defined the real and ideal models, we can now define security of protocols according to the real/ideal paradigm.
\begin{definition}\label{def:SecureProtocol}
Let $f\colon(\zs)^n \rightarrow (\zs)^n$ be an $n$-party functionality, and let $\pi$ be a probabilistic polynomial-time protocol computing $f$. The protocol $\pi$ is a \textsf{protocol that $t$-securely computes $f$ with computational security}, if for every probabilistic polynomial-time real-model adversary \Adv, there exists a probabilistic polynomial-time adversary $\Sim$ for the ideal model, such that for every $\IS\subseteq [n]$ of size at most $t$, it holds that
\[
\set{\bigbrack \REAL_{\pi, \IS, \Adv(\aux)}(\vx, \secParam)}_{(\vx, \aux)\in(\zs)^{n+1}, \secParam\in\N}
\compindist
\set{\bigbrack \IDEAL_{f, \IS, \Sim(\aux)}(\vx, \secParam)}_{(\vx, \aux)\in(\zs)^{n+1}, \secParam\in\N}.
\]
The protocol $\pi$ is a \textsf{protocol that $t$-securely computes $f$ with information-theoretic security}, if for every real-model adversary \Adv, there exists an adversary $\Sim$ for the ideal model, whose running time is polynomial in the running time of $\Adv$, such that for every $\IS\subseteq [n]$ of size at most $t$, it holds that
\[
\set{\bigbrack \REAL_{\pi, \IS, \Adv(\aux)}(\vx, \secParam)}_{(\vx, \aux)\in(\zs)^{n+1}, \secParam\in\N}
\statclose
\set{\bigbrack \IDEAL_{f, \IS, \Sim(\aux)}(\vx, \secParam)}_{(\vx, \aux)\in(\zs)^{n+1}, \secParam\in\N}.
\]
\end{definition}

\paragraph{The Hybrid Model.}
The hybrid model is a model that extends the real model with a trusted party that provides ideal computation for specific functionalities. The parties communicate with this trusted party in exactly the same way as in the ideal model described above.

Let $f$ and $g$ be $n$-party functionalities. Then, an execution of a protocol $\pi$ computing $g$ in the $f$-hybrid model, involves the parties sending normal messages to each other (as in the real model) and in addition, having access to a trusted party computing $f$. It is essential that the invocations of $f$ are done sequentially, meaning that before an invocation of $f$ begins, the preceding invocation of $f$ must finish. In particular, there is at most one call to $f$ per round, and no other messages are sent during any round in which $f$ is called.

Let $\Adv$ be an adversary with auxiliary input $\aux$ and let $\IS\subseteq[n]$ be the set of corrupted parties.
We denote by $\HYBRID^f_{\pi,\IS,\Adv(\aux)} (\vx,\secParam)$ the random variable consisting of the view of the adversary and the output of the honest parties, following an execution of $\pi$ with ideal calls to a trusted party computing $f$ on input vector $\vx = (x_1, \ldots, x_n)$, auxiliary input $\aux$ to $\Adv$, and security parameter $\secParam$.

In this work, we will employ the sequential composition theorem of Canetti \cite{Canetti00}.
\begin{theorem}[\cite{Canetti00}]\label{thm:Composition}
Let $f$ and $g$ be $n$-party functionalities. Let $\rho$ be a protocol that $t$-securely computes $f$, and let $\pi$ be a protocol that $t$-securely computes $g$ in the $f$-hybrid model. Then, protocol $\pi^{f\rightarrow\rho}$, that is obtained from $\pi$ by replacing all ideal calls to the trusted party computing $f$ with the protocol $\rho$, is a protocol that $t$-securely computes $g$ in the real model.
\end{theorem}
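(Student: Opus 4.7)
The plan is to prove the composition theorem by a standard hybrid argument, walking from a real execution of $\pi^{f\rightarrow\rho}$ to an ideal execution of $g$, where each hybrid step replaces one subprotocol call with its ideal-functionality counterpart and is justified by the $t$-security of $\rho$, and the final step is justified by the $t$-security of $\pi$ in the $f$-hybrid model.

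More concretely, let $\Adv$ be a real-model adversary attacking $\pi^{f\rightarrow\rho}$ and corrupting the parties indexed by $\IS$ with $\ssize{\IS}\leq t$. Let $q=q(\secParam)$ be a polynomial upper bound on the number of ideal calls to $f$ made by $\pi$ on inputs of length at most $\secParam$. Define hybrid executions $\HYB^0,\HYB^1,\ldots,\HYB^q$, where $\HYB^0=\REAL_{\pi^{f\rightarrow\rho},\IS,\Adv(\aux)}(\vx,\secParam)$ is the real execution, and $\HYB^i$ is obtained from $\HYB^{i-1}$ by replacing the $i$'th real execution of $\rho$ with an ideal call to a trusted party computing $f$, using the simulator $\Sim_\rho$ guaranteed by the security of $\rho$ to produce the view handed to $\Adv$ for that invocation. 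The final game $\HYB^q$ is an execution of $\pi$ in the $f$-hybrid model against an adversary $\Adv_\pi$ that internally runs $\Adv$ and uses $\Sim_\rho$ at each ideal call. I would next appeal to the $t$-security of $\pi$ in the $f$-hybrid model to obtain a simulator $\Sim_\pi$ for $g$, and define $\Sim=\Sim_\pi$ (wrapping $\Adv_\pi$) as the ideal-model adversary for $g$; this gives $\IDEAL_{g,\IS,\Sim(\aux)}(\vx,\secParam)$, which I claim is (computationally or statistically) indistinguishable from $\HYB^0$.

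The heart of the argument is bounding the distance between consecutive hybrids. For each $i\in[q]$, I would show that $\HYB^{i-1}\compindist \HYB^i$ (or $\statclose$, in the information-theoretic case) by constructing, from any distinguisher $\Dist$, a distinguisher $\Dist'$ against $\rho$ with essentially the same advantage: $\Dist'$ samples all randomness needed to run $\pi$ up to its $i$'th ideal call, uses its external oracle (real or ideal execution of $\rho$) to play out the $i$'th instance, continues the simulation of the remaining calls using $\Sim_\rho$ (as in $\HYB^i$) and runs $\pi$'s outer instructions honestly, then feeds the final transcript to $\Dist$. The crucial structural property that makes this go through is the sequentiality assumption on $\pi$'s ideal calls: no other $\pi$-messages are sent during any round in which $f$ is invoked, so the swap at step $i$ is local and does not interfere with messages before or after the $i$'th call. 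A standard hybrid summation then yields distance at most $q\cdot \negl(\secParam)=\negl(\secParam)$, and the final step $\HYB^q\approx \IDEAL_{g,\IS,\Sim(\aux)}$ follows directly from the $f$-hybrid security of $\pi$ applied to $\Adv_\pi$.

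The main obstacle, and the step where care is needed, will be making the reduction $\Dist'$ against $\rho$ admissible: the inputs on which $\rho$ is invoked in the $i$'th call are determined by the honest parties' internal states in $\pi$, which in turn depend on earlier simulated calls handled by $\Sim_\rho$. I would handle this by fixing (nonuniformly, or by a standard averaging argument) the randomness of all honest parties, of $\Adv$, and of the earlier $\Sim_\rho$ invocations, so that the $i$'th invocation of $\rho$ is reduced to a single sample from either $\REAL_{\rho,\IS_i,\Adv_i(\aux_i)}$ or $\IDEAL_{f,\IS_i,\Sim_\rho(\aux_i)}$ for a well-defined induced adversary $\Adv_i$, corrupted set $\IS_i\subseteq\IS$, input vector, and auxiliary input derived from the preceding transcript. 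A secondary subtlety is that $\Adv$ is rushing and may correlate its behavior across calls, but since the ideal calls are sequential and the auxiliary input to $\Adv_i$ may encode the full prefix of the execution, the standard definition of $t$-security of $\rho$ (with arbitrary inputs and auxiliary information) suffices. Once these details are in place, the composition theorem follows, and (in the computational case) the running time of $\Sim$ is polynomial because $q$ is polynomial and both $\Sim_\pi$ and $\Sim_\rho$ run in polynomial time.
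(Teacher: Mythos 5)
The paper does not prove this theorem at all: it is imported verbatim from Canetti~\cite{Canetti00} as a known result, so there is no in-paper proof to compare against. Your argument is, in substance, the standard proof of sequential modular composition (essentially Canetti's own): build the hybrid-model adversary by plugging $\Sim_\rho$ in at each ideal call, bridge the real execution of $\pi^{f\rightarrow\rho}$ to the $f$-hybrid execution of $\pi$ by a hybrid argument over the $q$ sequential calls (fixing the prefix by averaging and passing it as auxiliary input to the induced sub-adversary), and finish with the $f$-hybrid security of $\pi$. This is correct in outline, and your identification of the two delicate points (admissibility of the induced invocation of $\rho$, and the role of sequentiality in making the swap local) matches where the real work lies.

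Two small points to tighten. First, with your ordering of hybrids ($\HYB^i$ has the first $i$ calls ideal), the calls \emph{after} the $i$'th one are real executions of $\rho$ in both $\HYB^{i-1}$ and $\HYB^i$, so the reduction $\Dist'$ should play them out as real executions itself, not ``continue the simulation of the remaining calls using $\Sim_\rho$''; as written that step describes a different (and inconsistent) hybrid. Second, for $\Dist'$ to resume the outer execution after its single external challenge, it needs more than the challenge's output distribution as a black box: it needs the honest parties' post-call states and the adversary's state. This works, but only because (i) in the $f$-hybrid model an honest party's continuation depends solely on its input/state before the call and the output it receives from $f$ (so the honest outputs of the external experiment suffice), and (ii) the induced sub-adversary $\Adv_i$ can be defined to output its entire view, which together with the fixed prefix randomness determines $\Adv$'s state. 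Spelling these out closes the argument; with them, your proof is the standard one.
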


\paragraph{Extended functionalities and extended protocols.}
As mentioned above, in the sequential composition theorem (\cref{thm:Composition}) an $n$-party protocol $\pi$ is considered, in which all $n$ parties invoke the trusted party for computing an $n$-party functionality $f$. Next, the adjusted protocol $\pi^{f\rightarrow\rho}$, where all hybrid calls to $f$ are replaced by an $n$-party protocol $\rho$ for computing $f$, is proven secure.
It is essential that the same set of $n$ parties, defined by $\pi$, will run all executions of the sub-protocol $\rho$ in order to claim security of $\pi^{f\rightarrow\rho}$. Looking ahead, in some of our constructions (in \cref{sec:ne_mpc}) we will use sub-protocols that are executed only by a subset of the parties. During the rounds in which the sub-protocol takes place, the remaining parties (that do not patriciate in the sub-protocol) should remain idle, \ie not send any message and not receive any message. Toward this goal, we show how to extend functionality and protocols, that are defined for a subset of the $n$ parties, into functionalities and protocols that are defined for the entire party-set, such that the parties outside of the original subset remain idle.

Let $m<n$. Given an $m$-party functionality $f$, an $m$-party protocol $\rho$ that $t$-securely computes $f$, and a subset $\PS=\sset{i_1,\ldots,i_m}\subseteq[n]$ of size $m$, we define the ``extended functionality'' $\extend{[n]}{\PS}{f}$ as the $n$-party functionality in which the output vector $(y_1,\ldots,y_n)$ is defined as follows: for every $i\notin\PS$ the output $y_i=\emptystr$ is defined to be the empty string, and for every $i\in\PS$, the output is computed via $(y_{i_1},\ldots,y_{i_m})=f(x_{i_1},\ldots,x_{i_m})$. In addition, we define the protocol $\extend{[n]}{\PS}{\rho}$ as the $n$-party protocol where every party $\Party_i$, with $i\in\PS$, follows the code of $\rho$, and ignores messages from parties outside of $\PS$, and every party $\Party_i$, with $i\notin\PS$, doesn't send any message, ignores all incoming messages, and outputs $\emptystr$. The following lemma is straightforward.

\begin{lemma}
If $\rho$ is an $m$-party protocol that $t$-securely computes the $m$-party functionality $f$, then $\extend{[n]}{\PS}{\rho}$ is an $n$-party protocol that $t$-securely computes the $n$-party functionality $\extend{[n]}{\PS}{f}$.
\end{lemma}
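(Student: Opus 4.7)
The plan is to prove this by a direct reduction to the security of $\rho$, exploiting the fact that parties outside $\PS$ are inert in the extended protocol and thus cannot carry any information. Given a real-model adversary $\Adv$ attacking $\extend{[n]}{\PS}{\rho}$ with corruption set $\IS \subseteq [n]$ of size at most $t$, I would first partition $\IS = \IS_1 \cup \IS_2$, where $\IS_1 = \IS \cap \PS$ and $\IS_2 = \IS \setminus \PS$. The key structural observation is that $\IS_2$-corrupted parties are entirely decoupled from the honest computation: honest parties in $\PS$ ignore any message arriving from outside $\PS$, and honest parties outside $\PS$ neither send nor process messages. Consequently, every message ever processed by an honest party (or that influences an honest party's view) arises from the behavior of $\Adv$ restricted to $\IS_1$.

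Next, I would construct an $m$-party adversary $\Adv'$ attacking $\rho$ with corruption set $\IS_1$ (reindexed to lie in $[m]$ via the bijection $\PS \leftrightarrow [m]$). Internally, $\Adv'$ simulates $\Adv$ by locally playing the roles of the $\IS_2$-corrupted parties (whose outgoing messages go nowhere in any case) and of the honest parties outside $\PS$ (which simply output $\emptystr$). Since $\ssize{\IS_1} \leq \ssize{\IS} \leq t$, the $t$-security of $\rho$ guarantees a simulator $\Sim'$ in the $f$-ideal model such that $\REAL_{\rho, \IS_1, \Adv'}$ and $\IDEAL_{f, \IS_1, \Sim'}$ are indistinguishable.

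The simulator $\Sim$ for the extended setting then works as follows. On inputs $\set{x_i}_{i \in \IS}$, it hands $\set{x_i}_{i \in \IS_1}$ to $\Sim'$, who produces the inputs to send to the trusted party computing $f$; $\Sim$ forwards these on the $\PS$-indexed coordinates of $\extend{[n]}{\PS}{f}$ (the $\IS_2$-indexed coordinates have range $\set{\emptystr}$ and are sent trivially). Upon receiving $\set{y_i}_{i \in \IS_1}$ back from the trusted party, $\Sim$ delivers them to $\Sim'$ and outputs whatever $\Sim'$ (and hence $\Adv$) outputs.

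Indistinguishability of $\REAL_{\extend{[n]}{\PS}{\rho}, \IS, \Adv}$ and $\IDEAL_{\extend{[n]}{\PS}{f}, \IS, \Sim}$ then follows by composing the reduction: the joint distribution of $\Adv$'s view together with the honest $\PS$-parties' outputs in the extended real model is syntactically identical to that of $\Adv'$'s view together with the honest outputs in a $\rho$-execution, and likewise on the ideal side via $\Sim'$; the honest outputs for parties in $[n] \setminus \PS$ are the constant $\emptystr$ in both worlds by construction of $\extend{[n]}{\PS}{\rho}$ and $\extend{[n]}{\PS}{f}$, contributing no distinguishing advantage. There is no real obstacle here; the only care required is in reindexing between $\PS$ and $[m]$ and in making sure that the ``dummy'' behavior of parties in $[n] \setminus \PS$ is matched identically on both sides, which is immediate from the definitions of the extension operator.
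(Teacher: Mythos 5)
Your proof is correct, and it fleshes out exactly the argument the paper has in mind: the paper states this lemma without proof (calling it straightforward), and the intended justification is precisely your reduction — parties outside $\PS$ are inert, so an adversary against $\extend{[n]}{\PS}{\rho}$ with corrupted set $\IS$ collapses to an adversary against $\rho$ with corrupted set $\IS\cap\PS$, whose simulator is then wrapped to simulate the extended ideal execution. The only bookkeeping point worth making explicit is that the inputs of corrupted parties outside $\PS$ (which $\Adv'$ needs in order to run $\Adv$ internally) are passed to $\Adv'$ via its auxiliary input, which is unproblematic since security is quantified over all auxiliary inputs.
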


\subsection{Correlated Randomness Functionalities}\label{sec:PKI}

The positive results presented in \cref{sec:ne_mpc} are defined in the a model where the parties receive correlated randomness generated by a trusted setup phase (formally, in the correlated-randomness hybrid model). The correlated randomness that we consider is \emph{public-key infrastructure (PKI)}, which is ``minimal'' in a sense, and commonly used in MPC protocols. In the computational setting, we consider PKI that is based on digital signatures (which exist assuming one-way functions exist), and in the information-theoretic setting a PKI that relies on information-theoretic signatures (see \cref{sec:itsign}).

\paragraph{Public-key infrastructure.}

The PKI functionality $\fpki$ (\cref{fig:fpki}) generates a pair of signing and verification keys for every party, and hands each party its signing key along with all verification keys. For simplicity, when we say that a protocol $\pi$ is in the PKI model, we mean that $\pi$ is defined in the $\fpki$-hybrid model.

\begin{nfbox}{The PKI functionality}{fig:fpki}
\begin{center}
    \textbf{The functionality} $\fpki$
\end{center}
The $n$-party functionality $\fpki$ is parametrized by a signature scheme $(\Gen,\Sign,\Verify)$ and proceeds with parties $\PS_1=\sset{\Party_1,\ldots,\Party_n}$ as follows.
\begin{enumerate}
    \item
    For every $i\in[n]$ generate $(\sk[i],\vk[i])\gets\Gen(1^\secParam)$.
    \item
    The output for party $\Party_i$ is the signing key $\sk[i]$ and the vector of verification keys $(\vk[1],\ldots,\vk[n])$.
\end{enumerate}
\end{nfbox}

\paragraph{Information-theoretic PKI.}

The $(\sigcals,\vercals)$-IT-PKI functionality $\fitpki^{(\sigcals,\vercals)}$ (\cref{fig:fitpki}) generates $n$ tuples of signing and verification keys, and hands each party its signing key along with all corresponding verification keys. For simplicity, when we say that a protocol $\pi$ is in the $(\sigcals,\vercals)$-IT-PKI model, we mean that $\pi$ is defined in the $\fitpki^{(\sigcals,\vercals)}$-hybrid model. By the IT-PKI model we mean $(\sigcals,\vercals)$-IT-PKI model where $\sigcals$ and $\vercals$ are polynomial in $\secParam$.

\begin{nfbox}{The information-theoretic PKI functionality}{fig:fitpki}
\begin{center}
    \textbf{The functionality} $\fitpki$
\end{center}
The $n$-party functionality $\fitpki^{(\sigcals,\vercals)}$ is parametrized by an information-theoretically $(\sigcals,\vercals)$-secure signature scheme $(\Gen,\Sign,\Verify)$ and proceeds with parties $\PS_1=\sset{\Party_1,\ldots,\Party_n}$ as follows.
\begin{enumerate}
    \item
    For every $i\in[n]$ generate $(\sk[i],\vvk[i])\gets\Gen(1^\secParam,n,\sigcals,\vercals)$, where $\vvk[i]=(\vk[1]^i,\ldots,\vk[n]^i)$.
    \item
    The output for party $\Party_i$ is the signing key $\sk[i]$ and the vector of verification keys $(\vk[i]^1,\ldots,\vk[i]^n)$.
\end{enumerate}
\end{nfbox}

\ifdefined\IsProofInAppendix
\section{MPC with Non-Expanding Communication Graph (Cont'd)}\label{sec:ne_mpc_cont}

We now provide complementary material for \cref{sec:ne_mpc}.

\subsection{Proof of \texorpdfstring{\cref{prop:security}}{Lg}}\label{sec:prop:security_proof}

\subsection{Proof of \texorpdfstring{\cref{lem:mpc_no_expander_adaptive}}{Lg}}\label{sec:lem:mpc_no_expander_adaptive_proof}

\section{Expansion is Necessary for Correct Computation (Cont'd)}\label{sec:LB_Expander_cont}

We now provide complementary material for \cref{sec:LB_Expander}.

\fi

\end{document}